\newsavebox{\foobox}
\newcommand{\slantbox}[2][0]{\mbox{%
        \sbox{\foobox}{#2}%
        \hskip\wd\foobox
        \pdfsave
        \pdfsetmatrix{1 0 #1 1}%
        \llap{\usebox{\foobox}}%
        \pdfrestore
}}
\newcommand\unslant[2][-.25]{\slantbox[#1]{$#2$}}
\newcommand{\mpi}{\text{\unslant[-.18]\pi}}
\newcommand{\CB}{\mathcal{B}}
\newcommand{\BE}{\mathbb{E}}
\newcommand{\CH}{\mathcal{H}}
\newcommand{\CI}{\mathcal{I}}
\newcommand{\CL}{\mathcal{L}}
\newcommand{\CM}{\mathcal{M}}
\newcommand{\CO}{\mathcal{O}}
\newcommand{\BP}{\mathbb{P}}
\newcommand{\BQ}{\mathbb{Q}}
\newcommand{\CR}{\mathcal{R}}
\newcommand{\CT}{\mathcal{T}}
\newcommand{\CU}{\mathcal{U}}
\newcommand{\sA}{\mathsf{A}}
\newcommand{\sB}{\mathsf{B}}
\newcommand*{\tr}{\mathrm{Tr}}
\newcommand*{\poly}{\mathrm{Poly}}
 \newcommand{\lV}{\lVert}
 \newcommand{\rV}{\rVert}
\newcommand{\vertiii}[1]{{\left\vert\kern-0.25ex\left\vert\kern-0.25ex\left\vert #1 
    \right\vert\kern-0.25ex\right\vert\kern-0.25ex\right\vert}}
\newcommand{\vertiiiNoLR}[1]{{\bigg\vert\kern-0.25ex\bigg\vert\kern-0.25ex\bigg\vert #1 
    \bigg\vert\kern-0.25ex\bigg\vert\kern-0.25ex\bigg\vert}}
\newcommand{\lvertiii}{\bigg\vert\kern-0.25ex\bigg\vert\kern-0.25ex\bigg\vert }
\newcommand{\rvertiii}{\bigg\vert\kern-0.25ex\bigg\vert\kern-0.25ex\bigg\vert }
\newcommand{\norm}[1]{\Vert {#1} \Vert}
\newcommand{\lnorm}[1]{\left\Vert {#1} \right\Vert}
\newcommand{\lnormp}[2]{\lnorm{#1}_{#2}}
\newcommand{\Span}[1]{\mathrm{Span}\{{#1}\}}
\newcommand{\comm}[2]{\left[ #1, #2\right]}
\newcommand{\lr}[1]{\left(#1\right)}
\newcommand{\mlr}[1]{\left[#1\right]}
\newcommand{\glr}[1]{\left\{#1\right\}}
\newcommand{\alr}[1]{\left\langle #1\right\rangle}
\newcommand{\cor}{\mathrm{Cor} }
\newcommand{\labs}[1]{\left\vert {#1} \right\vert}
\newcommand{\e}{\mathrm{e}}
\newcommand{\ri}{\mathrm{i}}
\newcommand{\rd}{\mathrm{d}}
\newcommand{\stylecolor}{IndianRed3}
\newcommand*\sectionlabel{}
\gdef\sectionlabel{}
\gdef\sectionlabel{\thesection }}{0pt}
\titlespacing*{\section}{0pt}{0pt}{0pt}
\newcommand*\subsectionlabel{}
\gdef\subsectionlabel{}
\gdef\subsectionlabel{\thesubsection  }}{0pt}
\titlespacing*{\subsection}{-4pt}{10pt}{0pt}
\newcommand*\subsubsectionlabel{}
\gdef\subsubsectionlabel{}
\gdef\subsubsectionlabel{\thesubsubsection.\ \  }}{0pt}
\titlespacing*{\subsubsection}{-4pt}{7pt}{0pt}
\pgfplotsset{every axis legend/.append style={at={(1.02,1)},anchor=north west}}
\begin{document}

\pgfplotsset{every axis legend/.append style={at={(1.02,1)},anchor=north west}}

\usetikzlibrary{calc, arrows}
\numberwithin{equation}{section}
   
\newcounter{theor}
\numberwithin{theor}{section}
   \renewcommand{\thetheor}{\thesection.\arabic{theor}}
\newenvironment{theor}[1][]{%
 \refstepcounter{theor}%
  \ifstrempty{#1}%
  {\mdfsetup{%
     frametitle={%
        {\strut \color{\stylecolor} \textsf{Theorem~\thetheor}}}}%
   }%
  {\mdfsetup{%
     frametitle={%
        {\strut \color{\stylecolor} \textsf{Theorem~\thetheor:~#1}}}}%
   }%
   \mdfsetup{innertopmargin=0pt,linecolor=\stylecolor, frametitlerule=false,
             linewidth=1pt, backgroundcolor=\stylecolor!15!white, topline=true, frametitlebackgroundcolor=\stylecolor!15!white}
   \begin{mdframed}[]\relax%
   }{\end{mdframed}}
   
\newenvironment{exam}[1][]{%
 \refstepcounter{theor}%
  \ifstrempty{#1}%
  {\mdfsetup{%
     frametitle={%
        {\strut \color{\stylecolor} \textsf{Example~\thetheor}}}}%
   }%
  {\mdfsetup{%
     frametitle={%
        {\strut \color{\stylecolor} \textsf{Example~\thetheor:~#1}}}}%
   }%
   \mdfsetup{innertopmargin=0pt,linecolor=\stylecolor, frametitlerule=false,
             linewidth=1pt, backgroundcolor=\stylecolor!15!white, topline=true, frametitlebackgroundcolor=\stylecolor!15!white}
   \begin{mdframed}[]\relax%
   }{\end{mdframed}}
   
   \newenvironment{corol}[1][]{%
 \refstepcounter{theor}%
  \ifstrempty{#1}%
  {\mdfsetup{%
     frametitle={%
        {\strut \color{\stylecolor} \textsf{Corollary~\thetheor}}}}%
   }%
  {\mdfsetup{%
     frametitle={%
        {\strut \color{\stylecolor} \textsf{Corollary~\thetheor:~#1}}}}%
   }%
   \mdfsetup{innertopmargin=0pt,linecolor=\stylecolor, frametitlerule=false,
             linewidth=1pt, backgroundcolor=\stylecolor!15!white, topline=true, frametitlebackgroundcolor=\stylecolor!15!white}
   \begin{mdframed}[]\relax%
   }{\end{mdframed}}
   
   \newenvironment{prop}[1][]{%
 \refstepcounter{theor}%
  \ifstrempty{#1}%
  {\mdfsetup{%
     frametitle={%
        {\strut \color{\stylecolor} \textsf{Proposition~\thetheor}}}}%
   }%
  {\mdfsetup{%
     frametitle={%
        {\strut \color{\stylecolor} \textsf{Proposition~\thetheor:~#1}}}}%
   }%
   \mdfsetup{innertopmargin=0pt,linecolor=\stylecolor, frametitlerule=false,
             linewidth=1pt, backgroundcolor=\stylecolor!15!white, topline=true, frametitlebackgroundcolor=\stylecolor!15!white}
   \begin{mdframed}[]\relax%
   }{\end{mdframed}}
   
      \newenvironment{lma}[1][]{%
 \refstepcounter{theor}%
  \ifstrempty{#1}%
  {\mdfsetup{%
     frametitle={%
        {\strut \color{\stylecolor} \textsf{Lemma~\thetheor}}}}%
   }%
  {\mdfsetup{%
     frametitle={%
        {\strut \color{\stylecolor} \textsf{Lemma~\thetheor:~#1}}}}%
   }%
   \mdfsetup{innertopmargin=0pt,linecolor=\stylecolor, frametitlerule=false,
             linewidth=1pt, backgroundcolor=\stylecolor!15!white, topline=true, frametitlebackgroundcolor=\stylecolor!15!white}
   \begin{mdframed}[]\relax%
   }{\end{mdframed}}

    \newenvironment{defn}[1][]{%
 \refstepcounter{theor}%
  \ifstrempty{#1}%
  {\mdfsetup{%
     frametitle={%
        {\strut \color{\stylecolor} \textsf{Definition~\thetheor}}}}%
   }%
  {\mdfsetup{%
     frametitle={%
        {\strut \color{\stylecolor} \textsf{Definition~\thetheor:~#1}}}}%
   }%
   \mdfsetup{innertopmargin=0pt,linecolor=\stylecolor, frametitlerule=false,
             linewidth=1pt, backgroundcolor=\stylecolor!15!white, topline=true, frametitlebackgroundcolor=\stylecolor!15!white}
   \begin{mdframed}[]\relax%
   }{\end{mdframed}}
   
   \newenvironment{theorNB}[1][]{%
 \refstepcounter{theor}%
  \ifstrempty{#1}%
  {\mdfsetup{%
     frametitle={%
        {\strut \color{\stylecolor} \textsf{Theorem~\thetheor}}}}%
   }%
  {\mdfsetup{%
     frametitle={%
        {\strut \color{\stylecolor} \textsf{Theorem~\thetheor:~#1}}}}%
   }%
   \mdfsetup{innertopmargin=0pt,linecolor=\stylecolor, frametitlerule=false,
             linewidth=1pt, backgroundcolor=\stylecolor!15!white, topline=true, frametitlebackgroundcolor=\stylecolor!15!white,nobreak=true}
   \begin{mdframed}[]\relax%
   }{\end{mdframed}}
   
   \newenvironment{corolNB}[1][]{%
 \refstepcounter{theor}%
  \ifstrempty{#1}%
  {\mdfsetup{%
     frametitle={%
        {\strut \color{\stylecolor} \textsf{Corollary~\thetheor}}}}%
   }%
  {\mdfsetup{%
     frametitle={%
        {\strut \color{\stylecolor} \textsf{Corollary~\thetheor:~#1}}}}%
   }%
   \mdfsetup{innertopmargin=0pt,linecolor=\stylecolor, frametitlerule=false,
             linewidth=1pt, backgroundcolor=\stylecolor!15!white, topline=true, frametitlebackgroundcolor=\stylecolor!15!white,nobreak=true}
   \begin{mdframed}[]\relax%
   }{\end{mdframed}}
   
   \newenvironment{propNB}[1][]{%
 \refstepcounter{theor}%
  \ifstrempty{#1}%
  {\mdfsetup{%
     frametitle={%
        {\strut \color{\stylecolor} \textsf{Proposition~\thetheor}}}}%
   }%
  {\mdfsetup{%
     frametitle={%
        {\strut \color{\stylecolor} \textsf{Proposition~\thetheor:~#1}}}}%
   }%
   \mdfsetup{innertopmargin=0pt,linecolor=\stylecolor, frametitlerule=false,
             linewidth=1pt, backgroundcolor=\stylecolor!15!white, topline=true, frametitlebackgroundcolor=\stylecolor!15!white,nobreak=true}
   \begin{mdframed}[]\relax%
   }{\end{mdframed}}
   
      \newenvironment{lmaNB}[1][]{%
 \refstepcounter{theor}%
  \ifstrempty{#1}%
  {\mdfsetup{%
     frametitle={%
        {\strut \color{\stylecolor} \textsf{Lemma~\thetheor}}}}%
   }%
  {\mdfsetup{%
     frametitle={%
        {\strut \color{\stylecolor} \textsf{Lemma~\thetheor:~#1}}}}%
   }%
   \mdfsetup{innertopmargin=0pt,linecolor=\stylecolor, frametitlerule=false,
             linewidth=1pt, backgroundcolor=\stylecolor!15!white, topline=true, frametitlebackgroundcolor=\stylecolor!15!white,nobreak=true}
   \begin{mdframed}[]\relax%
   }{\end{mdframed}}


\pagestyle{fancy}
\renewcommand{\headrulewidth}{0pt}
\fancyhead{}

\fancyfoot{}
\fancyfoot[C] {\textsf{\textbf{\thepage}}}

\begin{equation*}
\begin{tikzpicture}
\draw (\textwidth, 0) node[text width = \textwidth, right] {\color{white} easter egg};
\end{tikzpicture}
\end{equation*}

\begin{equation*}
\begin{tikzpicture}
\draw (0.5\textwidth, -3) node[text width = \textwidth] {\huge  \textsf{\textbf{Speed limits and locality in many-body quantum \linebreak \vspace{0.07in} dynamics}} };
\end{tikzpicture}
\end{equation*}
\begin{equation*}
\begin{tikzpicture}
\draw (0.5\textwidth, 0.1) node[text width=\textwidth] {\large \color{black} \textsf{Chi-Fang (Anthony) Chen$^{{\color{\stylecolor} \mathsf{a}}}$, Andrew Lucas$^{{\color{\stylecolor} \mathsf{b}}}$, Chao Yin$^{{\color{\stylecolor} \mathsf{b}}}$}};
\draw (0.5\textwidth, -0.5) node[text width=\textwidth] {\small $^{{\color{\stylecolor} \mathsf{a}}}$\textsf{Institute for Quantum Information and Matter,
California Institute of Technology, Pasadena, CA, 91125 USA}};
\draw (0.5\textwidth, -1.0) node[text width=\textwidth] {\small $^{{\color{\stylecolor} \mathsf{b}}}$\textsf{Department of Physics and Center for Theory of Quantum Matter, University of Colorado, Boulder, CO 80309 USA}};
\end{tikzpicture}
\end{equation*}
\begin{equation*}
\begin{tikzpicture}
\draw (0, -13.15) node[right, text width=0.5\paperwidth] { \texttt{chifang@caltech.edu, andrew.j.lucas@colorado.edu, chao.yin@colorado.edu}};
\draw (\textwidth, -13.1) node[left] {\textsf{\today}};
\end{tikzpicture}
\end{equation*}
\begin{equation*}
\begin{tikzpicture}
\draw[very thick, color=\stylecolor] (0.0\textwidth, -5.75) -- (0.99\textwidth, -5.75);
\draw (0.12\textwidth, -6.25) node[left] {\color{\stylecolor}  \textsf{\textbf{Abstract:}}};
\draw (0.53\textwidth, -6) node[below, text width=0.8\textwidth, text justified] {\small  
We review the mathematical speed limits on quantum information processing in many-body systems. After the proof of the Lieb-Robinson Theorem in 1972, the past two decades have seen substantial developments in its application to other questions, such as the simulatability of quantum systems on classical or quantum computers, the generation of entanglement, and even the properties of ground states of gapped systems.  Moreover, Lieb-Robinson bounds have been extended in non-trivial ways, to demonstrate speed limits in systems with power-law interactions or interacting bosons, and even to prove notions of locality that arise in cartoon models for quantum gravity with all-to-all interactions.  We overview the progress which has occurred, highlight the most promising results and techniques, and discuss some central outstanding questions which remain open. To help bring newcomers to the field up to speed, we provide self-contained proofs of the field's most essential results.
 };
\end{tikzpicture}
\end{equation*}

\tableofcontents

\begin{equation*}
\begin{tikzpicture}
\draw[very thick, color=\stylecolor] (0.0\textwidth, -5.75) -- (0.99\textwidth, -5.75);
\end{tikzpicture}
\end{equation*}

\titleformat{\section}
  {\gdef\sectionlabel{}
   \Large\bfseries\scshape}
  {\gdef\sectionlabel{\thesection }}{0pt}
  {\begin{tikzpicture}[remember picture]
	\draw (0.2, 0) node[right] {\color{\stylecolor} \textsf{#1}};
	\fill[color=\stylecolor]  (0,0.37) rectangle (-0.7, -0.37);
	\draw (0.0, 0) node[left, fill=\stylecolor] {\color{white} \textsf{\sectionlabel}};
       \end{tikzpicture}
  }
\titlespacing*{\section}{0pt}{20pt}{5pt}

\section{Introduction}
We are all familiar with the idea that there are ``speed limits" on physical dynamical processes.  For example, in special relativity, no two parties can send information faster than the speed of light $c$. This simple observation allows us to reconcile our notion of causality with Einstein's observation that there is no absolute time.  Consider events $A$ and $B$ separated (to one observer, in flat spacetime) by time $\mathrm{\Delta}t = t_B- t_A$ and distance $\mathrm{\Delta}r = |\mathbf{x}_B-\mathbf{x}_A|$.  If $\mathrm{\Delta}t>0$ for this observer, then \emph{all observers} will find (in their own frame) 
\begin{equation}
\mathrm{\Delta}t^\prime>0\quad \text{if and only if} \quad \mathrm{\Delta}r < c\mathrm{\Delta}t.
\end{equation} It is no stretch to say that this speed limit on information underpins our confidence in the theory of relativity, and thus in our understanding of the universe. 

Of course, for more ``human scale" problems, often the speed of light is effectively infinite: $c\approx \infty$.  Still, there can be important \emph{emergent} speed limits on information transmission.  For example, auditory signals propagate at $v_{\mathrm{sound}}\sim 10^{-6}c$.  Listening to an orchestra, it is not important that some information content of the music might \emph{in principle} be transmitted at the speed of light $c$, because the medium through which information is transmitted is subject to a stricter non-relativistic speed limit.

This review article discusses how such stricter non-relativistic speed limits arise in quantum many-body systems.  One way this can arise is when the system is effectively described by a model of particles interacting on a lattice, which is nearly always the appropriate description of a many-body system in condensed matter or atomic physics.  On the lattice, non-interacting particles have dispersion relations with the schematic form $\epsilon(k) \sim J \cos (ka)$, where $J$ is an energy scale and $a$ is the lattice spacing.  The velocity of information is bounded by \begin{equation}
    v_{\mathrm{group}} = \frac{1}{\hbar} \sup_k \left| \frac{\partial \epsilon}{\partial k} \right| \sim \frac{Ja}{\hbar}. \label{eq:Jahbar}
\end{equation}
In typical condensed matter systems, this velocity is roughly $10^{-3}c \lesssim v_{\mathrm{group}} \lesssim 10^{-2}c$.

The particular focus of this review is on the remarkable fact that one can \emph{prove} such speed limits in a huge number of physically realistic lattice models (and, with caveats, in some continuum settings as well!).  Because the relevant literature intersects heavily with physics, mathematics, and quantum information, much of the review may be more formal and precise than a typical physics review article.  However, the subject does not require intense training in modern mathematics, but rather the creative use of simple mathematics which should be familiar to any physicist who has studied quantum mechanics. We will begin gently and guide the reader through the formal proofs of central results in the field while highlighting some more technical but important extensions in recent years, which will almost always be stated without (a full) proof.  

We remark in passing that sometimes the phrase ``quantum speed limit" is used to refer to the Heisenberg energy-time uncertainty principle \cite{heisenberg1,heisenberg2}: namely, if we look at the solution to some time-independent Schr\"odinger equation \begin{equation}
    |\psi(t)\rangle = \sum_{\alpha} c_\alpha \mathrm{e}^{-\mathrm{i}E_\alpha t/\hbar}|\alpha\rangle,
\end{equation}
with $|\alpha\rangle$ the eigenstates of a Hamiltonian, we can show that if $\mathrm{\Delta}E$ is the largest difference between two $E_\alpha$s represented in the above sum, $|\psi(t)\rangle$ and $|\psi(0)\rangle$ cannot be orthogonal before a time~\begin{equation}\label{eq:dtdE>h}
    \mathrm{\Delta}t \gtrsim \frac{\hbar}{\mathrm{\Delta}E}.
\end{equation}
The purpose of our review is to explain why even in the thermodynamic limit when $\mathrm{\Delta}E \rightarrow \infty$ is extensively large, there are still meaningful notions of ``speed limits" deriving from spatial locality.

For the remainder of the introduction, we tell a historical story that places work on Lieb-Robinson bounds into its broader context; starting in Section \ref{sec:LR}, we begin our formal but friendly tour through the mathematical physics of quantum speed limits.

\subsection{The EPR paradox}
Probably the first serious consideration of information propagation in quantum mechanics was in the classic paper \cite{epr} by Einstein, Podolsky, and Rosen.  The resulting EPR paradox goes as follows:  suppose Alice and Bob have two qubits prepared in a Bell pair \cite{Bell}: the wave function of their ``universe" is \begin{equation}
    |\psi\rangle = \frac{|0\rangle_{\sf A}|0\rangle_{\sf B}+ |1\rangle_{\sf A}|1\rangle_{\sf B}}{\sqrt{2}}
\end{equation}
where $|0\rangle,|1\rangle$ represent the up and down states (in the $z$-direction) of a (two-level) spin-$\frac{1}{2}$ system (i.e. qubit), and $\sf A/\sf B$ denote the qubit held by Alice/Bob.   Now suppose Alice measures her qubit to be $|0\rangle$.  No matter how far Bob is from Alice, instantly Bob's qubit is also $|0\rangle$.   EPR believed this must communicate information, and thus quantum mechanics was not compatible with relativity and locality.

Actually, no information has been communicated in this process.  In particular, Bob and Alice only know their measurements agree (and confirm the state was $|\psi\rangle$) after sending classical signals to each other announcing their measurement outcomes: these classical signals travel at most at the speed of light $c$.\footnote{They must also do this experiment many times to confirm that $|\psi\rangle$ was entangled (while outcomes may differ, they always agree)!}  EPR's ``paradox" is fully compatible with relativity and locality.

\subsection{The Lieb-Robinson Theorem}
To further constrain quantum mechanics via locality, we now claim that the Bell state $|\psi\rangle$ is not easy to prepare.  As made precise in Section \ref{sec:corr}, suppose Alice and Bob are separated by distance $L$ with two unentangled qubits $|00\rangle$, then if \begin{equation}
    \mathrm{e}^{-\mathrm{i}Ht}|00\rangle \otimes |\text{rest}\rangle = |\psi\rangle \otimes |\text{rest}^\prime\rangle,
\end{equation}
with ``rest" denoting any additional qubits in the universe. Then, in any lattice model the evolution time $t$ must obey \begin{equation}
    t \ge \frac{L}{v_{\mathrm{LR}}}, \label{eq:sec1tLv}
\end{equation}
where $v_{\mathrm{LR}}$ -- the Lieb-Robinson velocity -- is an $L,t$-independent constant.  This bound holds even in a non-relativistic spin chain where $c=\infty$.  Intuitively, $v_{\mathrm{LR}} \sim Ja/\hbar$ as in (\ref{eq:Jahbar}).

The way the Lieb-Robinson Theorem is stated, as we will do when we prove it in Section \ref{sec:LR}, is not in terms of such a Bell pair experiment.  In 1972, Lieb and Robinson instead thought more abstractly \cite{Lieb1972}, in terms of the operator norms of commutators of Heisenberg-evolved operators.  They proved that for any local spin model, there exists a constant $\mu >0$ such that
\begin{equation}
    \lVert [A(t),B]\rVert \le \mathrm{e}^{\mu (v_{\mathrm{LR}}t-L)}
\end{equation}
where $A,B$ denote Pauli matrices of Alice and Bob respectively, and $A(t) = \mathrm{e}^{\mathrm{i}Ht}A\mathrm{e}^{-\mathrm{i}Ht}$.  In Section \ref{sec:corr} we will, in some detail, explain the connection between this abstract bound, and the concrete constraint on preparing a Bell pair (along with many other entangled states!).  In a nutshell, only when the commutator is large $\mathrm{O}(1)$ can a Bell pair be formed out of $|00\rangle$, which implies (\ref{eq:sec1tLv}).

\subsection{Applications of the Lieb-Robinson bound}
The broader physics and quantum information community only seemed to become aware of the Lieb-Robinson Theorem in the past two decades.  Perhaps the most immediate reason why the Lieb-Robinson bound became better known is that in a beautiful pair of papers \cite{Hastings_koma,nachtergaele06}, it was shown that the Lieb-Robinson bound implies the finite correlation length of any gapped ground state.  This took a curious fact about the dynamics of Heisenberg-evolved operators and connected it to a completely different kind of question of broad and independent interest in the quantum matter community (Section~\ref{sec:gap})!

The story above focuses on the ground state properties of a many-body quantum system.  But, over the past two decades, an increasingly large fraction of the theoretical physics community has begun to focus on the \emph{dynamics} of large quantum systems.  Some of this focus has arisen due to developments such as a theory of many-body localization \cite{Basko_2006,vadim2007,Nandkishore:2014kca,Imbrie_2016,abaninreview,Suntajs:2019lmb}, quantum scars \cite{scar_exp17,Moudgalya:2021xlu}, or prethermalization (discussed in Section \ref{sec:preth}), which suggest that dynamics can be more complicated than textbook hydrodynamics\footnote{Yet even this subject is undergoing a revival of interest! See, e.g., the review \cite{Lucas:2017idv}.}.  Just as much has arisen out of the rapid and impressive developments in experimental quantum simulation  (and the baby steps toward quantum information processing and computation in the lab).   In a quantum simulator, one often studies highly excited states and, in principle, desires control over much of Hilbert space -- not merely the ground state!  Lieb-Robinson bounds then limit how quickly interesting operations can be done (using unitary quantum mechanics alone) in such a simulator.  The consequences of a Lieb-Robinson bound on information spreading and correlations have been observed in actual experiments on cold atomic gases \cite{cheneau}.  

Looking forward, if someday a large-scale quantum computer is built, Lieb-Robinson bounds provide very non-trivial constraints on how efficiently such a computer could operate.  When laying out physical qubits in a two-dimensional chip, one cannot perform quantum state transfer at the speed of light $c$ -- the effect is limited by the emergent Lieb-Robinson velocity! The implications of Lieb-Robinson bounds on the resources required to prepare interesting entangled states will be discussed at length in Section \ref{sec:corr}.

Lastly, the increase in \emph{classical} computational power has been enormous since 1972, and papers now routinely use numerical simulation to model complex quantum dynamics, at least in small systems and at short times; in recent years, researchers have begun to consider potential usage of a quantum computer for the same task. In Section \ref{sec:compute}, we explain how the Lieb-Robinson Theorem gives guarantees for classical simulations accuracy, and, at the same, gives rise to provably efficient quantum algorithms.   

The increased attention to Lieb-Robinson bounds has also led to significant developments and extensions of the original Lieb-Robinson bounds to new settings.  Indeed, an unfortunate reality is that many-body quantum systems realized in the lab usually do \emph{not} merely consist of spins interacting with nearest neighbors.  Charged particles interact with $1/r$ interactions, dipolar objects have $1/r^3$ interactions, and even genuinely neutral objects have $1/r^6$ van der Waals interactions. For many years, there was a theoretical effort to extend the Lieb-Robinson Theorem to systems with power-law interactions; this recently-resolved question will be the focus of Section \ref{sec:power-law}.  Many other systems have interacting bosonic degrees of freedom, which introduces an additional subtlety (Section~\ref{sec:density}).  

Lastly, there is a profound (and not fully understood) connection between the physics captured by Lieb-Robinson bounds and the holographic theory of quantum gravity.  We will briefly describe this story in Section \ref{sec:all-to-all}.   It is likely that a full resolution of these questions will require powerful generalizations of Lieb-Robinson bounds to other ``norms for operator"; we will explain this perspective in detail in Section~\ref{sec:frobenius}.

\subsection{Outline}
Our review is organized into roughly two parts.   In the first part, we introduce more basic content, focusing on minimal lattice models with nearest-neighbor interactions and finite-dimensional Hilbert spaces, to give the reader a sense of the broad scope and implications of Lieb-Robinson bounds.  Section \ref{sec:math} gives a lightning review of useful mathematical definitions, propositions, and conventions.  Section \ref{sec:LR} motivates and proves a standard Lieb-Robinson bound for dynamics on a lattice.  The next three sections all provide key applications of this technique to different problems: the simulatability of quantum dynamics (Section \ref{sec:compute}), bounds on entanglement and correlations (Section \ref{sec:corr}), and the proof that ground states of one-dimensional gapped systems have area law entanglement (Section \ref{sec:gap}).
We also include discussions of bounds on thermalization in Section \ref{sec:thermalization}, although this is a less developed area of the field.

The second part of the review focuses on more recent extensions of the Lieb-Robinson bound away from local spin models.  In Section \ref{sec:frobenius}, we introduce the notion of quantum operator growth and the Frobenius light cone, which have become of recent interest in studies of chaotic many-body dynamics (but prove mathematically interesting as well). Section \ref{sec:all-to-all} describes all-to-all interacting systems, with no strict spatial locality, but still a ``computer science" notion of $k$-locality (which can be used to constrain operator growth and chaos).   Section \ref{sec:power-law} describes the extension of Lieb-Robinson bounds to systems with power-law interactions, while Section \ref{sec:density} extends Lieb-Robinson bounds to systems at a finite charge or energy density, and to certain bosonic systems. Lastly, our final section describes our perspective on important open problems.

Earlier reviews that discuss locality bounds include \cite{hastingsreview1,simsreview,eisertreview,hastingsreview2,nacht_rev19,bound_noneq_rev22}, including their tests in experiment \cite{exp_test_LRB22}. Other recent reviews \cite{Xu:2022vko,Fisher:2022qey} discuss many-body chaos and operator growth. Our review is complementary: we provide a self-contained introduction to this subject (which can appear rather formidable to an outsider!), but also illustrate, with some depth, how the Lieb-Robinson bounds can be applied very broadly. We have found, working in this field, that often one mathematical technique will find surprising applications to multiple types of problems previously thought unrelated.  We hope that our review, organized around a few key mathematical results and their many applications, will inspire future scientists and mathematicians to uncover new results, for many years to come.

\section{Mathematical preliminaries}\label{sec:math}
We begin by reviewing our conventions and important mathematical facts.  The reader may skim this section and refer to it as appropriate throughout the review.
\subsection{Notation}
The prevailing notation is summarized by the following, admittedly scattered, collection of facts and definitions.

We denote the complex numbers by $\mathbb{C}$, the real numbers by $\mathbb{R}$, the non-negative real numbers by $\mathbb{R}^+ = \lbrace t\in \mathbb{R} : t\ge 0\rbrace$, and the integers by $\mathbb{Z}$  (non-negative ones by $\mathbb{Z}^+$). Sets, subsets, and parties (Alice/Bob) will always be denoted with uppercase serif font: $\mathsf{A,B}$, etc. The complement of a subset $\sf A$ is denoted as $\mathsf{A}^{\mathrm{c}}$, which consists of all elements not in $\mathsf{A}$.  (The set of all elements that should be considered will be clear from the context.)

We use roman/upright font $ \e , \ri , \mpi,$ to denote mathematical constants (such as $\ri = \sqrt{-1}$). Variable names which are problem-specific will be italic (e.g. site/vertex $i$ in a lattice/graph).  States in quantum mechanics are described by vectors in a Hilbert space (denoted with $\mathcal{H}$); this Hilbert space will mostly be finite-dimensional.  Vectors in Hilbert space are described in Dirac's bra-ket notation: $|a\rangle$ or $|\psi\rangle$.  Note that we will always use lower-case letters for their arguments.  Uppercase italic letters ($A$, $B$) are reserved for operators on a quantum Hilbert space, which one can often think of as matrices.  We will reserve the letter $I$ for the identity operator: \begin{equation}
    I|\psi\rangle = |\psi\rangle,
\end{equation}
while $H$ is reserved for the Hamiltonian acting on a quantum system, generating time-evolution via the Schr\"odinger equation: \begin{equation}
    H(t)|\psi(t)\rangle = \mathrm{i}\frac{\mathrm{d}}{\mathrm{d}t}|\psi(t)\rangle. \label{eq:schrodinger}
\end{equation}
Here and henceforth, we set $\hbar=1$.  An entry of matrix $A$ will be denoted as $A_{ij} = \langle i|A|j\rangle$.

Sometimes, it is useful to think of the operators acting on $\mathcal{H}$ as vectors themselves.  Mathematically this can be stated as follows: the set of linear transformations, which is denoted as $\mathrm{End}(\mathcal{H})$, is itself a vector space.\footnote{The ``End" stands for endomorphism, which is a mathematical generalization of the notion of linear transformation to more complicated structures.} When we wish to highlight this fact (starting in Section \ref{sec:frobenius}), we will use bra-ket notation with parentheses to write down operators: for example, we write $A$ as $|A)$.  This notation is especially useful when we wish to use the natural notion of Hilbert-Schmidt inner product on this space: \begin{align}
(A|B):= \frac{\tr(A^\dagger B)}{\tr[I]}. \label{eq:innerproduct}
\end{align}
The symbol $:=$ denotes ``the left-hand side of this expression is defined by the right"; the symbol $=:$ would mean the right is defined by the left.
In Section \ref{sec:density}, we will find it useful to consider alternative inner products on this space.  Lastly, in the Heisenberg picture of quantum mechanics, the Schr\"odinger equation (\ref{eq:schrodinger}) generalizes to \begin{equation}
    \frac{\mathrm{d}}{\mathrm{d}t}|A) = \mathcal{L}|A), \label{eq:schrodingerheisenberg}
\end{equation}
with the Liouvillian $\mathcal{L}$ defined as \begin{align}
     \ri [H, A ]=: \CL|A). 
 \end{align}
 Without using this bra-ket notation and assuming $H$ is time-independent, we can integrate (\ref{eq:schrodingerheisenberg}) to find
\begin{align}
    A(t):= \e^{\ri H t}A \e^{-\ri H t} = \mathrm{e}^{\mathcal{L}t}A. \label{eq:Aoft}
\end{align}
It is almost always the case in this review that a result derived for $t$-independent $H$ also holds for $t$-dependent $H$; so we will often write (\ref{eq:Aoft}) even when the result does hold for $t$-dependent $H$ (where $\mathrm{e}^{-\mathrm{i}Ht}$ becomes a time-ordered path integral).\footnote{In the mathematics literature one often finds the notation $A(t) = \tau_t(A)$; we stick to the physics notation $A(t)$.}
 In general, we will denote ``super-operators" (transformations on the space of operators) by curly fonts $\mathcal{A},\mathcal{B}$, etc.

We use $D$ to denote the dimension of a finite-dimensional Hilbert space (often, this will be exponentially large in the number of qubits or degrees of freedom $N$). The spatial dimension is denoted by $d$ (defined in \eqref{eq:d_dim}), which is never particularly large.

Expectation values in quantum mechanics may be denoted with the following shorthand notation: with states, $\alr{A}_\psi :=\langle\psi|A|\psi\rangle$, while with density matrices, $\alr{A}_\rho := \tr(\rho A )$. The symbol $\mathbb{E}[\cdots]$ will denote expectation value with respect to a \emph{classical probability distribution}, not a quantum state, while $\mathbb{P}[\cdots]$ will denote the \emph{classical} probability of an event.

We say that $f(x) = \mathrm{O}(g(x))$ if there is a constant $c<\infty$ such that $|f(x)| \le c |g(x)|$ for all $x$ and that  $f(x) = \mathrm{\Omega}(g(x))$ if there is a constant $c >0 $ such that $|f(x)| \ge c |g(x)|$ for all $x$.

\subsection{Qubits}\label{sec:qubits}
We will often discuss qubits: two-level systems with states $|0\rangle$ and $|1\rangle$. We define the Pauli matrices \begin{subequations}
\begin{align}
    X &= |0\rangle\langle 1| + |1\rangle\langle 0|, \\
    Y &= -\mathrm{i}|0\rangle\langle 1| + \mathrm{i} |1\rangle\langle 0|,\\
    Z &= |0\rangle\langle 0| - |1\rangle\langle 1|.
\end{align}\end{subequations}
Of course, the Hilbert space is often the tensor product of $N$ two-level systems (so the Hilbert space has dimension $D=2^N$).  The most general operator on this system is the tensor product of the operators on each of the $N$ two-level systems.  On a single two-level system, the most general operator can be expanded in the Pauli basis: $A=a_0 I + a_1 X + a_2 Y + a_3 Z$, where all coefficients $a_i$s are real if and only if $A$ is Hermitian.  Therefore we conclude that the most general operator on the $2^N$-dimensional Hilbert space is a sum of $4^N$ different possible \textbf{Pauli strings}.   Denoting $(I,X,Y,Z)  = X^a$ for $a=0,\ldots,3$: \begin{equation}
    A = \sum_{a_1\cdots a_N = 0}^3 A_{a_1\cdots a_N} X^{a_1}\otimes \cdots \otimes X^{a_N}, \quad \text{or}\quad |A) = \sum_{a_1\cdots a_N = 0}^3 A_{a_1\cdots a_N} |a_1\cdots a_N).
\end{equation}
The following fact will prove very useful in Section \ref{sec:frobenius}: 
\begin{prop}[Pauli strings are orthonormal] \label{prop:pauliorthogonal}
Using the inner product (\ref{eq:innerproduct}), \begin{equation}
    (a_1^\prime \cdots a_N^\prime| a_1\cdots a_N) = \mathbb{I}(a_1^\prime=a_1,\ldots, a_N^\prime=a_N).
\end{equation}
\end{prop}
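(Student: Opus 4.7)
The plan is to reduce the $N$-qubit statement to the single-qubit case by exploiting the tensor-product structure, then verify the single-qubit orthonormality by direct computation of six traces.

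First, I would expand the defining inner product on $N$ qubits. Since each $X^a$ is Hermitian, $(X^{a_1'}\otimes\cdots\otimes X^{a_N'})^\dagger = X^{a_1'}\otimes\cdots\otimes X^{a_N'}$, and $\tr[I]=2^N$. The key algebraic fact is that the trace of a tensor product factorizes: $\tr(M_1\otimes\cdots\otimes M_N)=\prod_i \tr(M_i)$. Combined with the mixed-product property $(A_1\otimes\cdots)(B_1\otimes\cdots)=(A_1B_1)\otimes\cdots$, this gives
\begin{equation}
(a_1'\cdots a_N'|a_1\cdots a_N)=\frac{1}{2^N}\prod_{i=1}^N \tr\!\left(X^{a_i'}X^{a_i}\right).
\end{equation}

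Thus the result follows once I establish the single-qubit identity $\tfrac{1}{2}\tr(X^{a'}X^{a})=\mathbb{I}(a'=a)$ for $a,a'\in\{0,1,2,3\}$. This is a finite check on the $4\times 4$ table of Pauli products: the diagonal entries satisfy $I^2=X^2=Y^2=Z^2=I$ (so their traces give $2$, yielding $1$ after the $\tfrac{1}{2}$ normalization), while any off-diagonal product is proportional to a \emph{different} nonidentity Pauli (e.g.\ $XY=\mathrm{i}Z$, $IZ=Z$, etc.), and every nonidentity Pauli is traceless. Invoking these two facts collapses all off-diagonal entries and normalizes the diagonal ones to $1$.

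Substituting back into the factorized expression gives a product of Kronecker deltas, which is exactly the indicator $\mathbb{I}(a_1'=a_1,\ldots,a_N'=a_N)$, completing the proof. There is no real obstacle here; the only step requiring any care is tracking the Hermitian conjugation and the normalization $1/\tr[I]=1/2^N$ so that the single-qubit factors of $2$ cancel precisely against the $2^N$ in the denominator. The proposition is essentially a bookkeeping consequence of (i) Paulis being a trace-orthogonal basis on $\mathbb{C}^{2\times 2}$, and (ii) traces multiplying across tensor factors.
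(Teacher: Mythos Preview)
Your proof is correct and complete; the paper itself states this proposition without proof, treating it as a standard fact, so there is nothing to compare against. Your reduction to the single-qubit case via the multiplicativity of trace over tensor factors, followed by the direct check that $\tfrac{1}{2}\tr(X^{a'}X^a)=\mathbb{I}(a'=a)$, is exactly the canonical argument.
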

The symbol $\mathbb{I}[\cdots]$ denotes the indicator function, which is 1 if its argument is true, and 0 if its argument is false. As is standard, we will denote with $X_i^a$ the Pauli matrix $X^a$ acting on qubit $i$, tensored with identity $I$ on all other qubits.

\subsection{Graphs and local Hamiltonians}\label{sec:graphreview}
Many-body quantum systems are interesting precisely because the Hamiltonian is usually expressible in a simple way. For most of this review, we will focus on Hamiltonians between $N$ qubits that can be expressed as \begin{equation}
    H(t) = \sum_{i=1}^N \sum_{a=1}^3 h_i^a(t) X_i^a + \sum_{i<j = 1}^N \sum_{a,b=1}^3 h_{ij}^{ab}(t)X_i^a X_j^b. \label{eq:2localH}
\end{equation}
Because $H$ contains terms that have at most two non-identity Paulis at a time, we call $H$ \textbf{2-local}.

Each 2-local Hamiltonian can be naturally described via an undirected graph $\mathsf{G}=(\mathsf{V},\mathsf{E})$, where $\mathsf{V}$ is a vertex set and $\mathsf{E}$ is a collection of two element subsets of $\mathsf{V}$.  We place a qubit (or more generally, a degree of freedom) on each vertex $v\in\mathsf{V}$; we place an edge $\lbrace u,v\rbrace \in \mathsf{E}$ if and only if $h^{ab}_{uv}(t) \ne 0$ for some $a,b$ and $t$ (except for in Section \ref{sec:power-law}).  In many physics problems, one actually has a graph $\mathsf{G}$ which is pre-specified - only couplings between certain qubits are permitted, so it will be natural for us to describe notions of locality and Lieb-Robinson bounds in terms of graphs.  A useful notation is to keep track of what edges are adjacent to each vertex: for $v\in \mathsf{V}$,
\begin{equation}
\partial v = \lbrace e\in\mathsf{E} : v\in e\rbrace \subset \mathsf{E}. 
\end{equation} 
More generally, for any set $\mathsf{A}\subseteq \mathsf{V}$, \begin{equation}
   \partial \mathsf{A} = \lbrace e\in\mathsf{E}: |e\cap \mathsf{A}|=1\rbrace
\end{equation}
contains the edges that connect vertices in $\mathsf{A}$ to those outside of $\mathsf{A}$.
The \textbf{Manhattan distance} (for us, just ``distance") $\mathsf{d}(u,v)$ on a graph is defined as the fewest number of edges that can be traversed to get between any two vertices. This distance measure obeys the triangle inequality: \begin{equation}
    \mathsf{d}(u,v) \le \mathsf{d}(u,w) + \mathsf{d}(w,v), \quad \text{ for any } w\in\mathsf{V}. \label{eq:triangleinequality}
\end{equation}
The diameter of a set $\mathsf{S}\subseteq \mathsf{V}$ is defined as \begin{equation}
    \mathrm{diam}(\mathsf{S}) = \max_{u,v\in\mathsf{S}}\mathsf{d}(u,v).
\end{equation}
We say that the graph is $d$-dimensional (as measured by constant $C>0$) if \begin{equation}
    |\lbrace u\in \mathsf{V} : \mathsf{d}(u,v)\le r\rbrace| \le 1+Cr^d \label{eq:d_dim},
\end{equation}
for any $v\in\mathsf{V}$; this is, of course, satisfied by the $d$-dimensional lattice $\mathbb{Z}^d$ with nearest-neighbor connectivity. 

Naturally, one may consider $k$-local Hamiltonians where each of the interaction terms is $k$-local (e.g., $X_1X_2X_3$ is a $3$-local term). Physically, we often take $k$ as constant independent of the system size $N$.  We will not focus too much on $k$-local Hamiltonians in this review: many of the techniques we have described for $k=2$ generalize somewhat straightforwardly. We think an elegant way to generalize what is described in this review to $k$-local problems is the factor graph construction: see \cite{chen2019operator} for details.

\subsection{Operator norms and identities}

Operator dynamics involve complicated, high-dimensional objects. Nevertheless, we may capture its sizes using prevailing choices of matrix norms. The \textbf{operator norm}
\begin{align}
    \norm{A} &:= \sup_{\ket{\psi}, \ket{\phi}} \frac{\labs{\bra{\phi}A\ket{\psi}}}{\sqrt{\braket{\phi|\phi}\braket{\psi|\psi}}}
\end{align}
controls the matrix element between any possible states and is equivalent to the maximal singular value; the \textbf{Frobenius norm} 
is the root-mean-square of singular values
\begin{align}
    \norm{A}_{\mathrm{F}} &:= \sqrt{\frac{\tr(A^{\dagger}A)}{\tr(I)}}\label{eq:frobeniusnorm}
\end{align}
which corresponds to its strength on random states (see~\eqref{eq:gettingfrobenius}).
Intuitively, the operator norm and Frobenius norm respectively capture the strength of the operator over the \textit{worst} and \textit{average} inputs states. Naturally, the above are special cases of the Schatten $p$-norms at $p=\infty$ and $p=2$
\begin{align}    
    \norm{A}_p &:= 
    \left(\frac{\tr((A^{\dagger}A)^{p/2})}{\tr[I]} \right)^{1/p} \quad \text{for each } 1 \le p \le \infty.
\end{align}
In the above, we have normalized all norms such that $\norm{I}_p = 1$ as it appears natural in our discussions. Perhaps the most useful fact about these norms is the triangle inequality:\begin{equation}
    \norm{A+B}_p\le \norm{A}_p+\norm{B}_p
\end{equation} and H\"older's inequality. 
\begin{prop}[H\"older's inequality for Schatten norms]\label{prop:holder}
For any square matrices $A$ and $B$, we have
\begin{align}
    \norm{AB}_r \le \norm{A}_p\norm{B}_q \quad \text{whenever}\quad \frac{1}{r} = \frac{1}{p}+ \frac{1}{q}.
\end{align}
\end{prop}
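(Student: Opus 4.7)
My plan is to deduce matrix H\"older from scalar H\"older via the singular value decomposition, with Horn's log-majorization of singular values as the central non-trivial input. The normalization is painless: since $1/r = 1/p + 1/q$, the factors $(\tr I)^{-1/p}$, $(\tr I)^{-1/q}$, $(\tr I)^{-1/r}$ that distinguish the normalized norms in \eqref{eq:frobeniusnorm} from their unnormalized counterparts cancel identically, so it suffices to prove $\tr|AB|^r \le (\tr|A|^p)^{r/p}(\tr|B|^q)^{r/q}$.

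I would then proceed in three substantive steps. First, use the SVD to write $\tr|M|^p = \sum_k s_k(M)^p$ in terms of the singular values $s_1(M) \ge s_2(M) \ge \cdots \ge 0$. Second, establish Horn's log-majorization
\begin{equation}
\prod_{k=1}^n s_k(AB) \;\le\; \prod_{k=1}^n s_k(A)\,s_k(B) \qquad \text{for every } n\ge 1,
\end{equation}
most cleanly via the exterior-power identity $\prod_{k=1}^n s_k(M) = \|\Lambda^n M\|_\infty$ combined with $\Lambda^n(AB) = \Lambda^n(A)\Lambda^n(B)$, after which sub-multiplicativity of the operator norm (immediate from its definition) closes the bound. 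Third, invoke the standard principle that log-majorization composed with any $\phi$ for which $\phi(\mathrm{e}^t)$ is convex and increasing yields weak majorization; for $\phi(x)=x^r$ with $r\ge 1$ this gives
\begin{equation}
\sum_k s_k(AB)^r \;\le\; \sum_k s_k(A)^r\, s_k(B)^r.
\end{equation}
Scalar H\"older with exponents $p/r$ and $q/r$ --- which are conjugate precisely because $r/p + r/q = 1$ --- then bounds the right-hand side by $(\sum_k s_k(A)^p)^{r/p}(\sum_k s_k(B)^q)^{r/q}$, and taking $r$-th roots finishes the proof.

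The main obstacle is the majorization machinery: Horn's inequality and the passage from log- to weak-majorization under $x^r$. Neither is evident from the definitions, and a careful writeup needs either a short excursion into exterior algebra or the Ky~Fan variational characterization of products of singular values. An alternative I considered is complex interpolation, applying the Hadamard three-lines theorem to an analytic family such as $A(z) = U_A|A|^{pz/r}$ and interpolating between the trivial endpoint cases where $p$ or $q$ equals $\infty$; this sidesteps majorization but is no more elementary. Everything outside the majorization step --- the normalization cancellation, the SVD reduction, and scalar H\"older --- is routine bookkeeping.
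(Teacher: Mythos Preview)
The paper states H\"older's inequality for Schatten norms as a standard fact and gives no proof, so there is nothing to compare your argument against. Your outline via Horn's log-majorization of singular values, the passage to weak majorization under $\phi(x)=x^r$ (note $\phi(\mathrm{e}^t)=\mathrm{e}^{rt}$ is convex and increasing for all $r>0$, so the restriction $r\ge 1$ you mention is not actually needed), and scalar H\"older with exponents $p/r,q/r$ is correct and is one of the standard textbook routes; the normalization cancellation you point out is exactly right. The only mild subtlety you leave implicit is that the sequence $s_k(A)s_k(B)$ need not be decreasing, but this is harmless: the partial-product inequality from Horn still implies weak log-majorization by the decreasing rearrangement, after which the convexity step goes through unchanged.
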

In particular, setting $p=\infty$ yields the submultiplicativty of operator norm $\norm{AB}_r \le \norm{A} \cdot\norm{B}_r$. 

Further, since our operator is defined on a Hilbert space with a tensor product structure, it is often helpful to think about projection superoperators that isolate components of the operator according to locality. 
\begin{defn}[Super-projectors]  
\label{defn:super_projector}
For any set $\sf A$, define the projection superoperator $\mathbb{P}_{\mathsf{A}}$ that annihilates operators acting trivially on $\sf A$ by \begin{equation}
    \mathbb{P}_{\mathsf{A}}|a_1\cdots a_N) := \mathbb{I}(a_j\ne 0, \text{ for some $j\in\mathsf{A}$})\ |a_1\cdots a_N).
\end{equation}
Alternatively, let
\begin{align}
 \overline{\mathbb{P}}_{\mathsf{A}}:= \mathcal{I} - \mathbb{P}_{\mathsf{A}^c}, \quad \text{then}     \quad \overline{\mathbb{P}}_{\mathsf{A}} A = A \quad \text{if and only if} \quad A = A_{\mathsf{A}} \otimes I_{\mathsf{A}^c}
\end{align}
where $\CI$ is the identity super-operator. We will say ``operator $A$ is supported on set $\sf A$'' or ``operator $A$ acts non-trivially only on set $\sf A$'' if $\overline{\mathbb{P}}_{\mathsf{A}} A = A$. 
\end{defn}
It is worth distinguishing the functionality of the two super-projectors. The super-projector $\mathbb{P}_{\mathsf{A}}$ isolates the components of the operator that do not vanish in commutators (Section \ref{sec:LRbounds}); for any operator $A$ supported on $\mathsf{A}$ and any operator $B$, we have \begin{equation}
    [A,B]= [A,\mathbb{P}_{\mathsf{A}}B].
\end{equation}
Nicely, this superoperator cannot increase any Schatten norm by too much.
\begin{prop}[Super-projectors and norms]\label{prop:proj_norms}
    The projection $\mathbb{P}_{\mathsf{A}}$ at most increases the Schatten $p$-norm by 
\begin{align}
    \norm{ \mathbb{P}_{\mathsf{A}}B }_{p} \le 2 \norm{B}_{p}.
\end{align}
In particular, it is an actual orthogonal projector in the Hilbert-Schmidt inner product such that
\begin{align}
    \norm{ \mathbb{P}_{\mathsf{A}}B }_{\mathrm{F}} \le \norm{B}_{\mathrm{F}}.
\end{align}
\end{prop}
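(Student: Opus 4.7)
The plan is to reduce both bounds to the fact that the complementary superoperator $\CI - \mathbb{P}_{\mathsf{A}}$, which retains only the part of $B$ acting as the identity throughout $\mathsf{A}$, can be realized as an average of unitary conjugations, and is therefore non-expansive on every Schatten norm.

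First I would establish the \emph{Pauli twirl} representation
\begin{equation}
    (\CI - \mathbb{P}_{\mathsf{A}})\, B \;=\; \frac{1}{4^{|\mathsf{A}|}} \sum_{P} P\, B\, P^{\dagger},
\end{equation}
where the sum runs over the $4^{|\mathsf{A}|}$ Pauli strings supported on $\mathsf{A}$ (tensored with identity on $\mathsf{A}^{\mathrm{c}}$). To verify the identity, expand $B$ in the Pauli basis of Section~\ref{sec:qubits}: for a fixed string $|a_1\cdots a_N)$ with $a_j\ne 0$ for some $j\in\mathsf{A}$, exactly two of the four local Paulis at site $j$ anticommute with $X^{a_j}$, so the average at site $j$ contributes a factor of zero and the string is killed; strings with $a_k=0$ for every $k\in\mathsf{A}$ commute with every $P$ and are preserved unchanged. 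This is precisely the action of $\CI - \mathbb{P}_{\mathsf{A}}$.

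With the twirl in hand, the triangle inequality together with unitary invariance of the Schatten $p$-norm gives
\begin{equation}
    \norm{(\CI - \mathbb{P}_{\mathsf{A}})\, B}_p \;\le\; \frac{1}{4^{|\mathsf{A}|}} \sum_P \norm{P\, B\, P^{\dagger}}_p \;=\; \norm{B}_p.
\end{equation}
Applying the triangle inequality once more to $\mathbb{P}_{\mathsf{A}} B = B - (\CI - \mathbb{P}_{\mathsf{A}}) B$ immediately yields $\norm{\mathbb{P}_{\mathsf{A}} B}_p \le 2\norm{B}_p$, proving the first claim.

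For the Frobenius bound I would invoke Proposition~\ref{prop:pauliorthogonal} directly: the Pauli strings $\{|a_1\cdots a_N)\}$ form an orthonormal basis of $\mathrm{End}(\mathcal{H})$ under the Hilbert-Schmidt inner product, and by Definition~\ref{defn:super_projector} the superoperator $\mathbb{P}_{\mathsf{A}}$ is diagonal in this basis with eigenvalues in $\{0,1\}$. Hence $\mathbb{P}_{\mathsf{A}}$ is an orthogonal projector on the operator Hilbert space, and Pythagoras gives $\norm{\mathbb{P}_{\mathsf{A}} B}_{\mathrm{F}} \le \norm{B}_{\mathrm{F}}$ with equality iff $\mathbb{P}_{\mathsf{A}} B = B$. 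The only nontrivial ingredient in the whole argument is verifying the twirl identity; everything else is a triangle inequality or an orthogonal decomposition.
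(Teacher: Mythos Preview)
Your proof is correct and follows essentially the same route the paper takes (implicitly, via the Haar representation in Proposition~\ref{prop:haar_rep_projector}): both exploit that $\mathcal{I}-\mathbb{P}_{\mathsf{A}}=\overline{\mathbb{P}}_{\mathsf{A}^{\mathrm{c}}}$ is a convex average of unitary conjugations supported on $\mathsf{A}$, hence contractive on every Schatten norm, after which the factor of $2$ follows by the triangle inequality. Your Pauli twirl is simply the discrete, qubit-specific version of the paper's Haar average; the two agree because the Pauli group is a unitary $1$-design, and the paper's formulation has the minor advantage of applying verbatim to qudits of arbitrary local dimension.
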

The super-projector $\overline{\mathbb{P}}_{\mathsf{S}}$ has an elegant representation in terms of Haar averages: 
\begin{prop}[Haar representation for super-projector]\label{prop:haar_rep_projector}
For any set $\mathsf{S}$ and operator $A$, we can present the super-projector as 
\begin{equation}
\overline{\mathbb{P}}_{\mathsf{S}}A=\int [\mathrm{d} U]_{\mathsf{S}^c} U^\dagger A U, \label{eq:barPS}
\end{equation}
where $[\mathrm{d} U]_{\mathsf{S}^c}$ is the Haar measure for unitaries supported on set $\mathsf{S}^c$. By the triangle inequality,
\begin{align}
    \norm{ \overline{\mathbb{P}}_{\mathsf{S}}A }_{p} \le \norm{A}_{p}.
\end{align}
\end{prop}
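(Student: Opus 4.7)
The plan is to expand $A$ in the Pauli basis adapted to the bipartition $\mathsf{V} = \mathsf{S} \sqcup \mathsf{S}^{\mathrm{c}}$, write
\begin{equation}
A = \sum_{\vec a_{\mathsf{S}},\, \vec a_{\mathsf{S}^{\mathrm{c}}}} A_{\vec a_{\mathsf{S}},\,\vec a_{\mathsf{S}^{\mathrm{c}}}}\; X^{\vec a_{\mathsf{S}}} \otimes X^{\vec a_{\mathsf{S}^{\mathrm{c}}}},
\end{equation}
and then evaluate the twirl term-by-term. Since the Haar-random unitary acts only on $\mathsf{S}^{\mathrm{c}}$, conjugation passes through the tensor factor on $\mathsf{S}$, so I only need to compute $\int [\mathrm{d} U]_{\mathsf{S}^{\mathrm{c}}}\, U^\dagger X^{\vec a_{\mathsf{S}^{\mathrm{c}}}} U$. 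By left- and right-invariance of the Haar measure, this integral commutes with every unitary on $\mathsf{S}^{\mathrm{c}}$, so Schur's lemma forces it to be a scalar multiple of $I_{\mathsf{S}^{\mathrm{c}}}$; taking a trace fixes the scalar to $\tr(X^{\vec a_{\mathsf{S}^{\mathrm{c}}}})/\dim \mathcal{H}_{\mathsf{S}^{\mathrm{c}}}$. By Proposition \ref{prop:pauliorthogonal}, this vanishes unless $\vec a_{\mathsf{S}^{\mathrm{c}}} = \vec 0$, in which case the factor is $I_{\mathsf{S}^{\mathrm{c}}}$.

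Collecting the surviving terms gives
\begin{equation}
\int [\mathrm{d} U]_{\mathsf{S}^{\mathrm{c}}}\, U^\dagger A U \;=\; \sum_{\vec a_{\mathsf{S}}} A_{\vec a_{\mathsf{S}},\,\vec 0}\; X^{\vec a_{\mathsf{S}}} \otimes I_{\mathsf{S}^{\mathrm{c}}},
\end{equation}
which is precisely the projection of $A$ onto Pauli strings that are trivial on $\mathsf{S}^{\mathrm{c}}$. On the other hand, from Definition~\ref{defn:super_projector} one sees that $\overline{\mathbb{P}}_{\mathsf{S}} = \mathcal{I} - \mathbb{P}_{\mathsf{S}^{\mathrm{c}}}$ annihilates every Pauli string with at least one non-identity entry on $\mathsf{S}^{\mathrm{c}}$ and leaves the remaining strings intact, so the two expressions agree, proving \eqref{eq:barPS}. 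The norm bound then follows from unitary invariance of the Schatten $p$-norms, $\|U^\dagger A U\|_p = \|A\|_p$, combined with the integral form of the triangle inequality (equivalently, convexity of $\|\cdot\|_p$):
\begin{equation}
\bigl\| \overline{\mathbb{P}}_{\mathsf{S}} A \bigr\|_p \;\le\; \int [\mathrm{d} U]_{\mathsf{S}^{\mathrm{c}}}\, \|U^\dagger A U\|_p \;=\; \|A\|_p.
\end{equation}

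The only genuine subtlety is the Schur's lemma step: one needs that the only operators on $\mathcal{H}_{\mathsf{S}} \otimes \mathcal{H}_{\mathsf{S}^{\mathrm{c}}}$ commuting with every $I_{\mathsf{S}} \otimes U_{\mathsf{S}^{\mathrm{c}}}$ are of the form $M_{\mathsf{S}} \otimes I_{\mathsf{S}^{\mathrm{c}}}$, which holds because the defining representation of the unitary group on $\mathcal{H}_{\mathsf{S}^{\mathrm{c}}}$ is irreducible. The Pauli-basis calculation sketched above is precisely the concrete, coordinate-level realization of this abstract fact, and is what I would actually present in the proof since it matches the operator-basis language already in use in the review.
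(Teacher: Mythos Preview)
Your proof is correct. The paper does not actually give a proof of the Haar representation \eqref{eq:barPS}; it is stated as a known fact, and the norm bound is simply attributed to ``the triangle inequality'' within the proposition statement itself. Your argument fills in the details in a way fully consistent with the paper's framework: the Pauli-basis computation (or equivalently the Schur's lemma step) establishes \eqref{eq:barPS}, and your derivation of the norm bound via unitary invariance of $\|\cdot\|_p$ together with the integral triangle inequality is exactly what the paper's one-line justification is pointing to.
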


The following identity will prove immensely useful for us when we prove Lieb-Robinson bounds:
   \begin{prop}[Duhamel's identity] \label{prop:duhamel}
For any square matrices $A$ and $B$ of the same dimension, \begin{equation}
    \mathrm{e}^{(A+B)t } = \mathrm{e}^{At} + \int\limits_0^t \mathrm{d}s \mathrm{e}^{(A+B)(t-s)}B\mathrm{e}^{As}. \label{eq:duhamel}
\end{equation}
\end{prop}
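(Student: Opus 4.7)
The plan is to introduce an interpolating auxiliary function and apply the fundamental theorem of calculus. Specifically, I would define
\begin{equation*}
f(s) := \mathrm{e}^{(A+B)(t-s)}\,\mathrm{e}^{As}, \qquad s\in[0,t],
\end{equation*}
so that the two endpoints already encode both sides of the identity: $f(0) = \mathrm{e}^{(A+B)t}$ while $f(t) = \mathrm{e}^{At}$. Any identity of the form ``exponential of the sum equals exponential of one piece plus integral of $B$-insertions'' should morally come from the derivative of such an interpolant, so this is the natural construction.

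Next I would differentiate $f(s)$ in $s$. The key observation is that $\mathrm{e}^{As}$ commutes with $A$, so the derivative of the right factor produces $\mathrm{e}^{As}A = A\mathrm{e}^{As}$, while the derivative of the left factor produces $-\mathrm{e}^{(A+B)(t-s)}(A+B)$. The two $A$-pieces cancel against each other, leaving only
\begin{equation*}
f'(s) = -\mathrm{e}^{(A+B)(t-s)}\,B\,\mathrm{e}^{As}.
\end{equation*}
This cancellation is the whole content of the argument; once it is in place, the rest is bookkeeping.

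Finally I would invoke the fundamental theorem of calculus, $f(t)-f(0) = \int_0^t f'(s)\,\mathrm{d}s$, substitute the two endpoint values, and rearrange to obtain exactly the claimed formula~\eqref{eq:duhamel}. Since everything is finite-dimensional (or at least bounded-operator) and the exponentials are norm-continuous and differentiable in $s$, there are no analytic subtleties in differentiating under or passing the integral through: the calculation is strictly algebraic.

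There is no genuine obstacle here; the only ``trap'' is to pick the wrong ordering in the auxiliary function (e.g.\ $\mathrm{e}^{As}\mathrm{e}^{(A+B)(t-s)}$), which would give the analogous identity with $A$ and $B$ factors swapped in position. Choosing $f(s) = \mathrm{e}^{(A+B)(t-s)}\mathrm{e}^{As}$ matches the precise ordering in~\eqref{eq:duhamel}, and using that $A$ commutes with $\mathrm{e}^{As}$ but $A+B$ generally does not commute with $A$ is the only subtlety worth flagging.
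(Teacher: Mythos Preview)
Your proof is correct. The paper takes a slightly different but closely related route: it fixes the right-hand side of~\eqref{eq:duhamel} as a function $C(t)$, checks $C(0)=I$, differentiates in $t$ to verify $\mathrm{d}C/\mathrm{d}t=(A+B)C$, and then invokes uniqueness of solutions to this linear ODE to conclude $C(t)=\mathrm{e}^{(A+B)t}$. Your approach instead holds $t$ fixed, differentiates the interpolant $f(s)=\mathrm{e}^{(A+B)(t-s)}\mathrm{e}^{As}$ in the integration variable $s$, and applies the fundamental theorem of calculus directly. Both hinge on the same algebraic cancellation of the $A$-terms; the difference is that your argument avoids any appeal (however trivial) to ODE uniqueness, while the paper's argument avoids introducing an auxiliary interpolating function. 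They are two sides of the same coin, and either would be accepted without comment.
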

\begin{proof}
Call the right hand side $C(t)$, and observe that $\mathrm{d}C/\mathrm{d}t = (A+B)C$.  Explicitly evaluating the right-hand side, we find $C(0)=I$.  Then, solving the differential equation leads to (\ref{eq:duhamel}).
\end{proof}

\subsection{Remark on $C^*$-algebras}\label{sec:cstaralgebra}
In the mathematics literature, one often discusses Lieb-Robinson bounds using the formalism of $C^*$-algebras.  In a nutshell, the idea is that in an unbounded graph (e.g., $\mathsf{V}=\mathbb{Z}$: the one-dimensional lattice!), many-body states are not precisely defined, yet local operators are.   E.g. it is not possible to specify the state $|\cdots 000\cdots \rangle$ and always correctly set the boundary conditions at infinity; however, we can always discuss local operators, such as Pauli matrices $X_j$ acting on site $j\in\mathsf{V}$, whether or not $\mathsf{V}$ is a finite set.  

$C^*$-algebras provide a rigorous language \cite{bratteli} for discussing the objects that do have precise definitions: bounded operators $O$ which are supported on finite subsets of a (possibly infinite) vertex set $\mathsf{V}$.   The key observation is that commutators of bounded local operators are also bounded local operators: this closure, and the fact that time translation is generated by commutators with a local Hamiltonian, suggests that $C^*$-algebras are a rigorous way to discuss the limit of infinite system size.    We feel, however, that this mathematical structure often distracts from the crucial intuition and ingredients behind making powerful Lieb-Robinson bounds relevant to concrete physics problems, and will not focus on it in this review.  

\section{Lieb-Robinson bounds}\label{sec:LR}
We now turn to the core section of this review, where we introduce the famous Lieb-Robinson Theorem. Assuming only spatial locality, the Lieb-Robinson bounds constrain a large class of many-body quantum dynamics. Though it will take many pages to fully show it, these Lieb-Robinson bounds are remarkably versatile, especially since \textit{exact} dynamics of many-body Hamiltonians are generally analytically intractable and problem-specific.  In contrast, Lieb-Robinson bounds allow us to make statements about all local systems.

\subsection{Warm-up: particle on a line}
\label{sec:warmup_oneparticle}
 Before we dive into the many-body problem, it is instructive to consider a single-particle problem on a 1d lattice ($r \in \mathbb{Z}$) with the Hamiltonian \begin{equation}
    H = -h\sum_{r\in\mathbb{Z}}\left( |r\rangle \langle r+1| + |r\rangle \langle r-1| \right). \label{eq:Hwarmup}
\end{equation}
This is essentially a discrete-space Schrodinger equation, or in computer science literature, the (continuous-time) \textit{quantum walk} \cite{Venegas-Andraca:2012zkr}.
We are interested in the Schr\"odinger picture wave function (presented in the $\ket{r}$ basis) \begin{equation}
    \psi(r,t) := \langle r| \mathrm{e}^{-\mathrm{i}Ht}|0\rangle \quad \text{for} \quad r\in \mathbb{Z}^+, t\in \mathbb{R}^+.
\end{equation}
The amplitude (squared) gives us the probability of the particle being on site $r$. The time evolution can be rewritten in the form of Schr\"odinger equation by inserting a complete basis:
\begin{align} \label{eq:schrodinger31}
    \frac{\mathrm{d}}{\mathrm{d}t} \psi(r,t) &= \bra{r} - \ri H \e^{-\ri H t}\ket{0} =\sum_{r'} \bra{r} - \ri H \ket{r'} \bra{r'}\e^{-\ri H t}\ket{0} = -\ri \sum_{r'} H_{rr'} \psi(r',t).
\end{align}
There are many approaches to solving this problem. One is to Taylor expand in time $t$:
\begin{align}
    \mathrm{e}^{-\mathrm{i}Ht}|0\rangle &= |0\rangle - \mathrm{i}ht \left[ |-1\rangle + |1\rangle\right] - \frac{(ht)^2}{2}\left[ |-2\rangle + 2|0\rangle + |2\rangle\right] + \frac{\mathrm{i}(ht)^3}{6}\left[ |-3\rangle + 3|-1\rangle +3|1\rangle + |3\rangle\right] + \cdots. \label{eq:easyeiHt}
\end{align}
The coefficients in the above expression -- at each order in $t$ -- are binomial coefficients, equivalent to those that count the number of random walks analogous to a random walk (Figure~\ref{fig:1_particle_pascal}).   However, if we are interested in $\psi(r,t)$, the interference between terms at different orders can, a priori, be important:
\begin{align}
    \psi(r,t) = \frac{(-\mathrm{i}ht)^r}{r!} + \binom{r+2}{1}\frac{(-\mathrm{i}ht)^{r+2}}{(r+2)!}+\cdots.
\end{align}
For this particular problem, we can find the exact solution of the oscillatory sum via exact diagonalization~\eqref{eq:exact_diag}. However, in the spirit of a Lieb-Robinson bound, let us only look for an inequality.  Then the argument can greatly simplify.  We take absolute values around the Schr\"odinger equation (\ref{eq:schrodinger31}):
\begin{align}\label{eq:single_particle_ODE}
    \frac{\mathrm{d}}{\mathrm{d}t} \labs{\psi(r,t)} &\le \sum_{r'} \labs{H_{rr'}} \labs{\psi(r',t)}.
\end{align}
For this system of ordinary differential \emph{inequalities} with initial conditions, we have the general exponential bound
\begin{align}
    \labs{\psi(r,t)} \le \sum_{r'} (\e^{A t} )_{rr'}\labs{\psi(r',0)}=  \bra{r}\e^{A t}\ket{0} \quad \text{where} \quad A_{rr'} = \labs{H_{rr'}} \ge 0,
\end{align}
which is the vector version of Gronwall's inequality: 
\begin{align}
    \frac{\mathrm{d}}{\mathrm{d} t} u(t) = f(t) u(t) \quad \text{implies that} \quad \labs{u(s)} \le \labs{u(0)} \exp\left[\int\limits_0^s\labs{f(t)}\mathrm{d}t\right].
\end{align}

 Intuitively, the exponential bound amounts to ignoring the phases in the Schr\"odinger equation and just adding up all of the terms in (\ref{eq:easyeiHt}) \textit{coherently}. This also explicitly reduces the bound to a combinatorial problem defined by the weighted adjacency matrix $A_{rr'}$ (which is entry-wise positive). The exponential conveniently generates all paths connecting sites $0$ to $r$:
\begin{align}
    \bra{r}\e^{A t}\ket{0} &= \sum_{\ell = 0}^\infty \frac{(h|t|)^{\ell}}{\ell!} \cdot \#(\text{paths from $0$ to $r$ with length $\ell$})\notag \\
    & = \sum_{m=0}^{\infty} \frac{(ht)^{r+2m}}{(r+2m)!}\binom{r+2m}{m}
    \le\frac{(2ht)^r}{r!}\cdot \sum_{m=0}^{\infty} \frac{(2ht)^{2m}r!}{(r+2m)!}
    \le \frac{(2ht)^r}{r!} \frac{1}{1-(2ht/r)^2}. 
    \label{eq:1d_sum_paths}
\end{align}
The second line gives precisely the coefficients of Pascal's triangle~\eqref{eq:easyeiHt} without phases; the inequality uses the following convenient bound on binomial coefficients and then that $(r+2m)!/r! \ge r^{2m}$  to obtain the exponential:
\begin{prop}[Bounds on binomial coefficients]
For positive integers $a>b \ge 0$, we have
\begin{align}
    \binom{a}{b} \le 2^a.
\end{align}    
\end{prop}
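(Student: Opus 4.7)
The plan is to use the binomial theorem as the single workhorse. Specifically, applying $(1+1)^a = \sum_{k=0}^a \binom{a}{k}$ gives an exact identity $\sum_{k=0}^a \binom{a}{k} = 2^a$, and since each binomial coefficient is a non-negative integer, any single term $\binom{a}{b}$ in that sum must be bounded above by the full sum. This yields $\binom{a}{b} \le 2^a$ immediately for every $0 \le b \le a$, which is the claim.

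First I would remark that $\binom{a}{b}$ is defined for $0 \le b \le a$ as the number of $b$-element subsets of an $a$-element set, so in particular it is a non-negative integer. Then I would invoke the binomial expansion of $(1+x)^a$ evaluated at $x=1$ to obtain the identity $\sum_{k=0}^a \binom{a}{k} = 2^a$. Finally I would note that dropping all but one non-negative summand only decreases the sum, so $\binom{a}{b} \le \sum_{k=0}^a \binom{a}{k} = 2^a$.

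There is essentially no obstacle here: the result is the cleanest possible consequence of the binomial theorem, and no case analysis on the parity or size of $a,b$ is needed. The only subtlety worth mentioning in the write-up is that the bound is quite loose compared to the standard Stirling-type estimate $\binom{a}{b} \le (\mathrm{e}a/b)^b$, but looseness is harmless here because the proposition is being used in \eqref{eq:1d_sum_paths} only to absorb a combinatorial factor into the leading $(2ht)^r/r!$ behavior, where a factor of $2^{r+2m}$ is exactly what produces the doubled velocity $2h$ in the final bound.
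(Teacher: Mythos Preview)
Your proof is correct and is the standard argument: the binomial theorem gives $\sum_{k=0}^a \binom{a}{k} = 2^a$, and non-negativity of the summands yields the bound. The paper itself states this proposition without proof, treating it as an elementary fact, so your argument is exactly the kind of one-line justification that would be expected here.
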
 
To further simplify, apply Stirling's approximation for the factorial (this will be suitably strong throughout the paper).
\begin{propNB}[Stirling's approximation]
For non-negative integers $\ell \ge 0$, we have
\begin{equation}
 \left(\frac{\ell}{\mathrm{e}}\right)^{\ell}  < \ell!.
\end{equation}
\end{propNB}
Since the coefficient of a wave function is bounded by $|\psi(r,t)|\le 1$, we can simply choose the smaller of 1 and~\eqref{eq:1d_sum_paths} as our bound; Eq.~\eqref{eq:1d_sum_paths} is only meaningful when $2\e ht/r \le 1$. We deduce that \begin{equation}
    |\psi(r,t)| \le \min\left(1,\frac{1}{1-\mathrm{e}^{-2}} \left(\frac{2\mathrm{e}ht}{r}\right)^r \right). \label{eq:C31}
\end{equation}
In other words, we obtain an \emph{emergent speed limit}:  asymptotically, the particle cannot travel faster than the speed \begin{equation}
    v_{\mathrm{LR}} = 2\mathrm{e}h \quad \text{for}\quad r \rightarrow \infty.  \label{eq:vlrsec31}
\end{equation}
We are calling this emergent velocity $v_{\mathrm{LR}}$, in analogy with the Lieb-Robinson velocities we will soon introduce. From the single-particle problem to the many-body problem, we will see the Lieb-Robinson bounds have a similar combinatorial flavor: counting (weighted) paths on the lattice. Since the exact dynamics of the many-body evolution now has exponentially larger dimensions and is thus much harder to solve, the bounds that generalize our argument above can become much more important!

\begin{figure}[t]
\centering
\includegraphics[width=.6\textwidth]{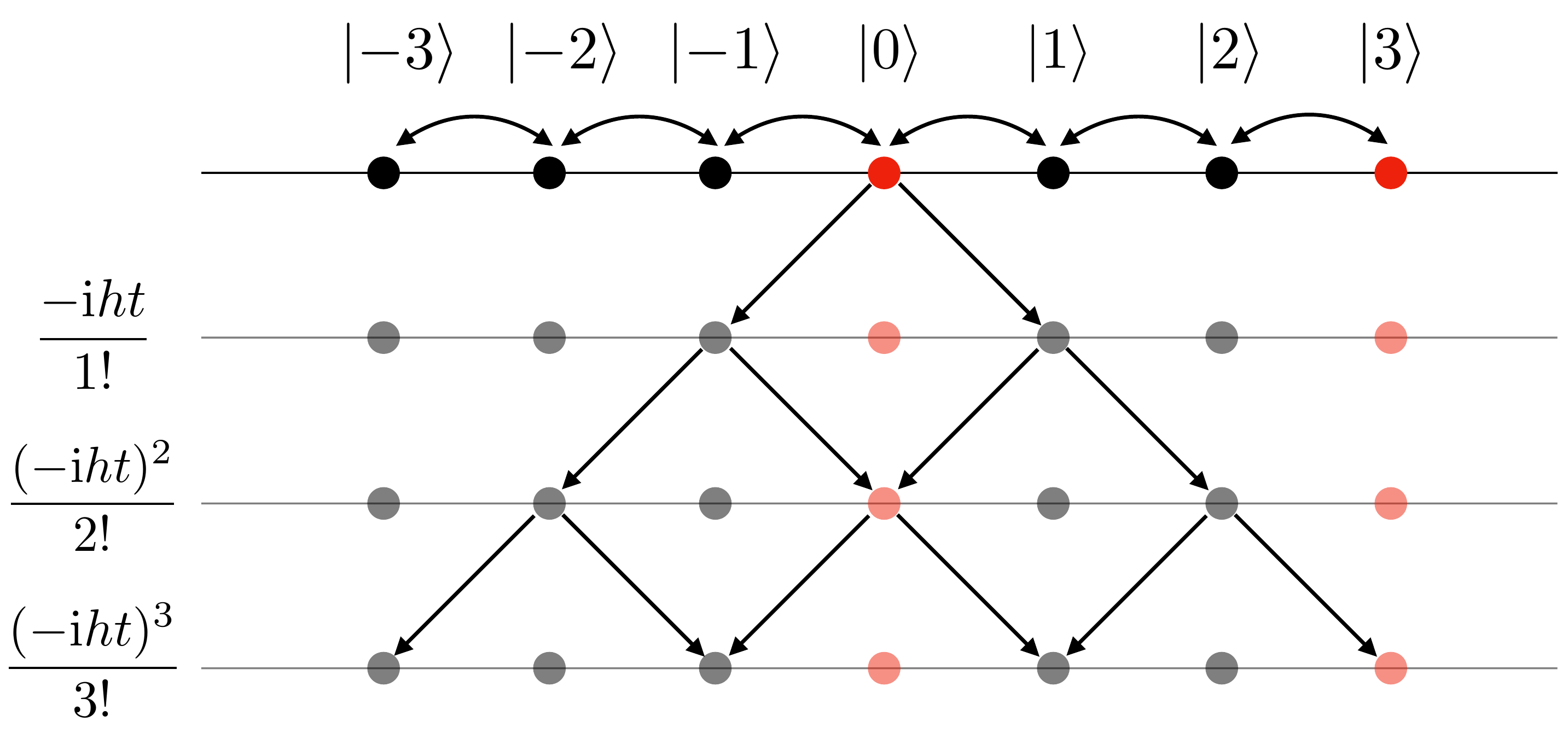}
\caption{Quantum walk of a particle on a line with the nearest neighbor Hamiltonian (\ref{eq:Hwarmup}), starting at $\ket{0}$. The Taylor expansion gives precisely Pascal's triangle (up to phases).}
\label{fig:1_particle_pascal}
\end{figure}
Before we move on, let us comment on the tightness of these naive bounds. The reader may find the triangle inequality approach ``wasteful" -- perhaps one can do a better counting to get a smaller $v_{\mathrm{LR}}$? In Section \ref{sec:babyQW}, we will refine this approach by further utilizing the probabilistic interpretation (constraints) of $|C(r,t)|^2$, and show that this is indeed possible.

For concreteness, let us now compare with the exact diagonalization results; 
the eigenvectors of the Hamiltonian $H$ are the non-normalized plane wave states \begin{equation}
    |k\rangle := \sum_{r\in\mathbb{Z}}\mathrm{e}^{\mathrm{i}kr}|r\rangle,
\end{equation}with eigenvalues \begin{equation}
    H|k\rangle = E_k |k\rangle \quad \text{where} \quad E_k = -2h\cos(k).\label{eq:exact_diag}
\end{equation}
The maximal group velocity in the problem is \begin{equation}
    \frac{\partial E_k}{\partial k} = 2h\sin k \le 2h < v_{\mathrm{LR}}. \label{eq:vgsec31}
\end{equation}
How can anything travel faster than the fastest particle in the system?  We will return to this issue in Section~\ref{sec:babyQW}; for now, we simply remark that our estimate's $r$-dependence is so tight that we cannot improve on the factor $\mathrm{e}$ in (\ref{eq:C31}), although the factor of 2 can be removed using the methods of Section~\ref{sec:self_avoid}: see (\ref{eq:algebraicfrobenius}) and nearby discussion for more.
Moreover, we can solve the problem exactly using these eigenstates.  Define the Fourier transform  \begin{equation}
    \tilde{\psi}(k,t) := \sum_{r} \mathrm{e}^{\mathrm{i}kr} \psi(r,t).
\end{equation}
Using the Schr\"odinger equation, we find \begin{equation}
    \partial_t \tilde{\psi}(k,t) = 2\mathrm{i}h\cos(k)\tilde{\psi}(k,t),
\end{equation}
which can be solved given our initial condition \begin{equation}
    \tilde{\psi}(k,t) = \mathrm{e}^{2\mathrm{i}ht \cos k}.
\end{equation}
Rewrite in terms of the $n^{\mathrm{th}}$ order Bessel function $\mathrm{J}_n$\begin{equation}
    \psi(r,t) = \int\limits_0^{2\mpi}\frac{\mathrm{d}k}{2\mpi } \mathrm{e}^{-ikr} \tilde{\psi}(k,t) = (-\mathrm{i})^{\ell} \mathrm{J}_r(2ht).
\end{equation}
Using the Bessel function asymptotics, we confirm (\ref{eq:C31}) when $t \ll r$. 

\subsection{Lieb-Robinson bounds}\label{sec:LRbounds}
Now, let us turn to many-body quantum mechanics.  Unlike above, it will now prove more natural to discuss the time evolution of \emph{operators}, rather than states.  The reason was described in Section \ref{sec:cstaralgebra}: in the thermodynamic (large particle number limit), a quantum state is an extremely complicated object.   Not only does this make it annoying to discuss, but it also makes it \emph{fragile}:  small local perturbations can completely orthogonalize a quantum state \cite{ortho_cata67}.   For example, if $|\theta\rangle = \cos\theta |0\rangle + \sin\theta |1\rangle$,  \begin{equation}
    \langle 0\cdots 0|\theta\cdots \theta \rangle = \cos^N\theta 
\end{equation}
if there are $N$ qubits, even though $|\theta\cdots \theta \rangle$ is obtained from $|0\cdots 0\rangle$ by a simple sequence of $N$ single-qubit rotations.  As this is the type of perturbation that will arise in quantum mechanics when we look at time evolution generated by unitary $\mathrm{e}^{-\mathrm{i}Ht}$, we will need a different notion of locality.  

With operator growth, we find such a notion:  a local operator is robust to all but the perturbations that arise near its starting location.  This follows from the trivial (but \emph{extremely} important) fact that: \begin{prop}[Locality gives commuting operators]
If a quantum many-body system is defined on set $\mathsf{V}$, and $\mathsf{A},\mathsf{B}\subset \mathsf{V}$ obey $\mathsf{A}\cap \mathsf{B}=\emptyset$, then operators $A$ and $B$, which act non-trivially only on qudits in $\mathsf{A}$ and $\mathsf{B}$ respectively, commute:  $[A,B]=0$.
\end{prop}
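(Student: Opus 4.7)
The plan is to reduce the claim to the tensor product structure of the underlying Hilbert space, using the characterization of ``acts non-trivially only on $\mathsf{A}$'' that was already introduced via the super-projector $\overline{\mathbb{P}}_{\mathsf{A}}$. Since $\mathsf{A}\cap\mathsf{B}=\emptyset$, I would start by writing $\mathcal{H} = \mathcal{H}_{\mathsf{A}} \otimes \mathcal{H}_{\mathsf{B}} \otimes \mathcal{H}_{\mathsf{C}}$, where $\mathsf{C} = (\mathsf{A}\cup\mathsf{B})^c$ collects the remaining qudits. This factorization is where disjointness is used; if $\mathsf{A}$ and $\mathsf{B}$ overlapped, there would be no clean way to carve out independent factors for them.

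Next, I would invoke the condition $\overline{\mathbb{P}}_{\mathsf{A}} A = A$, which by Definition~\ref{defn:super_projector} is equivalent to $A = A_{\mathsf{A}} \otimes I_{\mathsf{A}^c}$. Since $\mathsf{A}^c \supseteq \mathsf{B}\cup\mathsf{C}$, I can refine this as $A = A_{\mathsf{A}} \otimes I_{\mathsf{B}} \otimes I_{\mathsf{C}}$, where $A_{\mathsf{A}}$ is some operator on $\mathcal{H}_{\mathsf{A}}$. By the same reasoning applied to $B$, $B = I_{\mathsf{A}} \otimes B_{\mathsf{B}} \otimes I_{\mathsf{C}}$.

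The conclusion is then immediate from the mixed product property of tensor products:
\begin{equation}
AB = (A_{\mathsf{A}} I_{\mathsf{A}}) \otimes (I_{\mathsf{B}} B_{\mathsf{B}}) \otimes I_{\mathsf{C}} = A_{\mathsf{A}} \otimes B_{\mathsf{B}} \otimes I_{\mathsf{C}} = (I_{\mathsf{A}} A_{\mathsf{A}}) \otimes (B_{\mathsf{B}} I_{\mathsf{B}}) \otimes I_{\mathsf{C}} = BA,
\end{equation}
so $[A,B]=0$.

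Honestly, there is no ``hard part'' here — this is essentially a definitional observation and is the reason the phrase ``acts non-trivially only on'' was introduced with the tensor product characterization $\overline{\mathbb{P}}_{\mathsf{A}} A = A \iff A = A_{\mathsf{A}}\otimes I_{\mathsf{A}^c}$ in the first place. The only thing worth being careful about is not conflating ``$A$ is supported on $\mathsf{A}$'' (meaning the above tensor form) with the weaker statement that $A$ happens to equal something of that form up to rearrangement; the super-projector formalism makes this precise and avoids any ambiguity.
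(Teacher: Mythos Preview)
Your proof is correct. The paper itself does not supply a proof of this proposition --- it is introduced as ``the trivial (but \emph{extremely} important) fact'' and left unproved. Your argument via the tensor factorization $\mathcal{H} = \mathcal{H}_{\mathsf{A}} \otimes \mathcal{H}_{\mathsf{B}} \otimes \mathcal{H}_{\mathsf{C}}$ and the mixed-product rule is exactly the standard justification, and your use of Definition~\ref{defn:super_projector} to make precise what ``acts non-trivially only on $\mathsf{A}$'' means is appropriate for this paper's conventions.
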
  
Hence, Lieb-Robinson bounds will capture the dynamics of operators.  In particular, we will evaluate commutators of the form $[A(t),B]$.  By this proposition, the commutator is not zero (as an \emph{operator}) only when $A(t)$ has grown enough to act non-trivially (not as the identity) in $\mathsf{B}$. Formally, we often consider the following quantity
\begin{align}
   C_{\mathsf{AB}}(t) :=  \sup_{A, B : \lVert A\rVert =\lVert B\rVert =1} \lVert[A(t),B]\rVert\quad \text{for any sets}\quad \mathsf{A},\mathsf{B}\subset \mathsf{V}. 
\end{align}
\subsubsection{Warm-up: one-dimensional chain}\label{sec:warmup_LR}

\begin{figure}[t]
\centering
\includegraphics[width=.9\textwidth]{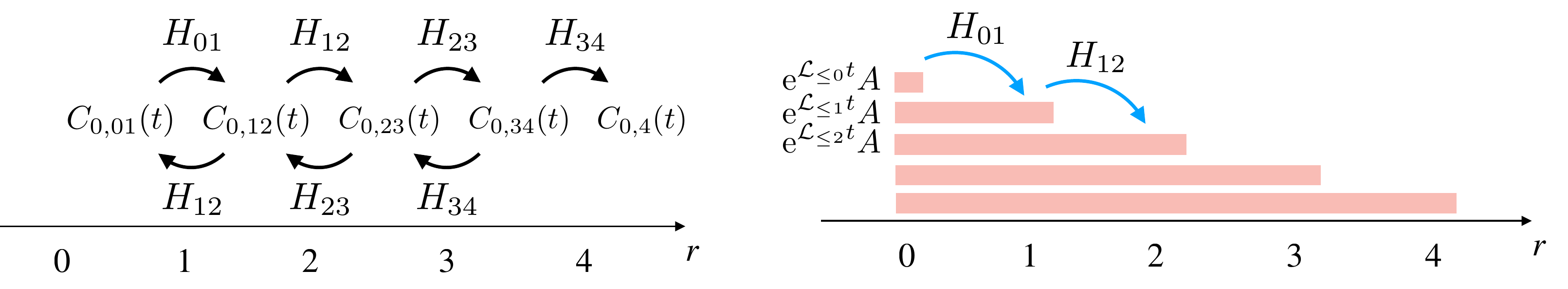}
\caption{ 
(Left) The system of ordinary differential inequalities between the commutator quantities $C_{0,e}$ and $C_{0,r}$ for $r=4$. However, this picture is unsatisfying: the commutator quantity $C_{0,e}$ does not yield a ``local operator'' interpretation; the operator growth contribution comes from both directions. 
(Right) An alternative picture for operator growth with an emphasis on local approximation (Example~\ref{ex:1d_local_approx}).  The growth of $\e^{\CL_{\le r} t} A$ is attributed to $\e^{\CL_{\le r-1} t} A$ and the interaction $H_{r-1,r}$. This is closely related to Lieb-Robinson bounds based on self-avoiding paths (Theorem~\ref{thm:self-avoiding}). 
}
\label{fig:recursive_1d}
\end{figure}

It is instructive to begin by studying a one-dimensional spin chain with  nearest-neighbor interactions.  In the language of Section \ref{sec:math}, this is a 2-local Hamiltonian with interaction graph $\mathsf{G} = (\mathsf{V},\mathsf{E})$, with $\mathsf{V}=\mathbb{Z}$ and $\mathsf{E} = \lbrace \lbrace n,n+1\rbrace: n\in\mathsf{V}\rbrace$.  One could write, assuming that the interactions have bounded norms,
\begin{align}
    H = \sum_{r\in \mathsf{V}} H_{r,r+1} = \sum_{e\in\mathsf{E}} H_e \quad \text{such that} \quad \norm{H_e} \le h.\label{eq:1d_spin_chain}
\end{align}
We will calculate the commutator between different operators at different times:
\begin{equation}
   C_{0r}(t) :=  \sup_{A_0, B_r : \lVert A_0\rVert =\lVert B_r\rVert =1} \frac{\lVert[A_0(t),B_r]\rVert }{2}\quad \text{where} \quad A_0(t):= \e^{\ri H t}A_0\e^{-\ri H t}. \label{eq:defC0r}
\end{equation}
 This commutator, albeit abstract, bounds numerous interesting tasks, as we show in Section \ref{sec:corr}: roughly speaking, it tells us the extent to which a perturbation at site $r$ can modify an observable at site $0$ after time $t$.  Bounding $C_{0r}(t)$ becomes more complicated than the single-particle case. Naively, let us try Talyor-expanding the Heisenberg evolution 
\begin{align}
    \e^{\ri H t}A_0\e^{-\ri H t} = A_0+ &\ri [H_{0,1}, A_0]t+ \ri [H_{-1,0}, A_0]t - \sum_{e_2,e_1}[H_{e_2}, [H_{e_1}, A_0]]\frac{t^2}{2!}+\cdots.\label{eq:naive_expansion}
\end{align}

Indeed, the leading order Taylor expansion tells us that the operator only grows ``one step further" (as in Figure~\ref{fig:1_particle_pascal}) as a consequence of the spatial locality of the Hamiltonian. However, the high-order terms include all possible chains of non-vanishing commutators, which grows \textit{factorially} fast because -- unlike for a single particle -- the operator acts on more sites as we commute with $H$ more times, and \emph{any} of these terms can cause a non-vanishing commutator at a later order.  For example, the following fourth-order term is allowed: $[H_{1,2},[H_{-1,0},[H_{1,2},[H_{0,1},A_0]]]]$.    Directly taking absolute values of~\eqref{eq:naive_expansion} will give a divergent sum\footnote{This is explicitly seen by considering imaginary time evolution, as in $\e^{\beta H}A_0\e^{-\beta H }$.  Here the commutator expansion is genuinely less controlled \cite{locality_temp,Avdoshkin:2019trj}.  A very weak notion of locality is only known in 1d spin chains \cite{araki}.} at constant time $t= \mathrm{O}(1)$.   

The key to obtaining the Lieb-Robinson Theorem is unitarity.  Indeed, because of the factors of $\mathrm{i}$ in the exponential, many of the terms in (\ref{eq:naive_expansion}) will \emph{destructively interfere} with each other.  In Section \ref{sec:frobenius}, we will quantitatively use the intuition that Heisenberg operator dynamics amounts to a \emph{rotation} in a high-dimensional space: in this picture, it is particularly intuitive that many of the terms in (\ref{eq:naive_expansion}) are just ``internal rotations" of the operator, that cannot contribute to the commutator $C_{0r}(t)$.  Yet this picture is better suited for the Frobenius light cone, which is a slightly different object than $C_{0r}(t)$ (see Section \ref{sec:frobenius}).
Therefore, a bound on identifying which terms we can \emph{prove} interfere requires some care. The key insight is to use unitarity to move some of the time evolution onto the operator $B_r$.  Defining \begin{equation}
    H_r := \sum_{e\in\mathsf{E} : e\ni r}H_e = H_{r,r+1} + H_{r-1,r},
\end{equation}
we find that
\begin{align}
    \lVert [A_0(t+\epsilon),B_r]\rVert =  \lVert [A_0(t),B_r(-\epsilon)]\rVert &\le \left\lVert [A_0(t),B_r] -\mathrm{i} \epsilon  [A_0(t),[H_r,B_r]]\right\rVert + \mathrm{O}\left(\epsilon^2\right) \notag \\
      &\le \left\lVert [A_0(t), B_r] +  \mathrm{i}\epsilon [[A_0(t),B_r ],H_r] +\mathrm{i} \epsilon  [[H_r,A_0(t)], B_r]   \right\rVert + \mathrm{O}\left(\epsilon^2\right) \notag \\
     &\le \left\lVert [A_0(t), B_r] +\mathrm{i} \epsilon \mathrm{e}^{-\mathrm{i}\epsilon H_r}  [[H_r,A_0(t)], B_r] \mathrm{e}^{\mathrm{i}\epsilon H_r}  \right\rVert + \mathrm{O}\left(\epsilon^2\right) \notag \\
    &\le \lVert [A_0(t), B_r]\rVert + 2\epsilon \sum_{e\in\mathsf{E} : e\ni r} \lVert H_e \rVert \lVert [A_0(t),B'_e]\rVert  + \mathrm{O}\left(\epsilon^2\right). \label{eq:326}
\end{align}
The first inequality Taylor-expands the Heisenberg equation of motion, using that the commutator $[H,B_r]$ can only survive if a term $H_e$ in $H$ overlaps with site $r$ ($r\in e$). The second inequality uses the Jacobi identity, while the third line conjugates the argument of the norm by an overall unitary to cancel the second term in the second line.  This unitary rotation only causes additional errors at $\mathrm{O}(\epsilon^2)$.  The fourth line uses the triangle inequality, along with $\lVert B_r\rVert=1$ and the definition
\begin{align}
 B'_e := H_e / \lVert H_e\rVert \quad \text{such that}\quad \lVert B'_e\rVert=1.
\end{align}
Remarkably, we see that at order $\mathrm{O}(\epsilon)$, we have found a kind of ``recursive" relation: the bound on $C_{0r}(t)$ reduces to a system of differential inequalities.   Upon taking $\epsilon \rightarrow 0$ and defining $\sup_{\norm{A_0} = \norm{B'_e} = 1} \lVert [A_0(t),B'_e]\rVert = C_{0e}(t)$ a la (\ref{eq:defC0r}), we find that 
\begin{subequations}\begin{align}
   \frac{\mathrm{d}}{\mathrm{d}t} C_{0r}(t) &\le 2h\sum_{e: r\in e}  C_{0e}(t),\label{eq:1d_recursion_RS} \\
   \frac{\mathrm{d}}{\mathrm{d}t} C_{0e}(t) &\le 2h\sum_{e \cap e' \ne \emptyset, e \ne e'} C_{0e'}(t). \label{eq:1d_recursion_Rf}
\end{align}\end{subequations}
Intuitively, the growth of bound $C_{0r}$ is bounded by $C_{0e}$: this nearly coincides with the single-particle recursion~\eqref{eq:single_particle_ODE} except that the intermediate bounds are in terms of commutators $C_{0e}$.  In fact, for our warm-up one-dimensional model, this is not a big deal, since \begin{equation}
    \frac{\mathrm{d}}{\mathrm{d}t} C_{0,\lbrace r,r+1\rbrace} \le 2h \left[ C_{0,\lbrace r-1,r\rbrace} + C_{0,\lbrace r+1,r+2\rbrace}\right]; \label{eq:warmup_1d}
\end{equation}
however, the manipulations of (\ref{eq:326}) will hold for general graphs too, so we keep (\ref{eq:1d_recursion_Rf}) general. 

We can explicitly integrate (\ref{eq:warmup_1d}), which is basically identical to (\ref{eq:single_particle_ODE}).  However, due to the relation (\ref{eq:1d_recursion_RS}) between $C_{0e}$ and $C_{0r}$, we will multiply by an overall factor of 2 in our final bound to account for the two terms in (\ref{eq:1d_recursion_RS}).  Following the derivation of (\ref{eq:C31}), we find that
\begin{equation}
    C_{0r}(t) \le \frac{(4ht)^r}{r!} K,
\end{equation}
for constant $K=1/(1-\mathrm{e}^{-2})$.  Note that there is a factor of $4h$ instead of $2h$, which arises from the extra factor of 2 coming from the fourth line of (\ref{eq:326}), which itself comes from bounding the size of a \emph{commutator}. 

In retrospect, it is perhaps surprising that the commutator growth of a many-body quantum system, like a single-particle system, is controlled by counting over paths (instead of branching trees)!  We will make this picture even sharper in Section \ref{sec:self_avoid}.

At late times, the Lieb-Robinson bounds become very weak.  Indeed, in the argument above, it is always true that \begin{equation} \label{eq:LRboundby1}
    \lVert [A_0(t),B_r]\rVert \le 2 \lVert A_0(t)\rVert \lVert B_r\rVert \le 2,
\end{equation}
So Lieb-Robinson bounds are obviously best used outside the ``light cone" (defined by where we know that $C_{0r}(t)<1$.

\subsubsection{Lieb-Robinson bounds on general graphs}
As we have already noted, the manipulations in (\ref{eq:326}) apply to 2-local Hamiltonians on graph $\mathsf{G}=(\mathsf{V},\mathsf{E})$ with arbitrary connectivity.  With a slight generalization to our derivation above, allowing the operators $A$ and $B$ to have support in sets $\mathsf{A}\subset \mathsf{V}$ and $\mathsf{B}\subset \mathsf{V}$,  we obtain the following theorem: 
\begin{theor}[Lieb-Robinson bound on a general graph \cite{Bentsen:2018uph}]\label{thm:sum_of_paths}
For a 2-local Hamiltonian on graph $\mathsf{G}=(\mathsf{V},\mathsf{E})$, we have that
\begin{equation}
    C_{\sA\sB}(t) \le \sum_{\ell = 0}^\infty \frac{(2|t|)^{\ell}}{\ell!} \sum_{\substack{\text{paths } \Gamma\\ \labs{\Gamma} = \ell} } \prod_{j=1}^{\ell} \lVert H_{\Gamma_j}\rVert.
    \label{eq:CABGamma}
\end{equation}
Each path $\Gamma=\lr{\Gamma_1,\cdots,\Gamma_\ell}$ of length $\ell$ is a sequence of edges $\Gamma_j\in \mathsf{E}$ from the set $\sA=: \Gamma_0$ to the set $\sB=:\Gamma_{\ell+1}$, satisfying the connectivity rule \begin{align}\label{eq:Gammaj_j+1}
    \Gamma_j \cap \Gamma_{j+1} \neq \emptyset,\quad \Gamma_j \neq \Gamma_{j+1},\quad \forall j=0,\cdots,\ell,
\end{align}
Further, defining a symmetric real matrix indexed on $\mathsf{V}$: 
\begin{equation}\label{eq:huv=Huv}
    h_{uv} := \left\{\begin{array}{cc}
        \norm{H_{\{u,v\}}}, & u\neq v \\
        \sum_{w}\norm{H_{\{u,w\}}}, & u=v
    \end{array}\right.,
\end{equation}
we find that
\begin{equation}\label{eq:C<eth}
    C_{\mathsf{AB}}(t) \le  \sum_{u\in \mathsf{A},v\in\mathsf{B}}\lr{\e^{2|t|h}}_{uv}.
\end{equation}
\end{theor}

\begin{proof}

(\ref{eq:CABGamma}) is derived identically to our calculation in Section \ref{sec:warmup_LR}, so we focus on deriving (\ref{eq:C<eth}) as a simplification of (\ref{eq:CABGamma}). Since the sum over $u,v$ in (\ref{eq:C<eth}) comes straightforwardly from considering all the initial and final points in $\mathsf{A}$ and $\mathsf{B}$, let us focus on Taylor expanding the exponentiated matrix $\exp[2h|t|]_{uv}$, and confirming that it includes all terms in (\ref{eq:CABGamma}) beginning and ending at fixed vertices: 
\begin{align}\label{eq:ehuv}
    \lr{\e^{2|t|h}}_{uv} &= \sum_{\ell=0}^\infty \frac{(2|t|)^{\ell}}{\ell!} \sum_{\substack{ u_1,\cdots, u_{\ell'} \in \mathsf{V}: \\ u\neq u_1, u_1\neq u_2, \cdots, u_{\ell'}\neq v } } \sum_{\substack{p_0,\cdots,p_{\ell'}\in \mathbb{Z}_{\ge 0}: \\ p_0+\cdots+p_{\ell'+1}=\ell-\ell'-1} } h_{uu}^{p_0} h_{uu_1}h_{u_1u_1}^{p_1} h_{u_1u_2}\cdots h_{u_{\ell'}u_{\ell'}}^{p_{\ell'}} h_{u_{\ell'} v} h_{vv}^{p_{\ell'+1}}\nonumber\\
    &= \sum_{\ell=0}^\infty \frac{(2|t|)^{\ell}}{\ell!} \sum_{\substack{ u_1,\cdots, u_{\ell'} \in \mathsf{V}: \\ u\neq u_1, u_1\neq u_2, \cdots, u_{\ell'}\neq v } } \sum_{\substack{p_0,\cdots,p_{\ell'}\in \mathbb{Z}_{\ge 0}: \\ p_0+\cdots+p_{\ell'}=\ell-\ell'-1} } \prod_{j'=0}^{\ell'+1} \lr{\sum_{w\neq u_{j'}} h_{u_{j'}w} }^{p_{j'}} h_{u_{j'}u_{j'+1}},
\end{align}
where $u_0:=u$ and $u_{\ell'+1}:=v$.
We need to verify that each term in \eqref{eq:CABGamma} is included in \eqref{eq:ehuv}. For each $\Gamma$ satisfying \eqref{eq:Gammaj_j+1}, assign a vertex $\{u'_j\}=\Gamma_{j}\cap\Gamma_{j+1}$ to each edge $\Gamma_j$ that grows the path further. For two neighboring edges, we have two possibilities: either $u'_{j-1}=u'_j$ or $u'_{j-1}\neq u'_j$. $\Gamma_j$ looks like a branch for the previous case, and a part of the ``core" path for the latter case, as shown in Figure~\ref{fig:path_self_avoid}. The sequence $(\Gamma_0,\ldots, \Gamma_{\ell+1})$ then corresponds to $(u,\ldots,u, u_1,\ldots, u_1, u_2,\ldots, u_{\ell'},\ldots ,u_{\ell'},v\ldots,v)$, where $u_{j'}$ appears $p_{j'}+1$ multiple times with $\sum_{j'} p_{j'}=\ell-\ell'-1$, and the subsequence $(u',u'_1,\cdots,u'_{\ell'},v)$ obeys $u'_{j-1}\ne u'_j$.  The label $(u',u'_1,\cdots,u'_{\ell'},v)$ and $(p_0,\cdots,p_{\ell'+1})$ then corresponds precisely to the second and the third sum in \eqref{eq:ehuv}. Thus it remains to verify for a fixed label, \begin{equation} \label{eq:336}
    \prod_{j'=0}^{\ell'+1} \lr{\sum_{w\neq u_{j'}} h_{u_{j'}w} }^{p_{j'}} h_{u_{j'}u_{j'+1}} \ge \sum_{\substack{\text{paths } \Gamma \text{ corresponding to} \\ (u,u_1,\cdots,u_{\ell'})\text{ and } (p_0,\cdots,p_{\ell'})} } \prod_{j=1}^{\ell} \lVert H_{\Gamma_j}\rVert.
\end{equation} 
This holds because we can always write any $\Gamma$ included in the right-hand side as \begin{equation}
    \Gamma = \lr{\glr{u,w_{01}},\ldots, \glr{u, w_{0 i_{p_0} } }, \glr{uu_1}, \glr{u_1w_{11}}, \ldots, \glr{u_1, w_{1 i_{p_1} } }, \glr{u_1u_2}, \ldots \glr{u_{\ell'}v}, \lbrace vw_{\ell^\prime+1,1}\rbrace ,\ldots },
\end{equation}
where $\glr{u_1w_{11}}, \cdots, \glr{u_1, w_{1 i_{p_1} } }$ for example, are branches hanging at vertex $u_1$.  The left-hand side of (\ref{eq:336}) includes all such terms, but overcounts them because, e.g., we are allowed to include the sum $h_{u_{j'}}^2$ on the left-hand side of (\ref{eq:336}) in a term with $p_{j'}=2$, but on the right-hand side of (\ref{eq:336}) we cannot count the same edge twice in a row.
\end{proof}
\subsubsection{Examples}
\begin{figure}[t]
\centering
\includegraphics[width=.6\textwidth]{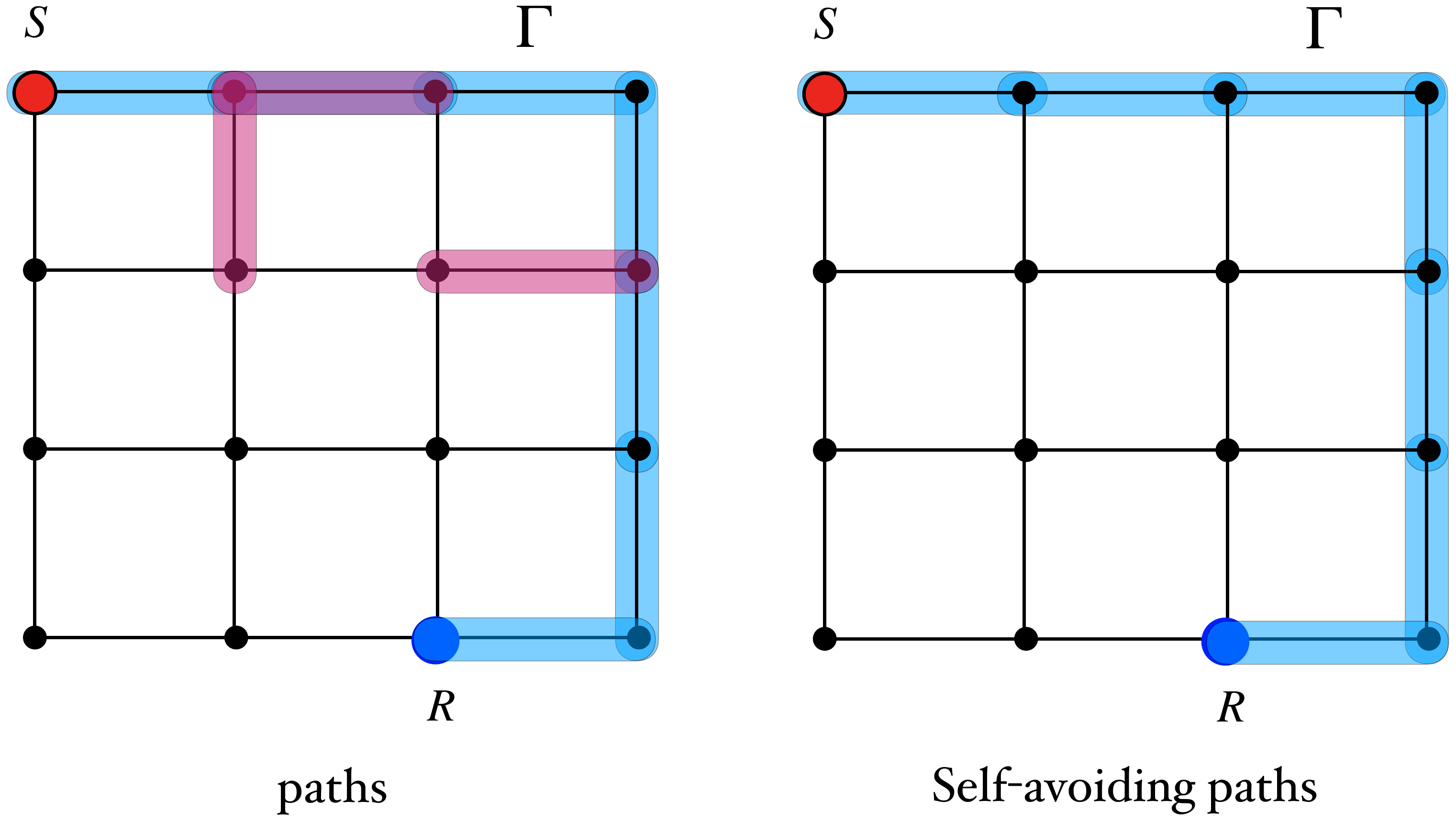}
\caption{(Left) Operator growth is bounded by a sum over paths of Hamiltonian terms.  Terms in blue are the ``core" terms corresponding to the $h_{u_ju_{j+1}}$ in (\ref{eq:ehuv}), while terms in pink come from $h_{u_ju_j}$. (Right) With the more careful expansion of the exponential, the branches and backtracking steps do not contribute to operator growth.}\label{fig:path_self_avoid}
\end{figure}

We will improve the bound in Theorem \ref{thm:sum_of_paths} and thus \eqref{eq:C<eth} in Section \ref{sec:self_avoid}, such that the diagonal elements of $h_{uv}$ can actually be eliminated. However, Theorem \ref{thm:sum_of_paths} is already useful (although not tight) in many examples, as we show here. We will drop the absolute value for time $t$ for notational simplicity.

\begin{prop}[Lieb-Robinson bound on a graph of bounded degree $g$ (loose version)]\label{exam:bound_degree}
Suppose the graph $\mathsf{G}$ has bounded degree $g$, i.e., each vertex connects to at most $g$ edges. If $\norm{H_e}\le h$ for any $e\in \mathsf{E}$, then for two vertices $u,v$ of distance $r=\mathsf{d}(u,v)$ and a constant $0<K<\infty$, \begin{equation}
    C_{uv}(t) \le  K\frac{(4(g-1)ht)^r}{r!}.
\end{equation}
\end{prop}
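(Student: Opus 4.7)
The plan is to apply Theorem~\ref{thm:sum_of_paths} in the path-sum form~\eqref{eq:CABGamma} with $\mathsf{A}=\{u\}$ and $\mathsf{B}=\{v\}$, and then estimate the number of admissible paths using only the bounded-degree hypothesis.

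First, I would count the admissible paths $\Gamma=(\Gamma_1,\ldots,\Gamma_\ell)$ satisfying the connectivity rule~\eqref{eq:Gammaj_j+1}. The initial edge $\Gamma_1$ must contain $u$, giving at most $g$ choices since $u$ is incident to at most $g$ edges. For each subsequent $j$, writing $\Gamma_j=\{w_1,w_2\}$, the next edge $\Gamma_{j+1}$ must share a vertex with $\Gamma_j$ but differ from it; the edges incident to $w_1$ or $w_2$ number at most $g+g-1=2g-1$, and excluding $\Gamma_j$ itself leaves at most $2(g-1)$ choices. The terminal constraint $v\in\Gamma_\ell$ only tightens the count and can safely be discarded for an upper bound. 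Moreover, since each edge advances the path by at most one step, $\mathsf{d}(u,v)=r$ forces the sum to start at $\ell\ge r$.

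Combining these counts with $\|H_{\Gamma_j}\|\le h$ in~\eqref{eq:CABGamma} yields
\begin{equation*}
C_{uv}(t) \;\le\; \sum_{\ell=r}^{\infty}\frac{(2t)^\ell}{\ell!}\, g\,(2(g-1))^{\ell-1}\, h^\ell \;=\; \frac{g}{2(g-1)}\sum_{\ell=r}^{\infty}\frac{(4(g-1)ht)^\ell}{\ell!}.
\end{equation*}

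The remaining step is to absorb the tail of this exponential series into the leading term $\ell=r$, exactly in the spirit of the warm-up computation leading to~\eqref{eq:1d_sum_paths}--\eqref{eq:C31}. Setting $x:=4(g-1)ht$: when $x\le r/2$, the elementary estimate $(r+m)!/r!\ge r^m\ge(2x)^m$ gives $\sum_{\ell\ge r} x^\ell/\ell!\le 2\,x^r/r!$, so the claim follows with $K$ at most $g/(g-1)\le 2$ (for $g\ge 2$). When $x>r/2$, Stirling's approximation shows $x^r/r!$ is already bounded below by an $r,g$-independent absolute constant, so the trivial norm bound $C_{uv}(t)\le 2$ from~\eqref{eq:LRboundby1} implies the desired inequality after enlarging $K$. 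These two regimes together fix $K$ as a single absolute constant. The argument is almost entirely conceptual: Theorem~\ref{thm:sum_of_paths} does the real work, bounded degree enters only through the elementary fact that admissible length-$\ell$ paths grow at most geometrically with base $2(g-1)$, and the only mild bookkeeping issue is stitching the two tail-estimate regimes into one $K$.
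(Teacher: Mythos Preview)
Your proof is correct and follows essentially the same route as the paper: apply the path-sum bound~\eqref{eq:CABGamma}, drop the terminal constraint, count admissible paths of length $\ell$ as at most geometric in $2(g-1)$, and absorb the tail into the leading term using the same two-regime argument (small $x$ gives a convergent geometric tail; large $x$ falls back on the trivial bound~\eqref{eq:LRboundby1}). The only cosmetic difference is that you count the first edge separately as $g$ rather than $2(g-1)$, yielding the harmless extra prefactor $g/(2(g-1))\le 1$.
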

\begin{proof}
We relax the sum in \eqref{eq:CABGamma} by not demanding the path ends at the fixed vertex $v$:
\begin{align}\label{eq:C<sumxr}
    C_{uv}(t) &\le \sum_{\ell = 0}^\infty \frac{(2ht)^{\ell}}{\ell!} \cdot \#(\text{paths $\Gamma$ starting from $u$ with length $\ell$}) \nonumber\\
    &\le \sum_{\ell = r}^\infty \frac{(2ht)^{\ell}}{\ell!} (2g-2)^\ell = \frac{(4(g-1)ht)^{r}}{r!} \sum_{\ell'=0}^\infty \frac{(4(g-1)ht)^{\ell'} r!}{(r+\ell')!}\nonumber\\
   C_{uv} &\le \min\left(2, \frac{(4(g-1)ht)^r}{r!} \frac{1}{1-2/\mathrm{e}}\right). 
\end{align}
The second line uses the fact that each edge connects to at most $2g-2$ other edges. The last two lines use manipulations analogous to the derivation of \eqref{eq:1d_sum_paths}.  In the last line, we fix the constant $K$ analogously to the discussion above (\ref{eq:C31}); the bound is only meaningful when $4\e(g-1)ht/r \le 2$.
\end{proof}

Proposition \ref{exam:bound_degree} reduces to the calculation of Section \ref{sec:warmup_LR} by taking $g=2$ (which is a one-dimensional lattice with nearest-neighbor interactions), and implies that for a general graph, information propagates under a speed limit proportional to the degree. This scaling is saturated, for example, by translational-invariant free fermions in a $g/2$-dimensional lattice. 

\begin{exam}[Converting factorial to exponential decay]\label{exam:factorialtoexponential}
The tail bound of form $(vt)^r/r!$ is often relaxed to an exponential $C\e^{\mu (vt-r)}$ (with a $\mu$-dependent prefactor $C$), where $\mu$ can be chosen to be arbitrarily large for sufficiently small $t$.  This follows by considering the following chain of inequalities: for $r>1$ \begin{equation}\label{eq:factorialtoexponential}
  \frac{(ct)^r}{r!} \le \mathrm{e}^{-\mu r} \frac{\left( \mathrm{e}^\mu ct\right)^r}{r!} \le \mathrm{e}^{-\mu r} \sum_{n=1}^\infty \frac{\left( \mathrm{e}^\mu ct\right)^n}{n!} = \mathrm{e}^{-\mu r} \left(\mathrm{e}^{\mu vt}-1\right).
\end{equation}
for constant $v=\mu^{-1}\mathrm{e}^\mu c$. The exponential tail bound is both easier to work with when a Lieb-Robinson bound is an intermediate step in a calculation (see many examples in later sections), but is also useful because it allows us to consider exponentially-decaying interactions.  This often arises when one calculates an effective Hamiltonian in the intermediate stages of another proof (see e.g., the discussion in Section \ref{sec:preth}).
\end{exam}

Often, it is desirable to have a good Lieb-Robinson bound for Hamiltonians that have exponentially decaying interactions and are not strictly local.  A strong bound of this kind is given in Theorem \ref{thm:LRexpTail}, as we will use more sophisticated techniques to streamline the proof.  Here we show how to prove such a bound when the vertices $\mathsf{A}$ and $\mathsf{B}$ both consist of a single vertex:
\begin{theor}[Lieb-Robinson bound with exponentially decaying tails]
\label{thm:LRexpTail0}
Let $a,b\in\mathsf{V}$.  For a spatially local Hamiltonian $H=\sum_{\mathsf{S}\subset \mathsf{V}} H_{\mathsf{S}}$ on a graph $\mathsf{V}$ in $d$ spatial dimensions, suppose that for any $u\in\mathsf{V}$, \begin{equation}
    \sum_{\mathsf{S}\ni u} \norm{H_{\mathsf{S}}} \e^{\mu'\, ({\rm diam}(\mathsf{S})-1)} \le h <\infty. \label{eq:expboundnorm}
\end{equation}
We choose to define $H_{\mathsf{S}}\ne 0$ only for connected sets $\mathsf{S}$; this means that terms in $H_{\mathsf{S}}$ may not act non-trivially on all sites within $\mathsf{S}$.
Then for any $0<\mu<\mu'$, there exist constants $c,v>0$ such that for any disjoint $\sA,\sB\subset \mathsf{V}$, \begin{equation}\label{eq:LR_exp0}
    C_{ab}(t) \le c \mathrm{e}^{-\mu \mathsf{d}(a,b)} \left(\mathrm{e}^{\mu vt} -1\right).
\end{equation}
\end{theor}
\begin{proof}
 Lieb-Robinson bounds for exponentially decaying interactions are discussed in \cite{Hastings_koma,nachtergaele06,hastingsreview1,Nachtergaele_2008}.  We start with the following observation: for any $\mu < \mu'$ and $\alpha>d$, there exists a constant $h^\prime$ such that  \begin{equation}
    h^\prime \frac{\mathrm{e}^{-\mu (\mathsf{d}(i,j)-1)}}{\mathsf{d}(i,j)^\alpha} \ge \sum_{\mathsf{S} :\lbrace i,j\rbrace \subseteq \mathsf{S}}  \lVert H_{\mathsf{S}} \rVert. \label{eq:reproducible1}
\end{equation}
Moreover, for any two sites $i,j$ in the vertex set $\mathsf{V}$, \begin{align}\label{eq:reproducible2}
    \sum_{k\in\mathsf{V}\setminus\lbrace i,j\rbrace} \frac{\mathrm{e}^{-\mu( \mathsf{d}(i,k)-1)}}{\mathsf{d}(i,k)^\alpha} \frac{\mathrm{e}^{-\mu ( \mathsf{d}(k,j)-1)}}{\mathsf{d}(k,j)^\alpha} \le \sum_{k\in\mathsf{V}\setminus\lbrace i,j\rbrace} \frac{\mathrm{e}^{-\mu (\mathsf{d}(i,j)-2)}}{\mathsf{d}(i,k)^\alpha\mathsf{d}(j,k)^\alpha} \le K \frac{\mathrm{e}^{-\mu( \mathsf{d}(i,j)-1)}}{\mathsf{d}(i,j)^\alpha}
\end{align}
for some constant $0<K<\infty$.  This latter fact follows from the fact that on a $d$-dimensional lattice, the sum over $k$ converges rapidly at large distances as $\int \mathrm{d}r r^{d-1-2\alpha}$, while at short distances only $\mathrm{O}(n^d)$ sites obey the product $\mathsf{d}(i,k)\mathsf{d}(k,j) \ge n\mathsf{d}(i,j)$.  We have also used the triangle inequality (\ref{eq:triangleinequality}).

Using these facts, we now follow (\ref{eq:huv=Huv}) to write (here $a \in \mathsf{S}_1$ and $b\in \mathsf{S}_\ell$ is implicit):
\begin{align}\label{eq:expsets}
C_{ab}(t) &\le \sum_{\ell=1}^\infty \frac{t^\ell}{\ell!}   \sum_{\Gamma \text{ of length } \ell} \prod_{j=1}^\ell \lVert H_{\Gamma_j}\rVert \le  \sum_{\ell=1}^\infty \frac{t^\ell}{\ell!}  \sum_{\mathsf{S}_1: a\in \mathsf{S}_1} \sum_{\mathsf{S}_2 : \mathsf{S}_1 \cap \mathsf{S}_2 \ne \emptyset} \cdots\sum_{\mathsf{S}_\ell : \mathsf{S}_{\ell-1} \cap \mathsf{S}_\ell \ne \emptyset}  2 \lVert H_{\mathsf{S}_1} \rVert \cdot 2\lVert H_{\mathsf{S}_2} \rVert \cdots 2\lVert H_{\mathsf{S}_\ell}\rVert \notag \\
&\le \sum_{\ell=1}^\infty \frac{(2t)^\ell}{\ell!}  \sum_{v_1\in\mathsf{V}}\sum_{\mathsf{S}_1: \lbrace a,v_1\rbrace \subset \mathsf{S}_1} \sum_{v_2\in\mathsf{V}}\sum_{\mathsf{S}_2: \lbrace v_1,v_2\rbrace \subset \mathsf{S}_2}  \cdots \sum_{\substack{ \mathsf{S}_{\ell-1}: \\ \lbrace v_{\ell-2},v_{\ell-1}\rbrace \subset \mathsf{S}_{\ell-1}} }  \sum_{\mathsf{S}_\ell: \lbrace v_{\ell-1},{b}\rbrace \subset \mathsf{S}_\ell}  \lVert H_{\mathsf{S}_1} \rVert \cdot \lVert H_{\mathsf{S}_2} \rVert \cdots  \lVert H_{\mathsf{S}_\ell}\rVert \notag \\
&\le \sum_{\ell=1}^\infty \frac{(2t)^\ell}{\ell!} \left[ \sum_{v_1\in\mathsf{V}}\sum_{\mathsf{S}_1: \lbrace a,v_1\rbrace \subset \mathsf{S}_1} \sum_{v_2\in\mathsf{V}}\sum_{\mathsf{S}_2: \lbrace v_1,v_2\rbrace \subset \mathsf{S}_2}  \cdots  \sum_{\substack{ \mathsf{S}_{\ell-1}: \\ \lbrace v_{\ell-2},v_{\ell-1}\rbrace \subset \mathsf{S}_{\ell-1}} } \lVert H_{\mathsf{S}_1} \rVert \cdots \lVert H_{\mathsf{S}_{\ell-1}} \rVert \right] \frac{h^\prime \mathrm{e}^{-\mu \mathsf{d}(b,v_{\ell-1})-1)}}{\mathsf{d}(b,v_{\ell-1})^\alpha} \notag \\
&\le \sum_{\ell=1}^\infty \frac{\left(2h^\prime t\mathrm{e}^{\mu}\right)^\ell}{\ell!} \sum_{v_1,\ldots, v_{\ell-1} \in \mathsf{V}}  \frac{\mathrm{e}^{-\mu (\mathsf{d}(a,v_1)+\cdots + \mathsf{d}(v_{\ell-1},b)}}{\mathsf{d}(a,v_1)^\alpha \cdots \mathsf{d}(v_{\ell-1},b)^\alpha} \le \frac{1}{K} \frac{\mathrm{e}^{-\mu \mathsf{d}(a,b)}}{\mathsf{d}(a,b)^\alpha}\sum_{\ell=1}^\infty \frac{\left(2Kh^\prime t\mathrm{e}^{\mu}\right)^\ell}{\ell!}.
\end{align}
In the second line above, we perform the sum over set $\mathsf{S}_\ell$ by first fixing a vertex $v_{\ell-1} \in \mathsf{S}_{\ell-1}$, and then summing over all sets including that site and $b$; we can similarly sum over all intermediate sets by including a site $v_j \in \mathsf{S}_j\cap \mathsf{S}_{j+1}$.  Note that since this choice may not be unique, this line is an overestimate in general.  We do this because now we can carry out the sum over $\mathsf{S}_\ell$ using (\ref{eq:reproducible1}): this is shown in the third line.  In the fourth line, we iterate this argument to reduce the sum to the intermediate vertices $v_1,\ldots, v_{\ell-1}$, which are then bounded using (\ref{eq:reproducible2}). We have also used  the triangle inequality on distances in the exponential.  Since $\mathsf{d}(a,b)\ge 1$, we obtain (\ref{eq:LR_exp0}). 
\end{proof}

\subsection{Self-avoiding paths}\label{sec:self_avoid}
In this section, we describe a simple yet powerful improvement of the proof technique of Theorem \ref{thm:sum_of_paths}, based on the notion of self-avoiding or irreducible paths.
\subsubsection{General results}
In Theorem \ref{thm:sum_of_paths}, we summed over all ``paths" of couplings $H_e$, including those paths that grow branches or backtrack (see Figure~\ref{fig:path_self_avoid}). In fact, a sharper bound is possible by ``ignoring" these ``branches" in Figure~\ref{fig:path_self_avoid}.  One can show that we only need to account for the \textit{self-avoiding} (originally called \emph{irreducible}) paths~\cite{chen2019operator} in Figure \ref{fig:path_self_avoid}.
\begin{theor}[Summing over self-avoiding paths~\cite{chen2019operator}]\label{thm:self-avoiding}
Theorem \ref{thm:sum_of_paths} can be improved to 
\begin{equation}
    C_{\mathsf{AB}}(t) \le \sum_{\ell = 0}^\infty \frac{(2|t|)^{\ell}}{\ell!} \sum_{\substack{\text{self-avoiding paths } \Gamma\\ \labs{\Gamma} = \ell}}\ \prod_{j=1}^{\ell} \lVert H_{\Gamma_j}\rVert
\end{equation}
where each path $\Gamma$ is a sequence of Hamiltonian terms $H_{\Gamma_j}$ from the set $\mathsf{A}$ to the set $\mathsf{B}$, subject to the connectivity rule \begin{equation}
  \Gamma_i \cap \Gamma_j =\emptyset \ \text{if}\ |i-j| > 1.  \label{eq:Self-avoiding}
\end{equation}
\end{theor}

The proof of Theorem~\ref{thm:self-avoiding} is more intricate, and we only provide a sketch of the construction~\cite{chen2019operator}. The key idea is to artfully manipulate the exponential: resumming most terms into new exponentials, and keeping only terms that contribute to operator growth.  The ``irreducible path" is named as the path for which if a single ``irreducible coupling" was dropped, the final commutator would vanish. This resummation is achieved via the following identity:
\begin{align} \label{eq:equivalenceclass}
    \ri [B,A(t)]  &= \CL_{\mathsf{B}} \e^{\CL t}|A) = \CL_{\mathsf{B}} \sum_{\substack{\text{self-avoiding paths } \Gamma}} \cdots \notag \\
& \cdots \ \int\limits_{\substack{t<t_{\ell}<\cdots <t_1\\ \ell = \labs{\Gamma}}} \mathrm{d}t_1\cdots \mathrm{d}t_{\ell}\  \mathrm{e}^{\mathcal{L}(t-t_{\ell})} \mathcal{L}_{\Gamma_{\ell}}  \mathrm{e}^{\mathcal{L}_{\ell-1} (t_{\ell}-t_{\ell-1})} \mathcal{L}_{\Gamma_{\ell-1}}\cdots \mathrm{e}^{\mathcal{L}_1(t_2-t_1)}\mathcal{L}_{\Gamma_1}\mathrm{e}^{\mathcal{L}_0t_1} |A)
\end{align}
 for appropriately chosen intermediate unitaries $\mathrm{e}^{\mathcal{L}_{\ell-1} (t_{\ell}-t_{\ell-1})}$ (depending on the path $\Gamma$). Remarkably, each term in the Taylor expansion of $\e^{\CL t}$ is accounted for in exactly one self-avoiding path. With this identity at hand, we immediately see that the unitaries (containing fictitious terms) do not contribute to commutator growth after employing the triangle inequality: after all, $\lVert \mathrm{e}^{\mathcal{L}_j (t_{j+1}-t_j)} A'\rVert = \lVert A'\rVert$ for any operator $A'$.

\begin{exam}[Irreducible path bounds for the one-dimensional line, and trees]\label{exam:1d_tight}
For the 1d nearest neighbor spin chain~\eqref{eq:1d_spin_chain}, there is a \emph{unique} irreducible path between any two points (since the graph has no cycles (a.k.a. loops) in it).  Hence, Theorem \ref{thm:self-avoiding} implies 
\begin{align}\label{eq:LRB_1d}
     C_{0r}(t) \le \frac{(2h|t|)^r}{r!}.
\end{align}
This matches \textit{exactly} the leading order Taylor expansion, without the higher-order terms in time $t$.   A similar result holds whenever the graph $\mathsf{G}$ is a tree.
\end{exam}
The following corollary follows straightforwardly from the proof of Theorem \ref{thm:self-avoiding}, and is useful since it bounds not only a particular commutator but also the part of the operator that can grow far away at all. 
\begin{corol}[Lieb-Robinson bound for operator to expand]
\label{cor:self-avoiding}
Denote \begin{equation}
    \mathsf{C}_r := \lbrace v\in\mathsf{V} : \mathsf{d}(v,\mathsf{A}) \ge r\rbrace.
\end{equation}
Then for a spatially 2-local Hamiltonian on a graph $\mathsf{V}$ with O(1) maximal degree, there exist O(1) constants $0<c,u<\infty$ such that \begin{equation}
    C_{\mathsf{AC}_r}(t) \le c \frac{(ut)^r}{r!}.
\end{equation}
\end{corol}
To prove this result, one simply sums over all irreducible paths of length $\le r$, which are required to hit any vertex in $\mathsf{C}_r$. The restriction to self-avoiding paths that Theorem \ref{thm:self-avoiding} allows is only quantitatively important in spatially local systems but becomes qualitatively crucial in random all-to-all models (Section~\ref{sec:all-to-all}) and power-law interacting systems (Section~\ref{sec:power-law}).  It also helps in the proof of the following result, which is of high value in the literature:

\begin{theor}[Lieb-Robinson bound with exponentially decaying tails]
\label{thm:LRexpTail}
For $H$ defined as in Theorem \ref{thm:LRexpTail0}, for any $0<\mu < \mu^\prime$,
there exists constants $c,v>0$ such that for any disjoint $\sA,\sB\in \mathsf{V}$, \begin{equation}\label{eq:LR_exp}
    C_{\sA\sB}(t) \le c \cdot \min(|\partial \sA|,|\partial \sB|)\mathrm{e}^{-\mu \mathsf{d}(\sA,\sB)} \left(\mathrm{e}^{\mu vt} -1\right).
\end{equation}
\end{theor}
\begin{proof}[Proof sketch]
Since we have already discussed how to handle exponentially decaying interactions in Theorem \ref{thm:LRexpTail0}, we explain here how to get the \emph{area} prefactor $\min(|\partial \mathsf{A}|,|\partial \mathsf{B}|)$ in (\ref{eq:LR_exp}).  Any self-avoiding path from $\mathsf{A}$ to $\mathsf{B}$ must start by intersecting some site near the boundary $\partial \mathsf{A}$.  For exponentially decaying interactions, the number of such paths can contribute a total weight proportional to $|\partial \mathsf{A}|$. Once we pick the first term in the irreducible path of Theorem \ref{thm:self-avoiding}, then we can use Corollary \ref{cor:self-avoiding} to get the factor of $\mathrm{e}^{-\mu \mathsf{d}(\mathsf{A},\mathsf{B})}$ in (\ref{eq:LR_exp}).
\end{proof}
 
\subsubsection{Optimizing over equivalence classes}

Sometimes in these generalizations, it proves valuable not to choose ``self-avoiding paths" in the way that we have defined above: so long as one can find any exact identity of the form (\ref{eq:equivalenceclass}), the strategy of Theorem \ref{thm:self-avoiding} can be applied.  So far in the literature, this relies on finding an efficient notion of \textbf{equivalence class} on the set of all possible sequences of $\mathcal{L}_e$.  To highlight some ways that this equivalence class construction can be used, let us give two simple examples.

\begin{exam}[1d transverse-field Ising model]
The transverse field Ising model \begin{equation}
    H = -\sum_{r\in \mathbb{Z}} \left[J X_rX_{r+1} + hZ_r\right]
\end{equation}
is a standard integrable model with a storied history \cite{pfeuty}.  Naively applying the Lieb-Robinson bound of Theorem \ref{thm:self-avoiding} to this problem, one arrives at \begin{equation}
    v_{\mathrm{LR}}\lesssim J.
\end{equation}
However, one can do better \cite{commute_graph20}. The actual sequence of coefficients that must occur to grow an operator to the right is $ [hZ_2,[JX_1X_2,[hZ_1,[JX_0X_1,\ldots,]]]]$, since all the $J$ terms commute.
Our equivalence class could simply require that the irreducible coefficients in the sequence correspond to both the $J$ and $h$ terms above, and this leads to \begin{equation}
    v_{\mathrm{LR}} \lesssim \sqrt{Jh}.
\end{equation}
Of course, we could simply drop the $J$ terms if $J\gg h$: the irreducible terms in the sequence of (\ref{eq:equivalenceclass}) could amount to $hX_1,hX_2,\ldots$.  This choice leads to \begin{equation}
    v_{\mathrm{LR}} \lesssim h.
\end{equation}  Clearly, there is a lot of creativity in the equivalence class construction!  \label{example:1dtfim}
\end{exam}
This example suggests it can be important to consider which operators in the Hamiltonian commute with other terms.  While this point was highlighted in \cite{commute_graph20}, they did not use the factor graph and equivalence class-based construction of \cite{chen2019operator}. Marrying these two approaches would be fruitful.

\begin{exam}[1d disordered spin chain]
Consider a Hamiltonian of the form (\ref{eq:1d_spin_chain}), but now where the magnitudes $h_r = \lVert H_{r,r+1}\rVert$ are strongly varying from one site to the next \cite{Baldwin:2022hle}.  Assume for simplicity that the $h_r$ are independent and identically distributed.    In this case, we should generalize Example \ref{example:1dtfim} and look for the weakest links between any two sites we wish to send signals between.  In general, the nature of $v_{\mathrm{LR}}$ on very large distances will depend on the distribution of $J_r$s. Suppose for example that we know the cumulative distribution function \begin{equation}
    F(h) = \mathbb{P}[|h_r| \le h].
\end{equation}
Choosing our equivalence classes to only include the couplings where $h_r \le h_*$ for any $h_*$, we can bound (in the thermodynamic limit) the Lieb-Robinson velocity by: \begin{equation}
    v_{\mathrm{LR}} \le 2\mathrm{e} \times \inf_{h_*}\frac{h_*}{F(h_*)}.
\end{equation}
Notice that $v_{\mathrm{LR}}=0$ if $F(h_*) \sim h_*^\alpha$ for $\alpha<1$, which happens when the probability density for small $h_r$ diverges.
\end{exam}

Theorem \ref{thm:self-avoiding} has recently been generalized to get stronger bounds on the range/size of operators; such a technical achievement is important in proving tighter Lieb-Robinson bounds in problems with decaying interactions \cite{preth22}.

\subsection{Generalization to open systems}
It is straightforward to find Lieb-Robinson bounds in open quantum systems \cite{open_LRB}.  We present a modern version reminiscent of Theorem~\ref{thm:self-avoiding}:
\begin{theor}[ Lieb-Robinson bounds for open systems]
\label{thm:open_LRB}
For local Lindbladian $\mathcal{L}$ (i.e., generator of a completely positive unital map), defined on a graph analogous to Theorem \ref{thm:self-avoiding},
\begin{equation}
    C_{\mathsf{AB}}(t) \le \sum_{\ell = 0}^\infty \frac{(|t|)^{\ell}}{\ell!} \sum_{\substack{\text{self-avoiding paths } \Gamma\\ \labs{\Gamma} = \ell}}\ \prod_{j=1}^{\ell} \lVert \mathcal{L}_{\Gamma_j}\rVert_{\infty-\infty}
\end{equation}
where the super-operator norm is defined by
\begin{align}
    \lVert \mathcal{L}\rVert_{\infty-\infty}: = \sup_{O }\frac{\norm{\CL[O]}}{\norm{O}}.
\end{align}
\end{theor}  
 In an open quantum system, the intermediate terms are not unitary but still satisfy $\lVert \mathrm{e}^{\mathcal{L}_k t} \rVert_{\infty-\infty} \le 1$~\cite{wolf2012quantum}: the proof of Theorem \ref{thm:self-avoiding} immediately generalizes to open systems!  The weaker Lieb-Robinson bounds discussed in Section \ref{sec:LRbounds} can then also be derived for open systems by just replacing $2\lVert H_e\rVert$ with $\lVert \mathcal{L}_e\rVert_{\infty-\infty}$ as appropriate.  It is also straightforward to extend these results to $k$-local Hamiltonians that couple more than two sites simultaneously.

\section{Bounds on simulatability}\label{sec:compute}
A general Hamiltonian evolution requires exponentiating a large matrix. Nevertheless, exploiting locality, Lieb-Robinson bounds provide a rigorous starting point for studying the complexity of simulating many-body quantum systems on a classical computer and a quantum computer. In this section, we discuss this issue, along with many interesting extensions of Lieb-Robinson bounds that arise from this perspective.

\subsection{Lieb-Robinson bounds and local approximations for dynamics} 
To guide our discussion, we consider classically simulating the value of a local observable $A$ supported in finite set $\mathsf{A}$, given an initial state $\rho$.  In particular, we ask what is the expected value of $A$ after a later time $t$: in the Heisenberg picture, this is equal to $\tr[ \rho A(t)]$. At $t = 0$, this reduces to the marginal $\tr_{\mathsf{A}}[\rho_{\mathsf{A}} A]$, where \begin{equation}
    \rho_{\mathsf{A}} = \tr_{\mathsf{A}^{\mathrm{c}}} \rho 
\end{equation}
is the reduced density matrix of $\rho$ on subset $\mathsf{A}$.  If we know how to calculate $\rho_{\mathsf{A}}$ efficiently (which we often will), then we will classically evaluate $\tr_{\mathsf{A}}[\rho_{\mathsf{A}} A]$ on a computer, which requires far fewer classical bits of memory.  As time evolves ($t > 0$), intuitively,  Lieb-Robinson bounds tell us that the operator $A(t)$ will not have grown too large at small times $t$ -- if it had, then commutators $C_{\mathsf{AB}}(t)$ would be large for sets $\mathsf{B}$ far from $\mathsf{A}$.  Hence for small $t$, the calculation of $\tr[ \rho A(t)]$ should also be tractable.

\subsubsection{Local approximants}

In this section, we discuss how to formally relate the bounds on commutators to the classical simulatability implied above. In many contexts, the local approximation form of Lieb-Robinson bounds are both conceptually and technically more powerful (e.g., when the Hamiltonian has a power-law decaying tail: see Theorem~\ref{thm:lrpowerlaw}).
\begin{prop}[Commutator bounds and local approximant]\label{prop:tildA}
For any operator $A$ and vertex subset $\mathsf{S}\subset \mathsf{V}$,
\begin{align}
    \norm{\overline{\mathbb{P}}_{\mathsf{S}}A-A}\le \sup_{\norm{B_{\mathsf{S}^{\mathrm{c}}}} =1}\norm{[A, B_{\mathsf{S}^{\mathrm{c}}}]}\le 2\norm{\overline{\mathbb{P}}_{\mathsf{S}}A-A}.
\end{align}
Recall the definition of $\overline{\mathbb{P}}_{\mathsf{S}}$ in (\ref{eq:barPS}).
\end{prop}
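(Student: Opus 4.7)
The plan is to prove the two inequalities separately, both of which follow from the Haar representation of $\overline{\mathbb{P}}_{\mathsf{S}}$ in Proposition 3.14 together with the basic fact (from Definition 3.12) that $\overline{\mathbb{P}}_{\mathsf{S}}A$ is supported on $\mathsf{S}$.

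For the upper bound $\sup_{\norm{B_{\mathsf{S}^{\mathrm{c}}}}=1}\norm{[A,B_{\mathsf{S}^{\mathrm{c}}}]}\le 2\norm{\overline{\mathbb{P}}_{\mathsf{S}}A-A}$, I would use the fact that $\overline{\mathbb{P}}_{\mathsf{S}}A$ acts trivially on $\mathsf{S}^{\mathrm{c}}$, so it commutes with every operator $B_{\mathsf{S}^{\mathrm{c}}}$. Hence $[A,B_{\mathsf{S}^{\mathrm{c}}}] = [A-\overline{\mathbb{P}}_{\mathsf{S}}A,\,B_{\mathsf{S}^{\mathrm{c}}}]$, and the triangle inequality together with submultiplicativity of the operator norm (a special case of Proposition 3.10) immediately yields $\norm{[A,B_{\mathsf{S}^{\mathrm{c}}}]}\le 2\norm{A-\overline{\mathbb{P}}_{\mathsf{S}}A}\cdot\norm{B_{\mathsf{S}^{\mathrm{c}}}}$, which gives the desired bound after taking the supremum.

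For the lower bound $\norm{\overline{\mathbb{P}}_{\mathsf{S}}A-A}\le \sup_{\norm{B_{\mathsf{S}^{\mathrm{c}}}}=1}\norm{[A,B_{\mathsf{S}^{\mathrm{c}}}]}$, I would rely on Proposition 3.14, writing
\begin{equation*}
\overline{\mathbb{P}}_{\mathsf{S}}A - A \;=\; \int [\mathrm{d}U]_{\mathsf{S}^{\mathrm{c}}}\,\bigl(U^{\dagger}AU - A\bigr).
\end{equation*}
The key algebraic observation is $U^{\dagger}AU - A = U^{\dagger}(AU - UA) = -U^{\dagger}[U,A]$, so unitarity of $U$ gives $\norm{U^{\dagger}AU - A} = \norm{[U,A]}$. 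Since each $U$ in the Haar ensemble is supported on $\mathsf{S}^{\mathrm{c}}$ with $\norm{U}=1$, each $\norm{[U,A]}$ is bounded by $\sup_{\norm{B_{\mathsf{S}^{\mathrm{c}}}}=1}\norm{[A,B_{\mathsf{S}^{\mathrm{c}}}]}$. Pulling the operator norm inside the Haar integral via the triangle inequality completes the bound.

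Neither direction presents a real obstacle once Proposition 3.14 is available; the ``hard'' ingredient, really, is the Haar integral representation itself, which is used as a black box here. The only small care point is ensuring that $\overline{\mathbb{P}}_{\mathsf{S}}A$ is genuinely supported on $\mathsf{S}$ (needed for the upper bound), which follows because the Haar average over $\mathsf{S}^{\mathrm{c}}$-unitaries produces an operator commuting with every such unitary, hence of the form $A'_{\mathsf{S}}\otimes I_{\mathsf{S}^{\mathrm{c}}}$ by Schur's lemma applied within $\mathsf{S}^{\mathrm{c}}$.
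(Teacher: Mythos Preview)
Your proposal is correct and follows essentially the same argument as the paper: the Haar integral representation of $\overline{\mathbb{P}}_{\mathsf{S}}$ for the lower bound, and the observation $[A,B_{\mathsf{S}^{\mathrm{c}}}]=[A-\overline{\mathbb{P}}_{\mathsf{S}}A,B_{\mathsf{S}^{\mathrm{c}}}]$ plus submultiplicativity for the upper bound. Your write-up is slightly more explicit (spelling out $U^\dagger AU - A = -U^\dagger[U,A]$ and invoking Schur's lemma for the support claim), but the substance is identical.
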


\begin{proof}
The first inequality uses the Haar integral representation (Proposition~\ref{prop:haar_rep_projector}) where $[\mathrm{d} U]_{\mathsf{S}}$ denotes the Haar measure of unitaries on set $\mathsf{S}$
\begin{equation}
    \norm{A - \overline{\mathbb{P}}_{\mathsf{S}}A} = \left\lVert\int [\mathrm{d} U]_{\mathsf{S}} (A-U^\dagger A U)\right\rVert \le \int [\mathrm{d} U]_{\mathsf{S}} \cdot \norm{[A,U]} \le \sup_{\norm{B_{\mathsf{S}}} =1}\norm{[A, B_{\mathsf{S}}]}
\end{equation}
and applies the commutator bound for each $U$. The second inequality uses that\begin{equation}
    \norm{[A, B_{\mathsf{S}}]} = \norm{[A-\overline{\mathbb{P}}_{\mathsf{S}}A , B_{\mathsf{S}}]}\le 2\norm{A-\overline{\mathbb{P}}_{\mathsf{S}}A} 
\end{equation}
for any operator $B_{\mathsf{S}}$ with $\norm{B_{\mathsf{S}}}=1$. This is the advertised result.
\end{proof}
 For the task for obtaining the local marginal, $\tr_{\mathsf{A}}[\rho_{\mathsf{A}} A]$, we additionally require an explicit, calculable form of the local approximation; this will also help implement the unitary dynamics efficiently on a gate-based quantum computer (Section~\ref{sec:Q_Ham_sim}).  We can already see how to do this by tracing back to Corollary \ref{cor:self-avoiding}, which we illustrate again in the following example:
 
\begin{exam}[Local approximation error on a 1d spin chain]
\label{ex:1d_local_approx}
Consider the Example of Section \ref{sec:warmup_LR}. Decompose the Hamiltonian $H = H_{r-1,r} + (H_{\le r-1} + H_{\ge r})$. Consider the dynamics according to a restrcted Hamiltonian $\tilde{A}(t) = \e^{\CL_{\le r -1} t} A$.  Here by construction, \begin{equation}
    \overline{\mathbb{P}}_{\le r-1} \tilde{A}(t) = \tilde{A}(t),
\end{equation}
although $\tilde A(t) \ne \overline{\mathbb{P}}_{\le r-1} A(t)$ in general.  Still, we see that
\begin{align}
\norm{A(t) - \tilde{A}(t)}
& \le \int_{0}^{\labs{t}} \mathrm{d}s \norm{ \CL_{r-1,r} \e^{\CL_{\le r-1} s} A} \le \int_{0}^{\labs{t}} \mathrm{d}s \; 2h \frac{(2hs)^{r-1}}{(r-1)!}
= \frac{(2h\labs{t})^{r}}{r!}.
\end{align}
The first inequality uses Duhamel's identity (Proposition~\ref{prop:duhamel}) for $\CL = \CL_{r-1,r} + (\CL_{\le r-1} + \CL_{\ge r})$
\begin{align}
    A(t) &= \int_{0}^t \mathrm{d}s\ \e^{\CL (t-s)}\CL_{r-1,r} \e^{\CL_{\le r-1} s}  A + \e^{\CL_{\le r-1} t} A
\end{align}
and the unitary invariance of the operator norm.
The second inequality expands $\CL_{r-1,r} \e^{\CL_{\le r-1} s} A$ into a sum over self-avoiding paths (Theorem~\ref{thm:self-avoiding}).    
\end{exam}

The local approximation holds more generally.  For any Hamiltonian, we define
\begin{subequations}
\begin{align}
    H_{\mathsf{B}}&:= \overline{\mathbb{P}}_{\mathsf{B}} H \quad \text{for each set}\quad \mathsf{B}\label{eq:H_B}\\
    H_{\mathsf{B}:\mathsf{C}} &:= H_{\mathsf{B}\cup \mathsf{C}} - H_{\mathsf{B}} - H_{\mathsf{C}} \quad \text{for any disjoint sets}\quad \mathsf{B} \quad\text{and} \quad \mathsf{C} \label{eq:B:C}.
\end{align}
\end{subequations}
\begin{prop}[Truncated Hamiltonian evolution]\label{prop:local_Heisenberg_evolution}
For a spatially local Hamiltonian in $d$ spatial dimensions, any regions $\sf B$ with complement $\sf C:={\sf B}^{\mathrm{c}}$, and any operator $A$ supported on $\sf A$, there exist constants $\mu,v, c>0$ such that
\begin{align}
        \lnorm{\e^{\ri ( H_{\sf B} + H_{\sf C} + H_{B : C} )t} A \e^{-\ri ( H_{\sf B} + H_{\sf C} + H_{B : C} )t} - \e^{-\ri H_{\sf B} t} A \e^{\ri H_{\sf B} t}} \le 
            c \cdot \labs{\partial \sf A} \norm{A}\mathrm{e}^{-\mu d(A,C)} \left(\mathrm{e}^{\mu v|t|} -1\right).
\end{align}
\end{prop}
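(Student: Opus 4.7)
The plan is to reduce this to a Duhamel/telescoping estimate and then apply the Lieb--Robinson bound of Theorem~\ref{thm:LRexpTail}. Write $A(t):=\e^{\ri Ht}A\e^{-\ri Ht}$ and $\tilde A(t):=\e^{\ri H_{\sf B}t}A\e^{-\ri H_{\sf B}t}$, and interpolate between them by differentiating $s\mapsto \e^{\ri H(t-s)}\tilde A(s)\e^{-\ri H(t-s)}$ to obtain
\begin{equation*}
A(t)-\tilde A(t) \;=\; \ri\int_0^t\!\rd s\; \e^{\ri H(t-s)}\bigl[\,H-H_{\sf B},\;\tilde A(s)\,\bigr]\,\e^{-\ri H(t-s)}.
\end{equation*}
Unitary invariance of $\norm{\cdot}$ and the triangle inequality reduce the claim to bounding $\int_0^t\norm{[H-H_{\sf B},\tilde A(s)]}\,\rd s$.

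Next I would decompose $H-H_{\sf B}=H_{\sf C}+H_{{\sf B}:{\sf C}}$ via \eqref{eq:H_B}--\eqref{eq:B:C}. Since $\tilde A(s)$ is propagated by $H_{\sf B}$ its support stays inside $\sf B$ (we may assume ${\sf A}\subset{\sf B}$, else $\mathsf{d}({\sf A},{\sf C})=0$ and the claim is trivial by \eqref{eq:LRboundby1}), so $[H_{\sf C},\tilde A(s)]=0$. Only the straddling terms contribute, and expanding $H_{{\sf B}:{\sf C}}=\sum_{\sf S}H_{\sf S}$ over connected sets $\sf S$ meeting both $\sf B$ and $\sf C$, I apply Theorem~\ref{thm:LRexpTail} with the Lieb--Robinson dynamics generated by $H_{\sf B}$:
\begin{equation*}
\norm{[\,H_{\sf S},\,\tilde A(s)\,]}\;\le\;2c'\,\norm{A}\,\norm{H_{\sf S}}\,\labs{\partial{\sf A}}\,\e^{-\mu^{*}\mathsf{d}({\sf A},{\sf S})}\bigl(\e^{\mu^{*}v s}-1\bigr),
\end{equation*}
valid for any rate $\mu<\mu^{*}<\mu'$. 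The prefactor $\labs{\partial{\sf A}}$ comes from taking the minimum in \eqref{eq:LR_exp}, which favors $\sf A$ once $\sf A$ is small compared to the boundary.

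Summing over straddling sets $\sf S$ is where the exponential--tail hypothesis \eqref{eq:expboundnorm} is used. Any straddling $\sf S$ satisfies $\mathsf{d}({\sf A},{\sf S})\ge \mathsf{d}({\sf A},{\sf C})-\mathrm{diam}({\sf S})$, so the factor $\e^{-\mu^{*}\mathsf{d}({\sf A},{\sf S})}$ combines with $\e^{\mu'(\mathrm{diam}({\sf S})-1)}$ in \eqref{eq:expboundnorm} to yield a net decay $\e^{-\mu\,\mathsf{d}({\sf A},{\sf C})}$, provided one gives up a little rate $\mu<\mu^{*}<\mu'$. A reproducibility argument analogous to \eqref{eq:reproducible1}--\eqref{eq:reproducible2}, summing first over $\sf S$ anchored at a vertex $u\in{\sf S}\cap{\sf C}$ and then over such $u$ at distance $\ge\mathsf{d}({\sf A},{\sf C})$ on the $d$-dimensional graph, produces a finite constant times $\e^{-\mu\,\mathsf{d}({\sf A},{\sf C})}$. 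Finally, integrating in $s$ via $\int_0^{\labs{t}}(\e^{\mu v s}-1)\,\rd s\le (\mu v)^{-1}(\e^{\mu v\labs{t}}-1)$ and absorbing all constants into $c$ delivers the stated bound.

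The main obstacle is this last step: bookkeeping the rate losses $\mu'\to\mu^{*}\to\mu$ through the three successive decay factors, and verifying that the straddling--set sum produces $\labs{\partial{\sf A}}$ and not a larger geometric prefactor depending on $\labs{\partial{\sf B}}$ or $\mathrm{diam}({\sf S})$. Apart from this careful geometric accounting, each ingredient (Duhamel, the support of $\tilde A(s)$, and the Lieb--Robinson bound) is standard.
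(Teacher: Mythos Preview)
Your approach is correct and shares the same starting point as the paper: Duhamel's identity, the observation that $[H_{\sf C},\tilde A(s)]=0$ so only $H_{\sf B:\sf C}$ contributes, and then a Lieb--Robinson estimate. The organizational difference is in how the remaining commutator is bounded. You treat Theorem~\ref{thm:LRexpTail} as a black box applied term-by-term to each straddling set $\sf S$, and then perform an explicit sum over $\sf S$ using the tail hypothesis \eqref{eq:expboundnorm}; this is why you need the rate-loss bookkeeping $\mu'\to\mu^*\to\mu$ and the reproducibility-style geometric sum. The paper instead expands $\CL_{\sf B:\sf C}\,\e^{\CL_{\sf B}s}A$ directly via the self-avoiding path identity (Theorem~\ref{thm:self-avoiding}) and observes that the time-ordered integral together with the path expansion of $\e^{\CL_{\sf B}s}$ reassembles into the full self-avoiding path sum from $\sf A$ to $\sf C$ with total time $t$---the single $\CL_{\sf B:\sf C}$ insertion being exactly the last step of any such path. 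This collapses your two-stage sum (over paths to $\sf S$, then over $\sf S$) into one, and the conversion to exponential form then goes through exactly as in Theorem~\ref{thm:LRexpTail} without a separate rate loss. Your route is more modular and makes the role of each hypothesis explicit; the paper's is terser and sidesteps the straddling-set sum entirely.
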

Indeed, the error term is reminiscent of the original Lieb-Robinson bound (Theorem~\ref{thm:LRexpTail}).   It is straightforward to generalize this result to any case where a Lieb-Robinson bound (\ref{eq:LR_exp}) holds (but with an additional surface area term due to time integration).
\begin{proof}
By Duhamel's identity,
\begin{align}
    \lnorm{\e^{(\CL_{\mathsf{B}} + \CL_{\mathsf{C}} + \CL_{\mathsf{B} : \mathsf{C}})s} A - \e^{\CL_{\mathsf{B}} s} A}
    &= \lnorm{ \int\limits_0^t \mathrm{d}s\;\e^{(\CL_{\mathsf{B}} + \CL_{\mathsf{C}} + \CL_{\mathsf{B} : \mathsf{C}})(t-s)} \CL_{\mathsf{B} : \mathsf{C}} \e^{ \CL_{\mathsf{B}} s}A  } \notag \\ 
    &\le  \sum_{\ell = 0}^\infty \frac{(2|t|)^{\ell}}{\ell!} \sum_{\substack{\text{self-avoiding paths $\Gamma: \mathsf{A}\rightarrow \mathsf{C}$} \\ \labs{\Gamma} = \ell} } \prod_{j=1}^{\ell} \lVert H_{\Gamma_j}\rVert.
\end{align}
The last inequality is the observation that any path from set $\mathsf{A}$ to set $\mathsf{C}$ must contain a term in $\mathsf{B}:\mathsf{C}$. The conversion from factorial form to exponential form is analogous to Theorem~\ref{thm:LRexpTail}.
\end{proof}

\subsubsection{Classical simulation with controlled error} 
By approximating $A(t)$ by an operator with strictly local support, we can efficiently simulate the expectation of local observables at short times.

\begin{prop}[Classical simulation from local approximation]
Consider a $d$-dimensional lattice and local operators $A, B$ with $\norm{A}=1, \norm{B}=1$ acting on small set $\sf R$ with $\labs{\sf R}= \mathrm{O}(1)$. Consider the correlation function $\langle A(t)B\rangle_{\rho_0}:=\mathrm{tr}[\rho_0A(t)B]$ and suppose the marginals $\tr_{\sf S^c}[ \rho_0] $ can be obtained at cost $\e^{\mathrm{O}(\labs{\sf S})}$. 
Then, for any $\epsilon>0$, there exists a classical algorithm that outputs the local expectation $\langle A(t) B\rangle_{\rho_0}$ up to error $\epsilon$ with 
\begin{align}
\text{(classical memory and runtime)} \le  \exp\left( \mathrm{O}\left(vt+\frac{1}{\mu}\log \frac{c}{\epsilon \mu v}\right)^d\right).
\end{align}
\end{prop}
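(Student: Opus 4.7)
The plan is to reduce the exact correlator $\langle A(t)B\rangle_{\rho_0}$ to a trace on a local Hilbert space by replacing $A(t)$ with a local approximant. Fix a radius $\ell$ to be chosen, let $\mathsf{B}:=\{v\in\mathsf{V}:\mathsf{d}(v,\mathsf{R})\le \ell\}$, and define
\[
\tilde{A}(t) \;:=\; \mathrm{e}^{\mathrm{i}H_{\mathsf{B}}t}\,A\,\mathrm{e}^{-\mathrm{i}H_{\mathsf{B}}t},
\]
which, by construction, is supported on $\mathsf{B}$. Because $B$ is also supported on $\mathsf{R}\subseteq \mathsf{B}$, the product $\tilde{A}(t)B$ acts trivially on $\mathsf{B}^{\mathrm{c}}$, and so
\[
\langle \tilde{A}(t)B\rangle_{\rho_0} \;=\; \tr_{\mathsf{B}}\!\big[\rho_{\mathsf{B}}\,\tilde{A}(t)\,B\big],\qquad \rho_{\mathsf{B}}:=\tr_{\mathsf{B}^{\mathrm{c}}}\rho_0,
\]
which is the object the classical algorithm will actually compute.

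For the approximation error, H\"older's inequality and Proposition~\ref{prop:local_Heisenberg_evolution} (applied with $\mathsf{C}=\mathsf{B}^{\mathrm{c}}$) yield
\[
\big|\langle A(t)B\rangle_{\rho_0}-\langle \tilde{A}(t)B\rangle_{\rho_0}\big| \;\le\; \|\rho_0\|_1\,\|B\|\,\|A(t)-\tilde{A}(t)\| \;\le\; c\,|\partial \mathsf{R}|\,\mathrm{e}^{-\mu\ell}\!\left(\mathrm{e}^{\mu v t}-1\right),
\]
using $\|\rho_0\|_1=\|B\|=1$ and $\mathsf{d}(\mathsf{R},\mathsf{B}^{\mathrm{c}})\ge \ell$. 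Since $|\mathsf{R}|=\mathrm{O}(1)$, the boundary factor $|\partial \mathsf{R}|$ is an absolute constant, and any choice of the form $\ell=vt+\mu^{-1}\log(c/\epsilon)+\text{constant}$---equivalently, $\ell=\mathrm{O}\!\left(vt+\mu^{-1}\log(c/\epsilon\mu v)\right)$---forces the error to be at most $\epsilon$.

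Finally, the $d$-dimensional volume bound~\eqref{eq:d_dim} gives $|\mathsf{B}|\le 1+C\ell^d$. Every step the algorithm performs---assembling $H_{\mathsf{B}}$, computing $\mathrm{e}^{\pm\mathrm{i}H_{\mathsf{B}}t}$ (e.g.\ by exact diagonalization or a truncated Taylor/Krylov expansion), forming $\tilde{A}(t)$ as a matrix product, obtaining $\rho_{\mathsf{B}}$ at the assumed cost $\mathrm{e}^{\mathrm{O}(|\mathsf{B}|)}$, and evaluating the final trace---lives on a Hilbert space of dimension $\exp(\mathrm{O}(|\mathsf{B}|))$ and runs in time polynomial in that dimension. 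Substituting $\ell$ yields
\[
\text{(memory and runtime)} \;\le\; \exp\!\left(\mathrm{O}\!\left(vt+\tfrac{1}{\mu}\log\tfrac{c}{\epsilon\,\mu v}\right)^{\!d}\right),
\]
which is the claimed bound. No serious obstacle arises: this proposition is essentially a packaging of Proposition~\ref{prop:local_Heisenberg_evolution} together with the lattice volume estimate~\eqref{eq:d_dim} and the routine fact that operators on $k$ qudits can be stored and manipulated in $\exp(\mathrm{O}(k))$ time. The only care needed is matching the logarithmic constant in $\ell$ to the form stated, which is harmless bookkeeping inside the big-$\mathrm{O}$.
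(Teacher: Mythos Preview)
Your proof is correct and follows essentially the same route as the paper's: both invoke Proposition~\ref{prop:local_Heisenberg_evolution} to replace $A(t)$ by the truncated-Hamiltonian evolution on a ball of radius $\ell$ (the paper's $L$), bound the approximation error via the exponential Lieb-Robinson tail, choose $\ell$ so the error is $\le\epsilon$, and then observe that the remaining trace lives on a Hilbert space of dimension $\exp(\mathrm{O}(\ell^d))$. Your write-up is somewhat more explicit about the linear-algebra cost accounting, but the argument is the same.
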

\begin{proof}
By Proposition~\ref{prop:local_Heisenberg_evolution}, the Heisenberg evolution $A(t)$ can be approximated by a strictly local $\tilde{A}(t)$, that is evolved by the true dynamics restricted to the set of vertices $\sf S$ within distance $L$: 
\begin{align}\label{eq:A=Alocal}
    \left|\langle A(t)B\rangle_{\rho_0}-\langle \tilde{A}(t)B\rangle_{\rho_0} \right|\le \norm{A(t)-\tilde{A}(t)} \norm{B}     \le \frac{c}{\mu v}\mathrm{e}^{\mu (vt-L)}.
\end{align}
Error $\epsilon$ is guaranteed by choosing $L$ to be sufficiently large:\footnote{A better error dependence is possible by using a stronger Lieb-Robinson bound with factorial decay.} \begin{equation}
    L=vt+\frac{1}{\mu}\log \frac{c}{\epsilon \mu v}.
\end{equation}
The expectation $\tr_{\sf S}[\rho_{\sf S}\tilde{A}(t)B]$ can be evaluated by standard linear-algebra manipulation (exact diagonalization and matrix multiplication) at cost $\poly(2^{(L^D)})$, which is the advertised result.
\end{proof}

\subsection{Quantum algorithms for Hamiltonian simulation}
\label{sec:Q_Ham_sim}
Now, suppose we want to simulate the expectation $\tr[\rho O(t)]$ on a quantum computer, given some initial state $\rho$ for a longer time $t$. The task boils down to \textbf{Hamiltonian simulation} \cite{Feynman82sim,lloyd1996universal}, that is, to approximate the true unitary evolution by a product of simple unitaries
\begin{align}
    U = \e^{-\ri H t} \approx V = g_1\cdots g_G.
\end{align}
A Hamiltonian simulation algorithm must achieve the desired \textbf{accuracy} with minimal \textbf{cost}. For simplicity, one may quantify the accuracy by the spectral norm of the difference
\begin{align}
    \norm{U-V} \le \epsilon 
\end{align}
which guarantees accurate simulation of any input state with any observable
$
    \tr[ O (U\rho U^\dagger - V\rho V^\dagger) ] \le 2 \epsilon \norm{O}.
$ The cost is often calculated in terms of the number of gates. Such Hamiltonian simulation algorithms have numerous applications in quantum chemistry \cite{mcardle2020quantum} and materials science~\cite{babbush2018low}.

In this review, we focus on how accurate simulations can be when $H$ itself is a spatially local Hamiltonian, as we can then apply a Lieb-Robinson bound to try and prove that the simulation can be done efficiently. Just as in the classical setting, it makes sense that evolution generated by a spatially local Hamiltonian should be approximated efficiently by only local gates within the Lieb-Robinson light cone. The technical question is how to patch local evolution together to approximate continuous time dynamics well.  After all, in a local circuit, there is an \emph{exact} light cone: information cannot be sent farther than the depth of the circuit (see Section \ref{sec:ruc}).

A first attempt by~\cite{Osborne_2006} is to cut the systems into non-interacting pieces, and then put the interaction back via the interaction picture, which we briefly review: 

\begin{prop}[The interaction picture]
Suppose the Hamiltonian consists of two terms $H = H_0+ V$. Then, 
\begin{align}
    \e^{\ri H t} = \CT\e^{ \ri\int^{t}_0 V(s) ds}\cdot \e^{\ri H_0 t} \quad \text{where}\quad V(s) := \e^{\ri H_0 s} V \e^{-\ri H_0 s}. 
\end{align}
Here $\mathcal{T}$ denotes the time-ordered exponential.
\end{prop}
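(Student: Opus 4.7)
The plan is to show that both sides of the identity satisfy the same first-order linear operator ODE with the same initial condition at $t=0$, and then invoke uniqueness. Concretely, define the interaction-picture propagator
$$W(t) := \e^{\ri H t}\,\e^{-\ri H_0 t},$$
so that the claim $\e^{\ri H t} = \CT\e^{\ri\int_0^t V(s)\,\mathrm{d} s}\cdot \e^{\ri H_0 t}$ is equivalent to $W(t) = \CT\e^{\ri\int_0^t V(s)\,\mathrm{d} s}$. The initial condition $W(0) = I$ is immediate.

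Next I would differentiate $W(t)$. Using that $H$ commutes with $\e^{\ri H t}$ and $H_0$ commutes with $\e^{-\ri H_0 t}$,
$$\frac{\mathrm{d} W}{\mathrm{d} t} = \ri\,\e^{\ri H t}(H - H_0)\e^{-\ri H_0 t} = \ri\,\e^{\ri H t}\,V\,\e^{-\ri H_0 t} = \ri\,W(t)\,V(t),$$
where the last equality inserts the identity $\e^{-\ri H_0 t}\e^{\ri H_0 t}$ between $\e^{\ri H t}$ and $V$ to reconstruct $V(t) = \e^{\ri H_0 t} V \e^{-\ri H_0 t}$. This is a first-order linear operator ODE driven by $V(t)$, whose unique solution with initial data $I$ is, by the Trotter-product (equivalently, Dyson-series) definition, precisely the time-ordered exponential $\CT\e^{\ri\int_0^t V(s)\,\mathrm{d} s}$. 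Right-multiplying this identity by $\e^{\ri H_0 t}$ then yields the claimed factorization.

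There is no genuinely hard step: the only subtlety is purely conventional. One must align the orientation of $\CT$ (whether later times are placed on the left or the right in the infinite product) with the side of $W(t)$ on which $V(t)$ appears in the ODE above. With the convention in which later times sit on the right, the identification $W(t) = \CT\e^{\ri\int_0^t V(s)\,\mathrm{d} s}$ matches the $\ri W(t) V(t)$ form derived here; the alternative convention would simply be the same statement written with the mirror factor ordering. Beyond this bookkeeping, the result follows from uniqueness for matrix-valued linear ODEs.
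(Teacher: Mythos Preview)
Your proof is correct and is the standard derivation. The paper itself states this proposition without proof (it is ``briefly reviewed'' as a well-known fact); your differentiate-and-invoke-uniqueness argument is exactly the method the paper uses for the closely related Duhamel identity (Proposition~\ref{prop:duhamel}), so in spirit it matches the paper's approach.
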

Now, suppose our system is a spatially local Hamiltonian. We may take $V$ to be the interaction $H_{\sf B:\sf C}$ between two regions $\sf B, \sf C$ and $H_0$ to be $H_{\sf B}+H_{\sf C}$. Then, we expect the unitary $\CT\e^{ \ri\int^{t}_0 V(s) ds}$ to be a quasi-local (by Proposition~\ref{prop:local_Heisenberg_evolution}). One may iterate the above to cut the system into quasi-local patches, giving a quantum algorithm for simulation of spatially local Hamiltonian. The main issue with this approach is implementing the interaction picture, since the time-ordered integrals will be only quasi-local, and therefore expensive to simulate by brute force (such as directly using the Solovay-Kitaev algorithm~\cite{Dawson2005TheSA}, whose costs generally scale with the Hilbert space dimension).

The more recent HHKL algorithm~\cite{haah2020quantum} currently serves as the state-of-the-art method for quantum simulation of spatially local Hamiltonian dynamics. Its core idea is to split the unitary evolution using back-and-forth local evolutions that circumvent the explicit interaction picture, achieved via the following lemma:
\begin{lma}[Patching local evolution operators without the interaction picture]
\label{lem:patching}
For a local Hamiltonian supported on disjoint sets $\sf A, \sf B ,\sf C$, with $\mathsf{d}(\mathsf{A},\mathsf{B}) = \mathsf{d}(\mathsf{C},\mathsf{B}) =1 $ but $\mathsf{d}(\mathsf{A},\mathsf{C})>1$, and for constant time $t = \mathrm{O}(1)$, there is a constant $\mu >0$ such that
\begin{align}
    \norm{ \e^{\ri H_{\sf A\sf B\sf C} t } - \e^{\ri H_{\sf A\sf B} t}\e^{-\ri H_{\sf B} t}\e^{\ri H_{\sf B\sf C} t} } \le \CO\left( \e^{-\mu\mathsf{d}(\mathsf{A},\mathsf{C})}\norm{H_{\sf A:\sf B}} \labs{\text{Supp}(H_{\sf A:\sf B})}\right)
\end{align}
where $H_{\sf A\sf B}:= \overline{\mathbb{P}}_{\sf A \sf B}H$ and $H_{\sf A:\sf B} := H_{\sf A\sf B} - H_{\sf A} - H_B$ are defined  in~\eqref{eq:H_B} and~\eqref{eq:B:C}, and $\mathrm{Supp}(H_{\mathsf{A}:\mathsf{B}})$ is the set on which $H_{\mathsf{A}:\mathsf{B}})$ acts non-trivially.
\end{lma}
Indeed, if the Hamiltonian is commuting, the equality holds; in the non-commuting case, the error is exponentially small in the distance between region $\mathsf{A}$ and $\mathsf{C}$. The terms $\norm{H_{\sf A:\sf B}}$ and $ \labs{\text{Supp}(H_{\sf A:\sf B})}$ scale only with the surface area and merely contribute a polylogarithmic overhead for algorithmic cost.
\begin{proof}
We begin with an elementary identity
\begin{align}
    \e^{\ri H_{\sf A\sf B\sf C} t } - \e^{\ri H_{\sf A\sf B} t}\e^{-\ri H_{\sf B} t}\e^{\ri H_{\sf B\sf C} t} = \left(\e^{\ri H_{\sf A\sf B\sf C} t}\e^{-\ri H_{\sf B\sf C} t} -\e^{\ri H_{\sf A\sf B} t}\e^{-\ri H_{\sf B} t} \right) \cdot \e^{\ri H_{\sf B\sf C} t}.
\end{align}
The first term can be expanded in the interaction picture by isolating the terms cutting $\mathsf{A}$ and $\mathsf{B}$: 
\begin{subequations}\begin{align}
    \e^{\ri H_{\sf A\sf B\sf C} t}\e^{-\ri H_{\sf B\sf C} t} &= V \e^{\ri H_{\sf A} t}\e^{\ri H_{\sf B\sf C} t}\cdot \e^{-\ri H_{\sf B\sf C} t} = V \e^{\ri H_{\sf A} t}\\
    \e^{\ri H_{\sf A\sf B} t}\e^{-\ri H_{\sf B} t} &= V' \e^{\ri H_{\sf A} t}
\end{align}\end{subequations}
where 
\begin{subequations}\begin{align}
 V &:=\CT \exp\left( \int_0^t \mathrm{d} s\e^{\ri (H_{\sf A}+H_{\sf B\sf C}) s}H_{\sf A:\sf B} \e^{-\ri (H_{\sf A}+H_{\sf B\sf C}) s}\right)\\
 V' &:=  \CT\exp\left(\int_0^t \mathrm{d} s\e^{\ri (H_{\sf A}+H_{\sf B}) s}H_{\sf A:\sf B} \e^{-\ri (H_{\sf A}+H_{\sf B}) s}\right).
\end{align}\end{subequations}
Therefore,
\begin{align}
 \lnorm{\e^{\ri H_{\sf A\sf B\sf C} t } - \e^{\ri H_{\sf A\sf B} t}\e^{-\ri H_{\sf B} t}\e^{\ri H_{\sf B\sf C} t} }&\le \norm{ V - V'} \notag \\ 
 &\le \int_0^t \mathrm{d}s \lnorm{ \e^{\ri (H_{\sf A}+H_{\sf B\sf C}) s}H_{\sf A:\sf B} \e^{-\ri (H_{\sf A}+H_{\sf B\sf C}) s} - \e^{\ri (H_{\sf A}+H_{\sf B}) s}H_{\sf A:\sf B} \e^{-\ri( H_{\sf A}+H_{\sf B}) s}} \notag \\ &\le \mathrm{O}( \e^{-\mu \mathsf{d}(\mathsf{A},\mathsf{C})}\norm{H_{\sf A:\sf B}} \labs{\text{Supp}(H_{\sf A:\sf B})}).
\end{align}
The second inequality applies a telescoping sum over the time-ordered exponential. The second inequality uses Proposition~\ref{prop:local_Heisenberg_evolution} and integrates over constant time $t=\mathrm{O}(1)$. This is the advertised result.

\end{proof}
Recursively using the above gives the HHKL algorithm for d-dimensional lattices (Figure~\ref{fig:HHKL}). 
\begin{theor}[The HHKL algorithm~\cite{haah2020quantum}]
A spatially local Hamiltonian on a $L\times\cdots L = L^d$ lattice in $d$ spatial dimensions can be simulated for time $t$ up to $\epsilon$ error in spectral norm using 
\begin{align}
\text{gate complexity} \quad \mathrm{O}(t L^d \mathrm{polylog}(tL^d/\epsilon)) \quad \text{and depth}\quad \mathrm{O}(T \mathrm{polylog}(tL^d/\epsilon)).
\end{align}
\end{theor}
The gate complexity is essentially (up to the polylogarithmic corrections) the space-time volume of the evolution, coinciding with our physical intuition. A matching lower bound (up to polylogarithmic factors) is known by constructing a family of time-dependent circuits~\cite{haah2020quantum}. 

\begin{figure}[t]
\centering
\includegraphics[width=.9\textwidth]{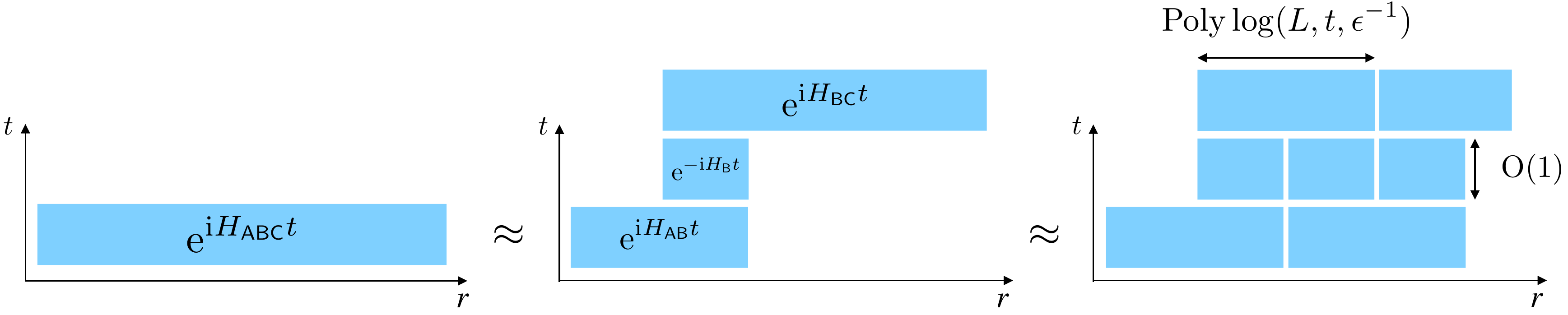}
\caption{
The HHKL decomposition of unitary in one dimension. For each short time $\mathrm{O}(1)$ unitary, the first approximation illustrates one call of Proposition~\ref{prop:local_Heisenberg_evolution}. 
Many iterations (the second approximation) lead to a decomposition in terms of quasi-local unitaries, which can be implemented at exponential precision using standard Hamiltonian simulation techniques (see, e.g.,~\cite{martyn2021grand}.) This strategy naturally extends to higher dimensions.
}\label{fig:HHKL}
\end{figure}

\section{Bounds on entanglement dynamics and correlations}\label{sec:corr}
Now, we use Lieb-Robinson bounds on commutators of local operators to derive (in many cases) optimal bounds on the speed with which various information-theoretic tasks (such as entanglement generation or quantum correlation/entanglement generation) can be performed. 

\subsection{Information signaling and quantum state transfer}
We start from perhaps the most directly relatable task: transmitting a qubit of quantum information across some distance in a many-body system.  A particularly explicit example is to perform state transfer during a single logical qubit initially stored on site $i$ is stored on site $f$ after the protocol: see (\ref{eq:Prop1 States}) below.  The following proposition shows that a quantum state can be transferred no faster than the Lieb-Robinson velocity $v$.

\begin{prop}[State transfer is bounded by the Lieb-Robinson Theorem] \label{prop:transfer norm}

Consider the initial state and final state of the form 
\begin{subequations}\label{eq:Prop1 States}\begin{align}
       \ket{\Psi_i(\alpha,\beta)}  &:=  \left( \alpha\ket{0}_i + \beta\ket{1}_i \right) \otimes \ket{\psi_{-i}}  ,\\
       \ket{ \Psi_f(\alpha,\beta)}  &:=  \ket{\psi_{-f}} \otimes \left( \alpha\ket{0}_f + \beta\ket{1}_f \right),
\end{align}
\end{subequations}
where $\ket{\psi_{-i}}$ and $\ket{\psi_{-f}}$ are both arbitrary states on all qubits except $i$ and $f$, respectively, and $\labs{\alpha}^2+\labs{\beta}^2  =  1$. Then, for any unitary $U$,
\begin{equation}
    U\ket{\Psi_i(\alpha,\beta)}  =  \ket{\Psi_f(\alpha,\beta)} \quad \text{implies}\quad
    \left\lVert  \comm{ U^\dagger X_f U  }{ Z_i} \right\rVert   = 2 .~~
    \label{eq:Prop1 Comm}
\end{equation}
\end{prop}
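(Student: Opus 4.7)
The plan is to exploit the state-transfer property to show that $V := U^\dagger X_f U$ acts as a logical bit-flip on qubit $i$ when applied to states of the form $\ket{\Psi_i(\alpha,\beta)}$, and then to evaluate the commutator with $Z_i$ explicitly on a convenient computational-basis vector. Since the hypothesis holds for \emph{every} choice of $(\alpha,\beta)$ with the same $\ket{\psi_{-i}}$ and $\ket{\psi_{-f}}$ (these are fixed by the protocol), we have two useful identities: $U\ket{\Psi_i(1,0)} = \ket{\Psi_f(1,0)}$ and $U\ket{\Psi_i(0,1)} = \ket{\Psi_f(0,1)}$.

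The first step is to compute $V\ket{\Psi_i(1,0)}$. We write $V\ket{\Psi_i(1,0)} = U^\dagger X_f U\ket{\Psi_i(1,0)} = U^\dagger X_f \ket{\Psi_f(1,0)} = U^\dagger \ket{\Psi_f(0,1)} = \ket{\Psi_i(0,1)}$, where the third equality uses that $X_f$ flips $\ket{0}_f \leftrightarrow \ket{1}_f$ on the last tensor factor while leaving $\ket{\psi_{-f}}$ alone, and the final equality uses the state-transfer hypothesis for $(\alpha,\beta) = (0,1)$. By the symmetric calculation, $V\ket{\Psi_i(0,1)} = \ket{\Psi_i(1,0)}$.

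Next, I would evaluate the commutator on the vector $\ket{\Psi_i(1,0)} = \ket{0}_i \otimes \ket{\psi_{-i}}$. Using $Z_i\ket{\Psi_i(1,0)} = \ket{\Psi_i(1,0)}$ and $Z_i\ket{\Psi_i(0,1)} = -\ket{\Psi_i(0,1)}$, we obtain
\begin{equation}
\bigl[V, Z_i\bigr]\ket{\Psi_i(1,0)} = V\ket{\Psi_i(1,0)} - Z_i V\ket{\Psi_i(1,0)} = \ket{\Psi_i(0,1)} + \ket{\Psi_i(0,1)} = 2\ket{\Psi_i(0,1)}.
\end{equation}
Since $\ket{\Psi_i(0,1)}$ is a unit vector, this yields the lower bound $\bigl\lVert[V,Z_i]\bigr\rVert \ge 2$. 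The matching upper bound is immediate from submultiplicativity (Proposition~\ref{prop:holder} with $p=\infty$): $\bigl\lVert[V,Z_i]\bigr\rVert \le 2\lVert V\rVert\,\lVert Z_i\rVert = 2$, since $V$ is unitary and $Z_i$ is a Pauli.

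There is no substantive obstacle here; the only point that needs care is recognizing that the state-transfer assumption is required to hold uniformly in $(\alpha,\beta)$ with a fixed pair $(\ket{\psi_{-i}},\ket{\psi_{-f}})$, so that we may legitimately invoke it with both $(\alpha,\beta) = (1,0)$ and $(\alpha,\beta) = (0,1)$ and conclude that $V$ genuinely implements a logical $X$ on the input qubit $i$.
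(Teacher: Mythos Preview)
Your proof is correct and follows essentially the same approach as the paper: both evaluate the commutator on an explicit state of the form $\ket{\Psi_i(\alpha,\beta)}$ to extract the lower bound $2$, and invoke the trivial upper bound $\lVert[V,Z_i]\rVert \le 2\lVert V\rVert\,\lVert Z_i\rVert$. The only cosmetic difference is that the paper carries general $(\alpha,\beta)$ through the computation, obtaining $[U^\dagger X_f U, Z_i]\ket{\Psi_i(\alpha,\beta)} = 2\ket{\Psi_i(-\beta,\alpha)}$, whereas you specialize to $(\alpha,\beta)=(1,0)$ from the outset; the logic is otherwise identical.
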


\begin{proof}
Applying the commutator \eqref{eq:Prop1 Comm} to the initial state \eqref{eq:Prop1 States} involves two parts: 
\begin{align}
\label{eq:Prop1 Comm parts}
    U^\dagger X_f U Z_i   \ket{\Psi_i(\alpha,\beta)}  &=  U^\dagger X_f U  \ket{\Psi_i(\alpha,-\beta)}  =  U^\dagger X_f  \ket{ \Psi_f(\alpha,-\beta)}   =  U^\dagger \ket{ \Psi_f(-\beta,\alpha)}  =  \ket{\Psi_i(-\beta,\alpha)}  \notag\\
    Z_i U^\dagger X_f U  \ket{\Psi_i(\alpha,\beta)}   &=  Z_i U^\dagger X_f  \ket{\Psi_f(\alpha,\beta)}  =  Z_i U^\dagger  \ket{ \Psi_f(\beta,\alpha)}   = Z_i  \ket{\Psi_i(\beta,\alpha)}   =  -\ket{ \Psi_i(-\beta,\alpha)}  ,
\end{align}
and subtracting the second line from the first gives the commutator
\begin{equation}
    \label{eq:Prop1 implication}
    \comm{ U^\dagger X_f U }{   Z_i }  \ket{\Psi_i(\alpha,\beta)}   =   2  \ket{ \Psi_i(-\beta,\alpha)}  
\end{equation}
which implies that $\lnorm{  \comm{ U^\dagger  X_f  U  }{  Z_i  } }  \geq  2$. On the other hand, 
\begin{equation}
    \lnorm{ \comm{ U^\dagger X_f U}{ Z_i} }  \leq  2 \lnorm{ U^\dagger X_f U}  \lnorm{ Z_i}  =  2,
\end{equation}
which follows from
\begin{equation}
    \lnorm{\comm{A}{B}}=\lnorm{AB-BA} \leq \lnorm{AB}+\lnorm{BA} \leq 2 \lnorm{A}  \lnorm{B}  .
\end{equation}
The upper and lower bounds on $ \lnorm{  \comm{ U^\dagger  X_f  U  }{  Z_i  } }$ imply (\ref{eq:Prop1 Comm}).
\end{proof}
This relates directly to Lieb-Robinson bounds since if the protocol $U$ came from continuous time evolution with some local time-dependent Hamiltonian $H(t)$, for time $t\lesssim r/v$ with $r$ the distance between $i$ and $f$, then a Lieb-Robinson bound will forbid (\ref{eq:Prop1 Comm}) from being true. 

By transmitting a qubit, an agent at $i$ can send a bit of classical message to an agent at $f$.  Thus, sending quantum information is no faster than sending classical information. On the other hand, one may wonder if sending classical information could be strictly faster, by some protocol that encodes the classical bit into a quantum state in some more complicated way and then sends it via quantum dynamics. The answer is no since classical information also propagates no faster than the Lieb-Robinson velocity. This is summarized by:
\begin{theor}[Information signaling bounded by Lieb-Robinson (informal version) \cite{Bravyi2006}]\label{prop:holevo}
    Suppose the Lieb-Robinson bound \eqref{eq:LR_exp} holds. Then information (both quantum and classical) travels at a speed upper bounded by the Lieb-Robinson velocity $v$. 
\end{theor}
Ref.~\cite{Bravyi2006} uses the Holevo capacity to quantify the classical information, which is beyond the scope of this review. Here we give an intuitive argument on why Theorem \ref{prop:holevo} should hold, generalizing the idea of Proposition \ref{prop:transfer norm}.

Consider Alice and Bob sitting at space-time points $(x_i,0)$ and $(x_f,t)$ respectively. The system starts in the state $\rho_0$ at time $0$ when Alice accesses the system locally at site $i$.  It then undergoes local dynamics via unitary $U$ until time $t$, when Bob tries to receive the information at site $f$. All correlations Bob can measure are captured by the reduced density matrix $\rho_f$ at site $f$: \begin{equation}
    \rho_f = \tr_{\lbrace f\rbrace^{\mathrm{c}}} \left[ U\rho_0 U^\dagger \right].
\end{equation}
If $\rho_f$ does not depend much on what Alice did at $(x_i,0)$, then Bob effectively cannot retrieve information. For example, suppose Alice has a bit of classical information $0$ or $1$ at time $0$, and she either does nothing to the system if the bit is $0$, or flips the spin (applies unitary $X_i$ at time 0) if the bit is $1$. Then the final state at $t$ is either $U\rho_0 U^\dagger$ or $UX_i\rho_0 X_i U^\dagger$ based on the classical bit of Alice. Although these two states may be drastically different globally, they are indistinguishable locally for Bob at $f$ if this site is far from $i$. Indeed, for any operator $B_f$ on $f$, its expectation value differs between the two states by an amount
\begin{equation}
    \tr\left[B_f\lr{U\rho_0 U^\dagger - UX_i\rho_0 X_i U^\dagger } \right] = \tr\left[\rho_0\lr{B_f(t)-X_i B_f(t) X_i}\right] \le \lnorm{\comm{B_f(t)}{X_i}}.
\end{equation}
Thus if $r\gtrsim vt$, the right hand side is vanishingly small for any $B_f$, meaning that the two states are ``close to each other locally" and Bob cannot distinguish them. In other words, in order to communicate information, the local ``perturbation'' by Alice's gate $X_i$ should be able to reach Bob at time $t$.  The Lieb-Robinson bound tells us how quickly that can happen.

Lieb-Robinson bounds are used to constrain information transfer on general spin networks in \cite{chessa}.

\subsection{Entanglement dynamics}

Consider a system made out of two subsystems $\mathsf{A}$ and $\sf B$ with Hilbert space $\mathcal{H}=\mathcal{H}_{\sf A}\otimes \mathcal{H}_{\sf B}$. For any unitary $U$ acting on $\mathcal{H}$, one can ask how much it can grow \textit{local operators} in one subsystem to the other. According to previous sections, one way to quantify this is to study $\lV \BP_{\sf B} \CU A \rV$, where $A$ is some operator supported in ${\sf A}$, and $\CU$ is the evolution superoperator \begin{equation}
    \CU A := U A U^\dagger.
\end{equation} 
Note that our choice is different than Heisenberg evolution $U^\dagger A U$ for later convenience. In this section, we ask how much \textit{entanglement} $U$ generates between the two parties ${\sf A}$ and ${\sf B}$, and connect this to a Lieb-Robinson bound.

First, to quantify entanglement, consider the \textbf{R\'enyi entropy} for any pure state $\ket{\psi}\in \mathcal{H}$ and any $ 0 \le \alpha \le +\infty$\begin{equation}
    S_\alpha(\ket{\psi}) := \frac{1}{1-\alpha} \ln \tr\rho_{\sf A}^\alpha\quad \text{where}\quad \rho_{\sf A}=\tr_{\sf B} |\psi\rangle\langle\psi|. \label{eq:renyi}
\end{equation}
In particular, we recover the \textbf{von Neumann entropy} at $\alpha\rightarrow 1$ \begin{equation}
    S_1(\ket{\psi}) := -\tr\left(\rho_{\sf A} \ln \rho_{\sf A}\right).
\end{equation}
The entropy $S_\alpha(\ket{\psi})$ is decreasing function of $\alpha$ such that \begin{equation}\label{eq:S1>S2}
     \alpha_1 \le \alpha_2 \quad \text{implies}\quad S_{\alpha_1}(\ket{\psi}) \ge S_{\alpha_2}(\ket{\psi}).
\end{equation}
In the following proposition, we show that the speed of generating the second R\'enyi entropy $S_2$ is bounded by operator growth. 
\begin{prop}[Operator growth bound and bipartite entanglement generation]\label{prop:P>S2}
Consider a unitary $U_{\sf A B}$ acting systems $\sf A B$ and an operator $A$ in the form 
\begin{equation}\label{eq:OA=pp}
    A = |\psi_{\sf A}\rangle \langle \psi_{\sf A}|\otimes I_{\mathsf{B}}.
\end{equation}
Then, we have 
\begin{equation}\label{eq:P>S2}
    \norm{\BP_{\sf B} \CU_{\sf A B} A } \ge 1-\e^{-S_2(\ket{\psi_f})/2} \quad \text{where} \quad \ket{\psi_f} := U_{\sf A B} \ket{\psi_{\sf A}}\otimes\ket{\psi_{\sf B}}
\end{equation}
for arbitrary $\ket{\psi_{\sf B}}$.
\end{prop}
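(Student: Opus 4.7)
My plan is to test the operator norm against the single vector $|\psi_f\rangle$, exploiting that $A$ is a projector whose range contains $|\psi_{\mathsf{A}}\rangle\otimes|\psi_{\mathsf{B}}\rangle$. Since $\mathsf{B}^{\mathrm{c}}=\mathsf{A}$ in the present bipartite setting, the super-projector definitions yield the orthogonal split $\mathbb{P}_{\mathsf{B}}\CU A=\CU A-\overline{\mathbb{P}}_{\mathsf{A}}\CU A$. Because $(|\psi_{\mathsf{A}}\rangle\langle\psi_{\mathsf{A}}|\otimes I_{\mathsf{B}})|\psi_{\mathsf{A}}\psi_{\mathsf{B}}\rangle=|\psi_{\mathsf{A}}\psi_{\mathsf{B}}\rangle$, applying $U$ gives $\CU A|\psi_f\rangle=|\psi_f\rangle$, and therefore
\begin{equation*}
\|\mathbb{P}_{\mathsf{B}}\CU A\|\;\ge\;\langle\psi_f|\mathbb{P}_{\mathsf{B}}\CU A|\psi_f\rangle\;=\;1-\langle\psi_f|\overline{\mathbb{P}}_{\mathsf{A}}\CU A|\psi_f\rangle.
\end{equation*}
The task reduces to upper bounding the last matrix element by $\e^{-S_2(|\psi_f\rangle)/2}$.

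Next I would make $\overline{\mathbb{P}}_{\mathsf{A}}\CU A$ explicit via the Haar representation of Proposition~\ref{prop:haar_rep_projector}: averaging $V^\dagger(\cdot)V$ over Haar-random unitaries $V$ on $\mathsf{B}$ projects the $\mathsf{B}$-factor onto the maximally mixed state, so $\overline{\mathbb{P}}_{\mathsf{A}}\CU A=\tfrac{1}{d_{\mathsf{B}}}\,T\otimes I_{\mathsf{B}}$ with $T:=\tr_{\mathsf{B}}(UAU^\dagger)$. The operator $T$ is positive (partial trace of a positive operator) and $\tr T=\tr A=d_{\mathsf{B}}$. Using the elementary identity $\langle\psi_f|(X\otimes I_{\mathsf{B}})|\psi_f\rangle=\tr_{\mathsf{A}}(X\rho_{f,\mathsf{A}})$, where $\rho_{f,\mathsf{A}}=\tr_{\mathsf{B}}|\psi_f\rangle\langle\psi_f|$, the quantity to bound becomes $\tfrac{1}{d_{\mathsf{B}}}\tr_{\mathsf{A}}(T\rho_{f,\mathsf{A}})$. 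I would then invoke H\"older's inequality (Proposition~\ref{prop:holder}) for the two positive operators to obtain $\tr_{\mathsf{A}}(T\rho_{f,\mathsf{A}})\le\tr(T)\,\|\rho_{f,\mathsf{A}}\|=d_{\mathsf{B}}\|\rho_{f,\mathsf{A}}\|$, so
\begin{equation*}
\langle\psi_f|\overline{\mathbb{P}}_{\mathsf{A}}\CU A|\psi_f\rangle\;\le\;\|\rho_{f,\mathsf{A}}\|.
\end{equation*}

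To finish I would convert the operator norm of the reduced density matrix into its purity. Letting $\{\lambda_k\}$ denote the Schmidt eigenvalues of $|\psi_f\rangle$, the trivial bound $\max_k\lambda_k^2\le\sum_k\lambda_k^2$ gives $\|\rho_{f,\mathsf{A}}\|\le\sqrt{\tr(\rho_{f,\mathsf{A}}^2)}=\e^{-S_2(|\psi_f\rangle)/2}$, which, combined with the two previous displayed inequalities, delivers \eqref{eq:P>S2}. The right-hand side is non-negative because $\|\rho_{f,\mathsf{A}}\|\le 1$, so the lower bound on the matrix element is automatically a lower bound on its absolute value. The only step requiring any care is the first: correctly identifying $\overline{\mathbb{P}}_{\mathsf{A}}\CU A$ as $\tfrac{1}{d_{\mathsf{B}}}T\otimes I_{\mathsf{B}}$ via Haar averaging over the complement subsystem. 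After that reduction, everything else is positivity, H\"older, and the standard max-eigenvalue-versus-purity inequality, so I do not anticipate a serious obstacle.
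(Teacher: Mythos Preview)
Your argument is correct. The overall architecture coincides with the paper's: both test the operator norm against $|\psi_f\rangle$, use $\CU A|\psi_f\rangle=|\psi_f\rangle$, and split $\CU A$ into its $\mathbb{P}_{\mathsf{B}}$ part and its $\overline{\mathbb{P}}_{\mathsf{A}}$ part (the paper writes the latter in matrix components $T_{ij}=\tfrac{1}{D_{\mathsf{B}}}\langle i|\tr_{\mathsf{B}}(\CU A)|j\rangle$, which is exactly your $\tfrac{1}{d_{\mathsf{B}}}T\otimes I_{\mathsf{B}}$). The difference lies in how the remaining matrix element $\langle\psi_f|\overline{\mathbb{P}}_{\mathsf{A}}\CU A|\psi_f\rangle$ is bounded. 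The paper works in the Schmidt basis, writes this as $\sum_j p_j T_{jj}$, applies Cauchy--Schwarz to get $\sqrt{\sum_j p_j^2}\sqrt{\sum_j|T_{jj}|^2}$, and then controls $\sum_j|T_{jj}|^2\le 1$ via a Frobenius-norm chain that ultimately uses $D_{\mathsf{A}}\|A\|_{\mathrm F}^2=1$. You instead exploit the \emph{positivity} of $T=\tr_{\mathsf{B}}(UAU^\dagger)$ directly: the operator inequality $\rho_{f,\mathsf{A}}\le\|\rho_{f,\mathsf{A}}\|I$ together with $\tr T=d_{\mathsf{B}}$ immediately gives $\tfrac{1}{d_{\mathsf{B}}}\tr(T\rho_{f,\mathsf{A}})\le\|\rho_{f,\mathsf{A}}\|$, and then $\|\rho_{f,\mathsf{A}}\|\le\sqrt{\tr\rho_{f,\mathsf{A}}^2}=\e^{-S_2/2}$. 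Your route is slightly more economical, avoiding the Frobenius-norm computation entirely; the paper's route is basis-explicit and does not invoke positivity of $T$. Both are short and yield the identical bound.
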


As a result, $U_{\sf AB}$ cannot generate $\mathrm{O}(1)$ entanglement measured by the second R\'enyi entropy starting from any product state $\ket{\psi} = \ket{\psi_A}\otimes \ket{\psi_B}$, unless there is a local operator $A$ that grows sufficiently to the other party by $U_{\sf AB}$: $\norm{\BP_{\sf B} \CU_{\sf AB} A}=\mathrm{O}(1)$. Operator growth bounds then bound the generation of all R\'enyi entropies $S_\alpha$ with $\alpha\ge 2$, according to \eqref{eq:S1>S2}. The bound \eqref{eq:P>S2} is loose in the situation where $U_{\sf AB}$ is the SWAP operation between ${\sf A}$ and ${\sf B}$: No entanglement is generated although operators are moved around. In 
this case, one can bound $\norm{\BP_{\sf A}\BP_{\sf B} \CU_{\sf A B} A }$ instead, which we leave as an exercise. Namely, in order to generate entanglement, a local operator needs to become nonlocal instead of just swapping to the other subsystem.

\begin{proof}
For any bipartite system $\sf A, \sf B$ with dimensions $D_{\sf A},D_{\sf B}$, consider the Schmidt decomposition of the final state $\ket{\psi_f}$ \begin{equation}
    |\psi_f\rangle = \sum_{j=1}^{\min(D_{\sf A},D_{\sf B})} \sqrt{p_j} |j\rangle_{\sf A} \otimes |j\rangle_{\sf B}\quad \text{where}\quad \sum^{\min(D_{\sf A},D_{\sf B})}_{j=1} p_j =1
\end{equation}
for some orthonormal basis $\{|j\rangle\}_{\sf A}, \{|j\rangle\}_{\sf B}$ of ${\sf A}$ and ${\sf B}$. In this basis, we expand the evolved operator as \begin{equation}\label{eq:515}
    \CU_{\sf AB} A = \BP_{\sf B} \CU_{\sf AB} A +  \sum_{ i,j=1 }^{D_{\sf A}} T_{ij} \ket{i}\bra{j}_{\sA} \otimes I_{\sf B} \quad \text{where}\quad T_{ij} = \frac{1}{D_{\sf B}} \bra{i}_{\sf A} \tr_{\sB} (\CU_{\sf AB} A) \ket{j}_{\sf A}.
\end{equation}
Then, rearrange and take the operator norm to obtain
\begin{align}
    \norm{\BP_{\sf B} \CU_{\sf AB} A} \ge \langle \psi_f | \BP_{\sf B}\CU_{\sf AB} A |\psi_f\rangle 
    \ge 1 - \sum_{j=1}^{D_{\sf A}} p_j T_{jj}
     \ge 1 - \sqrt{\sum_{j=1}^{D_{\sf A}} p_j^2}\sqrt{\sum_{j=1}^{D_{\sf A}} \labs{T_{jj}}^2}
    \ge 1-\e^{-S_2(\ket{\psi_f})/2}.
\end{align}
The second inequality uses $\CU_{\sf AB} A \ket{\psi_f} = \ket{\psi_f}$. The third is Cauchy-Schwartz. The last inequality uses the definition of second R\'enyi entropy~\eqref{eq:renyi} and that
\begin{align}
\sum_{j=1}^{D_{\sf A}} \labs{T_{jj}}^2 \le \sum_{ i,j=1 }^{D_{\sf A}} \labs{T_{ij}}^2 \le D_{\sf A}\norm{\sum_{ i,j=1 }^{D_{\sf A}} T_{ij}\ket{i}\bra{j}_{\sA} \otimes I_{\sf B}}_{\mathrm{F}}^2 \le D_{\sf A} \norm{ \CU_{\sf AB} A}_{\rm F}^2=D_{\sf A} \norm{ A}_{\rm F}^2 = 1,
\end{align}
which concludes the proof.

\end{proof}

\subsubsection{von Neumann entanglement outside the Lieb-Robinson light cone}\label{sec:entanglement_dyn}
There are two other potential directions to improve Proposition \ref{prop:P>S2}. Are $\alpha<2$ R\'enyi entropies also bounded by operator growth?  Do tighter measures of operator growth, like Frobenius norm $\norm{\BP_{\sf B} \CU_{\sf AB} A}_{\rm F}$, also bound entanglement generation? We give partial negative answers to these questions using explicit counterexamples. 

\begin{exam}[Generating large von Neumann entanglement with little operator growth]\label{ex:S1>}
Consider two systems $\sf A, \sf B$ with dimension  $\dim {\sf A}=\dim {\sf B}=D$ and local basis $\{|j\rangle\}^{D-1}_0$. Consider the unitary $U_{\sf AB}=U_2\oplus I_{D^2-2}$, where the nontrivial part $U_2$ is a $2\times 2$ matrix \begin{equation}\label{eq:U2=e}
    U_2 = \left(\begin{array}{cc}
        \sqrt{1-\epsilon} &  -\sqrt{\epsilon}\\
        \sqrt{\epsilon} & \sqrt{1-\epsilon}
    \end{array}\right) \quad \text{acting on}\quad \Span{|00\rangle, |\mathrm{diag}\rangle}.
\end{equation} We require that $0<\epsilon<1$ and define the ``diagonal'' state to be \begin{equation}
    |\mathrm{diag}\rangle = \frac{1}{\sqrt{D-1}} \sum_{j=1}^{D-1} |jj\rangle.
\end{equation}
Then, at large $D \gg 1$ and at a constantly small $\epsilon >0$, the unitary $U_{\sf AB}$ generates arbitary large von Neumann entropy \begin{equation}\label{eq:S>lnN}
     S_1(U_{\sf AB}|00\rangle)=\mathrm{\Omega}(\epsilon \ln D)
\end{equation}
yet for any local operator $A$, \begin{equation}\label{eq:P<roote}
    \norm{\BP_{\sf B} \CU_{\sf AB} A} \le \norm{A} \mathrm{O}(\sqrt{\epsilon}).
\end{equation}
\end{exam}
Simply put, the diagonal state $\ket{\text{diag}}$ has lots of von Neumann entanglement despite being one-dimensional.
\begin{proof}
To prove \eqref{eq:P<roote}, observe that $U_{\sf AB}$ is close to identity $I$: $\norm{U_{\sf AB}-I} \le \mathrm{O}(\sqrt{\epsilon})$. Therefore, \begin{align}
    \norm{\BP_{\sf B} \CU_{\sf AB} A} = \norm{\BP_{\sf B} (U_{\sf AB} AU_{\sf AB}^\dagger-A)} &\le \norm{U_{\sf AB}  A U_{\sf AB}^\dagger- A } \nonumber\\ &\le \norm{(U_{\sf AB}-I) A  U_{\sf AB}^\dagger} + \norm{ A  (U_{\sf AB}^\dagger-I)}  \le \norm{ A } \mathrm{O}(\sqrt{\epsilon}).
\end{align}
But \eqref{eq:U2=e} leads to \begin{align}
    U_{\sf AB}|00\rangle &= \sqrt{1-\epsilon}|00\rangle + \sqrt{\frac{\epsilon}{D-1}} \sum_{j=1}^{D-1} |jj\rangle, \nonumber\\ S_1(U_{\sf AB}|00\rangle) &= -(1-\epsilon)\ln(1-\epsilon) + \epsilon \ln \frac{D-1}{\epsilon}=\mathrm{\Omega}(\epsilon \ln D)
\end{align}
which verifies \eqref{eq:S>lnN}.
\end{proof}

Given the above example, as well as Example \ref{exam:frobentangle}, any improvement of Proposition \ref{prop:P>S2} should only involve $S_\alpha$ with $1<\alpha\le 2$, and operator $p$-norms with $p>2$.  Such a generalization seems to be an open problem.  See also \cite{Bentsen:2018uph,harrow2021} for further remarks on discrepancies between operator growth and entanglement generation.

\subsubsection{Entanglement generation bounds from interaction strength}
\label{sec:entanglement_resource}
In this section, we briefly discuss entanglement generation bounds \textit{independently} of Lieb-Robinson bounds, as these are of great importance on their own. Suppose systems $\mathsf{A}$ and $\mathsf{B}$ have limited bipartite interaction. Can we bound the rate of bipartite entanglement growth? Here, the spatial locality is not the pronounced structure but rather the bipartition of the system.

More precisely, suppose we begin with a pure state $\ket{\psi}_{\mathsf{AB}}$, and we are free to operate any unitary operator on either set $\mathsf{A}$ or $\mathsf{B}$. The goal is to bound the growth rate of entanglement entropy given bipartite interactions $H_{\mathsf{AB}}$. The answer to this question is thoroughly addressed in a series of works \cite{Bravyi2006,vanacoleyen,Vershynina2018EntanglementRF}; here, we provide an elementary argument and state the sharpest result. A technical difficulty is that the sets \textsf{A} and \textsf{B} can have arbitrarily large dimensions and arbitrary entanglement structure; a resource theoretical approach appears natural for addressing this.
\begin{prop}[Entanglement cost of bipartite Hamiltonian evolution {\cite{Cirac_2001_2qubit}}]
    Any 2-qubit Hamiltonian evolution $\e^{\ri H }$ can be implemented using $\CO(\norm{H})$- bits of bipartite entanglement entropy.
\end{prop}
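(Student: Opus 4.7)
My plan is to first reduce $\e^{\ri H}$ to a canonical Cartan form using only local (and therefore free) single-qubit unitaries, and then to implement each of the three remaining commuting non-local rotations by a tailored teleportation protocol whose entanglement cost scales linearly with the rotation angle.

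\emph{Step 1 (Cartan reduction).} Any two-qubit unitary admits the KAK decomposition
\begin{equation*}
\e^{\ri H} = (A_1\otimes B_1)\, \exp\!\lr{\ri \sum_{k=1}^{3} \alpha_k\, \sigma_k\otimes \sigma_k}\, (A_2\otimes B_2),
\end{equation*}
with single-qubit unitaries $A_j,B_j$ and real $\alpha_k \in [-\mpi/4,\mpi/4]$. Since the local factors cost zero entanglement, the problem reduces to implementing the mutually commuting $V_k:=\exp(\ri\alpha_k\sigma_k\otimes\sigma_k)$ for $k=1,2,3$. The bound $\sum_k|\alpha_k|=\mathrm{O}(\norm{H})$ I would verify by diagonalizing the pre-reduction form $\sum_k\mu_k\sigma_k\otimes\sigma_k$ on the Bell basis: its four eigenvalues have sum of squares $4\sum_k\mu_k^2$, hence $\norm{H}\ge\sqrt{\sum_k\mu_k^2}$ and $\sum_k|\mu_k|\le\sqrt{3}\norm{H}$. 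In the small-$\norm{H}$ regime, $\alpha_k=\mu_k$ directly; in the large-$\norm{H}$ regime each $|\alpha_k|\le\mpi/4$ is trivially $\mathrm{O}(1)\le\mathrm{O}(\norm{H})$.

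\emph{Step 2 (resource-efficient teleportation).} For each $V_k$, I would use the protocol of Cirac--D\"ur--Kraus--Lewenstein. Consider the resource state obtained by applying $V_k$ to one qubit on Alice's side and one on Bob's side of two shared Bell pairs: a direct computation shows that its bipartite entanglement entropy across the Alice--Bob cut equals $\log_2(1+|\sin 2\alpha_k|)$, which is $\mathrm{O}(|\alpha_k|)$ uniformly on $|\alpha_k|\le\mpi/4$. Consuming this resource by Bell measurements on each side, followed by single-qubit Pauli corrections conditioned on the classically communicated outcomes, deterministically realises $V_k$ on the target qubits. Summing over $k=1,2,3$ and using sub-additivity of entanglement cost under tensoring of independent resource states,
\begin{equation*}
 E_{\mathrm{cost}}\!\lr{\e^{\ri H}} \le \sum_{k=1}^{3} \log_2\!\lr{1+|\sin 2\alpha_k|} = \mathrm{O}\!\lr{\sum_k|\alpha_k|} = \mathrm{O}(\norm{H}).
\end{equation*}

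\emph{Main obstacle.} The essential technical content lies in Step 2: the naive gate-teleportation bound of one ebit per non-local gate is too lossy, and one must verify that the tailored Cirac--D\"ur--Kraus--Lewenstein resource state, together with the explicit Pauli correction procedure, genuinely implements $V_k$ deterministically rather than yielding a random Clifford or a failure branch with macroscopic entanglement. Checking this is a careful exercise in gate-teleportation algebra; one also must confirm that the classical communication used for corrections is truly uncounted, so the cost is genuinely measured by the entropy of the resource state alone. Once these details are in place, Step 1 and the summation are essentially algebraic.
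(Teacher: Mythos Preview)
The paper does not supply its own proof of this proposition; it is stated with a citation to \cite{Cirac_2001_2qubit} and then used as a black box for the subsequent corollary and Proposition~\ref{prop:entanglement_rate_general}. Your proposal is precisely a reconstruction of the argument in that reference --- Cartan/KAK reduction to three commuting non-local rotations, each implemented by a tailored gate-teleportation protocol whose resource state carries entanglement $\mathrm{O}(|\alpha_k|)$ --- so there is nothing in the paper to compare against, and your route is the intended one. One minor caution: the claim ``in the small-$\norm{H}$ regime, $\alpha_k=\mu_k$ directly'' is a bit fast since the local and non-local pieces of $H$ need not commute; a cleaner way to close that step is to note that the Cartan angles of $\e^{\ri H}$ are continuous functions of the unitary that vanish as $\e^{\ri H}\to I$, hence are $\mathrm{O}(\norm{\e^{\ri H}-I})=\mathrm{O}(\norm{H})$.
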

Further, since bipartite entanglement is non-increasing under local operations and classical communication (LOCC) (see Section \ref{sec:measurement}), internal dynamics in \textsf{A} and \textsf{B} cannot increase the entanglement further:
\begin{corol}[Limited interaction means limited entanglement]
Any 2-qubit Hamiltonian evolution $\e^{\ri H_{ij} }$ acting on a bipartite pure state $\ket{\psi}_{\sf A \sf B}$ for $i \in \mathsf{A}, j\in \mathsf{B}$ can at most generate $\CO(\norm{H_{ij}})$ bits of bipartite entanglement entropy between systems $\mathsf{A},\mathsf{B}$.
\end{corol}
\begin{prop}[Entanglement rate of bipartite Hamiltonians {\cite{Audenaert2013QuantumSD,Shrimali:2022bvt}}]\label{prop:entanglement_rate_general}
For any systems $\mathsf{A},\mathsf{B}$ and for all initial pure states, suppose the global Hamiltonian takes the form $H_{\mathsf{ab}}+ H_{\mathsf{A}} +H_{\mathsf{B}}$ where $\mathsf{a} \subset \mathsf{A}, \mathsf{b} \subset \mathsf{B}$. Then, the bipartite von Neumann entanglement rate is bounded
by
\begin{align}
    \frac{\mathrm{d}S_1}{\mathrm{d}t}  \le 8 \norm{H_{\mathsf{ab}}} \log [\min (\dim(\mathsf{A}),\dim(\mathsf{B}))].
\end{align}
\end{prop}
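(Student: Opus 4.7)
The plan is to separate the easy reductions from the single hard analytic core. First, because $H_{\mathsf{A}}$ and $H_{\mathsf{B}}$ act as local unitaries across the bipartition, they preserve the Schmidt coefficients of the global pure state at every instant and so contribute zero to $\mathrm{d}S_1/\mathrm{d}t$; I may therefore replace $H$ by $H_{\mathsf{ab}}$ alone. By purification (enlarging whichever side is already the larger of $\mathsf{A},\mathsf{B}$, which leaves $\min(\dim\mathsf{A},\dim\mathsf{B})$ unchanged) I may also assume the global state is pure. Since the statement is about the supremum over initial states, it suffices to bound the instantaneous rate at $t=0$.

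Second, I would differentiate. Writing $\rho_{\mathsf{A}}(t)=\mathrm{Tr}_{\mathsf{B}}|\psi(t)\rangle\langle\psi(t)|$ and using the Schr\"odinger equation, a standard computation gives
\begin{align}
\frac{\mathrm{d}S_1}{\mathrm{d}t} \;=\; -\mathrm{Tr}\bigl(\dot\rho_{\mathsf{A}}\log\rho_{\mathsf{A}}\bigr) \;=\; 2\,\mathrm{Im}\,\langle\psi|\,H_{\mathsf{ab}}\,(\log\rho_{\mathsf{A}}\otimes I_{\mathsf{B}})\,|\psi\rangle.
\end{align}
Because $S_1(\rho_{\mathsf{A}})=S_1(\rho_{\mathsf{B}})$ for a pure state, the same identity holds with $\mathsf{A}\leftrightarrow \mathsf{B}$; averaging the two expressions lets me replace $\log\rho_{\mathsf{A}}\otimes I$ by the Hermitian operator $K := \log\rho_{\mathsf{A}}\otimes I - I\otimes\log\rho_{\mathsf{B}}$, which annihilates $|\psi\rangle$ in the Schmidt basis. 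The rate is now a pairing of $H_{\mathsf{ab}}$ with a commutator involving $K$.

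Third, I would invoke the key ``entropy-rate lemma'' of Kitaev, sharpened by Audenaert: for any bipartite $\rho$ on $\mathsf{A}\otimes\mathsf{B}$ and any Hermitian operator $X$ on $\mathsf{a}\otimes\mathsf{b}$,
\begin{align}
\Bigl|\,\mathrm{Tr}\bigl(X\,[\log\rho_{\mathsf{A}}\otimes I_{\mathsf{B}},\,\rho]\bigr)\Bigr| \;\le\; 8\,\|X\|\,\log\bigl[\min(\dim\mathsf{A},\dim\mathsf{B})\bigr].
\end{align}
Specializing to $X=H_{\mathsf{ab}}$ produces the advertised inequality, with the constant $8$ absorbing the slack from symmetrization, H\"older (Proposition~\ref{prop:holder}), and Cauchy--Schwarz.

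The main obstacle is precisely this last step: $\log\rho_{\mathsf{A}}$ has no uniform operator-norm bound, so the naive triangle inequality $|\mathrm{Tr}(X[\log\rho_{\mathsf{A}},\rho])|\le 2\|X\|\,\|\log\rho_{\mathsf{A}}\|$ is useless whenever $\rho_{\mathsf{A}}$ has small eigenvalues (which is the interesting regime). The resolution is either Bravyi's variational argument, in which one parametrizes a one-parameter family of reduced states and controls the entropy derivative through monotonicity and concavity in that parameter, or Audenaert's majorization argument, where the dimensional logarithm is extracted via a Pinsker-type bound on $\|\rho_{\mathsf{A}}-\sigma_{\mathsf{A}}\|_1$ rather than from $\log\rho_{\mathsf{A}}$ directly. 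Either route is technical and is the substantive content of the cited references; once it is in hand, the remaining steps above are clean algebraic manipulations.
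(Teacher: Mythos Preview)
The paper does not actually prove this proposition: it is stated with citations to \cite{Audenaert2013QuantumSD,Shrimali:2022bvt} and no proof is given. The paper's own argument in this subsection is the \emph{resource-theoretic} route of the preceding Proposition~5.3 and Corollary~5.4: implement $\mathrm{e}^{\mathrm{i}H_{ij}}$ on two qubits at entanglement cost $\mathrm{O}(\lVert H_{ij}\rVert)$, then invoke LOCC monotonicity. That gives a weaker, more elementary bound and is explicitly presented as a warm-up before ``stat[ing] the sharpest result,'' which is this proposition.

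Your proposal is therefore not comparable to anything in the paper, but it is the correct outline of the cited proofs (Bravyi, van Acoleyen \emph{et al.}, Audenaert). The reductions in your first step are clean and valid; the differentiation formula in your second step is correct; and you correctly isolate the genuine difficulty: the unboundedness of $\log\rho_{\mathsf{A}}$ forbids the naive H\"older estimate, and one must use either Bravyi's variational argument or Audenaert's majorization-based sharpening to extract the dimensional logarithm. One minor redundancy: the statement already assumes a pure global state, so the purification step is unnecessary. One minor inaccuracy worth flagging: the sharp Audenaert bound actually carries $\log[\min(\dim\mathsf{a},\dim\mathsf{b})]$ (the small interaction support), which is stronger than the $\log[\min(\dim\mathsf{A},\dim\mathsf{B})]$ written here; your ``entropy-rate lemma'' should be stated in that form, and the proposition as written then follows \emph{a fortiori}.
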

See Section~\ref{sec:power-law} for an application to the entanglement rate in power-law interacting systems.
\subsection{Connected correlation functions}
\label{sec:connected}
Suppose we start from a ``short-range correlated'' state $|\psi\rangle$ such as a product state on all qudits. How long does it take for two remote regions ${\sf A}$ and ${\sf B}$ to become correlated? Unlike Proposition \ref{prop:P>S2}, the correlation between the two parties ${\sf A}$ and ${\sf B}$ is not bounded by the commutator quantity $C_{{\sf A}{\sf B}}$: operators in one party do not need to grow to the other in order to build up correlation. In fact, ${\sf A}$ and ${\sf B}$ can be maximally entangled (by sharing $|{\sf A}|=|{\sf B}|$ pairs of Bell states) even if $C_{{\sf A}{\sf B}}=0$! As one example, in a 1d spin chain with ${\sf A}=\{1\}$ and ${\sf B}=\{2L\}$, consider a protocol that first locally prepares a Bell pair on sites $\{L,L+1\}$ out of a product state, and then transfers the two qubits left and right to $1$ and $2L$ respectively (using SWAP gates for example). In the final state, $1$ and $2L$ share a Bell pair and are maximally entangled. However, a local operator $A_1$, after the Heisenberg evolution that is backward in time, only grows ``halfway'' to site $L+1$. Similarly, $B_{2L}$ only extends to $L$.

In the above example, two regions of distance $2L$ can be correlated after time $t\approx L/v$. This suggests that the ``correlation speed'' is bounded by $2v$ instead of $v$. Indeed, this will be proven in Theorem \ref{thm:cor<2v}. Prior to that, we need to first quantify a useful notion of correlation. 

\begin{prop}[Connected correlations are bounded]\label{prop:cor}
Define the connected correlation function for a state $\ket{\psi}$ \begin{equation}\label{eq:cor}
    \cor_\psi({\sf A},{\sf B}) := \max_{\norm{A}, \norm{B}\le 1} \alr{A B}_\psi - \alr{ A }_\psi \alr{B}_\psi, 
\end{equation}
where $A,B$ are Hermitian operators acting on systems ${\sf A}$ and ${\sf B}$.
Then, the connected correlation satisfies \begin{equation}
    0\le \cor_\psi({\sf A},{\sf B}) \le 1.
\end{equation}
\end{prop}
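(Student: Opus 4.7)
The statement has two parts — a lower bound $\mathrm{Cor}_\psi(\mathsf{A},\mathsf{B}) \ge 0$ and an upper bound $\mathrm{Cor}_\psi(\mathsf{A},\mathsf{B}) \le 1$ — and I will handle them separately. Since the domain of the maximization in (\ref{eq:cor}) includes the trivial choice $A=B=0$ (which is Hermitian with operator norm $0 \le 1$), and this choice returns the value $0$, the lower bound $\mathrm{Cor}_\psi \ge 0$ is immediate. So the content of the proposition lies in the upper bound.

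The plan for the upper bound is to reduce everything to a variance calculation via the standard ``subtract the mean'' trick. Given Hermitian $A$ on $\mathsf{A}$ and $B$ on $\mathsf{B}$ with $\norm{A},\norm{B}\le 1$, define the centered operators
\begin{equation}
    \tilde A := A - \alr{A}_\psi\, I, \qquad \tilde B := B - \alr{B}_\psi\, I.
\end{equation}
A direct expansion gives $\alr{\tilde A \tilde B}_\psi = \alr{AB}_\psi - \alr{A}_\psi \alr{B}_\psi$, so that the connected correlator we wish to bound equals $\alr{\psi | \tilde A \tilde B | \psi}$. Because $\mathsf{A}\cap\mathsf{B}=\emptyset$, the operators $\tilde A$ and $\tilde B$ act on disjoint subsystems and in particular commute, so we can apply the Cauchy--Schwarz inequality in the inner product on the Hilbert space:
\begin{equation}
    \left| \alr{\psi | \tilde A \tilde B | \psi} \right| \le \lnorm{\tilde A \ket{\psi}} \cdot \lnorm{\tilde B \ket{\psi}} = \sqrt{\mathrm{Var}_\psi(A)}\cdot\sqrt{\mathrm{Var}_\psi(B)},
\end{equation}
using $\lnorm{\tilde A \ket{\psi}}^2 = \alr{A^2}_\psi - \alr{A}_\psi^2 = \mathrm{Var}_\psi(A)$ and similarly for $B$.

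It then only remains to observe that the variance of any Hermitian operator with operator norm at most $1$ is bounded by $1$: since $A$ is Hermitian we have $A^2 \preceq \norm{A}^2 I \preceq I$, hence $\alr{A^2}_\psi \le 1$, and $\alr{A}_\psi^2 \ge 0$, giving $\mathrm{Var}_\psi(A)\le 1$; the same holds for $B$. Combining these bounds yields $\left|\alr{AB}_\psi - \alr{A}_\psi\alr{B}_\psi\right|\le 1$, and taking the maximum over admissible $A,B$ gives $\mathrm{Cor}_\psi(\mathsf{A},\mathsf{B})\le 1$. There is no real obstacle here — the only mild subtlety is remembering to use disjointness of $\mathsf{A},\mathsf{B}$ to legitimately split the Cauchy--Schwarz factors into a variance-of-$A$ factor and a variance-of-$B$ factor, rather than just getting a crude triangle-inequality bound of $2$.
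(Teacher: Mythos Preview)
Your proof is correct and essentially identical to the paper's: both subtract the mean, apply Cauchy--Schwarz to $\langle\tilde A\psi\,|\,\tilde B\psi\rangle$, and bound each variance by $1$. One small remark: the disjointness of $\mathsf A,\mathsf B$ is not actually needed for the Cauchy--Schwarz step --- only Hermiticity of $\tilde A$ is used to write $\langle\psi|\tilde A\tilde B|\psi\rangle = \langle\tilde A\psi|\tilde B\psi\rangle$ --- so the ``subtlety'' you flag is in fact a non-issue, and the paper's proof does not invoke it either.
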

Indeed, if $\psi$ is a product state between ${\sf A}$ and ${\sf B}$ in the sense that $\tr_{({\sf A}\cup {\sf B})^c} \ket{\psi}\bra{\psi} = \rho_{\sf A}\otimes \tilde{\rho}_{\sf B}$, then $\cor_\psi({\sf A},{\sf B})=0$; otherwise, if ${\sf A}$ and ${\sf B}$ are maximally correlated by a Bell state $\psi$, then $\cor_\psi({\sf A},{\sf B})=1$. More generally, $\cor_\psi({\sf A},{\sf B})$ measures how correlated ${\sf A}$ and ${\sf B}$ are, including both classical and quantum-mechanical correlation. 

\begin{proof}
The first inequality $\cor_\psi({\sf A},{\sf B})\ge 0$ is trivial by choosing $A=B=0$ and that the maximum must be larger. To prove $\cor_\psi({\sf A},{\sf B}) \le 1$, it suffices to show \begin{equation}\label{eq:AB-AB<1}
    \alr{A B}_\psi - \alr{ A }_\psi \alr{B}_\psi \le 1
\end{equation}
for any Hermitian operators $A,B$ obeying $\norm{A}, \norm{B}\le 1$. To see this, consider Hermitian operators \begin{equation}
    \tilde{A} = A - \alr{ A }_\psi, \quad \tilde{B} = B- \alr{ B }_\psi
\end{equation}

Then, we get \begin{align}
    \lr{\alr{A B}_\psi - \alr{ A }_\psi \alr{B}_\psi}^2 = \lr{\alr{\tilde{A}\tilde{B}}_\psi}^2 &\le \alr{\tilde{A}^2}_\psi \alr{\tilde{B}^2}_\psi \nonumber\\ &= \lr{\alr{A^2}_\psi-\alr{A}_\psi^2} \lr{ \alr{B^2}_\psi-\alr{B}_\psi^2 } \le \alr{A^2}_\psi\alr{B^2}_\psi \le 1.
\end{align}
The first inequality is Cauchy-Schwatz $\labs{\braket{\psi_2|\psi_1}}^2 \le \labs{\braket{\psi_2|\psi_2}}^2\labs{\braket{\psi_1|\psi_1}}^2$. This implies \eqref{eq:AB-AB<1}, which concludes the proof.
\end{proof}

\begin{theor}[Bounds on correlation generation \cite{Bravyi2006}]
\label{thm:cor<2v}
Suppose the initial state $\ket{\psi}$ has a finite correlation length $\xi$ defined by \begin{equation}\label{eq:cor<xi}
    \cor_\psi({\sf A},{\sf B}) \le \tilde{c}\lr{\labs{\partial \sf A}+\labs{\partial \sf B}} \e^{-d({\sf A},{\sf B})/\xi}, \quad \forall {\sf A},{\sf B}.
\end{equation}
If the Lieb-Robinson bound \eqref{eq:LR_exp} holds with constant $\mu, v$, then after time $t$, the final state $\ket{\psi_f}=U\ket{\psi}$ roughly has correlation length $2vt+\xi$. More precisely, for any two subsets ${\sf A},{\sf B}$ with \begin{equation}\label{eq:d>2vt}
    \mathsf{d}({\sf A},{\sf B})>2vt,
\end{equation}
the connected correlation is exponentially suppressed \begin{equation}\label{eq:cor_f<}
    \cor_{\psi_f}({\sf A},{\sf B}) \le (c+\tilde{c}) \lr{\labs{\partial \sf A}+\labs{\partial \sf B}} \exp\lr{-\frac{\mathsf{d}({\sf A},{\sf B})-2vt}{\xi+2\mu^{-1}} }.
\end{equation}
\end{theor}

\begin{proof}
In the Heisenberg picture, correlation in the final state is equivalent to correlation of evolved operators in the initial state, namely \begin{equation}\label{eq:cor_f=cor_i}
    \cor_{\psi_f}({\sf A},{\sf B}) = \max_{\norm{A}, \norm{B}\le 1} \alr{A B}_{\psi_f} - \alr{ A }_{\psi_f} \alr{B}_{\psi_f} = \max_{\norm{A}, \norm{B}\le 1} \alr{A(t) B(t)}_{\psi} - \alr{ A(t) }_{\psi} \alr{B(t)}_{\psi}.
\end{equation}
Define $\mathsf{R}({\sf A},r)=\{i\in\mathsf{V}: \mathsf{d}(i,{\sf A})\le r\}$ (and similarly for $\mathsf{R}({\sf B},r)$) with a tunable parameter $r<d({\sf A},{\sf B})/2$. According to \eqref{eq:LR_exp} and Proposition \ref{prop:tildA}, there exists an operator $\tilde{A}$ supported in $\mathsf{R}({\sf A},r)$ (and $\tilde{B}$ in $\mathsf{R}({\sf B},r)$ such that \begin{equation}\label{eq:At-A<c}
    \lnorm{A(t)-\tilde{A}} \le c\labs{\partial \sf A} \e^{-\mu(r-vt)}, \quad \lnorm{B(t)-\tilde{B}} \le c\labs{\partial \sf B} \e^{-\mu(r-vt)}.
\end{equation}
Then, the correlation function in \eqref{eq:cor_f=cor_i} is \begin{align}\label{eq:cor<mu+xi}
    \alr{A(t) B(t)}_{\psi} - \alr{ A(t) }_{\psi} \alr{B(t)}_{\psi} &= \alr{A(t) [B(t)-\tilde{B}]}_{\psi} - \alr{ A(t) }_{\psi} \alr{B(t)-\tilde{B}}_{\psi} \nonumber\\ &\qquad +  \alr{[A(t)-\tilde{A}]\tilde{B}}_{\psi} - \alr{ A(t)-\tilde{A} }_{\psi} \alr{\tilde{B}}_{\psi} + \alr{\tilde{A}\tilde{B}}_{\psi} - \alr{ \tilde{A} }_{\psi} \alr{\tilde{B}}_{\psi} \nonumber\\
    &\le \lnorm{B(t)-\tilde{B}} + \lnorm{A(t)-\tilde{A}} \lnorm{\tilde{B} } + \cor_\psi(\mathsf{R}({\sf A},r),\mathsf{R}({\sf B},r)) \nonumber\\
    &\le c \lr{\labs{\partial \sf A}+\labs{\partial \sf B}} \e^{-\mu(r-vt)} + \tilde{c}\lr{\labs{\partial \sf A}+\labs{\partial \sf B}} \e^{-\frac{\mathsf{d}({\sf A},{\sf B})-2r}{\xi}}.
\end{align}
The third line uses~\eqref{eq:AB-AB<1} and the fact that correlation is proportional to the norms of each operator, due to the linearity of Cor and Proposition \ref{prop:cor}. The last line uses \begin{equation}
    \lnorm{\tilde{B}}=\lnorm{ B(t)-\mathbb{P}_{\mathsf{R}(\mathsf{B},r)}B(t)}\le \lnorm{B(t)}=1
\end{equation}
together with \eqref{eq:At-A<c} and \eqref{eq:cor<xi}. We choose $r$ such that the two terms in \eqref{eq:cor<mu+xi} are comparable: \begin{equation}
    r= \frac{\mathsf{d}({\sf A},{\sf B})+\xi\mu vt}{2+\xi\mu}\in \lr{vt, \frac{1}{2}\mathsf{d}({\sf A},{\sf B}) },
\end{equation}
which implies that \begin{equation}
    \mu\cdot (r-vt) = \frac{\mathsf{d}({\sf A},{\sf B})-2r}{\xi}
\end{equation}
using \eqref{eq:d>2vt}. We plug $r$ into \eqref{eq:cor<mu+xi} and take maximum over $\mathsf{A},\mathsf{B}$ in \eqref{eq:cor_f=cor_i} to conclude the proof.
\end{proof}

\subsection{Measurement-enhanced protocols}\label{sec:measurement}
Theorem \ref{thm:cor<2v} bounds the total correlation generated between two faraway regions due to time evolution. This correlation can be either quantum or classical; indeed quantum correlation is also generated by the Bell pair preparation protocol discussed (Section~\ref{sec:connected}). 

The distinction between classical and quantum correlation becomes extremely important, however, when one considers quantum dynamics with local measurements and active feedback.  Here, \emph{classical} information can propagate through the ``experimentalist'' who performs projective measurements and applies local unitaries based on those measurements.  Since the experimentalist may only be limited by Einstein's speed of light $c\rightarrow \infty$, we should effectively consider this communication of classical information to be instantaneous.  Does this classical communication, combined with projective measurement, allow us to beat the Lieb-Robinson bound?

It may seem that the answer is obviously yes.  For example, after a measurement done at site $1$, one can immediately apply a gate to site $L\gg 1$ that depends on the previous measurement outcome. Moreover, a local measurement is able to collapse the global quantum state, and one can apply gates adaptively based on outcomes of arbitrarily faraway measurements. 

However, the situation is exactly like the EPR paradox discussed in the Introduction. If no measurement outcomes are used for feedback, local measurements are just local completely positive trace-preserving (CPTP) maps that do not propagate information. As discussed in Theorem \ref{thm:open_LRB}, Lieb-Robinson bounds also hold in open quantum systems \cite{open_LRB}; one can interpret an open quantum system as one that is measured, while the measurement outcome is discarded and averaged over.

What if measurement outcomes are used to adjust the unitary dynamics adaptively? For concreteness, consider the task of teleporting a quantum state from one end (site $1$) to the other (site $N=L+1$) in a 1d spin chain. First, one should avoid directly measuring site $1$ because the state would collapse and destroy the quantum information. Instead, site $1$ needs to interact with its neighbors before some measurement is done. Thus the task depends on what the initial state is on sites $2,\cdots,N$.  If the sites $2$ and $N$ share a Bell pair, one can perform the following standard teleportation protocol \cite{teleport93} that takes ${\rm O}(1)$ operations.

\begin{exam}[Standard teleportation protocol]\label{exam:teleport}

Suppose the initial state on the three relevant qubits $1,2,N$ is \begin{align}
    \ket{\psi} &= (\alpha\ket{0} + \beta \ket{1})_1\otimes \frac{1}{\sqrt{2}} (\ket{00}+\ket{11})_{2N} \nonumber\\ &= \frac{1}{2} (\ket{00}+\ket{11})_{12} \otimes (\alpha\ket{0} + \beta \ket{1})_N + \frac{1}{2} (\ket{01}+\ket{10})_{12} \otimes (\alpha\ket{1} + \beta \ket{0})_N \nonumber\\
    &+ \frac{1}{2} (\ket{01}-\ket{10})_{12} \otimes (\alpha\ket{1} - \beta \ket{0})_N + \frac{1}{2} (\ket{00}-\ket{11})_{12} \otimes (\alpha\ket{0} - \beta \ket{1})_N,
\end{align}
where we have expanded in the Bell-basis on $1$ and $2$. The first step of this protocol is to measure $1$ and $2$ in the Bell basis, so that the state collapses to one of the four terms above.  This can be done by measuring the commuting operators $X_1X_2$ and $Z_1Z_2$. If the measurement outcome is $\frac{1}{\sqrt{2}} (\ket{00}+\ket{11})_{12}$, and this classical information is transferred to an agent Bob at $N$, then Bob knows the state is already the honest qubit Alice at site $1$ wants to transfer. If Bob knows that the measurement outcome is $\frac{1}{\sqrt{2}} (\ket{01}+\ket{10})_{12}$ instead, he can use an $X$ gate on $N$ to ``correct'' the state because $X(\alpha\ket{1} + \beta \ket{0})_N = (\alpha\ket{0} + \beta \ket{1})_N$. This correctability holds for the other two outcomes as well, so based on the transferred classical information about the measurement outcome, the state is honestly transferred from $1$ to $N$ deterministically, \emph{after error correction has been applied}.

A long-range Bell pair is consumed in the above process, which is itself hard to generate if the initial state is a product state. Starting from short-range entangled states and/or product states, it is then expected that teleportation requires large resources that scale with $L$.
\end{exam}

As it turns out, however, such Bell pairs can be efficiently prepared using measurement-enhanced teleportation protocols that operate in constant time. This is, for example, behind the theory of measurement-based quantum computation \cite{raussendorf,jozsa,DominicMBQC_SPT,raussendorf19}. Applying the identity gate on qubit on $1$, in MBQC, amounts to teleporting qubit 1 to $L$ by pure measurements.  MBQC is, in its simplest avatar, based on the cluster state which can be prepared in constant depth.  An easier version of this idea to understand is the 
quantum repeater \cite{repeater98}, or entanglement-swapping teleportation
protocol (ESTP) \cite{Friedman:2022vqb}, which teleports a qubit to distance \begin{equation}\label{eq:L=2MT}
    L\approx (2M+1)T,
\end{equation}  
using $T$ layers of unitary gates and $M$ local measurements. Fig.~\ref{fig:SWAP circuit}(a) gives an example of $L=15,T=5,M=2$, where the spin chain is divided into $M+1=3$ parts, each of length roughly $L/3$. In the leftmost part, the quantum state $\ket{\psi}$ is simply transported by SWAP gates. In each of the other parts, a Bell pair in the middle is generated and then transported to the two ends of the part by SWAP gates. Then a Bell-basis measurement is performed in each shaded area that connects adjacent parts, and all outcomes are collected to decide the error-correction unitary $\mathcal{R}$ that recovers $\ket{\psi}$ at the rightmost site. Conceptually, one can think of this as a cascade of standard teleportation protocols in Example \ref{exam:teleport}, which first transfers $\ket{\psi}$ from site $A_1$ to $A_2$ by measuring $A_1,B_1$, and then from site $A_2$ to the final site by measuring $A_2,B_2$.

\begin{figure}[t]
\centering
\includegraphics[width=.98\textwidth]{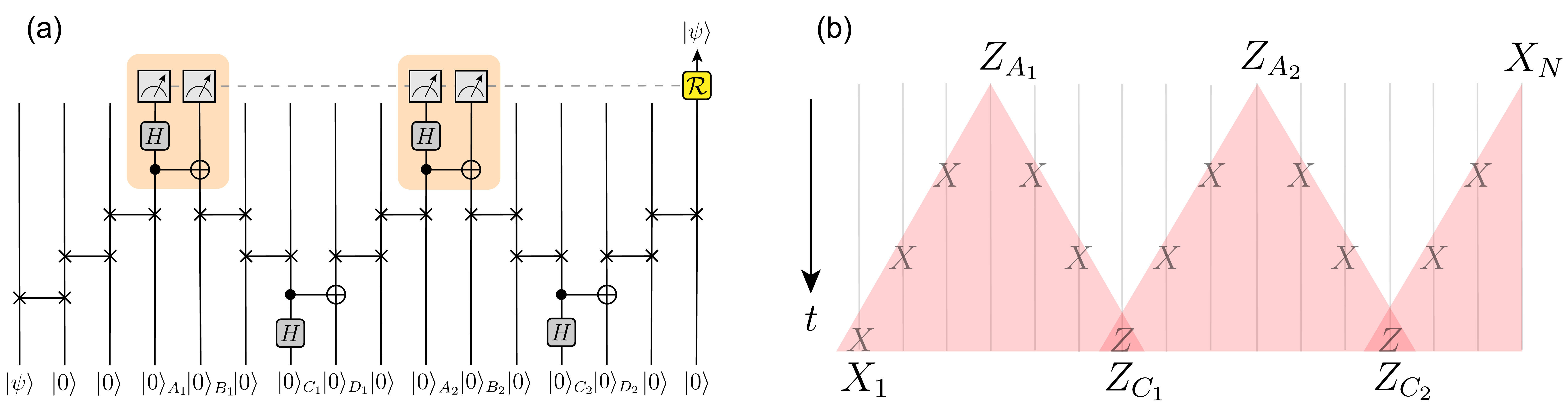}
\caption{(a) Circuit design for the entanglement-swapping teleportation protocol (ESTP), illustrated for $L=15$ using a two-local Clifford circuit depth $T=5$ and $M=2$ two-qubit Bell measurements. Bell pairs are generated on $C$ and $D$ qubits via a Hadamard--CNOT sequence, and transported to $A$ and $B$ qubits via SWAP gates.  The shaded areas indicate the standard teleportation protocol and include $Z$ measurements; the dashed line denotes classical communication. The logical qubit $\ket{\psi}$ starts at $j=1$ and teleports to the rightmost site after applying the error-correction gate $\mathcal{R}$, which is determined by the measurement outcomes. (b) Heisenberg evolution of the 
final logical operator $X_N$ for the ESTP depicted in (a). The local $Z_{A_1},Z_{A_2}$ operators are obtained by the measurement and error-correction procedure. Each of them grows to a product of two $X$s with Lieb-Robinson velocity $1$, as depicted by the edges of the shaded cones. When the light cones overlap, the whole operator becomes $X_1Z_{C_1}Z_{C_2}$, which is an initial logical operator. Figure taken with permission from \cite{Friedman:2022vqb}.
}
\label{fig:SWAP circuit}
\end{figure}

One can use \eqref{eq:L=2MT} to make a tradeoff between unitary dynamics and measurements in a quantum teleportation protocol. Is \eqref{eq:L=2MT} the best one can achieve in all possible protocols with $M$ measurements and unitary dynamics of time $T$? \cite{Friedman:2022vqb} gives a positive answer by extending Lieb-Robinson bounds to this setting of quantum dynamics with measurements. The idea comes from examining the ESTP in the operator language. Fig.~\ref{fig:SWAP circuit}(b) shows how the final logical operator $X_N$ is evolved in the Heisenberg picture (backward in time). Since the protocol can be schematically written as $\mathcal{W}=\CR\CM U $, which does measurements $\CM$ and applies an adaptive gate $\CR$ in the end, $X_N$ is first evolved to $\CM^\dagger \CR^\dagger X_N \CR\CM=X_N Z_{A_1}Z_{A_2}$ acting on the measurement sites $A_1,A_2$. Then these two ``seeds'' together with $X_N$, will grow into light cones due to the circuit dynamics $U$. In order to teleport the quantum information, the evolved operators $\mathcal{W}^\dagger X_N \mathcal{W}=U^\dagger \CM^\dagger \CR^\dagger X_N \CR\CM U$ and $\mathcal{W}^\dagger Z_N \mathcal{W}$ need to commute on all sites except $1$. This turns out to require that the light cones not only need to touch site $1$, but also need to overlap with their neighboring light cones, which makes \eqref{eq:L=2MT} optimal. This is formalized by the following Theorem.

\begin{theor}[Speed limit in quantum dynamics with measurements (informal version)]\label{thm:L<MT}
Consider a teleportation protocol that starts from a product state of all qubits, and teleports a qubit to distance $L$ using measurements in $M$ local regions and unitary dynamics of time $T$. The unitary dynamics is generated by a time-dependent Hamiltonian $H(t)$ that may depend on previous measurement outcomes. The measurement regions are also allowed to be adaptive. If the pure unitary dynamics generated by $H(t)$ has Lieb-Robinson velocity $v$, then there exist constants $M_0,T_0$ that do not depend on $L,M,T$, such that \begin{equation}\label{eq:L<2MT}
    L\le v(2M+M_0)(T+T_0).
\end{equation}
\end{theor}

Although we refer to \cite{Friedman:2022vqb} for the detailed proof, we would like to mention one key idea called Stinespring dilation \cite{Stinespring,ChoisThm}. To be specific, any quantum channel, like the dynamics with measurement and feedback, is equivalent to a unitary channel on a ``dilated'' Hilbert space $\mathcal{H}_{\mathrm{dil}}=\mathcal{H}_{\mathrm{phys}}\otimes \mathcal{H}_{\mathrm{ss}}$. Here $\mathcal{H}_{\mathrm{phys}}$ is the physical Hilbert space, while $\mathcal{H}_{\mathrm{ss}}$ is composed of ancilla Stinespring qubits that record the measurement outcomes. This justifies our previous notations like $\mathcal{W}=\CR\CM U$. The above Theorem then comes from applying the Lieb-Robinson methods to this dilated unitary dynamics.   The theorem makes clear that not only the act of measurement, \emph{but also subsequent error correction}, is \emph{required} to beat a Lieb-Robinson bound; see also \cite{AaronMIPT}.

Theorem \ref{thm:L<MT} also has various generalizations and applications \cite{Friedman:2022vqb}. To name a few, the initial state can be generalized to all states with short-range entanglement. If one wants to teleport $Q\gg 1$ qubits to distance $L$ with Hamiltonian dynamics of time $T$, although $M$ measurement \emph{regions} satisfying \eqref{eq:L<2MT} suffice, the number of measurements \emph{done} $M'$ should scale with $Q$ such that \begin{equation}
    L\lesssim 2vT\lr{\frac{M'}{Q}+1},
\end{equation} 
(with a technical caveat that measurement locations do not depend on measurement outcomes).  
Beyond quantum state transfer, \eqref{eq:L<2MT} also bounds the resources needed for preparing long-range entangled states, including long-range Bell pairs, error correcting code states  \cite{Gottesman:1997zz,terhalrmp}, the GHZ state (\ref{eq:GHZ}), W state (\ref{eq:W}), and spin-squeezed states \cite{Ma_2011,sensing_rmp}.  This theorem also has strong implications on the ease with which many tasks, such as preparing highly entangled quantum states, can be achieved using hybrid protocols involving both unitary dynamics and measurement.  This has been a subject of intense recent interest \cite{eldredge20,cirac21,Verresen:2021wdv,Devulapalli:2022xty,Lu:2022jax,Tantivasadakarn:2022hgp,Iqbal:2023shx,Foss-Feig:2023uew}.

\section{Ground states of gapped systems}\label{sec:gap}
So far, we have only applied the Lieb-Robinson bounds to dynamics. Remarkably, we will see that this \textit{temporal} bound also implies \textit{spatial} bounds for properties of gapped systems. Historically, analyzing the ground states of gapped systems using Lieb-Robinson bounds \cite{Hastings_koma,nachtergaele06,LSM04,Hastings_markov04} was what popularized the Lieb-Robinson Theorem in the broader physics community.

This section assumes that the lattice Hamiltonian $H$ is time-independent, with a Lieb-Robinson bound given by (\ref{eq:LR_exp}).
Without loss of generality, we also assume the Hamiltonian is non-negative $H\ge 0$ with ground energy zero;
the ground subspace projector $P_0$ satisfies $HP_0 = 0$. As a crucial assumption, we impose the existence of a spectral gap $\Delta>0$ above the ground states.

\subsection{Exponential clustering of gapped ground states}
We define the connected correlation for any ground state $\ket{\psi}$ such that $P_0\ket{\psi} = \ket{\psi}$ as follows
\begin{align}\label{eq:cor=P0}
    \widetilde{\mathrm{Cor}}_\psi(\sf A,B) :=& \max_{\norm{A}, \norm{B} \le 1}\ \left\langle AB \right\rangle_\psi - \frac{1}{2}\left[\left\langle AP_0B \right\rangle_\psi + \left\langle B P_0 A \right\rangle_\psi\right], \nonumber\\ =& \max_{\norm{A}, \norm{B}\le 1}\ \sum_{\phi:E_\phi\ge \Delta} \left\langle A |\phi\rangle\langle\phi| B\right\rangle_\psi + \frac{1}{2}\left\langle AP_0B \right\rangle_\psi -\frac{1}{2}\left\langle B P_0 A \right\rangle_\psi,
\end{align} 
where $\alr{\cdot}_\psi :=\langle\psi|\cdot|\psi\rangle$, $A,B$ are operators acting in subsystem $\sf A,B$ respectively. The second line exposes the excited states $\ket{\phi}$ with energy $E_\phi \ge \Delta$. Technically, the above definition is not equivalent to~\eqref{eq:cor}: They coincide if the ground state is unique $P_0=\ket{\psi}\bra{\psi}$. However, if the ground subspace is degenerate, a ground state may have long-range correlations, i.e., $\cor(\sf A,B)$ does not decay with the distance between $\sf A$ and $\sf B$. For example, consider the Greenberger–Horne–Zeilinger (GHZ) state \cite{GHZ89} \begin{equation}\label{eq:GHZ}
    \ket{\psi}=\ket{\rm GHZ}:=\frac{1}{\sqrt{2}}\lr{\ket{\bm 0} + \ket{\bm 1}},
\end{equation}  
where $\ket{\bm 0}$ is the product state where all sites are in state $0$ (analogously for $\ket{\bm 1}$). $\ket{\psi}$ is a ground state of the Ising Hamiltonian $H=\sum_{\braket{ij}}Z_i Z_j$, and is highly entangled. However, $\widetilde{\mathrm{Cor}}_\psi(\sf A,B)$ defined above vanishes as long as $\sf A$ and $\sf B$ do not overlap. 

The following theorem generalizes the above example to arbitrary gapped ground states: the connected correlations $\widetilde{\mathrm{Cor}}_\psi(\sf A,B)$ decay exponentially in the distance. 

\begin{theor}[Exponential clustering of gapped ground states \cite{Hastings_koma,nachtergaele06}]\label{thm:clustering}
Suppose the Lieb-Robinson bound (\ref{eq:LR_exp}) holds for the Hamiltonian $H$ with gap $\Delta$, and ground state energy 0. Then, for any ground state $\ket{\psi}$ and any $\epsilon \in (0,1)$, \begin{equation}
    \widetilde{\mathrm{Cor}}_\psi({\sf A, B}) \le \left(1+ \sqrt{\frac{2}{\pi\mathrm{e}\epsilon}} \frac{c\mu v}{\Delta} \right) \mathrm{e}^{-(1-\epsilon)\mu \mathsf{d}(\sf A,B)}.
\end{equation}
\end{theor}
\begin{proof}
To bound the connected correlations with Lieb-Robinson bounds, one begin with expanding the commutator in the eigenbasis of $H$
\begin{align}\label{eq:comm=cor}
    \left\langle [A(t), B]\right\rangle_\psi &= \left\langle A(t) B\right\rangle_\psi - \left\langle  BA(t)\right\rangle_\psi \nonumber\\ &= \left\langle AP_0B \right\rangle_\psi - \left\langle B P_0 A \right\rangle_\psi+ \sum_{\phi:E_\phi\ge \Delta} \mathrm{e}^{-\mathrm{i}tE_\phi} \left\langle A |\phi\rangle\langle\phi| B\right\rangle_\psi - \mathrm{e}^{\mathrm{i}tE_\phi} \left\langle  B|\phi\rangle\langle\phi|A\right\rangle_\psi.
\end{align}
The second line uses the assumption that the ground state has zero energy $P_0A(t)P_0=P_0AP_0$. 

The insight that converts temporal bounds (Lieb-Robinson) to spatial bounds (decay of correlation) is spectral filtering \cite{LSM04}: consider a kernel function $K(t)$ whose Fourier transform is denoted by 
\begin{align}
    \hat{K}(E) = \int\limits^\infty_{-\infty} \rd t \; K(t) \e^{-\ri tE},
\end{align}
with normalization \begin{equation}\label{eq:K0}
    \hat{K}(0)=\int\limits^\infty_{-\infty}\mathrm{d}t \; K(t) = \frac{1}{2}.
\end{equation}
Then, we may approximate the connected correlation by the weighted time-integral of~\eqref{eq:comm=cor}\begin{equation}\label{eq:cor=K}
    \widetilde{\mathrm{Cor}}_\psi({\sf A, B}) = \max_{\norm{A}, \norm{B}\le 1}\left[ \left\langle  B\mathcal{K}_-A\right\rangle_\psi - \left\langle A \mathcal{K}_+ B\right\rangle_\psi + \int^\infty_{-\infty}K(t)\left\langle [A(t), B]\right\rangle_\psi \mathrm{d}t\right]
\end{equation}
with ``error'' operators \begin{subequations}\begin{align}
    \mathcal{K}_+ &:=\sum_{\phi:E_\phi\ge \Delta} \left(\hat{K}(E_\phi)-1\right)|\phi\rangle\langle\phi|,\\ \mathcal{K}_-&:=\sum_{\phi:E_\phi\ge \Delta} \hat{K}(-E_\phi)|\phi\rangle\langle\phi|.
\end{align}\end{subequations}

To proceed, we impose two requirements for the kernel $K(t)$.  (\emph{1}) The Fourier transform approximates a step function:
    \begin{align}
        \hat{K}(E) \approx \begin{cases}
        1\quad \text{if}\quad E\ge \Delta\\
        0\quad \text{if}\quad E\le -\Delta
        \end{cases}.
    \end{align}
    This ensures both error operators $\mathcal{K}_+$ and $\mathcal{K}_-$ are small.
  (\emph{2}) The kernel $K(t)$ decays sufficiently fast at large $|t|$, so that the last term in \eqref{eq:cor=K} is tightly bounded.
Our choice is the Gaussian filter with tunable variance \begin{equation}
    K(t) =\frac{\mathrm{i}}{2\pi} \lim_{\epsilon^\prime\rightarrow 0+} \frac{\mathrm{e}^{-\alpha t^2} }{t+\mathrm{i}\epsilon^\prime}.
\end{equation}
 Indeed, it decays exponentially, and satisfies (\ref{eq:K0}). Furthermore, one can calculate its Fourier transform: 
\begin{lma}[Fourier transform of the error function]
If $\mathrm{Erf}(x):=\frac{2}{\sqrt{\pi}}\int^x_0\mathrm{d}\xi
\;\mathrm{e}^{-\xi^2}$ is the error function,
\begin{align}
    \hat{K}(E) = \frac{1}{2}\left(1+\mathrm{Erf}\left(\frac{E}{2\sqrt{\alpha}}\right)\right)= \left\{\begin{aligned}
        1+ \mathrm{O}\left(\mathrm{e}^{-E^2/4\alpha}\right),& \quad (E>0), \nonumber\\
    \mathrm{O}\left(\mathrm{e}^{-E^2/4\alpha}\right),& \quad (E<0). 
    \end{aligned} \right.
\end{align}
\end{lma}
Consequently, the norm of operators $\labs{\mathcal{K}_+}$ and $\labs{\mathcal{K}_-}$ are both bounded by $\mathrm{e}^{-\Delta^2/4\alpha}/2$. Together with  (\ref{eq:LR_exp}), we can bound the connected correlation \begin{align}
    \widetilde{\mathrm{Cor}}_\psi(\sf A,B) &\le \mathrm{e}^{-\Delta^2/4\alpha} +c\mathrm{e}^{-\mu \mathsf{d}(\sf A,B)} \int^\infty_{-\infty}|K(t)| \left(\mathrm{e}^{\mu v|t|} -1\right)\mathrm{d}t \nonumber\\ &=\mathrm{e}^{-\Delta^2/4\alpha} +\frac{c\mu v}{\pi}\mathrm{e}^{-\mu \mathsf{d}(\sf A,B)} \int^\infty_0 \mathrm{e}^{\mu vt}\mathrm{e}^{-\alpha t^2} \mathrm{d}t \le \mathrm{e}^{-\Delta^2/4\alpha} +\frac{c\mu v}{\pi}\mathrm{e}^{-\mu \mathsf{d}(\sf A,B)} \sqrt{\frac{\pi}{\alpha}} \nonumber\\ &= \left(1+ \frac{2c\mu v}{\sqrt{\pi}\Delta}\sqrt{\mu \mathsf{d}(\sf A,B)} \right) \mathrm{e}^{-\mu \mathsf{d}(\sf A,B)}  \le \left(1+ \sqrt{\frac{2}{\pi\mathrm{e}\epsilon}} \frac{c\mu v}{\Delta} \right) \mathrm{e}^{-(1-\epsilon)\mu \mathsf{d}(\sf A,B)}.
\end{align}
The second line's inequality uses the elementary bound $\mathrm{e}^x-1\le x\mathrm{e}^x$ for $x\ge 0$. The last line optimizes the tunable parameter \begin{equation}
    \alpha = \frac{\Delta^2}{4\mu \mathsf{d}(\sf A,B)}
\end{equation}
and used the estimate $\sqrt{2\mathrm{e}\epsilon x}\mathrm{e}^{-x}\le  \mathrm{e}^{-(1-\epsilon)x}$ for $x\ge 0$ and $\epsilon>0$. This is the advertised result.
\end{proof}

\subsection{Local properties of gapped systems}

Theorem \ref{thm:clustering} on exponential clustering implies that in a (unique) gapped ground state, a region $\sf A$ does not have much correlation with faraway vertices. Then $\sf A$ basically correlates with $\sf{A}^{\rm c}$ only via the vertices near the boundary $\partial \sf A$, so one naturally conjectures that the entanglement entropy of $\sf A$ is bounded by an \textbf{area law} \begin{equation}\label{eq:areaLaw}
    S_{\sf A} = {\rm O}\lr{ \labs{\partial \sf A}}.
\end{equation}
This area law is proven \cite{area07} for general 1D gapped systems originally by Hastings, using Lieb-Robinson techniques. With refinements \cite{area_FF12,area1d13,area_Renyi14} afterward, the result is summarized as follows. 
\begin{theor}[Area law of 1D gapped ground states \cite{area1d13}] \label{thm:areaLaw}
Consider a chain of $q$-dimensional qudits with a unique ground state that has a gap $\Delta$. Then the entanglement entropy across any cut is bounded by ${\rm O}\lr{\frac{\log^3 q}{\Delta}}$.
\end{theor}
The gap condition is explicitly used in the proof, which we omit here.\footnote{Gapless states generically do not have area laws: e.g. a lattice regularization of a conformal field theory \cite{Calabrese:2009qy}.} On the other hand, any state in 1D satisfying the exponential clustering condition is proven to obey an area law \cite{area_clusterNP,area_cluster12,area_cluster18}, so an area law for 1D gapped ground states directly follows Theorem \ref{thm:clustering}, if one does not care about the scaling of the entanglement with the gap, etc. As an application, Theorem \ref{thm:areaLaw} guarantees that 1D gapped ground states can be faithfully approximated by matrix product states (MPSs) with bound dimension sublinear in $N$ \cite{area1d13}, in contrast to ${\rm exp}(N)$ for general states. Based on the MPS representation, efficient classical algorithms have been developed for calculating the ground state and its properties. For example, the heuristic algorithm called density matrix renormalization group (DMRG) had been widely used \cite{DMRG_RMP05}; a closely related algorithm with provable guarantees was later proposed~\cite{algorithm1d15}.

In higher dimensions, the area law \eqref{eq:areaLaw} is an important open problem. We refer to the literature \cite{area_spin10,area_RMP10,area_sufficient14,area_tree19,Kuwahara1d,area_2dFF22} on recent progress, for example, a proof for 2D frustration-free systems \cite{area_2dFF22}. For general graphs, see \cite{area_counterexam14} for a counterexample that violates \eqref{eq:areaLaw} explicitly.

Instead of assuming a gap for a single Hamiltonian $H$, one can consider a family of Hamiltonians $\{H(s):0\le s\le 1\}$    
with the local terms in $H(s)$ depending on $s$ continuously. We assume the ground states have a gap lower bounded by $\Delta$ for the entire family of Hamiltonian $H(s)$. This assumption is common in condensed matter settings, where one says that two $H$ belong to the same phase of matter if and only if this interpolation exists. Intuitively, ground state properties of two Hamiltonians are qualitatively the same if they are in the same phase, so one can study solvable points in the phase space, and generalize results to the whole phase. 

Remarkably, the above intuition can be made rigorous. As proven in \cite{automorphic12}, the ground state subspace (defined by projectors $P(0)$ and $P(1)$) of $H(0)$ and $H(1)$ are connected by a quasi-local unitary $U$ \begin{equation}
    P(1) = U P(0) U^\dagger.
\end{equation}
Here $U$ is quasi-local in the sense that it is generated by a finite-time evolution of some sufficiently local (time-dependent) Hamiltonian $\tilde{H}$ (where local terms decay exponentially with the support size). Local operators are mapped to local operators by $U$ due to Lieb-Robinson bound, so the local properties of $P(0)$ and $P(1)$ are smoothly connected. Moreover, $\tilde{H}$ acts nontrivially only near places where $H(s)$ changes with $s$. This implies that the ground states locally only depend on terms in $H$ that lie in the neighborhood of that local region, dubbed ``local perturbations perturb locally''. The quasi-local unitary $U$ is called ``quasi-adiabatic continuation (evolution)'', and has many other applications \cite{LSM04,quasiadia05,Bravyi2006,LSM07,simu_adiab07}, including a rigorous proof of quantization of Hall conductance \cite{Hall_quant15,thermalhall}.

Above, we have seen the consequences of locality \emph{assuming} the gap condition. Perhaps surprisingly, Lieb-Robinson bounds also play an important role in \emph{deciding} whether a Hamiltonian is gapped or not (more precisely, how large the gap is), despite the fact that this spectral gap problem is undecidable in general \cite{undecidability15,undecidability20}. In fact, the revisiting of such bounds originates from such a problem. In 1961, Lieb, Schultz, and Mattis (LSM) proved that a particular spin-$1/2$ Hamiltonian (which is translation invariant and $\mathrm{U}(1)$-symmetric) cannot have a ${\rm \Omega}(1)$ gap above its unique ground state \cite{LSM61}. Hastings generalized this LSM theorem to higher dimensions using the quasi-adiabatic continuation method:
\begin{theor}[Lieb-Schultz-Mattis theorem in higher dimensions \cite{LSM04,LSM07}] 
Consider a local Hamiltonian $H$ on a finite-dimensional lattice. Suppose there is one direction of length $L$, along which $H$ is translational invariant with periodic boundary conditions. Suppose the total number of vertices $N={\rm poly}(L)$. Let $H$ have a conserved charge $Q$ and ground state $\ket{\psi}$. If the ground state filling factor $\bra{\psi}Q\ket{\psi}/L$ is not an integer, then the gap between $\ket{\psi}$ and the first excited state is bounded by \begin{equation}
    \Delta = {\rm O}\lr{\frac{\log L}{L}}.
\end{equation}
\end{theor}

On the other hand, Lieb-Robinson techniques are also used to prove an ${\rm \Omega}(1)$ gap for some $H$ that is close to a gapped $H_0$. To be specific, consider $H=H_0+V$ where $V$ is an extensive sum of local terms each of order $\epsilon \ll \Delta$. The naive perturbation theory typically diverges for this many-body setting, and the gap is not stable in general. For example, the gap closes already at $\epsilon\sim 1/N$ when perturbing the Ising Hamiltonian $H_0=-\sum_{\braket{ij}}Z_i Z_j$ by a magnetic field $V=\epsilon \sum_i Z_i$. It is then a remarkable fact that for frustration-free $H_0$ with local topological order, the gap is provably stable \cite{stab_FF_topo13}. 

\section{Bounds on thermalization}\label{sec:thermalization}
In this section, we return to bounds on dynamics.  In particular, we will describe the extent to which Lieb-Robinson bounds can give meaningful constraints on the thermalization time scales \cite{bound_noneq_rev22} observed in local correlation functions. Results in this area are more limited, and overall this is an interesting area for further exploration; we will highlight what is known.

\subsection{Relaxation times of local observables}
Consider a quantum system in initial mixed state $\rho$ evolved under Hamiltonian $H$. On physical grounds, we expect equilibration of local observable $A$ after some time $t_{\mathrm{thermal}}$ such that 
\begin{align}
 \tr[\rho A(t_{\mathrm{thermal}})] \approx \tr[\rho_{\mathrm{thermal}} A].  
\end{align}

Can we bound this thermalization time $t_{\mathrm{thermal}}$ based on Lieb-Robinson bounds? Assuming a Hamiltonian of the form (\ref{eq:2localH}), we have \begin{align}\label{eq:thermal_simple}
    \labs{\tr[\rho A(t)]-\tr[\rho A(0)]} &\le \lnorm{A(t)-A(0)} = \lnorm{\int^t_0 \mathrm{d} s \e^{\ri sH}\ri [H, A]\e^{-\ri sH} }\nonumber\\ &\le t\lnorm{[H,A]}\le t\norm{A}\sum_{e\in \mathsf{E}:e\cap \mathsf{A}\neq \emptyset} \lnorm{H_e}.
\end{align}
Allowing $\rho$ to be locally perturbed away from equilibrium, so that $| \tr[\rho A] - \tr[\rho_{\mathrm{thermal}}A]| \sim \norm{A}$, we obtain \begin{equation}
    t_{\mathrm{thermal}} \gtrsim \lr{\sum_{e\in \mathsf{E}:e\cap \mathsf{A}\neq \emptyset} \lnorm{H_e}}^{-1}.
\end{equation}
This also holds for relaxation of time-ordered correlation functions $\tr[\rho A(t)B(0)]$, because one can insert the $B(0)$ in the left-hand side of \eqref{eq:thermal_simple}, with the result unchanged assuming $\norm{B}=1$.  See \cite{Nussinov:2021fgc} for extensions of this idea to thermal states.

As a simple generalization, suppose $H=H_0+V$ where $H_0$ does not yield thermal dynamics. We bound thermalization by the local norm of $V$ alone, which is useful when $V$ is a weak perturbation \cite{Kukuljan:2017xag}.
\begin{prop}[Bound on the local thermalization time]\label{prop:thermal}
Consider a perturbed local Hamiltonian $H=H_0+V$ on a $d$-dimensional lattice. If $H_0$ obeys the Lieb-Robinson bounds\eqref{eq:LR_exp}, then \begin{align}
    \labs{\tr[\rho A(t)]-\tr[\rho\e^{t \CL_0} A]} \le c_{\rm thermal}\norm{A}\norm{V}_{\rm local} |\mathsf{A}|t\lr{vt+c'_{\rm thermal} }^D,
\end{align}
where $\CL_0:=\ri [H_0,\cdot]$, and \begin{equation}\label{eq:local_norm}
    \norm{V}_{\rm local}=\max_{j \in \mathsf{V}} \sum_{e\in \mathsf{E}: e\ni j}\lnorm{V_e},
\end{equation} 
is the local norm of $V$. The quantities $c_{\rm thermal},c'_{\rm thermal}$ are ${\rm O}(1)$ constants determined by $c,\mu$ and the lattice geometry.
\end{prop}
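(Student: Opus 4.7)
The plan is to compare the two evolutions using Duhamel's identity (Proposition~\ref{prop:duhamel}) applied to $\CL = \CL_0 + \CL_V$ with $\CL_V A := \ri[V, A]$. This gives
\begin{equation*}
A(t) - \e^{t\CL_0} A = \int_0^t \mathrm{d}s\, \e^{(t-s)\CL}\, \CL_V\, \e^{s\CL_0} A.
\end{equation*}
Pairing with $\rho$, using $\labs{\tr[\rho X]} \le \norm{X}$, and invoking that the full Heisenberg evolution $\e^{(t-s)\CL}$ preserves the operator norm, the estimate reduces to
\begin{equation*}
\labs{\tr[\rho A(t)]-\tr[\rho\e^{t\CL_0} A]}\ \le\ \int_0^t \mathrm{d}s\, \lnorm{[V,\, \e^{s\CL_0} A]}.
\end{equation*}

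Next I would exploit the spatial locality of both factors: $V = \sum_{e \in \mathsf{E}} V_e$ is a sum of local terms, while $\e^{s\CL_0} A$ is an operator whose growth under $H_0$-dynamics is controlled by the Lieb-Robinson bound hypothesized in~\eqref{eq:LR_exp}. For each edge $e$ at graph distance $r = \mathsf{d}(\mathsf{A}, e)$, I would combine the trivial bound $\lnorm{[V_e, \e^{s\CL_0}A]} \le 2\lnorm{V_e}\lnorm{A}$ with the sharper factorial-decay estimate from Corollary~\ref{cor:self-avoiding} applied to $\e^{s\CL_0}A$, yielding $\lnorm{[V_e, \e^{s\CL_0}A]} \lesssim \lnorm{V_e}\lnorm{A}\,(us)^r/r!$. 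Summing over edges at fixed $r$, the definition~\eqref{eq:local_norm} of $\lnorm{V}_{\rm local}$ together with the $d$-dimensional growth condition~\eqref{eq:d_dim} replaces $\sum_{e:\,\mathsf{d}(\mathsf{A},e)=r}\lnorm{V_e}$ by $\lnorm{V}_{\rm local}$ times $\mathrm{O}(|\mathsf{A}|\,r^{d-1})$ edges.

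I would then split the $r$-sum at a crossover $r_\star(s) \sim vs$: the bulk regime $r \le r_\star$ is controlled by the trivial bound and contributes $\sim \lnorm{V}_{\rm local}\lnorm{A}|\mathsf{A}|(vs)^d$, while the tail regime $r > r_\star$ is controlled by the factorial decay and sums to an $\mathrm{O}(1)$ constant (absorbed into $c'_{\rm thermal}$). The two contributions consolidate into
\begin{equation*}
\lnorm{[V,\, \e^{s\CL_0}A]}\ \lesssim\ \lnorm{V}_{\rm local}\lnorm{A}\,|\mathsf{A}|\,(vs+c'_{\rm thermal})^D,
\end{equation*}
where $D$ plays the role of the spatial dimension $d$. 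A final integration $\int_0^t \mathrm{d}s\,(vs+c'_{\rm thermal})^D \le t(vt+c'_{\rm thermal})^D$ produces the advertised bound.

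Structurally the argument is routine, but the main obstacle is the geometric bookkeeping: passing cleanly between the edge-indexed sum $\sum_e \lnorm{V_e}$ and the vertex-indexed quantity $\lnorm{V}_{\rm local}$ (each edge touches two vertices, so one must avoid double-counting), and verifying that the bulk and tail pieces assemble into a \emph{single} polynomial $(vs+c'_{\rm thermal})^D$ rather than, say, a sum of disparate $r$-dependent terms. This manipulation is also where the constants $c_{\rm thermal}$ and $c'_{\rm thermal}$ in the statement are fixed, and where the exponent $D$ is seen to coincide with the spatial dimension $d$ appearing in~\eqref{eq:d_dim}.
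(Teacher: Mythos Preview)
Your proposal is correct and follows essentially the same route as the paper: Duhamel reduces the problem to bounding $\int_0^t \mathrm{d}s\,\lnorm{[V,\e^{s\CL_0}A]}$, and then the Lieb-Robinson bound for $H_0$ controls the integrand via a bulk/tail split at radius $\sim vs$. The only cosmetic difference is that the paper organizes the estimate by decomposing the evolved operator $\e^{s\CL_0}A$ into a dominant piece $\tilde{A}(s)$ supported in the ball of radius $vs+c'_{\rm thermal}$ plus exponentially small shells $\tilde{A}_r$ (via Proposition~\ref{prop:tildA}), whereas you decompose $V$ edge-by-edge at fixed distance $r$ from $\mathsf{A}$; the two organizations are dual and yield the same volume factor $|\mathsf{A}|(vs+c'_{\rm thermal})^D$ and the same final bound.
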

\begin{proof}
We use the Duhamel identity, similarly to \eqref{eq:thermal_simple}:
\begin{align}\label{eq:thermal1}
    \labs{\tr[\rho A(t)]-\tr[\rho \e^{t \CL_0} A]} &\le \lnorm{\int^t_0 \mathrm{d} s \e^{\ri sH}\ri [V, \e^{(t-s) \CL_0}A]\e^{-\ri sH} }\le \int^t_0 \mathrm{d} s \lnorm{[V, \e^{s \CL_0}A]}.
\end{align}
If $\e^{s \CL_0}A$ is supported in a set $\mathsf{S}$, then \begin{equation}
    \lnorm{[V, \e^{s \CL_0}A]}\le \norm{A} \norm{V}_{\rm local}|\mathsf{S}|.
\end{equation} 
From \eqref{eq:LR_exp} and its relation to the support of an evolved operator in Proposition \ref{prop:tildA}, we know that the dominant part $\tilde{A}(s)$ of $\e^{s \CL_0}A$ is supported in a set $\mathsf{S}$ containing vertices no farther than $vs+c'_{\rm thermal}$ to the original set $\mathsf{A}$: \begin{equation}
    \e^{s \CL_0}A = \tilde{A}(s)+ \sum_{r=\lfloor vs+c'_{\rm thermal}\rfloor+1} \tilde{A}_r,
\end{equation}
where $\tilde{A}_r$ has support no farther than $r$ to the original set $\mathsf{A}$. Here the constant $c'_{\rm thermal}$ is chosen such that $\lnorm{[V, \sum_r \tilde{A}_r]}\le \sum_r \lnorm{[V, \tilde{A}_r]}$ is smaller than the main contribution $\lnorm{[V, \tilde{A}(s)]}$, which is possible because the volume increases polynomially with $r$, while $\lnorm{\tilde{A}_r}$ decays exponentially. Then \eqref{eq:thermal1} becomes \begin{align}
    \labs{\tr[\rho A(t)]-\tr[\rho \e^{t \CL_0} A]} &\le \int^t_0 \mathrm{d} s\, 2\lnorm{[V, \tilde{A}(s)]} \le \int^t_0 \mathrm{d} s\, \norm{A} \norm{V}_{\rm local} |\mathsf{A}|\tilde{c}_{\rm thermal} (vs+c'_{\rm thermal})^D \nonumber\\ &\le c_{\rm thermal}\norm{A}\norm{V}_{\rm local} |\mathsf{A}|t\lr{vt+c'_{\rm thermal} }^D.
\end{align}
Here we have used the fact that $\mathsf{S}$ (support of $\tilde{A}(s)$) has the largest volume if $\mathsf{A}$ is a set of faraway vertices, where each vertex grows to a ball of radius $vs+c'_{\rm thermal}$.
\end{proof}

Proposition \ref{prop:thermal} implies a bound $t_{\rm thermal}\gtrsim \lr{\norm{V}_{\rm local}}^{\frac{1}{D+1}}$ if the unperturbed dynamics $\tr[\rho\e^{t \CL_0} A]$ is far from the equilibrium of $H$.   We do not expect this bound to be tight; in somewhat more specialized settings, stronger bounds (Corollary \ref{cor:decaydensity}) can be found.  In general, it is important to find stronger bounds on thermalization.

Up to now we only bounded the \emph{local} thermalization process. If the initial state is inhomogeneous with respect to some conserved charge, thermalization is slower. Transport is, firstly, bounded by the Lieb-Robinson velocity $v$, so if the length scale of the initial inhomogeneity is $L$, we have \begin{equation}
    t_{\rm global\ thermal} \gtrsim L/v.
\end{equation}
In usual systems however, this is a serious underestimate of the thermalization time scale, even accessible via local correlation functions.  If in particular there is a single conserved quantity (usually energy, e.g. whenever $H$ is time-independent), then that quantity will relax diffusively to equilibrium, meaning that in practice $t_{\rm global\ thermal} \gtrsim L^2$ \cite{thoulesstime} for finite-range interactions.

The bounds we have described thus far are rather ``simple" in that we simply pointed out that thermalization of local correlation functions is constrained by operator growth! A difficult open is to prove bounds on the nature of decay more generally.  For example, it seems reasonable that if one takes the thermodynamic limit of $N\rightarrow\infty$ interacting degrees of freedom, we might expect that for any local operator $A$,\begin{equation}
    \sup_{t\ge t_0} \lim_{N\rightarrow\infty}\langle A(t)A(0)\rangle \ge  C \mathrm{e}^{-\gamma t_0}
\end{equation}
for some non-negative constants $C$ and $\gamma$.  This would rule out in particular that correlation functions can decay as $\mathrm{e}^{-t^2}$.   The physical intuition for this is that in typical many-body quantum systems, there are poles in the lower half of the complex plane in the Green's functions for operator $A$ (see e.g. \cite{lucasbook}).   Any such pole would lead to at best a finite $\gamma$ (with hydrodynamic poles leading to algebraic decay).   However, a proof of the finiteness of $\gamma$ (in the thermodynamic limit $N\rightarrow\infty$) and a constraint on $\gamma$ in terms of local coupling constants remains an open problem, as far as we know.

\subsection{Prethermalization}\label{sec:preth}
In some cases, we can achieve far stronger results than Proposition \ref{prop:thermal}. To see how this is possible, we describe a classic example of a slowly thermalizing system: the Fermi-Hubbard model, with Hamiltonian $H=H_0+V$, where \begin{equation}\label{eq:fermi_hub}
    H_0 = \Delta \sum_{i\in\mathsf{V}} n_{i,\uparrow}n_{i,\downarrow}, \quad V = \epsilon \sum_{\lbrace i,j\rbrace \in \mathsf{E}}\sum_{\sigma=\uparrow, \downarrow} c^\dagger_{j,\sigma} c_{i,\sigma}.
\end{equation}
Here $c_{i,\sigma}$ is the annihilation operator for a fermion of spin $\sigma\in\lbrace \uparrow,\downarrow\rbrace$ at site $i\in\mathsf{V}$.   Consider the limit where $\epsilon \rightarrow 0$ while $\Delta$ stays finite.  In this limit, the ground state $H=0$ becomes highly degenerate: the states of finite energy ($\Delta,2\Delta,\ldots$) correspond to those where two fermions of opposite spin sit on the same site.  We call such an excitation a doublon, and a singly-occupied site a singleton.  

Suppose that we create one doublon excitation in a sea of singletons: how long will it take to decay, if $\epsilon$ is very small (but not exactly 0)?
If $\epsilon\ll \Delta$, Fermi's golden rule suggests we cannot simply split it into two singletons.  Fermions hopping on a lattice have a bandwidth: their maximal kinetic energy is $\sim \epsilon \ll \Delta$. There can therefore be no way, via Fermi's golden rule, to spontaneously decay into just two energetic singletons.  We must instead look for a much higher order process, where the doublon splits and then \emph{virtually} transfers its energy $\Delta$ into the increased kinetic energy of $k_* \sim \Delta/\epsilon$ singletons.  In perturbation theory, this will require at least $k_*$ powers of the perturbation $V$, meaning that the doublon decay rate is expected to be \cite{preth_exp10,preth_Hub10,preth_Hub12}
\begin{equation}
    \frac{1}{t_{\text{doublon decay}}} \sim \epsilon \left(\frac{\epsilon}{\Delta}\right)^{\epsilon/\Delta}.
\end{equation}

An exponential bound $t_{\text{doublon decay}}\sim \exp[\Delta/\epsilon]$ was rigorously proved in \cite{preth_rigor17}. The proof uses the fact that $H_0$ is trivially diagonalizable, and extends to Floquet systems and to other settings where $H_0$ (or its Floquet generalization) is solvable \cite{prethermal1,prethermal2,prethermal3,prethermal4}.

The intuitive argument we gave above seems to only rely on the existence of a gap in $H_0$. Thus one expects a general robustness result for all $H_0$ with a many-body gap.  Indeed, there is a long history of the study of false vacuum decay \cite{falseVac_Coleman}, wherein local correlation functions appear consistent with ground states of a degenerate $H_0$, even when the true Hamiltonian $H=H_0+V$ consists of a perturbation that has closed the gap.  One often takes, e.g., $H_0$ to have a ferromagnetic ground state, while $V$ is a symmetry-breaking field that selects one of the degenerate vacua as the true ground state.

However, as we have already explained at the beginning of Section \ref{sec:LRbounds}, any robustness of a false vacuum cannot arise from the local robustness of eigenstates of $H$.  A notion of spatial locality and Lieb-Robinson bound will play an important role in any proof.  The strongest known bound on prethermalization is:

\begin{theor}[Prethermalization of gapped systems (slightly informal) \cite{preth22}]\label{thm:preth}
Let $H_0$ be a spatially local Hamiltonian in $D$ spatial dimensions, with a gap $\Delta$ in the many-body spectrum.  Let $H=H_0+V$, with $\lVert V\rVert_{\mathrm{local}} = \epsilon$, as defined in (\ref{eq:local_norm}). Then, there exists a quasilocal unitary $U$, a Hamiltonian $H_*$ and an operator $V_*$ such that 
\begin{equation}
  U^\dagger (H_0+V)U=  H_* + V_*\label{eq:rotation}
\end{equation}
and for any single-site operator $A$ \begin{equation}
    \lVert U^\dagger A U-A\rVert = \mathrm{O}(\epsilon/\Delta).
\end{equation}
In particular, the Hamiltonian $H_*$ is block diagonal between the eigenstates of $H_0$ above and below the gap. The operator satisfies
\begin{equation}\label{eq:t*<exp}
    \lVert V_*\rVert_{\mathrm{local}} \le \frac{1}{t_*} \quad \text{where}\quad t_* \sim \frac{1}{\epsilon} \exp \left[ c\left(\frac{\Delta}{\epsilon}\right)^a \right]
\end{equation} 
for any $a<1/(2D-1)$ and some $0<c<\infty$. 
The unitary can be written as \begin{equation}
 U = \mathcal{T} \exp\left[\int\limits_0^T   \mathrm{d}t \hat{H}(t)\right]
\end{equation}
for $T=\mathrm{O}(\epsilon)$ and $\hat{H}(t)$ quasilocal in the sense that \begin{equation}
\sum_{\mathsf{S}\subset\mathsf{V}} \lVert \hat H_{\mathsf{S}}(t) \rVert \mathrm{e}^{\mathrm{diam}(\mathsf{S})^\alpha} = \mathrm{O}(1) \quad \text{for any}\quad 0<\alpha<1.
\end{equation}

\end{theor}

In other words, there exists some effective Hamiltonian $H_*$ that effectively describes the dynamics of \emph{local correlation functions} for times $t\ll t_*$.  For example, if we start in one of the degenerate ground states of $H_0$, $|0\rangle$, \begin{equation}
    \left| \langle 0| (U^\dagger A U)(t) |0\rangle - \langle 0|A(t)|0\rangle \right| \le \lVert U^\dagger A U - A \rVert \lesssim \epsilon,
\end{equation}
meaning that
\begin{equation}
  \langle 0|A(t)|0\rangle \approx \langle 0| U^\dagger \mathrm{e}^{\mathrm{i}H_*t}A\mathrm{e}^{-\mathrm{i}H_*t}U|0\rangle + \lVert V_*\rVert_{\mathrm{local}}t. 
\end{equation}
If $A$ acts in a trivial way on the ground states, but non-trivially on typical low-energy states, then we see that up to error $\mathrm{O}(\epsilon)$, $\langle 0|A(t)|0\rangle$ will evolve very slowly away from its ground state value. For a non-perturbatively long time, it will appear from $\langle A\rangle$ as if the system is in one of its ground states, even if the perturbation has added a finite energy density to the state!  This proves, therefore, that false vacuum decay is non-perturbatively slow.
\begin{proof}[Proof idea]
To formalize the intuition about a mismatch of energy scales between $H_0$ and $V$ causing prethermalization, one needs to do perturbation theory in a rigorous way. A convenient technique, used in \cite{preth_rigor17}, is the Schrieffer-Wolff transformation \cite{SWolf66,SW_many11}.  Note that this proof strategy is similar in spirit to the proof of the Kolmogorov-Arnold-Moser Theorem \cite{arnold}.

Starting from the original $H=H_0+V_1$ with $V_1=V$, we first look for a quasi-local unitary $U_1=\e^{A_1}$ that block-diagonalizes the system among the gapped subspaces of $H_0$ at order $\epsilon^1$. Note that we do not want a unitary that block-diagonalizes the system \emph{completely}, as it cannot be quasilocal due to the orthogonality catastrophe \cite{ortho_cata67}. More precisely, we look for anti-Hermitian $A_1 = \mathrm{O}(\epsilon)$ obeying \begin{equation}\label{eq:UHU=}
    \e^{-A_1}(H_0+V_1)\e^{A_1} = H_0 + V_1 + [H_0, A_1] + \mathrm{O}(\epsilon^2).
\end{equation}
We demand that for some $D_2$, block-diagonal betewen the high/low energy subspaces of $H_0$: \begin{equation}\label{eq:V1=D2}
    V_1 + [H_0, A_1] = D_2.
\end{equation}
$A_1$ is not uniquely determined by \eqref{eq:V1=D2}, but one solution suffices. 

For commuting $H_0$ like \eqref{eq:fermi_hub}, $A_1$ and $D_2$ can be found easily in the eigenstate representation of $H_0$. For example, in (\ref{eq:fermi_hub}), $H_0$ has integer spectrum $0,\Delta,2\Delta,\ldots$, and one can choose \cite{preth_rigor17} \begin{subequations}\label{eq:D2A1}\begin{align}
    D_2&=\frac{\Delta}{2\pi}\int\limits_0^{2\pi/\Delta}\mathrm{d}t\ \mathrm{e}^{\mathrm{i}tH_0}V_1\mathrm{e}^{-\mathrm{i}tH_0}, \\
    A_1 &= -\mathrm{i} \int\limits_0^{2\pi/\Delta}\mathrm{d}t\ \lr{1-\frac{\Delta t}{2\pi}} \mathrm{e}^{\mathrm{i}tH_0}V_1\mathrm{e}^{-\mathrm{i}tH_0}.
\end{align}\end{subequations}
By the evolution of $H_0$ in \eqref{eq:D2A1}, each local term of $V_1$ grows larger in support, but remains strictly local because $H_0$ is commuting. 

For general $H_0$, a solution similar to \eqref{eq:D2A1} still holds.  One finds a filter function $w(t)$ with compact Fourier transform \cite{quasi-adia10,automorphic12}, which decays reasonably quickly as  $w(t) \sim \mathrm{e}^{-|t|/\log^2 |t|}$ at large $|t|$, and chooses \cite{preth22} \begin{equation}
    D_2 \sim \int\limits_{-\infty}^\infty \mathrm{d}t \; w(t) \mathrm{e}^{\mathrm{i}tH_0}V_1\mathrm{e}^{-\mathrm{i}H_0}.
\end{equation} 
The matrix elements $\langle E| D_2 |E^\prime\rangle$ between eigenstates of $H_0$ are proportional to the Fourier transform of $w(t)$:  $\widehat{w}(E-E^\prime)$; hence our filter function with compact Fourier transform enforces $D_2$ being block-diagonal across the gap.
However, terms in $D_2$ (and $A_1$) are no longer strictly local, and one needs to invoke a special Lieb-Robinson bound \cite{preth22} to control how large they can become. With $A_1$ chosen to satisfy \eqref{eq:V1=D2}, we have rotated the Hamiltonian by $U_1$ to $H_0 + D_2+V_2$, where $V_2\sim \epsilon^2$ is the last term in \eqref{eq:UHU=}. At this second step, we wish to further block-diagonalize the Hamiltonian, using a quasi-local unitary $U_2=\e^{A_2}$ that is determined similar to \eqref{eq:V1=D2}.  One again needs to invoke locality and Lieb-Robinson bounds to show that $U_2$ is still quasi-local. This process can be iterated up to some optimal order $k_*\sim (\Delta/\epsilon)^a$, where the range of operators has become so large that further Schrieffer-Wolff transformations do not decrease the local strength of $V_k$ anymore. At this optimal order, \eqref{eq:rotation} is achieved with $H_* = H_0+D_{k_*}$ and an exponentially small $V_* = V_{k_*}$.
\end{proof}

 Based on \eqref{eq:rotation}, there is a hierarchy of how strongly $t_*$ depends on $\epsilon$. First, \eqref{eq:rotation} always holds trivially with $U=I$ and $t_*=1/\epsilon$, so the first nontrivial result would be $t_*\sim \epsilon^{-k_*}$ for some finite $k_*>1$ using finite-order perturbation theory \cite{hamazaki, finite_order_math}.  Usually Fermi's golden rule implies that $t_*\sim \epsilon^{-2}$.  
 
 In integrable models, one can define a decay time in an alternative fashion to \eqref{eq:rotation}.  While perturbing away from such integrability typically gives rise to $t_*\sim \epsilon^{-2}$ governed by Fermi golden rule \cite{VANHOVE55,Mori_rev18}, there are exceptions with $t_*\sim \epsilon^{-2k}$ for $k>1$ \cite{te4_08,Surace:2023wqq}. 

There are even cases with infinite-time stability ($t_*=\infty$), such as frustration-free ground states with local topological order \cite{stab_FF_topo13}. The toric code \cite{toric_code03} is the classic example of such a state. Intuitively, topological order guarantees a \emph{macroscopic} code distance in the language of quantum error correction: any operator of size smaller than system length neither couples nor distinguishes different low-energy sectors that encode quantum information. Thus perturbation theory converges up to $d_{\rm code}$-th order, and the remaining $V_*$ vanishes in the thermodynamic limit. In addition to preserving the gap, the energy splitting in the ground subspace is also then exponentially small, making such models robust quantum memories at zero temperature. Although we believe the frustration-free condition is a technical issue rather than being physical, Theorem \ref{thm:preth} is currently the best bound for frustrated systems.

On general grounds, one might have expected the prethermalization time to scale as \begin{equation}
    t_{\mathrm{thermal}}\sim \exp \left[\left(\frac{\Delta}{\epsilon}\right)^d \right]
\end{equation}
in $d$ spatial dimensions, since this is the energy barrier one needs to overcome to tunnel out of the false vacuum \cite{falseVac_Coleman}.  A rigorous proof of this result would likely require something beyond a Lieb-Robinson bound to control the validity of the Schrieffer-Wolff transformation, and this is an interesting open problem.
\section{Quantum walk bounds and the Frobenius light cone}\label{sec:frobenius}

So far, we have discussed a notion of light cone inspired by Lieb-Robinson bounds on operator norms of commutators, such as $\lVert [A_0(t),B_r]\rVert = C(r,t)$.
These bounds have been popular because they hold for all matrix elements of the commutator: therefore, if a Lieb-Robinson bound exists, it serves as a versatile subroutine whenever a notion of locality is needed.  

However, there are many physical settings where one does not \emph{want} a Lieb-Robinson-like bound, but rather something different.   For example, suppose we wish to calculate a retarded Green's function in some finite temperature many-body system: \begin{equation}
    G^{\mathrm{R}}_{AB}(r,t) := \frac{\mathrm{i}}{Z(\beta)} \mathrm{tr}\left( \mathrm{e}^{-\beta H} [A_r(t),B_0]\right),
\end{equation}
where $Z(\beta) := \mathrm{tr}(\mathrm{e}^{-\beta H})$ is the thermal partition function. 

Especially if temperature $T$ is very small (or inverse temperature $\beta=1/T$ large compared to couplings in $H$), quite often $G^{\mathrm{R}}_{AB}$ will vanish outside of a light cone with an apparent \textit{temperature-dependent} velocity.  By definition, this temperature dependence cannot be captured by a Lieb-Robinson bound, as the single Lieb-Robinson bounds must accommodate all states.

Therefore, it is desirable to incorporate the initial state dependence into a Lieb-Robinsonb bound. For the most part, this is an open problem in mathematical physics, with preliminary progress just beginning.  The simplest context where many rigorous results -- which are notably stronger than optimal Lieb-Robinson bounds -- appear is when studying the \emph{Frobenius norm} of a commutator: (\ref{eq:frobeniusnorm}). This should not be a surprise: the Lieb-Robinson bounds must hold for all matrix elements of an operator, while the Frobenius norm simply bounds the \emph{average} magnitude of a matrix element between any states in Hilbert space.  What is more non-trivial is the qualitatively new methods for bounding the Frobenius norm, which can both be applied to physically relevant problems and give us new and helpful insight into the bottlenecks of the underlying quantum dynamics.  

In this section, we will review this Frobenius light cone in our usual context: systems with local interactions on a lattice.  In later sections, we will show that it is the Frobenius approach to bounding commutators that can have elegant generalizations to more challenging problems, including systems with power-law interactions (Section~\ref{sec:power-law}) or bosons (Section~\ref{sec:bosons}).

\subsection{Quantum walk of a single particle on the line}\label{sec:babyQW}
We begin by revisiting the toy problem of Section~\ref{sec:warmup_oneparticle}; as before, we wish to bound $C(r,t) = \langle r|\mathrm{e}^{-\mathrm{i}Ht}|0\rangle$.  This time, we will not try to think of this problem combinatorially.  Instead, we observe that $|C(r,t)|^2$ represents the classical probability of measuring the particle on site $r$ at time $t$.  Our strategy -- and more generally, the strategy of existing ``quantum walk bounds" -- will be to bound this probability distribution directly, using methods of classical probability theory, as an indirect way of saying something useful about the underlying quantum dynamics.

Before we explain \emph{how} such a quantum walk bound could be found, let us emphasize \emph{why} this shift in perspective should be quite useful.  In the combinatorial approach that underlies the simplest Lieb-Robinson bounds, we found in (\ref{eq:1d_sum_paths}) that $C(r,t)$ was bounded by a sum over all paths. 
Somewhat annoyingly in this formula, $C(0,t) > 1$ once $t>0$.  Now of course this is merely an artifact of us trying to express $C(r,t)$ in some elegant way -- physically $|C(0,t)|^2\le 1$, as it is the probability of measuring the particle on site 0.  Still, when we derived a  bound on $C(r,t)$ at time $t$, it depended on our bound at time $t-\mathrm{\Delta}t$.  The overcounting that we are doing is not innocuous -- it is ``corrupting" our bound for all later time, artificially inflating the values of all $C(r,t)$.  Might it be leading to an overestimate of the velocity of the ``light cone" outside of which $C(r,t)$ is exponentially small?

To understand whether this concern is justified, we seek a formalism in which (perhaps indirectly) we are assured that $C(r,t)$ remains the coefficients of some normalized wave function.  Yet this is somewhat awkward since it is $|C(r,t)|^2$ which represents the probabilities.  The strategy which has been used so far is to solve this problem by not bounding $C(r,t)$ directly, but rather by bounding the expectation values of operators on the Hilbert space.  Suppose we define \begin{equation}
    F := \sum_{r\in\mathbb{Z}} F_r |r\rangle\langle r|;
\end{equation}
then \begin{equation}
    \langle F(t)\rangle = \langle \psi(t)|F|\psi(t)\rangle = \sum_{r\in\mathbb{Z}} F_r |C(r,t)|^2.
\end{equation}
Since $F$ is diagonal in the position basis of interest, $\langle F\rangle$ can be interpreted using classical probability theory as simply the average value of the random variable $F_r$.  But now we can efficiently bound \begin{equation}
    \frac{\mathrm{d}}{\mathrm{d}t} \langle F(t)\rangle = \langle \mathrm{i}[H,F]\rangle. \label{eq:ddtft}
\end{equation}
While at this point there are a variety of strategies that one could use, two common ones are to seek functions $F$ where it can be proved that for some constant $c>0$ \begin{equation}
    |\langle \mathrm{i}[H,F]\rangle |\le c \quad \text{or}\quad |\langle \mathrm{i}[H,F]\rangle |\le c \langle F\rangle. \label{eq:QWexpF}
\end{equation}
In this section, we will focus on this latter possibility, which leads to tighter bounds -- the strategies for dealing with the former are quite similar.

Observe that one choice\footnote{We might really wish to use $b|r|$, not $br$, in the exponent, but this choice will simplify a few equations and the approach's merits are more easily revealed.} satisfying (\ref{eq:QWexpF}) is \begin{equation}
    F_r = \mathrm{e}^{br} \quad \text{for constant}\quad b>0.
\end{equation}
Indeed, \begin{align}
    |\langle \mathrm{i} [H,F]\rangle| &\le \left|h\sum_{r} \left(\mathrm{i}\bar\psi_r \psi_{r+1} - \mathrm{i}\bar\psi_{r+1} \psi_{r}\right)\left(\mathrm{e}^{b(r+1)} - \mathrm{e}^{br}\right) \right| \notag \\
    &\le h\left(\mathrm{e}^{b/2}-\mathrm{e}^{-b/2}\right) \sum_r \left(|\psi_r|^2 \mathrm{e}^{br}  + |\psi_{r+1}|^2\mathrm{e}^{b(r+1)}\right) \notag\\
    &= 4h \sinh\frac{b}{2} \sum_r |\psi_r|^2 \mathrm{e}^{br}
    = 4h\sinh\frac{b}{2} \langle F\rangle. \label{eq:qwsinhb}
\end{align}
The second inequality uses Cauchy-Schwartz (e.g.) \begin{equation}
    \labs{\mathrm{i} \bar \psi_r \psi_{r+1} \mathrm{e}^{b(r+1)}} = \labs{\mathrm{i}\bar \psi_r \mathrm{e}^{br/2}}\labs{\psi_{r+1}\mathrm{e}^{b(r+1)/2}}\mathrm{e}^{b/2} \le \mathrm{e}^{b/2}\cdot \frac{1}{2}\left[ \left|\mathrm{i}\bar \psi_r \mathrm{e}^{br/2}\right|^2 + \left|\psi_{r+1}\mathrm{e}^{b(r+1)}\right|^2 \right].
\end{equation}
Now, since $F_r\ge 0$ is a non-negative operator, and $|\psi_r|^2$ is a well-posed classical probability distribution, we can invoke Markov's inequality: \begin{equation}
    \mathbb{P}[\text{particle is at }x\ge x_0\text{ at time }t] = \sum_{r=x_0}^\infty |\psi_r(t)|^2 \le \frac{\langle F(t)\rangle}{F_{x_0}}. \label{eq:markov}
\end{equation}
Here and below $\mathbb{P}[\cdots]$ is used to denote the probability of an event arising, and we use this notation (rather than the expected value of a quantum observable) to highlight the close mathematical connections to probability theory.  Combining (\ref{eq:ddtft}), (\ref{eq:qwsinhb}), and (\ref{eq:markov}), we can now choose the optimal value for the parameter $b$ to get the tightest possible bound on the velocity of the particle: \begin{equation}
    \mathbb{P}[\text{particle is at }x\ge x_0\text{ at time }t] 
    \le \exp\left[-bx_0\left(1-\frac{4ht}{x_0}\frac{\sinh (b/2)}{b}\right) \right]. \label{eq:4bopt}
\end{equation} 
Since $\sinh(b)/b \ge 1$, we conclude that we should take $b\rightarrow 0$ to get the strongest possible light cone bound.  Our bound is exponentially small when $2ht \le x_0$, implying that the velocity of the particle is \begin{equation}
    v \le 2h.
\end{equation}
This is tighter than what we could find using our Lieb-Robinson combinatorics in (\ref{eq:vlrsec31}).   This velocity admits a natural physical interpretation: it is the largest possible group velocity of a particle in the system: see (\ref{eq:vgsec31}).

We call this approach the ``quantum walk" approach to bounding dynamics since we aim to use (as much as possible) the unitarity of the quantum dynamics to constrain the quantum walk of the wave function.  In the many-body setting, this problem can of course become much more complicated, but we will describe a few examples ($F$ ansatzes) where this method has been used to tightly bound quantum dynamics.  This approach has also been employed in the literature on continuous-space Lieb-Robinson bounds (Section \ref{sec:continuous}).

\subsection{Operator growth and operator size}
We now turn to the many-body problem. As explained in Section \ref{sec:LRbounds}, a critical difference between our single-particle warm-up and the many-body problem is operator dynamics appear more natural than state dynamics. Indeed, we have already seen that local operators evolve ``slowly" (at least for short times) under local Hamiltonian dynamics.  This was the key insight behind our derivation of a Lieb-Robinson bound in Section \ref{sec:LRbounds}.   And in recent years, capturing the growth of local operators -- beyond the Lieb-Robinson bound -- has become a question of particular interest among physicists.  There are a few (related) reasons why.  Firstly, the Lieb-Robinson bounds capture the ``worst case" speed of information, but it may be the case that in typical states signals propagate much more slowly.  We will see that this is strikingly the case in systems with power-law interactions in Section \ref{sec:power-law}.   Secondly, the growth of operators has been conjectured to be related to the emergence of geometry and gravity via the AdS/CFT correspondence in string theory: we will discuss such theories in Section \ref{sec:all-to-all}.  Thirdly, typical experiments probe \emph{thermal} correlation functions and so often times we are only interested in the behavior of the commutator $[A(t),B]$ averaged over (exponentially) many states.

If the Lieb-Robinson bounds are too specialized, we can try alternatively to study the opposite limit where we only ask about the typical behavior of a commutator.  Indeed, suppose we want to know: how large should we expect $ [A(t),B] | \varphi^\prime\rangle$ to be for randomly chosen state $|\varphi\rangle$? As usual in probability theory, it is easier to study the square of this object, where the averaging becomes simple: denoting $C=[A(t),B]$ for simplicity, along with a finite-dimensional Hilbert space, 
 \begin{equation}
    \mathbb{E}_{\varphi}\left[ \norm{C\ket{\varphi}}^2\right] = \mathbb{E}_{\varphi}\mathrm{tr}\left( C^\dagger C \ket{\varphi}\bra{\varphi}\right)=\frac{\mathrm{tr}\left(C^\dagger C\right)}{\mathrm{tr}(I)}. \label{eq:gettingfrobenius}
\end{equation}
Here we have used $\mathbb{E}_{\varphi}$ to denote expectation over the Haar measure\footnote{Concenptually we often take the ``random" average over the Haar measure. However, one only needs a much weaker randomness $\mathbb{E}_{\varphi}\ket{\varphi}\bra{\varphi} = I/\tr[I]$ for~\eqref{eq:gettingfrobenius}.} .   Note that the final object in (\ref{eq:gettingfrobenius}) is the Frobenius norm, introduced in (\ref{eq:frobeniusnorm}).  Hence, the Frobenius norm of $C$ will tell us about the size of \emph{typical} matrix elements of $C$.

In a system with a many-body Hilbert space, there is a particularly valuable way to think about this Frobenius norm. Using the elementary properties in Section \ref{sec:qubits}, we notice two key facts.   Firstly, if we expand an operator into the $|a_1\cdots a_N)$ basis, the coefficients $(a_1\cdots a_N|A)$ can be thought of as the elements of an ``operator wave function", which (by Proposition \ref{prop:pauliorthogonal}) will be normalized just as a usual quantum wave function. In particular, observe that \begin{prop}[Dynamics preserves the operator ``wave function" normalization]
\label{prop:Aconslength}
Let $|A(t)) = \mathrm{e}^{\mathcal{L}t}|A)$ for some Hamiltonian $\CL = \ri [\cdot ,H]$. Then \begin{equation}
    (A(t)|A(t)) = (A|A).
\end{equation}
\end{prop}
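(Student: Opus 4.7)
The plan is to give a one-line proof, together with an alternative derivation that emphasizes why the statement is a structural feature of Heisenberg evolution (and generalizes to time-dependent $H$).

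First, I would write out the Heisenberg evolution explicitly as $A(t) = \mathrm{e}^{\mathrm{i}Ht} A \mathrm{e}^{-\mathrm{i}Ht}$ and just compute
\begin{equation*}
(A(t)|A(t)) = \frac{\tr\!\left(\mathrm{e}^{\mathrm{i}Ht} A^\dagger \mathrm{e}^{-\mathrm{i}Ht}\, \mathrm{e}^{\mathrm{i}Ht} A \mathrm{e}^{-\mathrm{i}Ht}\right)}{\tr(I)}.
\end{equation*}
The two middle exponentials cancel, and cyclicity of the trace moves the outer $\mathrm{e}^{\mathrm{i}Ht}$ past $A^\dagger A$ to annihilate against $\mathrm{e}^{-\mathrm{i}Ht}$, leaving $\tr(A^\dagger A)/\tr(I) = (A|A)$. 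Note that $\tr(I)$ is time-independent, which is why the normalization convention in \eqref{eq:innerproduct} plays well with this.

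As an alternative route that avoids even writing down $\mathrm{e}^{\mathrm{i}Ht}$ and instead exposes the underlying structure, I would differentiate $(A(t)|A(t))$ in time:
\begin{equation*}
\frac{\mathrm{d}}{\mathrm{d}t}(A(t)|A(t)) = (\mathcal{L}A(t)|A(t)) + (A(t)|\mathcal{L}A(t)),
\end{equation*}
and then check that $\mathcal{L}$ is anti-Hermitian with respect to the Hilbert--Schmidt inner product, i.e.\ $(\mathcal{L}A|B) = -(A|\mathcal{L}B)$. This follows from $(\mathrm{i}[H,A])^\dagger = \mathrm{i}[H,A^\dagger]$ (using $H^\dagger = H$) followed by cyclicity of the trace to move the commutator off $A^\dagger$ and onto $B$. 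Hence the time derivative vanishes and $(A(t)|A(t))$ is constant, so it equals its value at $t=0$.

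There is no real obstacle here: the result is essentially a restatement of unitary invariance of the Hilbert--Schmidt norm under the adjoint action $A \mapsto UAU^\dagger$. The only mild care needed is (i) that the trace normalization cancels cleanly so the statement is literally an equality (not just up to a $D$-dependent factor), and (ii) that if one wishes to extend to time-dependent $H(t)$, the anti-Hermiticity argument of the second route goes through verbatim since it only uses $H(t)^\dagger = H(t)$ at each instant, whereas the first route requires replacing $\mathrm{e}^{-\mathrm{i}Ht}$ by the time-ordered propagator and invoking its unitarity.
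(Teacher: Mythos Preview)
Your proposal is correct, and your second route (differentiating and using $\mathcal{L}^\dagger = -\mathcal{L}$ via cyclicity of the trace) is exactly the paper's proof. Your first route, the direct computation using unitary invariance of the Hilbert--Schmidt norm, is a valid and even more elementary alternative that the paper does not spell out.
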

\begin{proof}
Take the time derivative \begin{equation}
    \frac{\mathrm{d}}{\mathrm{d}t} (A(t)|A(t)) = (A(t)|\mathcal{L}^\dagger + \mathcal{L}|A(t)) = 0,
\end{equation}
using $\mathcal{L}^\dagger = -\mathcal{L}$, which follows from the cyclic trace identity.
\end{proof}

Secondly, we will make heavy of the super-projector $\BP_A$ (Definition~\ref{defn:super_projector}) instead of commutators; recall 
\begin{align}
[A,B]= [A,\mathbb{P}_{\mathsf{A}}B].    \end{align}
This statement holds irrespective of a Frobenius light cone, and indeed such notation was used in \cite{alpha_3_chenlucas}. However, this notation of projection is particularly nice when working with the Frobenius inner product, as $\mathbb{P}$ can be thought of as an explicit projection matrix! Therefore, we have the intuitive picture of the ``light cone" for operator growth as associated with a small operator exploring the intersection of increasingly many $\mathbb{P}_j$ hyperplanes for sites $j$: see Figure \ref{fig:Fnorm_1d}.

\begin{figure}[t]
\centering
\includegraphics[width=.8\textwidth]{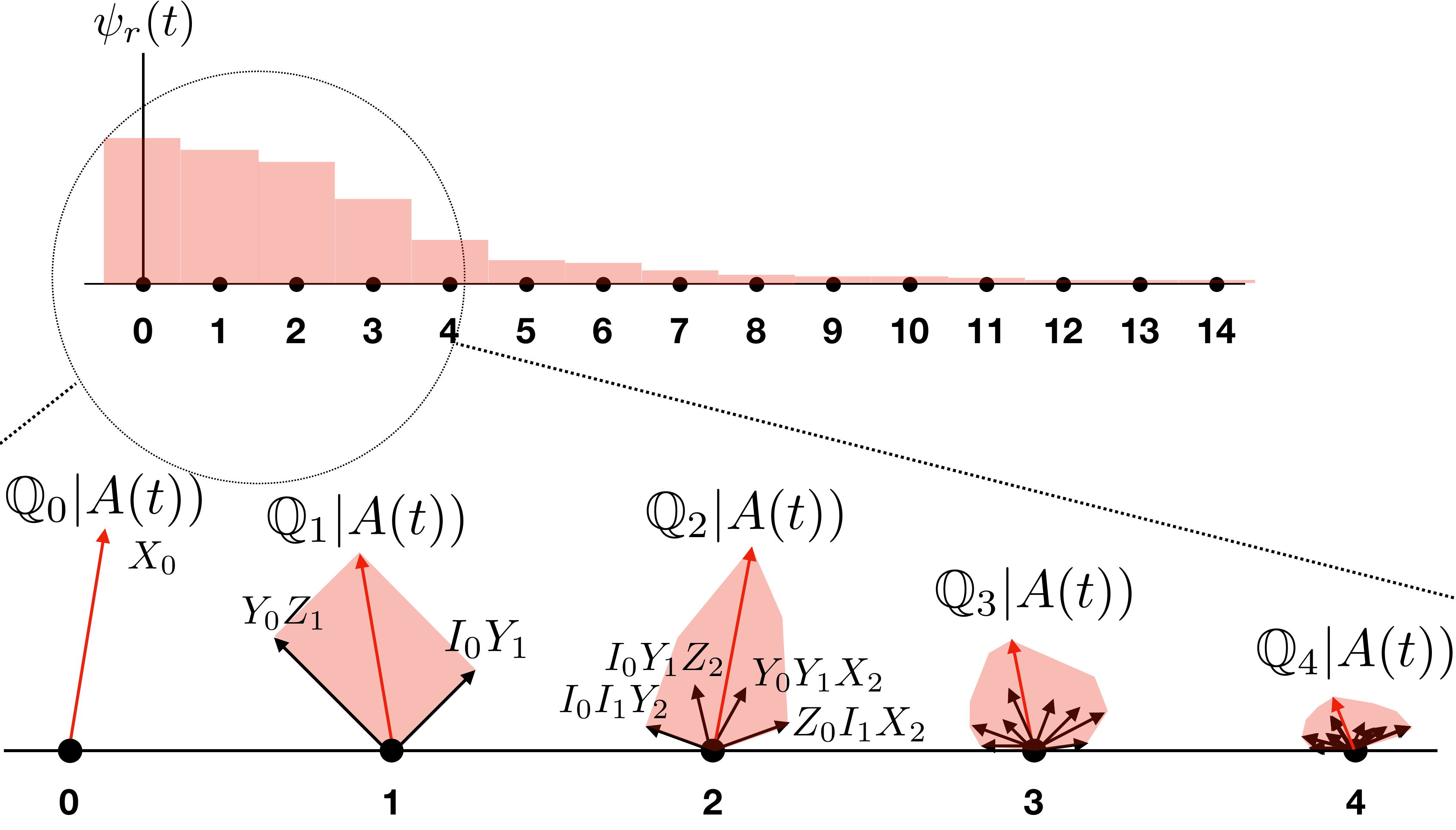}
\caption{
The many-body wave function can be captured in the Frobenius norm. Each projector $\BQ_r$~\eqref{eq:Q_r} selects operators whose right most non-trivial site is $r$, whose precise decomposition is irrelevant to our quantum walk bounds.}
\label{fig:Fnorm_1d}
\end{figure}

A crude way to capture how much an operator has grown (which has become an object of some study in its own right) is the \textbf{operator size}.  Define the superoperator \begin{equation}
    \mathcal{S}|a_1\cdots a_N) = \left[\sum_{j=1}^N \mathbb{I}(a_j\ne 0) \right]|a_1\cdots a_N).
\end{equation}
Thus, $\mathcal{S}$ counts the number of sites on which a given Pauli string is not the identity.  As a superoperator, size can act non-trivially on a complicated operator (just as not all wave functions are eigenstates of a general Hermitian operator).   But we often care about the \emph{average size} of an operator: $(A|\mathcal{S}|A)$.  The average size of an operator is, fortunately, related to the typical size of commutator matrix elements: \begin{prop}[Average operator size]
\label{prop:operatorsize}
For a quantum system with $N$ qubits,
\begin{equation}
    (A|\mathcal{S}|A) = \frac{1}{8} \sum_{j=1}^N \sum_{a=1}^3 ([A,X_j^a]|[A,X_j^a]).
\end{equation}
\end{prop}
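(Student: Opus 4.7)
The plan is to work in the Pauli-string basis where both sides become diagonal sums in $|A_b|^2$, and then reduce the identity to a per-site Pauli-algebra counting argument.

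First, I would expand $A = \sum_b A_b\, |b)$ in Pauli strings $b=(b_1,\ldots,b_N)\in\{0,1,2,3\}^N$. By Proposition~\ref{prop:pauliorthogonal} and the diagonal action of $\mathcal{S}$,
\begin{equation*}
(A|\mathcal{S}|A) = \sum_b |A_b|^2 \cdot \#\{j : b_j \ne 0\},
\end{equation*}
so it suffices to show that the right-hand side of the proposition equals the same sum. Since $X_j^a$ acts trivially off site $j$, we have $[|b),X_j^a]=[X^{b_j},X^a]$ on site $j$ tensored with identity elsewhere. Using $[X^c,X^a]=2\mathrm{i}\epsilon^{cab}X^b$, this vanishes unless $a\ne 0$, $b_j\ne 0$, and $a\ne b_j$; when nonzero, it equals $\pm 2\mathrm{i}$ times the Pauli string obtained from $b$ by replacing $b_j$ with the unique third index $c=c(a,b_j)$.

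The one nonroutine piece is showing that cross terms in
\begin{equation*}
([A,X_j^a]|[A,X_j^a])=\sum_{b,b'}\bar{A}_{b'}A_b\,([|b'),X_j^a]\,|\,[|b),X_j^a])
\end{equation*}
vanish. This follows because, for fixed $a$ and $j$, the map $b \mapsto b^{(j\to c(a,b_j))}$ is injective: given the output string and $a$ one recovers $b_j$ by the same ``third index'' rule, and all other indices are unchanged. Together with Pauli orthonormality, this leaves only diagonal $b=b'$ contributions, each equal to $4|A_b|^2$ whenever the commutator is nonzero.

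Finally I would tally: for each $b$ with $b_j\ne 0$, exactly two of the three values $a\in\{1,2,3\}$ satisfy $a\ne b_j$, so
\begin{equation*}
\sum_{a=1}^{3}([A,X_j^a]|[A,X_j^a]) = 8\sum_b |A_b|^2\,\mathbb{I}(b_j\ne 0).
\end{equation*}
Summing over $j\in\{1,\ldots,N\}$ and dividing by $8$ reproduces $\sum_b |A_b|^2\,\#\{j:b_j\ne 0\} = (A|\mathcal{S}|A)$. The main potential pitfall is the cross-term cancellation; everything else is elementary Pauli algebra and bookkeeping.
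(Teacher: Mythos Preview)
Your proof is correct. The paper states this proposition without proof, so there is nothing to compare against; your Pauli-string expansion plus the per-site commutator count is the natural argument. One small wording fix: when you write that $[|b),X_j^a]$ equals $[X^{b_j},X^a]$ on site $j$ ``tensored with identity elsewhere,'' you mean tensored with the remaining Pauli factors $X^{b_k}$ for $k\ne j$, not with identity---your next sentence (``the Pauli string obtained from $b$ by replacing $b_j$ with $c$'') shows you already have this right, so just tighten the phrasing.
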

Hence, the average size of operator $A$ measures the Frobenius norm of an average commutator of $A$ with a single-site Pauli. While this proposition is for qubit degrees of freedom, for appropriate changes in constant factors it also holds for qudits.  Notice the absence of spatial locality in the problem; often notions of operator size are most relevant in problems \textit{without} a notion of spatial locality.  We will discuss much more about operator size and why it was studied in Section \ref{sec:lyapunov}.

\subsection{Many-body quantum walk bounds}

In this section, we describe a strategy to efficiently bound Frobenius light cones dubbed the ``many-body quantum walk" in \cite{Lucas_2020}.  This technique is very similar to one which has been developed independently in the mathematics literature (see e.g. \cite{Faupin:2021nhk,Faupin:2021qpp}), though the notation is rather different.  The key idea is that since the Frobenius norm is simply the length of the ``operator" in the ``operator Hilbert space", we can follow our technique from Section \ref{sec:babyQW}. 

In particular, suppose we want to bound $\lVert [A_0(t),B_r]\rVert_{\mathrm{F}}$ -- this can only happen if the operator string $|A(t))$ contains terms that act non-trivially on site $r$.
\begin{prop}[From the ``operator wave function" to Frobenius light cone]
Let $\mathbb{P}_r:=\mathbb{P}_{\{r\}}$. Then, the following inequality holds: \begin{equation}\label{eq:APA>AB_F}
    4(A_0(t)|\mathbb{P}_r|A_0(t)) \cdot \norm{B_r}^2 \ge \lVert [A_0(t),B_r]\rVert_{\mathrm{F}}^2.
\end{equation}
\end{prop}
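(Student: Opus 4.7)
The plan is to reduce the right-hand side to the Frobenius norm of $\mathbb{P}_r A_0(t)$, apply H\"older's inequality to the commutator, and then recognize the resulting squared Frobenius norm as the desired inner-product expression.

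First, I would isolate the piece of $A_0(t)$ that actually contributes to the commutator. Since $B_r$ is supported on the single site $\{r\}$, any Pauli-string component of $A_0(t)$ that acts as the identity on $r$ commutes with $B_r$. In the operator bra-ket notation, this is exactly the statement that $(\mathcal{I}-\mathbb{P}_r)A_0(t)$ is a sum of strings with $a_r=0$, hence commutes with $B_r$. Therefore
\begin{equation}
[A_0(t),B_r] = [\mathbb{P}_r A_0(t),\, B_r].
\end{equation}

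Next, I would apply the triangle inequality together with H\"older's inequality for Schatten norms (Proposition~\ref{prop:holder}) with $(p,q,r)=(2,\infty,2)$:
\begin{equation}
\lVert [\mathbb{P}_r A_0(t),\, B_r]\rVert_{\mathrm{F}} \;\le\; \lVert \mathbb{P}_r A_0(t)\cdot B_r\rVert_{\mathrm{F}} + \lVert B_r\cdot \mathbb{P}_r A_0(t)\rVert_{\mathrm{F}} \;\le\; 2\,\lVert \mathbb{P}_r A_0(t)\rVert_{\mathrm{F}}\,\lVert B_r\rVert.
\end{equation}
Squaring gives
\begin{equation}
\lVert [A_0(t),B_r]\rVert_{\mathrm{F}}^2 \;\le\; 4\,\lVert \mathbb{P}_r A_0(t)\rVert_{\mathrm{F}}^2 \cdot \lVert B_r\rVert^2.
\end{equation}

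Finally, I would rewrite the Frobenius norm squared as a Hilbert--Schmidt inner product. By Proposition~\ref{prop:pauliorthogonal}, the Pauli strings form an orthonormal basis for $(\cdot|\cdot)$, and by the very definition of $\mathbb{P}_r$ (Definition~\ref{defn:super_projector}) this super-projector acts diagonally in that basis with eigenvalues $0$ or $1$. Hence $\mathbb{P}_r$ is a genuine orthogonal projector on operator space (compare Proposition~\ref{prop:proj_norms}), satisfying $\mathbb{P}_r^\dagger \mathbb{P}_r = \mathbb{P}_r$, so
\begin{equation}
\lVert \mathbb{P}_r A_0(t)\rVert_{\mathrm{F}}^2 = (A_0(t)|\mathbb{P}_r^\dagger \mathbb{P}_r|A_0(t)) = (A_0(t)|\mathbb{P}_r|A_0(t)).
\end{equation}
Combining the last two displays yields (\ref{eq:APA>AB_F}). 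There is no real obstacle here; the only subtlety is to invoke the orthogonality of Pauli strings in the Frobenius inner product so that $\mathbb{P}_r$ is self-adjoint and idempotent, which converts the squared Frobenius norm into the inner-product expression appearing on the left-hand side.
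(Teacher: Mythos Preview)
Your proof is correct and follows essentially the same route as the paper: replace $A_0(t)$ by $\mathbb{P}_r A_0(t)$ in the commutator, apply H\"older's inequality to get the factor $2\lVert B_r\rVert\,\lVert \mathbb{P}_r A_0(t)\rVert_{\mathrm{F}}$, and square. Your added justification that $\mathbb{P}_r$ is an orthogonal projector so that $\lVert \mathbb{P}_r A_0(t)\rVert_{\mathrm{F}}^2 = (A_0(t)|\mathbb{P}_r|A_0(t))$ is a detail the paper leaves implicit.
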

\begin{proof}
This follows from H\"older's inequality (Proposition~\ref{prop:holder}):
\begin{equation}
    \lVert [A_0(t),B_r]\rVert_{\mathrm{F}} = \lVert [\BP_r A_0(t),B_r]\rVert_{\mathrm{F}} \le 2 \norm{B_r} \lVert \BP_r A_0(t) \rVert_{\mathrm{F}},
\end{equation}
which squares to \eqref{eq:APA>AB_F}. 
\end{proof}

\subsubsection{Frobenius bounds on general graphs}

Now that we have related the Frobenius light cone to the expectation value of $\mathbb{P}_r$, it remains to bound $(A_0(t)|\mathbb{P}_r|A_0(t))$.  This is very similar to what we did in Section \ref{sec:babyQW}, except that $\mathbb{P}_r$ is not the projection onto a single state. 

\begin{theor}[Frobenius bound on a graph of bounded degree] \label{thm:qw}
In the model studied in Example \ref{exam:bound_degree}, 
\begin{equation}
    (A_0(t)|\mathbb{P}_v|A_0(t)) \le \mathrm{e}^{v_{\mathrm{B}}t-\mathsf{d}(v,0)}
\end{equation}
where the butterfly velocity $v_{\mathrm{B}}$, which characterizes the Frobenius light cone, is defined as \begin{equation}\label{eq:vB=min_b}
    v_{\mathrm{B}}=2h \min_{b>0}b^{-1}[d+\e^{-b}+(d-1)\e^b].
\end{equation}
\end{theor} 
\begin{proof}
We follow the strategy of Section \ref{sec:babyQW}.  Define \begin{equation}
    \mathcal{F} = \sum_{v\in \mathsf{V}} \mathrm{e}^{b\mathsf{d}(v,0)}\mathbb{P}_v. 
\end{equation}
Now consider the time derivative
\begin{equation}\label{eq:dFdt=sumLP}
    \frac{\mathrm{d}}{\mathrm{d}t} (A|\mathrm{e}^{-\mathcal{L}t}\mathcal{F}\mathrm{e}^{\mathcal{L}t}|A) = -(A(t)| [\mathcal{L},\mathcal{F}]|A(t)) = -\sum_{v\in\mathsf{V}}\mathrm{e}^{b\mathsf{d}(v,0)}  (A(t)|[\mathcal{L},\mathbb{P}_v]|A(t)).
\end{equation}
The only terms that survive the commutator $[\mathcal{L},\mathbb{P}_v]$ are those where either $\mathcal{L}$ annihilates the non-trivial operator on site $v$, \emph{or} creates one:  \begin{equation}\label{eq:LP=LeP}
    [\mathcal{L},\mathbb{P}_v] = (I-\mathbb{P}_v)\mathcal{L}\mathbb{P}_v - \mathbb{P}_v\mathcal{L}(I-\mathbb{P}_v) = \sum_{e\in\partial v} [\mathcal{L}_e,\mathbb{P}_v].
\end{equation}
So \eqref{eq:dFdt=sumLP} becomes \begin{align}
    \frac{\mathrm{d}}{\mathrm{d}t} (A|\mathrm{e}^{-\mathcal{L}t}\mathcal{F}\mathrm{e}^{\mathcal{L}t}|A) &=-\sum_{e=\{u,v\}\in \mathsf{E}}  (A(t)|[\mathcal{L}_e,\mathrm{e}^{b\mathsf{d}(v,0)} \mathbb{P}_v+ \mathrm{e}^{b\mathsf{d}(u,0)} \mathbb{P}_u]|A(t)) \nonumber\\
    &= \sum_{e=\{u,v\}\in \mathsf{E}} \mathrm{e}^{b\mathsf{d}(v,0)} (A(t)| (\BP_v \CL_e (1-\BP_v) \BP_u - \BP_u (1-\BP_v)\CL_e \BP_v)|A(t)) + (u\leftrightarrow v)
\end{align}
Here in the second line, we have used \eqref{eq:LP=LeP} where $(I-\mathbb{P}_v)\mathcal{L}_e\mathbb{P}_v$, for example, gets an extra factor $\mathbb{P}_u$ in front because otherwise $\mathcal{L}_e$ just annihilates the identity operator.

If we define \begin{equation}
    a_v(t) := \sqrt{(A(t)|\mathbb{P}_v|A(t))},
\end{equation}
then observe that for $e=\lbrace u,v\rbrace$:\begin{equation}
    \labs{(A(t)| \BP_v \CL_e (1-\BP_v) \BP_u|A(t))} \le a_ua_v \cdot 2\lVert H_e\rVert \le \left(a_u^2+a_v^2\right) \cdot \lVert H_e\rVert.
\end{equation}
Then \begin{align}\label{eq:dFdf<dF}
    \frac{\mathrm{d}}{\mathrm{d}t} (A|\mathrm{e}^{-\mathcal{L}t}\mathcal{F}\mathrm{e}^{\mathcal{L}t}|A) &\le 2h \sum_{e=\{u,v\}\in \mathsf{E}} \lr{\e^{b \mathsf{d}_u}+\e^{b \mathsf{d}_v} }\left(a_u^2+a_v^2\right) 
    = 2h \sum_{v\in \mathsf{V}} a_v^2 \sum_{u:\mathsf{d}(u,v)=1}  \lr{\e^{b \mathsf{d}_u}+\e^{b \mathsf{d}_v} } \notag \\
    &\le 2h \sum_{v\in \mathsf{V}} a_v^2 \e^{b \mathsf{d}_v} (d+\e^{-b}+(d-1)\e^b) 
    = 2h(d+\e^{-b}+(d-1)\e^b)  (A|\mathrm{e}^{-\mathcal{L}t}\mathcal{F}\mathrm{e}^{\mathcal{L}t}|A).
\end{align}
In the second line's inequality, we have assumed the worst case where $v$ has only one neighbor $u$ that is closer to the initial vertex $0$; all other $d-1$ neighbors are farther. Exponentiating \eqref{eq:dFdf<dF} gives \eqref{eq:vB=min_b}.
\end{proof}

\subsubsection{One dimension}

In one dimension, we can get much stronger bounds by thinking only about the right most site on which an operator acts.  Let us define the projector \begin{equation}
    \mathbb{Q}_r|\lbrace a_i\rbrace) = \mathbb{I}\left[a_j=0 \text { if }j>r, a_r \ne 0 \right] |\lbrace a_i\rbrace). \label{eq:Q_r}
\end{equation}
Hence $\mathbb{Q}_r$ projects onto the rightmost site on which an operator acts (Figure~\ref{fig:Fnorm_1d}). Observe that if we define for any operator $|A(t))$, \begin{equation}
    |\psi_r(t)|^2 = (A(t)|\mathbb{Q}_r|A(t)),
\end{equation}
then a simple modification of the proof of Theorem \ref{thm:qw}, upon defining \begin{equation}
    \mathcal{F} = \sum_r \mathrm{e}^{br}\mathbb{Q}_r,
\end{equation}
exactly reproduces the calculation in Section \ref{sec:babyQW}.  In particular, we find that in one-dimensional models with nearest neighbor hopping (as in Example \ref{exam:1d_tight}), \begin{equation}
    v_{\mathrm{B}} \le 4 \max_{e\in\mathsf{E}}\lVert H_e \rVert.
\end{equation}
This suggests that there may indeed be a qualitative difference between Frobenius and Lieb-Robinson light cones.

We don't know whether there is any one-dimensional model for which the Lieb-Robinson velocity $2\mathrm{e}h \ge v_{\mathrm{LR}}>v_{\mathrm{B}}$.  In fact, upon careful inspection, we can find the velocity $2\mathrm{e}h$ in our Frobenius light cone bound!  Suppose that $r \gg 4ht$, and choose the value of $b$ which minimizes (\ref{eq:4bopt}), where $h$ should multiply by $2$ for the many-body case.  One finds that \begin{equation}
    (A(t)|\mathbb{Q}_r|A(t)) \lesssim \exp\left[ -2r \log \frac{r}{2\mathrm{e}ht}\right]. \label{eq:algebraicfrobenius}
\end{equation}
The only difference between this bound and our earlier Lieb-Robinson bound in Example \ref{exam:1d_tight} is the additional 2 in the exponential.  This arises from the inner product in $(A(t)|\mathbb{Q}_r|A(t)) $ -- the coefficient of $t^r/r!$ is \emph{squared} when evaluating the Frobenius light cone.

\label{sec:matrix_concentration}

\subsection{Entanglement and the Frobenius light cone}
We have seen in Section \ref{sec:entanglement_dyn} that von Neumann entanglement generation could be large even when operator growth is small (outside the Lieb-Robinson light cone).  One might, however, presume that the Frobenius light cone might be closely related to the second R\'enyi entropy (\ref{eq:renyi}), due to the fact that \begin{equation}
    (\rho|\mathbb{P}_{\mathsf{A}}|\rho) = \mathrm{e}^{-S_2(\mathsf{A})}.
\end{equation}
Indeed, \cite{Bentsen:2018uph,harrow2021} discuss how entanglement generation generally appears slower than operator growth; this is also the case in random circuit models (Section \ref{sec:ruc}). However, the following counterexample shows that operator growth may not always precede  entanglement generation.
\begin{exam}[Generating a maximally entangled state with little Frobenius operator growth]\label{exam:frobentangle}
Consider the same setting as Example \ref{ex:S1>} with 
$\epsilon=1-1/D.$ Then, \begin{equation}\label{eq:U0=max}
    U|00\rangle = \frac{1}{\sqrt{D}} \sum_{j=0}^{D-1} |jj\rangle,
\end{equation}
is maximally entangled. Yet for any local operator $ A $, its evolution $\CU A=U A U^\dagger$ satisfies \begin{equation}\label{eq:PF<1/D}
    \norm{\BP_{\sf B} \CU  A }_{\rm F} \le \norm{ A }_{\rm F} \mathrm{O}(D^{-1}).
\end{equation}
\end{exam}
Although this is a somewhat tedious calculation, we produce it in full since it has not appeared in the literature before (to our knowledge). 
\begin{proof}
 Since \eqref{eq:U0=max} comes from the direct calculation, we focus on proving \eqref{eq:PF<1/D}. Because $(1-\BP_{\sf B}) \CU$ is much easier to compute than $\BP_{\sf B} \CU$, we use the equivalence between \eqref{eq:PF<1/D} and \begin{equation}
    \norm{(1-\BP_{\sf B}) \CU  A }_{\rm F} \ge \norm{ A }_{\rm F} [1-\mathrm{O}(D^{-1})].
\end{equation}
Let $\{T_J\}_{J=0}^{D^2-1}$ be a normalized operator basis of $A$, so that the superoperator $(1-\BP_{\sf B}) \CU$ can be represented as a matrix $\CM_{J_1 J}$: \begin{equation}
    (1-\BP_{\sf B}) \CU T_{J_1} = \sum_{J} \CM_{J_1J} T_{J}.
\end{equation}
\eqref{eq:PF<1/D} is then further equivalent to the statement that the eigenvalues of $\CM$ are all of the form $1-\mathrm{O}(D^{-1})$.

We choose $T_J$ to be the matrix-element operators $|j\rangle\langle j'|$, and denote each $J$ by a pair $(j,j')$. We prove that the matrix $\CM$ has the following structure: (\emph{1}) if $j\neq j'$, \begin{equation}\label{eq:MJjj'}
        \CM_{J,jj'} = [1-\mathrm{O}(D^{-2})] \mathbb{I}(J=jj').
    \end{equation}
(\emph{2}) If $j\neq 0$, \begin{equation}\label{eq:Mjjjj}
        \CM_{j_1 j'_1,jj} = \mathbb{I}(j_1=j_1^\prime)\left[ [1-\mathrm{O}(D^{-1})] \mathbb{I}(j_1=j) + \mathrm{O}(D^{-2})\right].
    \end{equation}
For now, assume these properties are true; we now show that $\mathrm{eig}\CM=1-\mathrm{O}(D^{-1})$, and thus \eqref{eq:PF<1/D}. As a result of \eqref{eq:MJjj'}, $\CM$ is diagonal outside the subspace $\Span{(j,j):j=0,\cdots,D-1}$, with the diagonal entries being $1-\mathrm{O}(D^{-2})$. Thus, it remains to verify whether the sub-matrix $\CM^\prime_{j_1j} := \CM_{j_1j_1,jj}$  has $1-\mathrm{O}(D^{-1})$ eigenvalues. According to \eqref{eq:Mjjjj}, the diagonals of $\CM^\prime_{j_1j}$ are also $1-\mathrm{O}(D^{-1})$, while the off-diagonals are $\mathrm{O}(D^{-2})$. Then any eigenvalue of $\CM'_{j_1j}$ is indeed $1-\mathrm{O}(D^{-1})$, from the Gershgorin circle theorem \cite{hornbook}.

Lastly, we verify \eqref{eq:MJjj'} and \eqref{eq:Mjjjj} explicitly. $U$ acts on states by \eqref{eq:U0=max} and \begin{align}\label{eq:Ujj=}
    U\ket{jj} &= (I-\ket{\rm diag} \bra{\rm diag}) \ket{jj} + \lr{-\sqrt{1-\frac{1}{D}}\ket{00} + \sqrt{\frac{1}{D}} \ket{\rm diag} } \braket{{\rm diag} | jj } \nonumber\\
    &= \ket{jj} + {\rm O}(D^{-1/2}) \ket{00} + {\rm O}(D^{-1})\sum_{j'\neq 0} \ket{j'j'},
\end{align}
for $j\neq 0$. Then the action on $T_{jj'}$ ($j\neq j'$) is \begin{align}\label{eq:Ujj'ii}
    \CU \lr{|j\rangle\langle j'|\otimes \sum_{i}|i\rangle\langle i| } &= |j\rangle\langle j'|\otimes \sum_{i}|i\rangle\langle i| + (\mathcal{U}-\mathcal{I}) \left[\ket{jj}\bra{j'j} + \ket{jj'}\bra{j'j'}\right] \nonumber\\
    &= |j\rangle\langle j'|\otimes \sum_{i}|i\rangle\langle i| +  \left[(U-I)\ket{jj}\bra{j'j} + \ket{jj'}\bra{j'j'}(U^\dagger -I)\right],
\end{align}
where we have used $\bra{j'j}U^\dagger = \bra{j'j}$ for example in the second line.
When acting further with $1-\mathbb{P}_{\sf B}$, the first term is unchanged, while only the ${\rm O}(D^{-1}) \ket{jj}$ term of $(U-I)\ket{jj}$ survives ($j'$ similarly), so the result is proportional to $T_{jj'}$: 
\begin{align}\label{eq:1-PB=1/D}
    (1-\mathbb{P}_{\sf B})\CU \lr{|j\rangle\langle j'| } = |j\rangle\langle j'| \mlr{ 1+ {\rm O}(D^{-1}) \frac{1}{D}\tr_{\sf B}(\ket{j}\bra{j}) } = |j\rangle\langle j'| \mlr{ 1+ {\rm O}(D^{-2})}.
\end{align}

Therefore \eqref{eq:MJjj'} holds. Similar arguments also show that the diagonal elements of the sub-matrix $\CM'_{j_1j}$ is $1-{\rm O} (D^{-1})$, so what remains is to prove its off-diagonals are ${\rm O}(D^{-2})$. Similar to \eqref{eq:Ujj'ii}, \begin{equation}
    \CU \lr{T_{jj} } = \cdots + U \ket{jj}\bra{jj} U^\dagger = \cdots + \lr{\alpha_j \ket{jj} + {\rm O}(D^{-1/2}) \sum_{i\neq j}\ket{ii} } \lr{\alpha_j \bra{jj} + {\rm O}(D^{-1/2}) \sum_{i\neq j}\bra{ii} },
\end{equation}
where $\cdots$ only contributes to diagonals, and we have combined \eqref{eq:U0=max} and \eqref{eq:Ujj=}. After expanding the product and project by $1-\mathbb{P}_{\sf B}$, the cross terms $\ket{ii}\bra{i'i'}$ ($i\neq i'$) are eliminated. Furthermore, $\ket{jj}\bra{jj}$ only contributes to diagonals, so off-diagonals only come from ${\rm O}(D^{-1})\sum_i \ket{ii}\bra{ii}$, which gain an extra ${\rm O}(D^{-1})$ similar to \eqref{eq:1-PB=1/D}. Thus the off-diagonals are indeed ${\rm O}(D^{-2})$.
\end{proof}

\subsection{Hamiltonians with random coefficients}\label{sec:random_sum_path}
In Section~\ref{sec:all-to-all}, we will encounter Hamiltonians with random coefficients, such as the Sachdev-Ye-Kitaev model. Applying the deterministic bounds for random Hamiltonian often yields unphysical results:  in practice, operator growth is ``incoherent'', but the Lieb-Robinson bounds use the triangle inequality throughout, which adds terms ``coherently''. Capturing the effects of classical (external) randomness in the Hamiltonian in an operator growth bound has recently become possible.  And, remarkably, it will turn out that these methods are often valuable even when there is no \textit{intrinsic} randomness in the problem!

\subsubsection{Matrix concentration bounds}
We present an instructive example that captures the essential problems with Lieb-Robinson bounds when we are interested in typical state behavior \cite{Chen2021ConcentrationFT}.  Consider
\begin{align}
    H = Z_1 + \cdots + Z_N,
\end{align}
where each Pauli $Z_i$ is supported on qubit $i$. The ``size'' of this matrix depends on the question of interest. The spectral (infinity) norm gives the largest eigenvalue in magnitude:
\begin{align}
    \lnorm{Z_1 + \cdots +Z_N} = N.
\end{align}
The Frobenius norm gives the average magnitude of eigenvalues:
\begin{align}
    \lnormp{Z_1 + \cdots + Z_N}{\mathrm{F}} = \sqrt{N}.
\end{align}
In other words, the worst case is qualitatively different from the average case. In fact, in this problem, the eigenvalue distribution is equivalent to the probability distribution of a sum of independent random variables $ S_N:= x_1+\cdots+x_N$ each drawn from the Rademacher distribution $\mathbb{P}(x_i=1)=\mathbb{P}(x_i=-1)=1/2$. 
Now, we may call a \textit{concentration inequality} to describe how rarely the random variable deviates from its expectation
\begin{align}
\mathbb{P}( \lambda_i \ge \epsilon) \equiv   \mathbb{P}( S_N \ge \epsilon) \le \e^{-\epsilon^2/2N}  \label{eq:hoeff}.
\end{align}
Therefore, the \textit{typical} magnitude of eigenvalues $ \labs{\lambda}= \mathrm{O}(\sqrt{N}) \ll N$ is much smaller than the extreme eigenvalues. 
This simple example illustrates that the ``size'' of high-dimensional objects could behave quite differently depending on the norm; this distinction could lead to drastically different implications (e.g., in power-law interacting systems in Section~\ref{sec:power-law}).

To derive concentration for more complicated matrix functions, we highlight a family of recursive inequalities for their Schatten $p$-norms, which proved extremely versatile.
\begin{prop}[Uniform smoothness for subsystems {\cite{Chen2021ConcentrationFT,ricardXu16}}]\label{prop:unif_subsystem_intro}
Consider matrices $A, B \in \CB(\CH_i\otimes\CH_j)$ that satisfy the so-called non-commutative martingale condition \begin{align}
\tr_i(B) = 0\quad \text{and}\quad A= A_j\otimes I_i.    
\end{align}
Then, for $p \ge 2$, 
\begin{equation}
\lVert A + B\rVert_{p}^2\le \lVert A \rVert_{p}^2  + (p-1)\lVert B\rVert_{p}^2.
\end{equation}
\end{prop}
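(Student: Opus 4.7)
The plan is to reduce the claim to the noncommutative $2$-uniform smoothness of Schatten $p$-spaces (for $p \ge 2$), where the hypothesis $\tr_i(B)=0$ plays the essential role in upgrading the generic symmetric two-point estimate to the asymmetric form with the sharp constant $p-1$. First I would recast the two hypotheses via the normalized conditional expectation $\CE_j(X) := \tr_i(X)/\tr(I_i) \otimes I_i$, which is a trace-preserving projection onto the subalgebra $\CB(\CH_j) \otimes I_i$ and is a contraction in every Schatten $p$-norm. The hypotheses then become $\CE_j(A) = A$ and $\CE_j(B) = 0$, so $B$ is an $\CE_j$-martingale difference while $A$ lies in the image of $\CE_j$.

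As a warm-up and sanity check I would dispatch the base case $p=2$: by cyclicity of the trace, $\tr(A^\dagger B) = \tr_j\!\bigl(A_j^\dagger\, \tr_i(B)\bigr) = 0$, so the Hilbert-Schmidt Pythagorean identity gives $\lVert A+B\rVert_2^2 = \lVert A\rVert_2^2 + \lVert B\rVert_2^2$, matching the claim since $p-1=1$. This already makes clear that the martingale condition is what rules out the cross term in the expansion of $\lVert A+B\rVert_p^2$.

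For general $p \ge 2$, I would parametrize $f(t) := \lVert A + tB\rVert_p^2$ for $t \in \mathbb{R}$ and prove $f(1) \le f(0) + (p-1)\lVert B\rVert_p^2$ via a second-order Taylor estimate. This reduces the claim to (\emph{i}) $f'(0) = 0$ and (\emph{ii}) $f''(t) \le 2(p-1)\lVert B\rVert_p^2$ uniformly in $t$, after which Taylor with integral remainder gives
$$f(1) = f(0) + f'(0) + \int_0^1 (1-s)\, f''(s)\, ds \le \lVert A\rVert_p^2 + (p-1)\lVert B\rVert_p^2.$$
Step (\emph{i}) is where the martingale condition enters: the chain rule applied to $\lVert X\rVert_p^p = \tr(|X|^p)/\tr(I)$ yields $f'(0) \propto \mathrm{Re}\,\tr\!\bigl(|A|^{p-2} A^\dagger B\bigr)$, and since $A = A_j \otimes I_i$ the operator $|A|^{p-2}A^\dagger$ also lives in $\CB(\CH_j) \otimes I_i$, so the trace collapses to $\tr_j\!\bigl(M_j \cdot \tr_i(B)\bigr) = 0$.

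The main obstacle is step (\emph{ii}), which is precisely the noncommutative $2$-uniform smoothness constant of the Schatten $p$-norm. I would import this from the Ball-Carlen-Lieb inequality (or equivalently from the Ricard-Xu noncommutative martingale framework), which in its symmetric two-point form states
$$\lVert X+Y\rVert_p^2 + \lVert X-Y\rVert_p^2 \le 2\lVert X\rVert_p^2 + 2(p-1)\lVert Y\rVert_p^2$$
for all $X, Y$ and $p \ge 2$. Applying this to $X = A + tB$ and $Y = sB$ and using the identity $f''(t)=\lim_{s\to 0} s^{-2}[f(t+s)+f(t-s)-2f(t)]$ extracts $f''(t) \le 2(p-1)\lVert B\rVert_p^2$. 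The Ball-Carlen-Lieb inequality itself is the genuinely hard ingredient; its proof rests on operator convexity of $t\mapsto t^{p/2}$ for $p\ge 2$ and is technical, but it is classical by now and can be invoked as a black box. Without the orthogonality gained from (\emph{i}), this symmetric estimate would only yield the weaker bound $\lVert A+B\rVert_p^2 \le 2\lVert A\rVert_p^2 + 2(p-1)\lVert B\rVert_p^2$, so the synergy between the martingale condition and uniform smoothness is what produces the sharp constants in the statement.
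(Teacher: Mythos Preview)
The paper does not supply its own proof of this proposition; it is stated with citations to \cite{Chen2021ConcentrationFT,ricardXu16} and some historical remarks, and then used as a black box. So there is nothing in the paper to compare your argument against.

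On its own merits, your approach is correct and is essentially the standard one in the cited literature: invoke the Ball--Carlen--Lieb two-point inequality as the hard analytic input, use the martingale condition $\tr_i(B)=0$ together with $A=A_j\otimes I_i$ to kill the first-order term $f'(0)$, and then integrate the second-order control. Your computation that $|A|^{p-2}A^\dagger$ lies in $\CB(\CH_j)\otimes I_i$, so that the trace pairing with $B$ collapses to $\tr_j(\,\cdot\,\tr_i(B))=0$, is exactly the point.

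One small technical refinement worth making: your Taylor-with-remainder step requires $f\in C^2$, which can be delicate at parameters $t$ where $A+tB$ is singular. A cleaner route, and the one usually taken in the references, is to observe that the two-point inequality applied to $X=A+tB$, $Y=sB$ shows that
\[
\phi(t):=\lVert A+tB\rVert_p^2-(p-1)t^2\lVert B\rVert_p^2
\]
is midpoint concave, hence concave by continuity. Then $\phi(1)\le\phi(0)+\phi'_+(0)$ using only the one-sided derivative of a concave function, and your first-variation computation gives $\phi'_+(0)=f'(0)=0$ (for $A\ne 0$; the case $A=0$ is trivial). This yields the inequality without ever touching $f''$.
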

Remarkably, the martingale condition is compatible with a wide range of matrices beyond independent sums. At the same time, uniform smoothness delivers \textit{sum-of-squares} behavior (analogous to independent sums) that contrasts with the triangle inequality, which is \textit{linear}
\begin{align}
    \norm{A+B} \le \norm{A}+\norm{B}.
\end{align}
This difference underpins the essential distinction between the worst and typical cases. 

For random Hamiltonians, the flavor of the problem changes slightly; we can think of adding independent Gaussian coefficients in our guiding example
\begin{align}
     H = g_1 Z_1 + \cdots + g_N Z_N.
\end{align}
The Gaussian coefficient (i.e., external randomness) requires the following version of uniform smoothness regarding the expected $p$-norm $\vertiii{A}_{p}:= (\BE[ \norm{A}_p^p ] )^{1/p}$ that will allow us to control the spectral norm by setting $p \gtrsim \log(\dim H)\sim N$. 

\begin{prop}[{Uniform smoothness for expected p-norm~\cite[Proposition~4.3]{HNTR20:Matrix-Product}}] \label{prop:uniform_smoothness_expected_p}
Consider random matrices $A, B$ of the same size that satisfy
$\BE[A|B] = 0$. When $2 \le p$,
\begin{equation}
\vertiii{A+B}_{p}^2 \le \vertiii{A}_{p}^2  + (p-1)\vertiii{B}_{p}^2 .
\end{equation}
\end{prop}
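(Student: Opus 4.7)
The plan is to lift the deterministic Proposition (uniform smoothness for subsystems) to the random-matrix setting by treating the classical probability space as an additional tensor factor in an enlarged tracial von Neumann algebra. On that enlarged algebra, $\vertiii{\cdot}_p$ becomes an honest noncommutative $L^p$-norm and the conditional expectation $\mathbb{E}[\cdot|B]$ becomes a trace-preserving conditional expectation onto a subalgebra, so the deterministic inequality transports directly.

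Concretely, let $\Omega$ denote the probability space underlying the randomness and set $\mathcal{N} := L^\infty(\Omega)\bar\otimes\mathcal{B}(\mathcal{H})$, equipped with the faithful normal tracial state $\mathbb{E}\otimes(\tr/\tr[I])$. Then $\vertiii{X}_p = \|X\|_{L^p(\mathcal{N})}$ by construction. The classical conditional expectation lifts to a trace-preserving conditional expectation $\mathcal{E}_B \colon \mathcal{N} \to \mathcal{N}_B := L^\infty(\sigma(B))\bar\otimes\mathcal{B}(\mathcal{H})$. In this dictionary, $B$ sits inside $\mathcal{N}_B$ (the ``measurable'' component) while the hypothesis $\mathbb{E}[A|B]=0$ becomes $\mathcal{E}_B(A) = 0$, so $A$ is orthogonal to $\mathcal{N}_B$ (the ``martingale difference''). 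I would then invoke the general noncommutative 2-uniform smoothness for tracial conditional expectations (Tomczak-Jaegermann / Ball-Carlen-Lieb / Ricard-Xu): for $p\ge 2$, any $X\in L^p(\mathcal{N}_B)$, and any $Y$ with $\mathcal{E}_B(Y)=0$, one has $\|X+Y\|_{L^p(\mathcal{N})}^2 \le \|X\|_{L^p(\mathcal{N})}^2 + (p-1)\|Y\|_{L^p(\mathcal{N})}^2$. Setting $X=B$ and $Y=A$ and translating back into $\vertiii{\cdot}_p$ yields the advertised bound, with the usual convention that the factor $(p-1)$ accompanies the mean-zero term.

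The main obstacle is passing from the very special conditional expectation in the deterministic Proposition --- namely the normalized partial trace $\frac{1}{\dim\mathcal{H}_i}\tr_i(\cdot)\otimes I_i$ onto a tensor factor --- to the general conditional expectation $\mathcal{E}_B$ onto the possibly non-tensor subalgebra $\mathcal{N}_B$. This step requires the noncommutative 2-uniform smoothness inequality in its general form for tracial conditional expectations, which is a standard but nontrivial ingredient (for instance, it follows from Ricard-Xu's noncommutative martingale inequalities, or from the Hanner-type inequality of Ball-Carlen-Lieb combined with a polarization argument). Once that is in hand, the rest is bookkeeping: checking that $(\mathcal{N},\mathbb{E}\otimes\tau)$ is a (semifinite) tracial von Neumann algebra --- immediate because $\mathbb{E}$ is a probability measure --- that $\mathcal{E}_B$ is normal and trace-preserving, and that $\vertiii{\cdot}_p$ agrees with the noncommutative $L^p(\mathcal{N})$-norm under the paper's normalization.
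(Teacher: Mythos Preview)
The paper does not supply its own proof of this proposition: it is quoted as \cite[Proposition~4.3]{HNTR20:Matrix-Product} and the surrounding text only gives historical context (Tomczak-Jaegermann, Ball--Carlen--Lieb, Ricard--Xu, Naor). So there is nothing to compare against directly; what one can say is whether your sketch matches the standard arguments the paper is pointing to.

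Your route --- realize $\vertiii{\cdot}_p$ as the $L^p$-norm on the tracial algebra $L^\infty(\Omega)\bar\otimes\mathcal{B}(\mathcal{H})$, view conditioning on the relevant $\sigma$-algebra as a trace-preserving conditional expectation, and invoke the general noncommutative $2$-uniform-smoothness inequality with optimal constant $\sqrt{p-1}$ --- is exactly the framework these references use. You are right that the only substantive input is the general conditional-expectation form of uniform smoothness (Ball--Carlen--Lieb for matrices, Ricard--Xu in general); once that is accepted the rest is indeed bookkeeping. The cited proof in \cite{HNTR20:Matrix-Product} is organized slightly differently (it phrases things via the $L^p(\Omega;S_p)$ Banach space and the associated martingale smoothness inequality rather than von Neumann algebra language), but the content is the same.

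One labeling issue: as written in the paper the hypothesis is $\BE[A\,|\,B]=0$, which makes $A$ the martingale difference, yet the factor $(p-1)$ sits on $\vertiii{B}_p^2$. Your sketch quietly corrects this by putting $(p-1)$ on the mean-zero term, which is the correct inequality; the discrepancy is a typo in the paper (the intended hypothesis is $\BE[B\,|\,A]=0$, matching the deterministic Proposition~\ref{prop:unif_subsystem_intro} where $B$ is the traceless piece). Just be explicit about this rather than sweeping it into ``the usual convention.''
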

Compared with Proposition~\ref{prop:unif_subsystem_intro}, the above contains both classical randomness (the expectation) and quantum randomness (the trace), which is especially suitable for matrices with random coefficients. Historically, uniform smoothness (both Proposition~\ref{prop:unif_subsystem_intro} and Proposition~\ref{prop:uniform_smoothness_expected_p}) is a descendant of the scalar \textit{two-point inequality} or \textit{Bonami's inequality}~\cite{garling_2007}, which features in, e.g., Boolean analysis~\cite{ODonnell2014AnalysisOB}. The matrix version was first derived~\cite{Tomczak1974} and later rewritten in the above form, leading to simple derivations of matrix concentration for martingales~\cite{HNTR20:Matrix-Product,naor_2012}. We expect these robust inequalities to find applications in numerous quantum information settings, by exploiting the tensor product structure of the Hilbert space or by the random coefficients: e.g. when studying power-law interacting systems in Section~\ref{sec:power-law}, Trotter error~\cite{Chen2021ConcentrationFT}, dynamics with random Hamiltonians in Section~\ref{sec:random_sum_path}, or randomized quantum simulation~\cite{chen2020quantum}.

\subsubsection{Bounds based on self-avoiding paths}
Combining matrix concentration inequalities (Section~\ref{sec:matrix_concentration}) with the self-avoiding path (Section \ref{sec:self_avoid}) yields operator growth bounds for random Hamiltonians in the Frobenius norm.

\begin{theor}[Operator growth bounds for random time-independent Hamiltonian~\cite{chen21}]
\label{thm:informal}
Consider a random time-independent 2-local Hamiltonian where the terms $H_e$ are independent for $e\in\mathsf{E}$,  zero mean: $\BE [H_e] =0$, and bounded almost surely: $\lV H_e\rV\le b_e$.
Then, for any normalized operators $\norm{A}=\norm{B} = 1$ supported on subsets $\mathsf{A}$ and $\mathsf{B}$ respectively, the Frobenius norm of the commutator can be bounded by a weighted \textit{incoherent} sum over self-avoiding paths of interactions $\Gamma=\{\Gamma_\ell,\ldots,\Gamma_1\}$, as in \eqref{eq:Self-avoiding}:
\begin{align}
 \BE \left[\lnormp{ [A(t), B ]}{\rm F}^2 \right]\le 4\sum_{\text{self-avoiding paths }\Gamma}\ \int\limits_{\substack{t<t_{\ell}<\cdots <t_1\\ \ell = \labs{\Gamma}}} \mathrm{d}t_\ell\cdots \mathrm{d}t_1 \prod_{1\le k\le \ell} \left(e^{\beta_k(t_{k+1}-t_k)} \frac{8 b_{\Gamma_k}^2}{\beta_k} \right)
\end{align}
where $\beta_{k} = \beta (\Gamma_{\ell},\cdots,\Gamma_{k})$ are tunable parameters that can depend on the path and $t_{\ell+1}:=t$. 
\end{theor}
Compared with the deterministic bounds (Theorem~\ref{thm:self-avoiding}), the expression above has interaction strengths appearing in squares $b_X^2$, entailing the incoherence across different terms. The tunable parameters $\beta_k$ are slightly distracting, but a convenient choice often suffices. The above bound naturally extends to the case with $p$-norms which leads to sharper concentration (Section~\ref{sec:all-to-all}) and Brownian circuits where the randomness is both spatial and temporal~\cite{chen21}.

\section{Systems with all-to-all interactions and holographic quantum gravity}\label{sec:all-to-all}
In this section, we will turn to the study of Hamiltonians that can include few-body interactions between a small number of the $N$ total degrees of freedom at a time.   In particular, we will often focus on $k$-local models of the form \begin{equation}
    H = \sum_{j=1}^k \sum_{i_1<i_2<\cdots < i_k} h^{a_1\cdots a_k}_{i_1\cdots i_k}(t) X_{i_1}^{a_1}\cdots X_{i_k}^{a_k} \label{eq:klocalH}
\end{equation}
that couple together $N$ qubits.  Here we assume that each $h^{a_1\cdots a_k}_{i_1\cdots i_k}(t) = \mathrm{O}(1)$.   Such models are the extreme opposite of the spatially local models we have studied thus far, and their study will lead to the ultimate limits on what quantum systems can achieve using few-body interactions.  There is hope that a future quantum computer might be able to implement such generic kinds of models (at the cost of many local qubits used for teleportation, as in Section \ref{sec:measurement}), although one may also wish to use photons \cite{leroux} or trapped ions \cite{Britton2012} to realize non-local couplings.  However (\ref{eq:klocalH}) is achieved, we come across interesting theoretical questions both in quantum information and, interestingly enough, in high energy physics.  In this section, we will describe how Lieb-Robinson and Frobenius bounds can be generalized to such systems, despite the lack of spatial locality.

\subsection{Fast state preparation}

We first discuss preparing globally entanglement many-body states from product states. We consider the GHZ state $\ket{\rm GHZ}$ \eqref{eq:GHZ} and the W state \cite{W_state} \begin{equation}
    \ket{\rm W} := \frac{1}{\sqrt{N}}\lr{\ket{10\cdots 0} + \ket{010\cdots 0} + \cdots + \ket{0\cdots 01}}, \label{eq:W}
\end{equation}
as two examples. The fastest protocols, to our knowledge, are the following:

\begin{exam}[A ${\rm O}(1)$-time GHZ state preparation protocol]\label{exam:GHZ_all2all}
The Hamiltonian \begin{equation}\label{eq:H_GHZ}
    H = (I-Z_1) \sum_{j=2}^N (I-X_j),
\end{equation}
prepares GHZ state in time $t=\pi/4$, starting from $\ket{+}_1\otimes \ket{\bm 0}_{2\cdots N}$ where $\ket{+}=(\ket{0}+\ket{1})/\sqrt{2}$.
\end{exam}
\begin{proof}
Observe that \begin{equation}
    \ket{\rm GHZ} = U_{1N}\cdots U_{13}U_{12} \ket{+}_1\otimes \ket{\bm 0}_{2\cdots N},
\end{equation}
where \begin{equation}
    U_{1j} = \e^{-\ri \frac{\pi}{4}(I-Z_1)(I-X_j)},
\end{equation}
is the controlled-NOT (CNOT) gate on qubit $1$ and $j$. Since all $U_{1j}$s commute, they combine to $U_{1N}\cdots U_{13}U_{12}=\e^{-\ri \pi H/4}$ with $H$ given in \eqref{eq:H_GHZ}.
\end{proof}

\begin{exam}[A ${\rm O}(N^{-1/2})$-time W state preparation protocol \cite{strongly_gorshkov}]\label{exam:GHZ}
Consider Hamiltonian \begin{equation}
    H = \ri X^-_1 \sum_{j=2}^N X^+_j + (\rm H.c.),
\end{equation}
where $X^\pm := (X\pm \ri Y)/2$.
Starting from $\ket{1}_1\otimes \ket{\bm 0}_{2\cdots N}$, we obtain the W state after time $t=\frac{1}{\sqrt{N-1}}\arccos(1/\sqrt{N})$.
\end{exam}

\begin{proof}
By direct calculation, \begin{align}
    H\ket{1}_1\otimes \ket{\bm 0}_{2\cdots N} &= \ri \ket{0}_1\otimes \sum_{j=2}^N X^+_j \ket{\bm 0}_{2\cdots N} = \ri \sqrt{N-1} \ket{0}_1\otimes \ket{\rm W}_{2\cdots N}, \\
    \quad H\ket{0}_1\otimes \ket{\rm W}_{2\cdots N} &= -\ri\ket{1}_1\otimes \sum_{j=2}^N X^-_j \ket{\rm W}_{2\cdots N} = -\ri\sqrt{N-1} \ket{1}_1\otimes \ket{\bm 0}_{2\cdots N}.
\end{align}
Thus, $H/\sqrt{N-1}$ acts as the Pauli Y matrix in $\Span{\ket{1}_1\otimes \ket{\bm 0}_{2\cdots N}, \ket{0}_1\otimes \ket{\rm W}_{2\cdots N}}$. Since for a 2-level system \begin{equation}
    \e^{-\ri \theta Y}\ket{0} = \cos\theta \ket{0} + \sin \theta \ket{1},
\end{equation}
we choose $\sqrt{N-1}t = \theta = \arccos(1/\sqrt{N})$ so that \begin{equation}
    \e^{-\ri t H}\ket{1}_1\otimes \ket{\bm 0}_{2\cdots N} = \frac{1}{\sqrt{N}} \ket{1}_1\otimes \ket{\bm 0}_{2\cdots N} + \sqrt{\frac{N-1}{N}} \ket{0}_1\otimes \ket{\rm W}_{2\cdots N} = \ket{\mathrm{W}},
\end{equation}
which indeed prepares the W-state.
\end{proof}

See \cite{GHZ_all2all19} for another GHZ protocol with ${\rm O}(1)$ time. 
It is an open question whether the above protocols are asymptotically optimal in terms of the $N$ scaling of time $t$. There is a separation between them and the best-known lower bound. For example, the operator growth bound on entanglement generation in Proposition \ref{prop:P>S2} yields $t={\rm \Omega}(1/N)$ for preparing GHZ. The following Proposition achieves a $\log N$ factor improvement by assuming a local qubit of information (instead of a single product state) is encoded in the $N$-qubit repetition code:
\begin{equation}\label{eq:U_encode_GHZ}
    U \lr{\alpha \ket{0} +\beta \ket{1}}_1 \otimes \ket{\bm 0}_{2,\cdots,N} = \alpha\ket{\bm 0} + \beta \ket{\bm 1},
\end{equation}
Note that Example \ref{exam:GHZ_all2all} fulfills this condition.

\begin{prop}[A lower bound on preparing GHZ using all-to-all interaction]
\label{prop:ghzprep}
For any $H$ of the form (\ref{eq:klocalH}) with $k=2$, if $U=\e^{-\ri t H}$ prepares GHZ in the sense of \eqref{eq:U_encode_GHZ}, then \begin{equation} \label{eq:t>logN/N}
    t={\rm \Omega}\lr{\frac{\log N}{N}}.
\end{equation}
\end{prop}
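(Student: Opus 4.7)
The plan is to bound the Heisenberg-evolved operator $B(t) := U X_1 U^\dagger$ from two sides: an upper bound on how quickly its average operator size $\bar{s}(B(t)) := (B(t)\mid \mathcal{S}\mid B(t))$ can grow under any 2-local all-to-all Hamiltonian, and a lower bound forced by the encoding condition \eqref{eq:U_encode_GHZ}. Combining the two yields a scrambling-style lower bound on $t$. The guiding intuition is the classical broadcast analogy: spreading information from qubit 1 to all $N$ qubits should require exponential doubling at rate $N$, giving $t \gtrsim \log N / N$.

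For the upper bound, I would adapt the quantum-walk/Frobenius-norm technique of Section~\ref{sec:frobenius}. For a $2$-local $H$ with $|h_{ij}^{ab}| \le 1$, each Pauli term $X_i^a X_j^b$ can change the size of a Pauli string by at most $2$, and for an operator of current support size $s$ there are at most $s(N-s) + \binom{s}{2} = \mathrm{O}(sN)$ pairs $(i,j)$ whose commutator with the string is non-trivial. Bounding $|(B\mid[\mathcal{L},\mathcal{S}]\mid B)|$ term by term via Cauchy--Schwarz on the Frobenius inner product then yields $\bigl|\frac{\mathrm{d}}{\mathrm{d}t}\bar{s}(B(t))\bigr| \le C N \,\bar{s}(B(t))$ for some absolute constant $C$. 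Since $\bar{s}(B(0)) = \bar{s}(X_1) = 1$, integration gives $\bar{s}(B(t)) \le \mathrm{e}^{CNt}$.

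For the lower bound, the encoding \eqref{eq:U_encode_GHZ} forces $B(t)$ to act as the logical bit-flip on the GHZ code space: $B\ket{\mathbf{0}} = \ket{\mathbf{1}}$, $B\ket{\mathbf{1}} = \ket{\mathbf{0}}$. For each $j \in \{2,\ldots,N\}$, I apply Proposition~\ref{prop:P>S2} to the bipartition $\{j\} \mid \{j\}^{\mathrm{c}}$ with $\ket{\psi_{\mathsf{A}}} = \ket{0}_j$, noting that the reduced state on $\{j\}$ in the target GHZ is maximally mixed so $S_2 = \log 2$; this gives a single-qubit operator $A_j$ on $\{j\}$ with $\|\mathbb{P}_{\{j\}^{\mathrm{c}}}\mathcal{U}A_j\| = \mathrm{\Omega}(1)$. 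Dualising this into a statement about $\|\mathbb{P}_j B(t)\|_{\mathrm{F}}^2$ (the Frobenius weight $B$ places on site $j$) and summing over $j$ using Proposition~\ref{prop:operatorsize} should give $\bar{s}(B(t)) \ge \mathrm{\Omega}(N)$. Combining with the upper bound, $\mathrm{e}^{CNt} \ge \mathrm{\Omega}(N)$, i.e.\ $t \ge \mathrm{\Omega}(\log N / N)$.

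The hardest part will be the lower bound on $\bar{s}(B(t))$: Proposition~\ref{prop:P>S2} is stated in the operator norm, whereas operator size is Frobenius-normed, and naive conversion loses factors that can be exponentially large in $N$. A cleaner route may be to track the dual Heisenberg operators $U^\dagger Z_j U$ directly and exploit the fact that each one must agree with $Z_1$ on the two-dimensional initial code space spanned by $\ket{0}_1\otimes\ket{\mathbf{0}}$ and $\ket{1}_1\otimes\ket{\mathbf{0}}$. This should force $\|[X_1, U^\dagger Z_j U]\|_{\mathrm{F}}^2 = \mathrm{\Omega}(1)$ on a code-space-weighted inner product for each $j$, and then by unitary invariance of the Frobenius norm translate into $\sum_j \|\mathbb{P}_j B(t)\|_{\mathrm{F}}^2 = \mathrm{\Omega}(N)$ as needed; alternatively one can weaken the claim to a lower bound on $\|B(t) - \tilde B\|_{\mathrm{F}}$ for any operator $\tilde B$ of support $o(N)$, which suffices for the final exponential comparison.
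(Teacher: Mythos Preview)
Your approach differs substantially from the paper's, and the lower-bound step contains a genuine gap that the paper itself flags as an open problem.

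The paper does \emph{not} use Frobenius operator size. Instead it works entirely in operator norm: from the encoding condition \eqref{eq:U_encode_GHZ} one gets $\lVert [U X_1 U^\dagger, Z_2]\rVert = 2$ (logical $X$ and $Z$ for the repetition code anticommute), hence $C_{12}(t)=1$. Then the standard Lieb--Robinson sum-over-paths bound (Theorem~\ref{thm:sum_of_paths}) is applied with the coupling matrix $h = aJ + bNI$, where $J$ is the all-ones $N\times N$ matrix. Diagonalising $h$, the uniform vector $\ket{\psi}=(1,\ldots,1)^{\mathrm T}/\sqrt{N}$ carries the large eigenvalue, and because $\braket{2|\psi}\braket{\psi|1}=1/N$ one obtains $(\e^{2th})_{12} = \tfrac{1}{N}\bigl(\e^{2t(a+b)N}-\e^{2tbN}\bigr)$. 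Demanding this be at least $1$ gives $t=\Omega(\log N/N)$. The extra $\log N$ over the naive $\Omega(1/N)$ comes precisely from this $1/N$ eigenvector-overlap factor, not from any Frobenius-norm argument.

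Your upper bound $\bar s(B(t)) \le \e^{CNt}$ is fine. The problem is the lower bound $\bar s(B(t)) \ge \Omega(N)$: the encoding condition only constrains $B(t)=UX_1U^\dagger$ on the two-dimensional code space $\mathrm{span}\{\ket{\mathbf 0},\ket{\mathbf 1}\}$, which is an exponentially small fraction of Hilbert space. Nothing prevents $B(t)$ from looking like $X_1\cdots X_N \cdot \tfrac{1+Z_1Z_2}{2}\cdots + (\text{small Paulis on the complement})$, which acts correctly as logical $X$ on the code but has exponentially small Frobenius weight on large Pauli strings. The alternatives you sketch---Proposition~\ref{prop:P>S2}, or tracking $U^\dagger Z_j U$ and bounding $\lVert [X_1, U^\dagger Z_j U]\rVert_{\mathrm F}$---all run into the same issue: the constraint is only on a $2$-dimensional subspace, while the Frobenius norm averages over all $2^N$ states, so the naive conversion loses a factor $2^{-N}$. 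The paper makes exactly this point in the paragraph immediately following the proof, noting that whether Frobenius growth is relevant for GHZ preparation ``is an important open problem.''
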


\begin{proof}
We prove for 2-local Hamiltonians.  Similar to Proposition \ref{prop:transfer norm}, we have \begin{equation}\label{eq:GHZ_XZ=2}
    \left\lVert  \comm{ U X_1 U^\dagger  }{ Z_2} \right\rVert   = 2,
\end{equation}
because $U X_1 U^\dagger$ and $Z_2$ are logical operators for the repetition code. Note that \eqref{eq:GHZ_XZ=2} holds for any $Z_j$, while only one $Z_{j\neq 1}$ is sufficient to yield our bound. 

We want to utilize Theorem \ref{thm:sum_of_paths} since \eqref{eq:GHZ_XZ=2} implies $C_{12}(t)=1$. The matrix $h$ defined by \eqref{eq:huv=Huv} takes the form $h=a J + b N I$, where $a,b$ are ${\rm O}(1)$ constants, and $J$ is the $N$-by-$N$ matrix with all elements equal to $1$. Note that we do not need $b=-1/N$ from the self-avoiding path techniques in Section \ref{sec:self_avoid}, since it does not change the final scaling. 
The matrix $h$ can be easily diagonalized: it has eigenvalue $aN+bN$ for the state $\ket{\psi}=\lr{1,\cdots,1}^{\rm T}/\sqrt{N}$, and eigenvalue $bN$ for all other orthogonal states. Then Theorem \ref{thm:sum_of_paths} yields \begin{align}
    C_{12}(t) \le \lr{\e^{2th}}_{12} = \e^{2t(a+b)N} \braket{2|\psi} \braket{\psi|1} + \e^{2tbN} \braket{2|(I-\ket{\psi}\bra{\psi})|1} = \frac{1}{N} \lr{\e^{2t(a+b)N}-\e^{2tbN} },
\end{align}
which implies that \eqref{eq:t>logN/N} holds whenever $C_{12}(t)=1$, as it must be to prepare the GHZ state. Here we have used the smallness of $\braket{2|\psi}= \braket{\psi|1}=1/\sqrt{N}$ to get the extra $\log N$ factor.
\end{proof}

A stronger bound $t={\rm \Omega}(N^{-1/2})$ holds for Frobenius norm growth \cite{yin20}, in comparison to the operator norm growth \eqref{eq:GHZ_XZ=2}. However, it is not obvious such operator growth should be relevant for state preparation.  While naively, the logical $X_1$ must grow to $X_1\cdots X_N$ after time evolution, it only must have such a long Pauli string on states stabilized by $Z_1Z_2$, etc., meaning that we could have $X_1(t) = X_1\cdots X_N \frac{1+Z_1Z_2}{2}\cdots + \cdots $ which has exponentially small Frobenius weight.  Making progress on this question is an important open problem.

Finally, one can also consider circuit models (instead of Hamiltonian models) with all-to-all connectivity. Assuming each qubit is acted on by at most one local gate at each time step (i.e., do not parallelize overlapping yet commuting gates as in Example \ref{exam:GHZ}), the depth of the circuit needs to be ${\rm \Omega}(\log N)$ for preparing any state that is globally entangled. The reason is simply that each qubit needs to build up correlation with all other qubits, yet the strict light cone is restricted in qubits of number exponential in depth. ${\rm \Theta}(\log N)$-depth circuits are known for GHZ and W states. GHZ is prepared simply by inductively applying CNOT gates to double the qubits sharing the GHZ. The W state can be prepared in a similar fashion \cite{W_all2all_logN18}. We refer to the literature \cite{Dicke_all2all22,sparse_all2all,Stephen:2022dmt} for further discussions. 

\subsection{Lyapunov exponents, quantum chaos, and operator growth}\label{sec:lyapunov}
In the previous section, we discussed models with all-to-all interactions as valuable for the fast preparation of interesting entangled states; however, the Hamiltonians involved are not thermodynamically extensive: $\lVert H\rVert \sim N^k$ for a $k$-local Hamiltonian. In this section, we will describe random models with all-to-all Hamiltonians, but which are thermodynamically extensive: $\lVert H\rVert = \mathrm{O}(N)$.  As we will see, this does not simply mean dividing by $N^{k-1}$ in (\ref{eq:klocalH}). 

A paradigmatic model to study is the Sachdev-Ye-Kitaev (SYK) model \cite{sachdevye,maldacena16,kitaev17} of $N$-interacting Majorana fermions: $N$ operators chosen to obey the \textit{anti-commutation} relation\footnote{There is a simple way, the Jordan Wigner transform, to present $2N$ Majorana fermion operators in terms of Pauli matrices: $X_1 = \psi_1$, $Y_1 = \psi_2$, $Z_1 X_2 = \psi_3$, $\ldots$, $Z_1\cdots Z_{N-1}Y_N = \psi_{2N}$.  Note however that a Hamiltonian which is $k$-local in terms of Majorana fermions may be $N$-local written in terms of Pauli matrices.  We will not spell out in this review, but it is straightforward to show that all of the notions of locality, operator size etc., continue to make sense in a Hamiltonian written in terms of Majorana fermions.} \begin{equation}
    \lbrace \psi_i, \psi_j\rbrace = 2\mathbb{I}(i=j).
\end{equation}
We then consider the random Hamiltonian \begin{equation}
    H = \frac{\ri^{q/2}}{N^{(q-1)/2}} \sum_{i_1<i_2<\cdots <i_q} J_{i_1\cdots i_q} \psi_{i_1}\cdots \psi_{i_q},\label{eq:SYK_H}
\end{equation}
where $J_{i_1\cdots i_q}$ are independent and identically distributed (i.i.d.) random variables with variance 
\begin{equation}
    \mathbb{E}\left[J_{i_1\cdots i_q}^2 \right] = \frac{1}{2q} \left(\begin{array}{cc} N-1 \\ q-1 \end{array}\right)^{-1}.
\end{equation}
Note that the model is chosen so that the maximal and minimal eigenvalues of $H$ scale linearly with $N$ (thermodynamic extensivity) \cite{Hastings2021OptimizingSI,Herasymenko2022OptimizingSF}.

The fast scrambling conjecture~\cite{Sekino:2008he} asserts that the time it takes for an operator to ``grow large" scales should as $\sim\log N$. A cartoon model for this conjecture comes from considering random circuit dynamics (Section \ref{sec:ruc}), in which we apply one interaction term in each discrete time step.  In the first interaction, a single Majorana fermion grows as \begin{equation}
    \psi_1 \rightarrow [\psi_1\cdots \psi_q, \psi_1] = \psi_2 \cdots \psi_q.
\end{equation}
The operator has grown from size 1 to size $q-1$, because the Hamiltonian involves $q$ Majorana fermions.  In the next time step, each of these $q-1$ seeds can also grow into $q-1$ new fermions, and so on; after discrete time $t\in \mathbb{Z}^+$, we then estimate that $\text{size} \sim (q-1)^t$, and the time it takes for the size to scale as $N$ is then $t\sim \log N$ \cite{Lashkari:2011yi}.

There is a remarkable analogy between this cartoon of operator growth, and classical chaos, where one studies some number of degrees of freedom $x_j(t)$ governed by deterministic equations, one often finds that \begin{equation}
    \frac{\partial x_i(t)}{\partial x_j(0)} \sim \mathrm{e}^{\lambda_{\mathrm{L}} t},
\end{equation}
where the exponent $\lambda_{\mathrm{L}}$ is called the Lyapunov exponent.  While we caution that this exponential growth can also arise from saddle point instabilities \cite{Xu:2019lhc}, it is often associated with the onset of chaotic and irregular behavior (often known as the butterfly effect).  In quantum mechanics, it was first noted a long time ago \cite{larkin} that there is a natural analogy between $\partial x(t)/\partial x(0)$ and a commutator $[x(t),p(0)]$ (here $p$ denotes the momentum operator).   In recent years, this analogy has been extended to discrete systems via the study of  \textbf{out-of-time-ordered correlators} (OTOC) at infinite temperature between two small operators $A_i$ and $B_j$, acting on single qubits $i$ and $j$: \begin{equation}
    \norm{[A_i(t),B_j]}_{\mathrm{F}}^2=\frac{\tr\left([A_i(t),B_j]^\dagger[A_i(t),B_j]\right) }{\tr(I)} \le \frac{1}{N} \mathrm{e}^{\lambda_{\mathrm{L}}t}. \label{eq:otocexplicit}
\end{equation}
This correlation function is called out-of-time-ordered because, in ordinary many-body quantum physics, one usually studies either time-ordered or anti-time-ordered correlation functions (e.g. when studying linear response in thermal systems).  To actually evaluate the correlation function in (\ref{eq:otocexplicit}) in an ``experiment", one needs to evolve both forwards and backward in time when evaluating $\langle \psi | A_i(t)B_jA_i(t)B_j |\psi\rangle$ in any state $|\psi\rangle$.  As such, these are not physically accessible correlation functions in most experiments, although nuclear magnetic resonance has been able to achieve such a task for a long time (usually in the study of fairly simple Hamiltonians) \cite{baum,Garttner:2016mqj}: see also \cite{swingle16,Landsman:2018jpm} for some recent experimental proposals and progress.  Nevertheless, they have been of some interest in the past decade, as we will discuss more in Section \ref{sec:qg}.

Indeed, we have already seen OTOCs in Section \ref{sec:frobenius} as a natural probe of the Frobenius light cone.  Is it possible that there is a sharper analogy between classical chaos, and exponential growth of OTOCs? We have already seen in (\ref{eq:algebraicfrobenius}) that in local quantum systems, OTOC growth may be algebraic:  $[A_0(t),B_r] \sim t^{2r}$, rather than exponential.  Therefore, genuine exponential growth in OTOCs should only be expected in special circumstances.  One such setting is in systems with either a semiclassical degree of freedom, such as an infinite-dimensional boson, or a large-$S$ spin model \cite{Rozenbaum:2016mmv,Lewis-Swan:2018sdr,monikass,alavirad}: see \cite{yin20,yin21} for some rigorous results on engineered models of fast scramblers \cite{Li:2020zuj,Belyansky:2020bia}.  Alternatively, one can study systems with all-to-all interactions among $N$ finite-dimensional systems, such as Majorana fermions or qubits, e.g. SYK. To the extent that there is an honest period of exponential growth in out-of-time-ordered correlators in local models, it is likely only the case when there are perturbatively weak interactions in quantum field theory \cite{Stanford:2015owe,Grozdanov:2018atb}; see also the cartoon circuit model of \cite{keselman}.

In systems involving all-to-all interactions between $N$ local degrees of freedom, we will define the Lyapunov exponent in quantum mechanics as the growth rate of operator size:\footnote{Note that relative to the classical definition, there is a factor of 2 mismatch.  In the quantum chaos literature, however, the normalization here is fairly standard.} 
\begin{equation}
   (A(t)|\mathcal{S}|A(t)) \approx \mathrm{e}^{\lambda_{\mathrm{L}}t}, \;\;\; (1\ll \lambda_{\mathrm{L}}t \ll \log N ).
\end{equation}
Using Proposition \ref{prop:operatorsize}, we can relate this to the \emph{typical} value of an OTOC between $A$ and a randomly chosen single-site operator.

Let us now return to the SYK model~(\ref{eq:SYK_H}). From the proof of Proposition \ref{prop:ghzprep}, we find \begin{equation}
    \tr\left(\lbrace \psi_1(t),\psi_2\rbrace^2\right) \le \frac{1}{N}\left(\exp\left[c N^{(q-1)/2} t \right] -1 \right).
\end{equation}
Because the exponent scales with $N$, we cannot prove the fast scrambling conjecture for the SYK model. This is due to the requirement that $H$ be extensive for \emph{typical} states, in contrast to the Lieb-Robinson bound (which assumes worst-case scaling of $\lVert H\rVert$, and is saturated only by commuting Hamiltonians). Microscopic calculations, in contrast, do find a finite Lyapunov exponent \cite{Roberts:2018mnp}. New techniques have been developed to describe operator growth in such systems by incorporating concentration bounds from probability theory \cite{Lucas_2020,chen21}.  
\begin{theor}[(Heuristic statement) Fast Scrambling bound]\label{theor:fastscrambling}
For fixed $q$ and sufficiently large $N$, there exists a constant $c=\mathrm{O}(1)$ such that (\ref{eq:otocexplicit}) holds for times \begin{equation}
    |t| < c \log N.
\end{equation}
Therefore, the Lyapunov exponent is constant $\lambda_{\mathrm{L}}=\mathrm{O}(1)$.
\end{theor}
\begin{proof}[Proof idea]
The first proof was done by brute force combinatorics first in \cite{Lucas_2020}, and later by more elegant methods for general random Hamiltonians in \cite{chen21}.  Both proofs rely heavily on the notion of concentration bounds for random systems, while \cite{chen21} also introduced matrix concentration methods (Section~\ref{sec:matrix_concentration}) for many-body Hamiltonians. The latter was stated in Section \ref{sec:random_sum_path}.

To give some brief conceptual intuition, one can think of the Lyapunov exponent as being finite because the growth of operator size is at best exponential (as in our cartoon above).  A way to make this rigorous is to show that \cite{Lucas_2020} \begin{equation}
    \lVert \mathbb{Q}_s \mathcal{L} \mathbb{Q}_{s^\prime} \rVert \le \frac{\lambda_{\mathrm{L}}\max(s,s^\prime)}{q-2}\mathbb{I}(|s-s^\prime|<q-2). \label{eq:cq2}
\end{equation}
where $\mathbb{Q}_s = \mathbb{I}(\mathcal{S}=s)$ is the projection onto operators of size $s$, and $\lambda_{\mathrm{L}}=\mathrm{O}(1)$.  One shows that (\ref{eq:cq2}) holds with extremely high probability in the SYK model, demonstrating non-perturbatively the scaling obtained diagrammatically in \cite{Roberts:2018mnp}.  If (\ref{eq:cq2}) holds, then we find that \begin{equation}
    \frac{\mathrm{d}}{\mathrm{d}t}(A(t)|\mathcal{S}|A(t)) = (A(t)|[\mathcal{S},\mathcal{L}]|A(t)) \le \lambda_{\mathrm{L}} (A(t)|  \mathcal{S} | A(t)).
\end{equation}
Solving this differential inequality leads to (\ref{eq:otocexplicit}).
\end{proof}

In general, the proof of \cite{Lucas_2020} relies on the randomness of the coupling constants in order to get strong bounds on $\lambda_{\mathrm{L}}$.   In the less rigorous physics literature, the randomness of coupling constants can be thought of as helping to organize an expansion of calculations in terms of Feynman diagrams, with dangerous loop diagrams suppressed at large $N$.   There are also non-random quantum systems where this same diagrammatic suppression can occur.  A particularly relevant example is in matrix quantum mechanics, which is relevant to holography since the matrix degrees of freedom represent the start and end points of open strings on brane stacks \cite{Banks:1996vh}.   A recent ``cartoon matrix model" \cite{Lucas:2020pgj} has also confirmed that such models exhibit analogs of Theorem \ref{theor:fastscrambling}.

There is a notion of ``Krylov complexity" which has been introduced \cite{altman}, that also appears to compute the Lyapunov exponent $\lambda_{\mathrm{L}}$, yet appears distinct from operator size.

\subsection{Random quantum dynamics}\label{sec:ruc}
In this section, we briefly review a particularly simple limit of operator dynamics where the problem has random \emph{spacetime} evolution.   In this case, the problem of operator growth reduces to a completely classical problem.

The minimal model for operator growth is the random unitary circuit \cite{Nahum:2016muy,Nahum:2017yvy,vonKeyserlingk:2017dyr,Fisher:2022qey}.   We will focus on the dynamics in a model with nearest-neighbor interactions in one spatial dimension, but the discussion straightforwardly generalizes to arbitrary graphs.  As shown in Figure \ref{fig:ruc}, we consider time evolution in discrete time steps: $U=\cdots U(2)U(1)$, where \begin{subequations}\label{eq:ruc}
\begin{align}
    U(1) &= U_{12}(1)U_{34}(1)\cdots , \\
    U(2) &= U_{23}(2)U_{45}(2)\cdots,
\end{align}
\end{subequations}
and so on.  Here $U_{ij}$ denotes a 2-local randomly chosen unitary matrix acting on sites $i$ and $j$. We choose -- at every time step -- the $U_{ij}$ from the Haar distribution, which means that we choose a unitary matrix uniformly from all possible choices.  The key is that since every unitary shows up exactly once, it is easy to perform time averages.

\begin{figure}[t]
    \centering
    \includegraphics[width=3in]{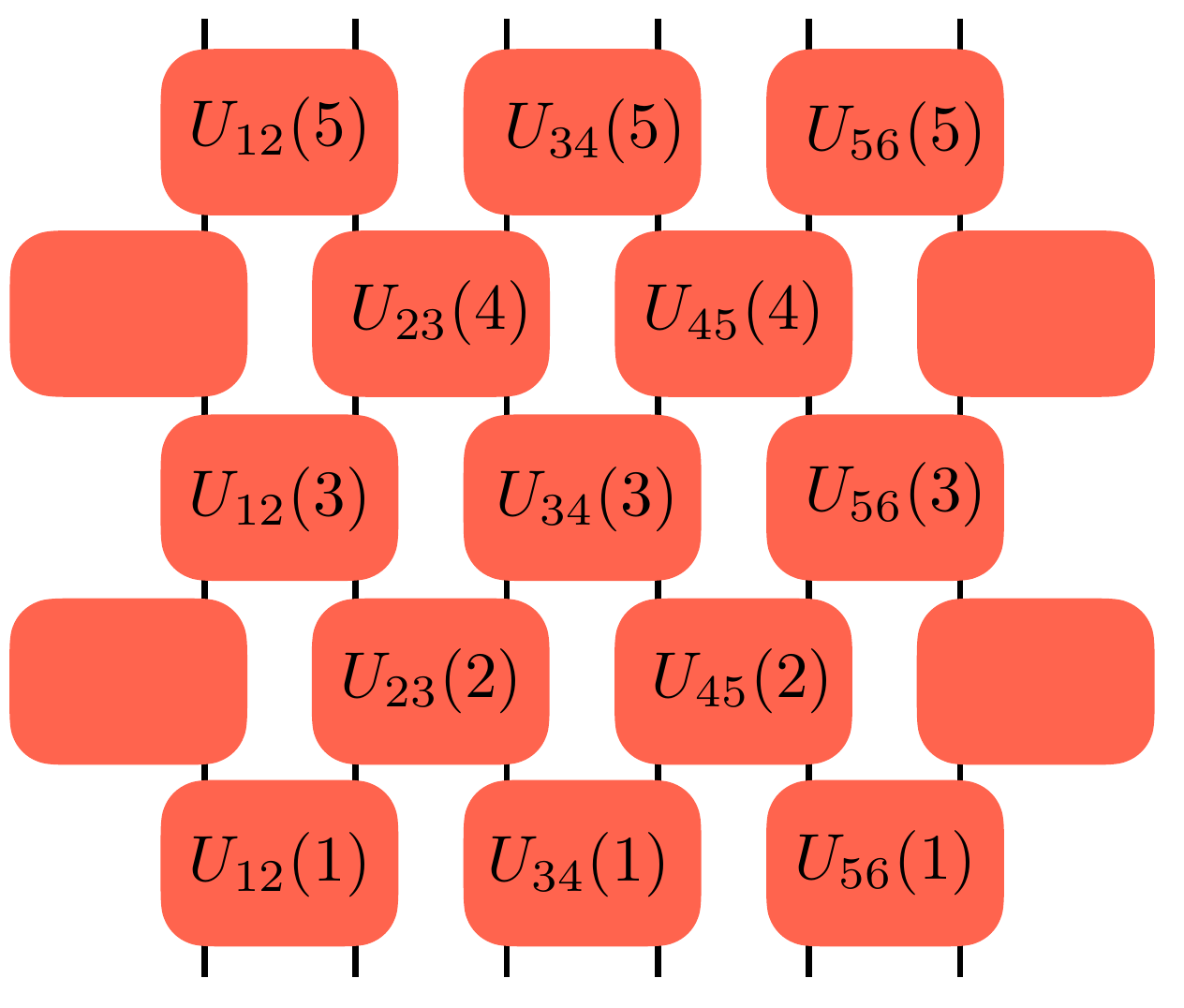}
    \caption{A spacetime cartoon of the random unitary circuit (\ref{eq:ruc}): each gate is usually taken to be Haar random and is independently chosen at each discrete time step.  Six sites of a longer one-dimensional chain are depicted. 
    }
    \label{fig:ruc}
\end{figure}

To see how, let us further restrict to the case where each site has a single qubit on it. Let us consider how a Pauli matrix $X_1$ grows under the first time evolution $U(1)$: \begin{align}
    U(1)^\dagger X_1 U(1) &=  U_{12}^\dagger U_{34}^\dagger \cdots X_1 \cdots U_{34}U_{12}= U_{12}^\dagger X_1 U_{12} \notag \\
    &= \sum_{a_1=1}^3 c_{a_1}X_1^{a_1} + \sum_{a_2=1}^3 c_{a_2}X_2^{a_2} + \sum_{a_1,a_2=1}^3 c_{a_1a_2}X_1^{a_1}X_2^{a_2}.
\end{align} 
Since $X_1$ is local, the only unitary that evolves it non-trivially is $U_{12}$.  If $U_{12}$ is Haar random, all operators acting on the set $\lbrace 1,2\rbrace$ that square to $I$ are equally likely: in particular, this means that \begin{equation}
    \mathbb{E}\left[c_{a_1}^2\right] = \mathbb{E}\left[c_{a_1}^2\right] = \mathbb{E}\left[c_{a_1a_2}^2\right] = \frac{1}{15},
\end{equation}
while any two coefficients are uncorrelated.

It is helpful to adopt a ``super-superoperator" picture in which \begin{equation}
    \mathcal{U}(0)|X_1] = \sum_{a_1,a_1^\prime=1}^3 c_{a_1}c_{a_1^\prime} |X_1^{a_1})(X_1^{a_1^\prime}| + \cdots,
\end{equation}
since in this picture it is easy to average over $U_{12}$: \begin{equation}
    \mathbb{E}\left[\ |X_1(1)]\ \right] = \frac{1}{15}\left(\sum_{a=1}^3 \left(|X_1^a)(X_1^a| + |X_2^a)(X_2^a| \right) + \sum_{a,b=1}^3 |X_1^aX_2^b)(X_1^aX_2^b|  \right). \label{eq:circuitoperatorgrowth}
\end{equation}
The only terms that show up in this average are ``diagonal" in a Pauli basis, meaning that the resulting dynamics can be mapped to a classical stochastic process, essentially corresponding to what sites have a non-trivial Pauli on them.  This allows for large-scale numerical simulations of the resulting dynamics.  Note that when one considers non-Haar random dynamics, such as dynamics constrained by a conservation law, there is no longer a simple picture for operator dynamics in general \cite{Khemani:2017nda,Rakovszky:2017qit,xiaorahul}, as the coefficients $c$ become correlated and off-diagonal terms in (\ref{eq:circuitoperatorgrowth}) cannot be ignored.

Random unitary circuits are useful as toy models for a broad range of problems, as they often illustrate more general phenomena.  One such phenomenon that we have already encountered is a discrepancy between a Lieb-Robinson light cone and a Frobenius light cone.  In units of lattice spacing, the Lieb-Robinson velocity in the circuit of Figure \ref{fig:ruc} is $v_{\mathrm{LR}}=1$, since in principle there exists a circuit where $X_1$ evolves to $X_1\cdots X_L$ at discrete time $t=L-1$.  However, in a typical circuit, one finds the operator is supported with overwhelming probability inside a smaller domain \cite{Nahum:2017yvy,vonKeyserlingk:2017dyr}: this corresponds to a butterfly velocity (see Section \ref{sec:frobenius}) $v_{\mathrm{B}}<1$.  In these models, $v_{\mathrm{B}}$ can (in 1d) be analytically computed.

It is straightforward to extend this discussion to continuous time dynamics.  In this case, as one example, one considers a \textbf{Brownian Hamiltonian} of the form \begin{equation}
    H(t) = \sum_{e\in\mathsf{E}}h_e(t)A_e(t),
\end{equation}
where $A_e$ is an operator acting on edge $e$, while the couplings $h_e(t)$ are taken to be (usually) Gaussian white noise obeying: \begin{equation}
    \mathbb{E}\left[h_e(t)h_{e^\prime}(t^\prime)\right] = \delta_{ee^\prime}\delta(t-t^\prime).
\end{equation}
In this case, again one finds that the operator growth problem reduces to a classical continuous-time Markov process: see e.g. \cite{Lashkari:2011yi,Lucas:2019aif}. 

\subsection{Possible connections to quantum gravity and black holes}\label{sec:qg}
We now briefly discuss the historical origin of the interest in quantum operator growth and many-body chaos.   While the observation that OTOCs should probe chaos was first noticed many decades ago \cite{larkin}, the OTOC became a much more intensely studied object when it was noticed to possibly relate to quantum theories of gravity.

A bit of background is in order.  About 25 years ago, it was first noticed \cite{Maldacena:1997re} that certain quantum field theories (QFT) which arise in string theory appear to admit a description in terms of gravity in one higher dimension.  It is widely believed (albeit an open conjecture) that this \textbf{holographic duality} provides our first explicit model of quantum gravity, and quantum black holes: see \cite{lucasbook} for a review on the subject.  Here, we will focus on a peculiar aspect of the duality:  a black hole in the gravity theory is related to a thermal state in the QFT.  Suppose, from the gravity side, we toss in a small particle into the black hole.  In the QFT, this is interpreted as applying a spatially local operator such as $A_i$.  Under time evolution in the QFT, we expect that $A_i(t)$ becomes a large size operator as discussed above, while in the gravity picture, the particle is falling towards the black hole horizon \cite{Shenker:2013pqa}.  It was therefore proposed in \cite{Susskind:2018tei,Brown:2018kvn} that there must be a relation between the particle motion in gravity and operator size; this conjecture has some evidence for it in SYK \cite{Qi:2018bje} but remains open more generally.   See \cite{Brown:2015bva,Haehl:2021emt} for related ideas about the black hole interior and quantum complexity.

For the purposes of this review, the key feature of this story is that black hole states are \emph{finite temperature states} -- in fact, the temperature $T \ll J$ must actually be very low compared to any microscopic energy scale $J$ appearing in the Hamiltonian!  If $T\sim J$, the black hole becomes large enough that it is not believed any semiclassical description of gravity exists.  This is parametrically the opposite regime of the one studied in Section \ref{sec:lyapunov}, where rigorous results on Lyapunov exponents are known.  At finite temperature, it is believed that OTOCs of the form \begin{equation}
    \tr \left(\sqrt{\rho} [A_i(t),B_j]^\dagger \sqrt{\rho} [A_i(t),B_j]\right) \lesssim \frac{1}{N} \mathrm{e}^{2\mpi T t}
\end{equation}
have a Lyapunov exponent $\lambda_{\mathrm{L}} \le 2\mpi T$ \cite{Maldacena:2015waa}.  While it seems many physicists believe that this conjecture is proven, the ``proof" relies on crucial (but reasonable) assumptions about typical thermal/chaotic systems which are not proven.   In our view, it would be extremely valuable to prove a bound on the Lyapunov exponent at finite temperature, assuming only $k$-locality of the Hamiltonian.   Such an achievement would have key implications for quantum gravity since it is known that the holographic models of semiclassical gravity actually saturate the conjectured bound on the Lyapunov exponent \cite{Shenker:2013pqa}.  A rigorous proof of this bound would imply at least one sense in which black holes are the ``fastest scramblers" -- more explicitly, they would be the quantum systems that can most effectively \textbf{scramble} information contained in a local operator $A_i$ into a non-local operator $A_i(t)$.  We expect a rigorous proof of any such bound to be an extreme challenge; Section \ref{sec:finiteT} describes the limited results known to us in finite temperature models.

While we have focused on a particular class of random $k$-local Hamiltonians in this review, we emphasize that there has been a large literature on OTOCs and operator growth \cite{Aleiner:2016eni,Patel:2017vfp,kivelson,Chowdhury:2017jzb} in spatially local Hamiltonians as well, including via holographic duality \cite{Roberts:2014isa,Blake:2016wvh,Roberts:2016wdl} and ``SYK chains" \cite{Gu:2016oyy}.

\section{Systems with power-law interactions}\label{sec:power-law}

While the celebrated Lieb-Robinson bounds have found profound applications in the past decades, one fundamental caveat is that many physical systems are not, strictly speaking, local but with a power-law decay tail (Table~\ref{table:powerlawsystems}).

For simplicity, our discussion will focus on models defined on some $d$-dimensional graph of vertex set $\mathsf{V}$ (Section \ref{sec:graphreview}).   However, we emphasize that the interactions will not only couple sites connected by an edge of the graph.  For this section only, we define a Hamiltonian with two-body (2-local) interactions to have power-law interactions of exponent $\le \alpha$ whenever \begin{equation}
    H = \sum_{i\ne j\in \mathsf{V}} h_{ij}(t) H_{ij}
\end{equation}
such that \begin{equation}
    h_{ij}(t) \le \frac{C}{\mathsf{d}(i,j)^\alpha} \label{eq:hijpowerlaw}
\end{equation}
where $H_{ij}$ acts non-trivially only on sites $i$ and $j$, with $\lVert H_{ij}\rVert = 1$, and $0<C<\infty$ is an absolute constant.
When we say that a system has power-law interactions of exponent $\alpha$, we mean that $\alpha$ is the largest possible exponent for which this definition holds.

\begin{table}[t]
    \centering
    \begin{tabular}{|l|l|}\hline
       charged particles  & $\alpha=1$ (Coulomb) \\\hline
       electrical dipoles, polar molecules \cite{Yan2013}  & $\alpha=3$ \\\hline
       neutral atoms, Rydberg atom arrays \cite{Saffman2010} & $\alpha=6$ (van der Waals) \\\hline
       trapped ion crystals \cite{Britton2012} & $0\lesssim \alpha \lesssim 3$ (approximate) \\\hline
    \end{tabular}
    \caption{A summary of physically realized systems with power-law interactions of various exponents $\alpha$. All are related to the nature of electromagnetism in three spatial dimensions!}
    \label{table:powerlawsystems}
\end{table}

\subsection{Lieb-Robinson light cone}
While an understanding of quantum dynamics with power-law interactions will thus be relevant for a broad range of physical systems and quantum information processing platforms, it is surprisingly challenging to reasonably extend the Lieb-Robinson Theorem to power-law interacting systems. Let us briefly review the history of bounds on the commutator $\lVert [A_0(t),B_r]\rVert := C(r,t)$, between two local operators separated by a distance $r$.  In 2005, it was shown \cite{Hastings_koma} that for any $\alpha>d$, we have\begin{equation}
    C(r,t) \le C_0\frac{\mathrm{e}^{\lambda t}-1}{r^\alpha}.\label{eq:hastings_koma_logr}
\end{equation}
The proof essentially follows that of Theorem \ref{thm:LRexpTail0}.
Since $C(r,t) \sim 1$ when $t\sim \log r$, we say that this system is bounded by a ``logarithmic light cone". Consequently, we can disprove that, e.g., a Bell pair can be created between sites separated by distance $r$ before this time. Such a bound is no tighter than what we saw for systems with all-to-all interactions!   A decade later, \cite{fossfeig,tran2019,else} found a much tighter bound:  for $\alpha>2d$, we have\begin{equation}
    C(r,t) \le C_0 \left(\frac{t}{r^\kappa}\right)^\beta, \label{eq:powerlaw05}
\end{equation} 
where $\beta>0$ and $0<\kappa<1$ are constants depending on $\alpha, d$ (see the references for the precise values). These bounds demonstrated an exponential improvement, but do not yield a linear light for any large values of $\alpha$.

In the past four years, the shape of the ``light cone" has been essentially tightly established \cite{alpha_3_chenlucas,strictlylinear_KS,tran2021}, as summarized below.
\begin{theor}[Lieb-Robinson bounds for power-law interacting systems]
\label{thm:lrpowerlaw}
Consider $r,t$ large with $t/r^{\min(1,\alpha-2d)}$ fixed.  For $\alpha>2d+1$, the shape of the light cone is linear~\cite{strictlylinear_KS} \begin{equation}
    C(r,t) \le \frac{C}{(r-v_{\mathrm{LR}}t)^\alpha}
\end{equation} 
For $2d<\alpha<2d+1$, one instead finds an algebraic light cone~\cite{tran2021} \begin{equation}
    C(r,t) \le C_0 \left(\frac{t}{r^{\alpha-2d}}\right)^{\beta}.
\end{equation}
The constants $C_0, \beta, v_{\mathrm{LR}}$ depend only on $\alpha, d$. If $d<\alpha<2d$, then the shape of the light cone is polylogarithmic, as bounded by  (\ref{eq:hastings_koma_logr}).
\end{theor}
The proofs of these results are exceedingly technical and take up a few hundred pages. Very briefly, the sharpest results of~\cite{strictlylinear_KS,tran2021} implement a multiscale decomposition of the unitary, which relies on alternating two technical ingredients, one in space and one in time: spatially, one adds interaction with longer ranges via the interaction picture; temporally, one connects short-time Lieb-Robinson bounds into long-time Lieb-Robinson bounds. In contrast, the naive recursive Lieb-Robinson bounds in terms of summing over paths (Theorem~\ref{thm:sum_of_paths}), albeit simple, seem to lose some essential physics of the power-law interaction system.

In 1D, a much shorter alternative proof is possible~\cite{alpha_3_chenlucas} (and historically precedes the sharpest results in Theorem~\ref{thm:lrpowerlaw}); this requires regrouping the Hamiltonian by scales and expanding the exponential into a carefully chosen interaction picture (as a variant of the self-avoiding path construction in Theorem~\ref{thm:self-avoiding}). Unfortunately, this approach does not seem to generalize to higher dimensions.

\subsection{Frobenius light cone}

The Frobenius light cone in power-law interacting systems  \textit{qualitatively} differs from the Lieb-Robinson bounds~\cite{hierarachy}: see Figure~\ref{fig:powerlaw_summary}. This means that quantum dynamics in power-law interaction systems drastically depends on whether one considers the fine-tuned (worst-case) or random (average-case) states; there are even regimes where the distinction is exponentially large!

\begin{figure}
    \centering
    \includegraphics[width=0.8\textwidth]{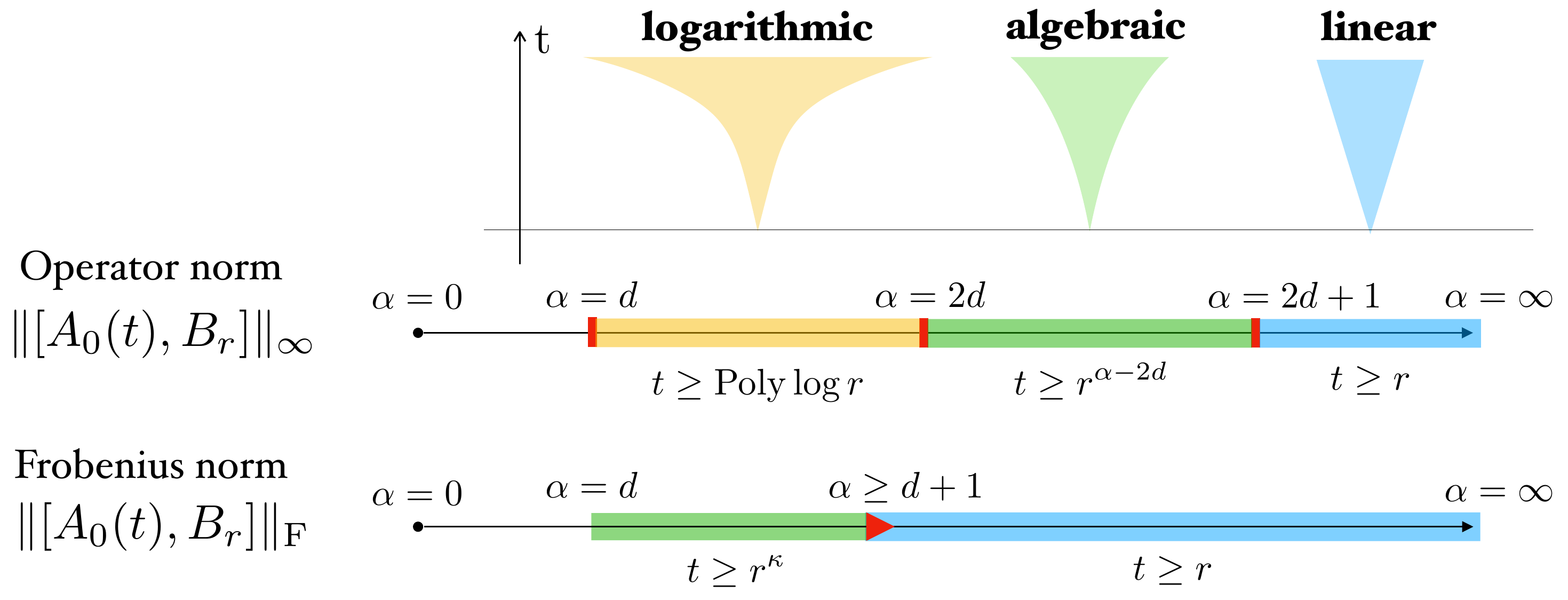}
    \caption{
    A summary of known speed limits in $d$-dimension lattices with power-law exponent $\alpha$. The Lieb-Robinson bounds in spectral norm have been completely classified in a series of works~\cite{fossfeig,tran2019, alpha_3_chenlucas,strictlylinear_KS} with saturating protocols~\cite{tran_optimalstatetransfer}. The Frobenius bounds have not been completely classified, but certainly, they exhibit qualitatively different physics (average case versus worst case)~\cite{hierarachy}. Scrambling in logarithmic time is impossible in the Frobenius norm at $\alpha >d$. So far, the algebraic-to-linear transition is known in 1d~\cite{Chen2021OptimalFL} at exactly $\alpha = 1+1$, but the higher dimensional case remains open with a conjectured value $\alpha = d+1$. 
    }
    \label{fig:powerlaw_summary}
\end{figure}

Right now, the complete shape of the Frobenius is only fully understood in $d=1$:
\begin{theor}[Frobenius bounds for 1d power-law interacting systems~{\cite{Chen2021OptimalFL}}]\label{thm:1d_powerlaw}
For $d=1$, $\alpha >1$, and any $\delta >0$, we have
\begin{equation}
    \lVert [A_0,B_r] \rVert_{\mathrm{F}} \ge \delta \quad \text{requires}\quad |t| \ge \delta^2 C \times \left\lbrace \begin{array}{ll} r/\ln r &\ \alpha>2  \\ r/\ln^2 r &\ \alpha =  2 \\ r^{\alpha-1} &\ 1<\alpha<2 \end{array} \right. \label{eq:main41}
\end{equation}
where the constant $0<C<\infty$ depends only on $\alpha$.
\end{theor}
Concretely, we see the transition to the algebraic light cone $(\alpha =2)$ is distinct from the Lieb-Robinson bounds $(\alpha= 3)$. In higher-dimension ($d\ge 2$), the critical value for this transition remains unknown, but is conjectured to be $d+1$.  However, we do at least know that the possibility for exponentially fast scrambling (a la Section \ref{sec:lyapunov}) is forbidden until $\alpha=d$:

\begin{theor}[Algebraic Frobenius bounds for power-law interacting systems~{\cite{Kuwahara_OTOC}}]\label{thm:higerd_powerlaw}
For any $\alpha >d$ and $t,r$ large with $t/r^\kappa$ fixed,
\begin{equation}
    \lVert [A_0,B_r] \rVert_{\mathrm{F}} \le C \left( \frac{t}{r^{\kappa}}\right)^{\beta}
\end{equation}
 where the constants $0<C <\infty,   0<\kappa <1, 0<\beta$ depend only on $\alpha, d$. 
\end{theor}

Another setting where the (conjectured, in $d>1$) Frobenius light cone is reached is when studying single-particle quantum walks (which also apply to free fermion models), where single-particle hopping rates are suppressed by power-law decay with distance:  \begin{prop}[Single particle quantum walk with power-law interactions \cite{hierarachy}] \label{prop:freepowerlaw}
Consider a single-particle walk on a $d$-dimensional lattice with power-law interactions, generated by \begin{equation}
    H(t) = \sum_{u,v\in\mathsf{V}} h_{uv}(t)|u\rangle\langle v|
\end{equation}
where the matrix $h_{uv}(t)$ is Hermitian and obeys (\ref{eq:hijpowerlaw}). Then for any $\epsilon>0$, so long as $\alpha>d$, there exist constants $0<v_{\mathrm{sp}},K<\infty$ such that \begin{equation}
    |\langle u|\mathrm{e}^{-\mathrm{i}Ht}|v\rangle| \le \frac{Kt}{(\mathsf{d}(u,v)-v_{\mathrm{sp}}t)^{\alpha-d-\epsilon}}
\end{equation}
at sufficiently short time.  For $\alpha<d+1$, we may take $v_{\mathrm{sp}}=0$.
\end{prop}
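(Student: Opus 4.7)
The plan adapts the single-particle Gronwall/Dyson argument of Section~\ref{sec:warmup_oneparticle} to the power-law setting, with a reproducibility estimate for pure power-law kernels playing the role of~\eqref{eq:reproducible2}. Writing $\psi_v(r,t) := \langle r|\mathrm{e}^{-\mathrm{i}Ht}|v\rangle$, the Schr\"odinger equation $\dot\psi_v(r,t) = -\mathrm{i}\sum_{r'} h_{rr'}(t)\psi_v(r',t)$ together with the triangle inequality (exactly as in~\eqref{eq:single_particle_ODE}) dominates the modulus by the positive semigroup generated by the entry-wise non-negative matrix $A_{rr'} := C\,\mathsf{d}(r,r')^{-\alpha}\mathbb{I}(r\ne r')$, giving
\begin{equation}
|\psi_v(r,t)| \le (\mathrm{e}^{At})_{rv} = \sum_{n\ge 0}\frac{t^n}{n!}(A^n)_{rv}.
\end{equation}
This replaces phase interference by a strictly combinatorial sum over weighted paths on the lattice, which is then attacked by a convolution estimate.

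The central ingredient is the pure power-law reproducibility inequality: for every $\alpha>d$ on a $d$-dimensional lattice there exists a finite constant $K_0=K_0(\alpha,d)$ such that
\begin{equation}
\sum_{w\in\mathsf{V}\setminus\{x,y\}}\frac{1}{\mathsf{d}(x,w)^\alpha\,\mathsf{d}(w,y)^\alpha}\le \frac{K_0}{\mathsf{d}(x,y)^\alpha}.
\end{equation}
This is proved exactly as its exponentially decaying counterpart~\eqref{eq:reproducible2}, by splitting the sum into regions near $x$, near $y$, and far from both: the hypothesis $\alpha>d$ makes the near-$x$ and near-$y$ pieces $O(\mathsf{d}(x,y)^{-\alpha})$ (since $\sum_{|u|\le R}|u|^{-\alpha}=O(1)$), and these dominate the far piece of order $\mathsf{d}(x,y)^{d-2\alpha}$. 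Iterating $n-1$ times yields $(A^n)_{rv}\le C^n K_0^{n-1}\mathsf{d}(r,v)^{-\alpha}$ for $r\ne v$, and summing the series produces
\begin{equation}
|\psi_v(r,t)|\le \frac{\mathrm{e}^{CK_0 t}-1}{K_0\,\mathsf{d}(r,v)^\alpha},
\end{equation}
which for $CK_0 t\le 1$ reduces to $|\psi_v(r,t)|\lesssim Ct/\mathsf{d}(r,v)^\alpha$.

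To translate this to the stated form, observe that $\mathsf{d}(r,v)\ge 1$ implies $\mathsf{d}(r,v)^{-\alpha}\le \mathsf{d}(r,v)^{-(\alpha-d-\epsilon)}\le (\mathsf{d}(r,v)-v_{\mathrm{sp}}t)^{-(\alpha-d-\epsilon)}$ for any $\epsilon\ge 0$ and any $v_{\mathrm{sp}}t\ge 0$ with positive argument. Thus the short-time Dyson bound already establishes the proposition, in fact with $v_{\mathrm{sp}}=0$ for every $\alpha>d$.

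The genuine obstacle lies not in the proposition as stated but in the sharper statement it indirectly anticipates. For $\alpha>d+1$ the first moment of $|h|$ is finite and one expects a true linear Lieb-Robinson light cone with the full decay $\mathsf{d}(r,v)^{-\alpha}$ outside; the naive Dyson bound, whose effective rate $CK_0$ blows up as $\alpha\to d^+$, captures only a logarithmic light cone in this sharper sense. Recovering the linear light cone with the true exponent would require a dyadic multiscale decomposition of $H$ into short- and long-range parts, together with an interaction-picture treatment of the tail in the spirit of the full proofs of Theorem~\ref{thm:lrpowerlaw}. Since the proposition only demands existence of some $v_{\mathrm{sp}}$ with the relaxed exponent $\alpha-d-\epsilon$, this sharpening is unnecessary, and the Dyson-plus-convolution route gives a clean and short proof.
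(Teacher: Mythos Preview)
The paper does not itself prove this proposition; it is stated with a citation to \cite{hierarachy}. Your Dyson-plus-reproducibility computation is correct and recovers exactly the single-particle instance of the Hastings--Koma bound~\eqref{eq:hastings_koma_logr}: decay $r^{-\alpha}$ with prefactor $\mathrm{e}^{\lambda t}-1$, hence only a \emph{logarithmic} light cone.

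The gap is in your reading of ``at sufficiently short time'' as $t$ bounded by an $r$-independent constant. That reading trivializes the statement: first-order perturbation theory already gives $|\psi_v(r,t)|\lesssim t/r^{\alpha}$ for $t=\mathrm{O}(1)$, so the weakened exponent $\alpha-d-\epsilon$, the parameter $v_{\mathrm{sp}}$, and the clause distinguishing $\alpha<d+1$ would all be superfluous --- indeed under your reading you have just shown $v_{\mathrm{sp}}=0$ works for every $\alpha>d$, which empties that clause of content. The intended statement, consistent with the form of the bound and with the remark that the W-state construction immediately following it (running in time $t\sim r^{\alpha-d}$) \emph{saturates} the proposition, is that the inequality holds throughout the regime where the right-hand side is small: an algebraic light cone $t\sim r^{\alpha-d-\epsilon}$ for $d<\alpha<d+1$ and a linear one for $\alpha>d+1$. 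Your bound becomes vacuous already at $t\sim\log r$. The sacrifice of $d+\epsilon$ in the exponent is not slack; it is precisely the price paid to extend validity from $t=\mathrm{O}(1)$ to $t$ polynomial in $r$, and obtaining it requires retaining the normalization $\sum_r|\psi_v(r,t)|^2=1$ that the Gronwall step throws away --- e.g.\ via the quantum-walk method of Section~\ref{sec:babyQW} with a moment observable of order $\gamma<\alpha-d$, or via the short/long-range splitting you mention but dismiss.
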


\subsection{Applications}
As we have emphasized in this review, Lieb-Robinson bounds constrain the minimal time required for various practical tasks.  An important question becomes whether these bounds can ever be saturated.  Remarkably, it turns out to be the case for many of the bounds in this section!  Hence we know that these bounds are optimal.  

In particular, it is possible \cite{tran_optimalstatetransfer} to use power-law interactions to prepare GHZ states (\ref{eq:GHZ}) at the maximal rate allowed by Theorem \ref{thm:lrpowerlaw}. The algorithm is an iterative process (\ref{fig:tran}), with the parameters of the algorithm (such as the number of clusters and number of iterations) chosen efficiently, so as to essentially saturate the limit allowed by Theorem \ref{thm:lrpowerlaw}.
\begin{theor}[Optimal state transfer protocol with power-law interactions~\cite{tran_optimalstatetransfer}]
The shapes of the light cone in Theorem~\ref{thm:lrpowerlaw} are achieved by the state transfer protocols~\cite{tran_optimalstatetransfer}.
\end{theor}

\begin{figure}
    \centering
    \includegraphics{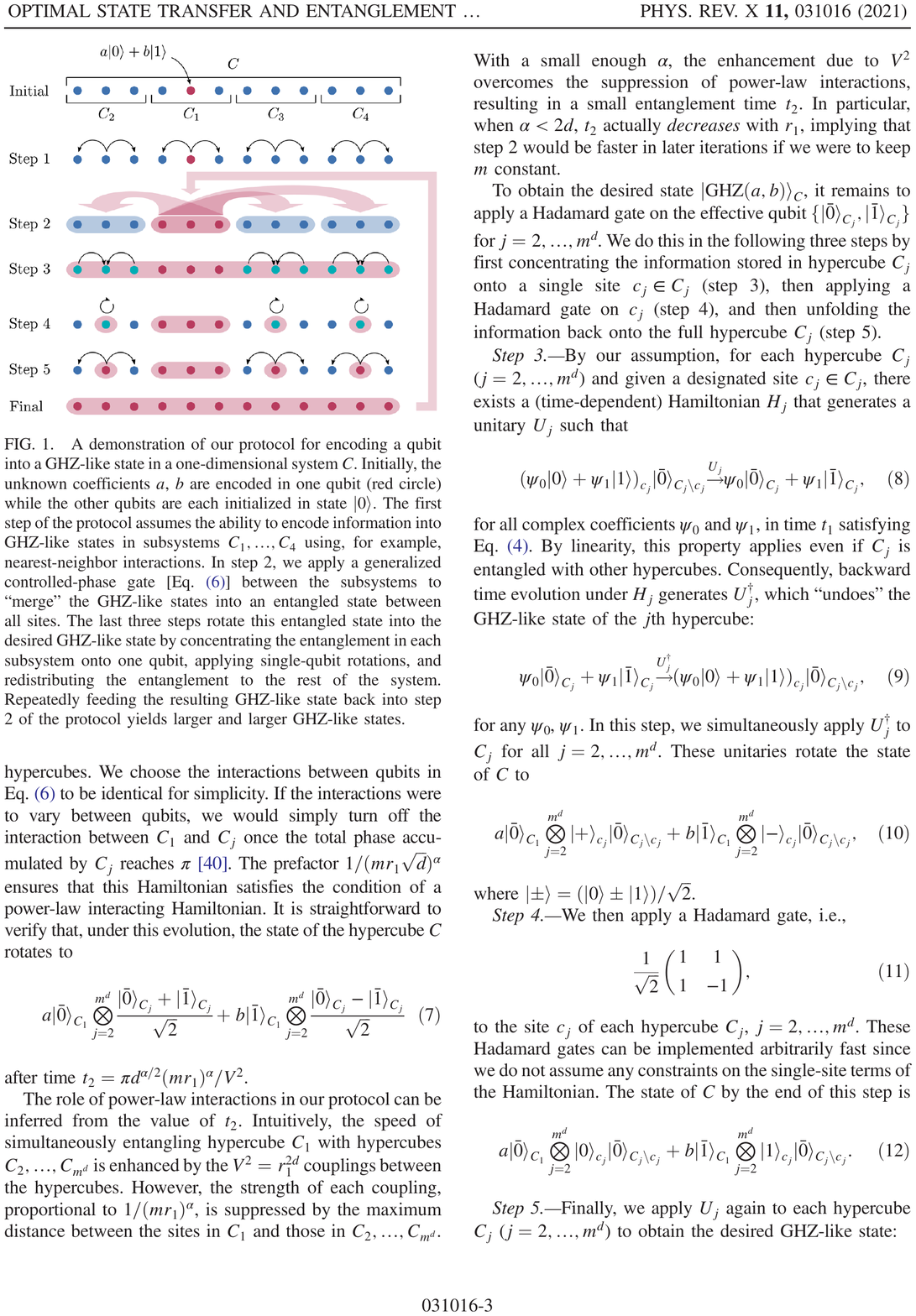}
    \caption{An algorithm to efficiently prepare GHZ states using power law interactions. The steps are: (\emph{1}) prepare GHZ states on clusters of O(1) sites using local interactions; (2) entangle the smaller GHZ clusters using long-range interactions; (3-5) in each cluster, undo and then re-build the entanglement in order to prepare a large GHZ state.  The algorithm then iterates to larger and larger scales.  The shaded regions depict entangled qubits at each step.  Figure taken from \cite{tran_optimalstatetransfer} with permission.}
    \label{fig:tran}
\end{figure}

\begin{exam}[Fast W-state preparation with power-law interactions \cite{hierarachy,yifan2021}]
The basic strategy is to use a free-fermion quantum walk, with all-to-all interactions of power-law strength between sets $B_n$ of increasing size $D_n \sim 2^n$ leading to a W-state expanding from one bubble to the next: see Figure \ref{fig:yifan}.  In the $n^{\mathrm{th}}$ time step, the Hamiltonian takes the form of \begin{equation}
    H = \sum_{u \in B_n,v\in B_{n+1}-B_n}\frac{1}{D_{n+1}^\alpha} \left(c^\dagger_u c_v + c_v^\dagger c_u \right)
\end{equation}  At each time step, due to constructive interference, the time to grow a W-state in $B_n$ into a W state in $B_{n+1}$ scales as \begin{equation}
    \tau_n \sim D_{n+1}^{\alpha-d},
\end{equation}
meaning that the total runtime of the algorithm scales as $R^{\min(1,\alpha-d)}$.  For $\alpha>d+1$, nearest neighbor hopping algorithms can be employed as an alternative.  Note that this quantum walk thus also asymptotically saturates Proposition \ref{prop:freepowerlaw}.  Two interesting features of this protocol are that it leads to efficient transmission of multiple qubits (with only $\sqrt{m}$ increased runtime needed to send $m$ qubits), and that it is robust to certain kinds of coherent errors in the protocol \cite{yifan2021}.
\end{exam}

\begin{figure}
    \centering
    \includegraphics[width=3.5in]{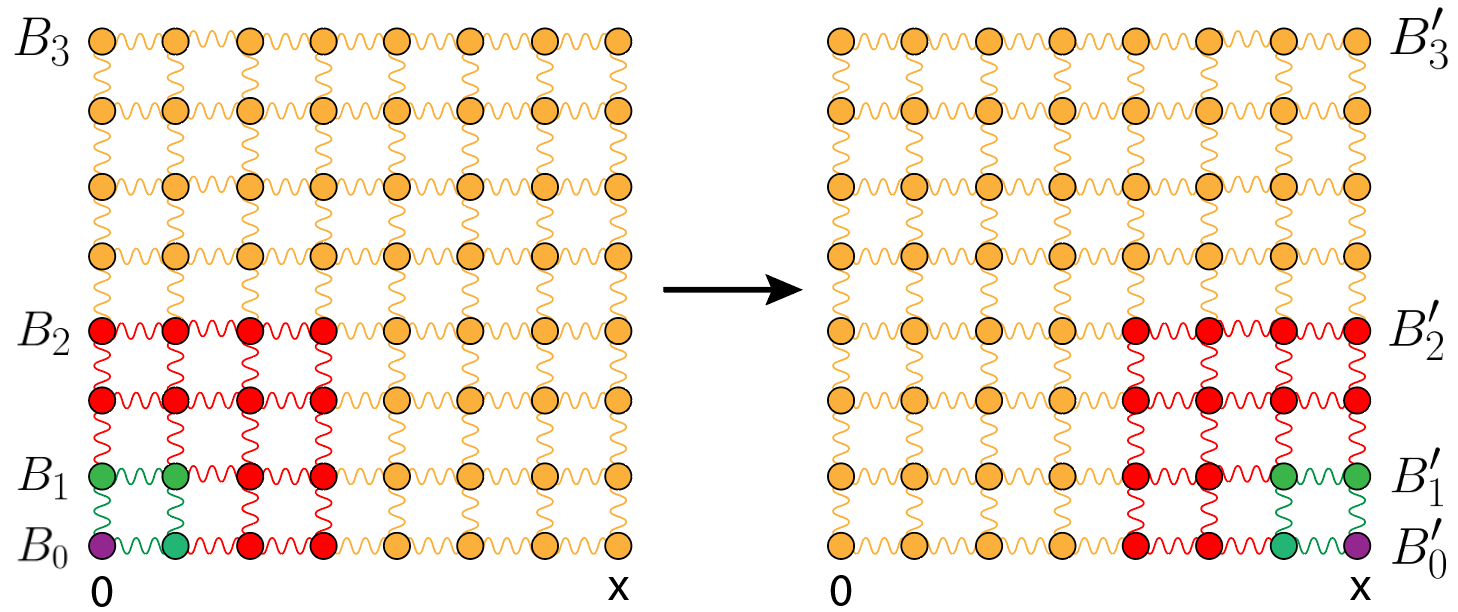}
    \caption{A W state generation algorithm with power-law interactions builds increasingly large W states on sets $B_n$ of side length $D_n\sim 2^n$ as illustrated.  In the figure, it is also shown that the W state can be collapsed in such a way as to perform single qubit state transfer. Figure taken from \cite{yifan2021} with permission.}
    \label{fig:yifan}
\end{figure}

Both the GHZ and W state preparation algorithms can be used to perform state transfer, by simply reversing the unitary protocol that generates it, but ``ending" on a site on the other side of the domain.  An earlier state transfer protocol based on power-law interactions is given in \cite{eldredge}.

In addition to preparing a particular entangled state, one may seek to generate as much bipartite entanglement as possible between two regions exploiting power-law interaction. Naturally, one strategy is to repeat single-qubit state transfer~\cite{tran_optimalstatetransfer}, which turns out to be optimal using resource theory arguments (Section~\ref{sec:entanglement_resource}).
\begin{exam}[Optimal generation of bipartite entanglement with power-law interactions]
In a $d$-dimensional model with power-law interactions of exponent $\alpha$, generating the maximally entangled state between two adjacent cubes with side length $r$ takes time at least \cite{gong2017}\begin{align}
t = \Omega(r^{\min(1,\alpha-d)}),
\end{align}
starting from an arbitrary initial state without bipartite entanglement. This is asymptotically saturated (up to sub-algebraic corrections) by sequentially sending 1-qubit at a time using the protocol of~\cite{tran_optimalstatetransfer}. 
\end{exam}
The above is a direct application of bipartite entanglement generating bounds (Proposition~\ref{prop:entanglement_rate_general}) by summing over all interactions.

Recent work \cite{Kuwahara1d,Wang:2022apr} has also discussed bounds on correlations and entanglement in gapped ground states of systems with power-law interactions.  Some results on prethermal behavior in power-law interacting systems can be found in \cite{yaoheating,tranheating}. Metrology is discussed in \cite{Chu2023StrongQM}.

\section{Bounds at finite energy and finite density}
\label{sec:density}
The Frobenius light cone perspective is closely related to the study of dynamics at finite charge density or finite energy density.  Indeed, while any formal Lieb-Robinson bound would hold for any state at low energy (e.g.), we usually anticipate a much stronger bound holding in practice.  In this section, we will describe a variety of methods that are used to try and provide stronger bounds on dynamics at low temperature or low density.
\subsection{Dynamics at low charge density}
We begin with a discussion of dynamics in a system of  $N$ qubits governed by time evolution that commutes with the total $Z$-spin (or magnetization): \begin{equation}
    Q = \sum_{i=1}^N \frac{1+Z_i}{2} = \sum_{i=1}^N n_i.
\end{equation}  If time evolution is generated by a continuous Hamiltonian, \begin{equation}
    [H(t),Q] = 0. \label{eq:HcommuteQ}
\end{equation}
Note that in this section it will be more convenient for us to define $n_i=0,1$ rather than $Z_i = -1,1$.  

\subsubsection{Formalism for low-density quantum walk bounds}

We are interested in ``low density" states, or those where $\langle Q\rangle  \ll N$.  While it follows from (\ref{eq:HcommuteQ}) that $Q$ is exactly conserved if we restrict the Hilbert space to states with $Q$ ``particles", the Hilbert space does not have a product space structure.  Notions like operator size become challenging to define, and more importantly, it is far from obvious how to exploit the low-density nature of the state to derive an improvement on a Lieb-Robinson-like light cone, restricted to such a low-density state.

For low-density states, the situation seems relatively mild, due to the following proposition: 
\begin{prop}[Grand canonical density matrix]
Consider the ``grand canonical" density matrix \begin{equation}
    \rho = \mathrm{e}^{-\mu Q} \times \text{normalization}. \label{eq:rho101}
\end{equation}
In the $N\rightarrow \infty$ limit, correlation functions $\mathrm{tr}(\rho A)$ of operators $A$ that can only create or destroy O(1) particles are determined by states at density \begin{equation}
    \bar n = \frac{Q}{N} = \frac{1}{1+\mathrm{e}^\mu}.
\end{equation}
The allowed family of $A$ includes all local $k$-point functions for fixed $k$.
\end{prop}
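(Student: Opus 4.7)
The plan is to exploit the fact that under the density matrix $\rho \propto \mathrm{e}^{-\mu Q}$, the commuting single-site projectors $n_i$ behave exactly as independent classical Bernoulli variables. Since $Q = \sum_i n_i$ is a sum of mutually commuting local terms, the Gibbs operator factorizes: $\mathrm{e}^{-\mu Q} = \bigotimes_i \mathrm{e}^{-\mu n_i}$. The partition function is therefore $Z = \mathrm{tr}(\mathrm{e}^{-\mu Q}) = (1+\mathrm{e}^{-\mu})^N$, and the normalized density matrix is the product state
\begin{equation}
\rho = \bigotimes_{i=1}^N \rho_i, \qquad \rho_i = (1-\bar n)|0\rangle\langle 0| + \bar n |1\rangle\langle 1|, \qquad \bar n = \frac{\mathrm{e}^{-\mu}}{1+\mathrm{e}^{-\mu}} = \frac{1}{1+\mathrm{e}^\mu}.
\end{equation}
This already recovers the advertised value of $\bar n$ and, as a bonus, makes the $n_i$ independent Bernoulli random variables under $\rho$, so Hoeffding's inequality gives $|Q - N\bar n|=\mathrm{O}(\sqrt N)$ with overwhelming probability.

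First, I would use this tensor-product structure to evaluate $\mathrm{tr}(\rho A)$ for any operator $A$ supported on a fixed set $\mathsf{S}$ of size $k=\mathrm{O}(1)$. The trace collapses to $\mathrm{tr}_{\mathsf{S}}(\rho_{\mathsf{S}} A)$ with $\rho_{\mathsf{S}}=\bigotimes_{i\in\mathsf{S}}\rho_i$, which depends only on $k$ and $\bar n$ and has no $N$-dependence whatsoever. So $\mathrm{tr}(\rho A)$ exists and is a fixed function of $\bar n$ in the $N\to\infty$ limit. If $A$ locally changes particle number (e.g. $c_i^\dagger c_j$ in a Jordan--Wigner encoding), its contribution vanishes because $\rho$ is block-diagonal in $Q$; this is harmless, since the same holds in any fixed-charge microcanonical state.

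Second, I would compare this with the microcanonical state $\rho_q := \Pi_q/\binom{N}{q}$ at fixed particle number $q$. For such a state, the probability that the $k$ sites in $\mathsf{S}$ carry a given occupation pattern $\vec n\in\{0,1\}^k$ with $|\vec n|=m$ is the hypergeometric weight $\binom{N-k}{q-m}/\binom{N}{q}$. Stirling's approximation in the thermodynamic limit $N\to\infty$ with $q/N = \bar n$ fixed shows that this converges to the binomial $\binom{k}{m}\bar n^m(1-\bar n)^{k-m}$, i.e., to the product measure encoded by $\rho_{\mathsf{S}}$ above. Hence the two ensembles agree on local $k$-point functions up to corrections of order $k/N$, vanishing in the thermodynamic limit.

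No serious obstacle is expected, since for commuting conserved charges this is simply the elementary equivalence of ensembles. The only point requiring a little care is the handling of $A$ pieces that shift $Q$ by $m \neq 0$: both $\mathrm{tr}(\rho A)$ and $\mathrm{tr}(\rho_q A)$ are exactly zero for such pieces, so the equivalence is trivial there and all nontrivial content of the proposition concerns the $Q$-preserving part of $A$, handled by the two steps above.
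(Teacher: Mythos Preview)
Your proposal is correct and shares the paper's core observation: since $\rho$ factorizes over sites, the $n_i$ are i.i.d.\ Bernoulli($\bar n$) variables, and Hoeffding-type concentration pins $Q/N$ to $\bar n$ with overwhelming probability. The paper's proof stops there, simply noting the concentration bound $\mathbb{P}[|Q/N-\bar n|>c/\sqrt{N}]<\exp[-c^2/(2\bar n(1-\bar n))]$ and declaring that $\rho$ concentrates on states of density $\bar n$.

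You go further in two useful ways. First, you compute $\mathrm{tr}(\rho A)$ directly from the tensor-product structure, which makes explicit that local $k$-point functions depend only on $\bar n$ with no residual $N$-dependence. Second, you actually carry out the equivalence-of-ensembles comparison with the microcanonical state $\rho_q$, via the hypergeometric-to-binomial convergence, and you address the charge-shifting sector of $A$ explicitly. This makes the phrase ``determined by states at density $\bar n$'' precise rather than heuristic. The paper's argument is shorter but leaves the reader to fill in exactly what ``concentration implies agreement of correlators'' means; your version supplies that step.
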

\begin{proof}
Since $\rho$ is diagonal in the $n_i$ basis, this proposition is a statement about the probability that a state is drawn from $\rho$ has a given density of particles $n = Q/N$?  From the form of $\rho$, the probability that we would draw $n_i=1$ is independent and identically distributed with respect to any $n_j$ ($j\ne i$).  Using a modification of (\ref{eq:hoeff}), we find:
\begin{equation}
    \mathbb{P}\left[ \left| \frac{Q}{N}-\bar n \right| > \frac{c}{\sqrt{N}}\right] < \exp\left[ - \frac{c^2}{2\bar n(1-\bar n)}\right].
\end{equation}
We say that $\rho$ concentrates on states of density $\bar n$.
\end{proof}

This proposition is very helpful because it turns out to be quite natural to calculate expectation values with $\rho$ using a mild modification of the formalism of Section \ref{sec:frobenius}.  In particular, using the density matrix (\ref{eq:rho101}), consider redefining the operator inner product (\ref{eq:innerproduct}) to \cite{Chen:2020bmq} \begin{equation}
    (A|B) = \mathrm{tr}\left(\sqrt{\rho} A^\dagger \sqrt{\rho} B\right). \label{eq:innerproductQ}
\end{equation}
This inner product concentrates in the thermodynamic limit onto states at density $\bar n$.  

Ref.~\cite{Chen:2020bmq} proposed a natural generalization of operator size to a (low-density) system.  Defining the operators \begin{subequations}\label{eq:ccdagger}\begin{align}
    c_i = | 0\rangle \langle 1|_i \quad \text{and}\quad  c_i^\dagger = |1\rangle \langle 0|_i,
\end{align}
\end{subequations}
we notice that the space of operators is \begin{equation}
    \mathrm{End}(\mathcal{B}) = \mathrm{span}\left( \lbrace I, c_i, c_i^\dagger, n_i \rbrace^{\otimes N} \right).
\end{equation}
There is then a natural definition of operator size: \begin{equation}
    \mathcal{S} = \sum_{j=1}^N \mathcal{S}_j,
\end{equation}
where on a single site \begin{align}
    \mathcal{S}|I) = 0|I), \quad 
    \mathcal{S}|c) = |c), \quad
    \mathcal{S}|c^\dagger) = |c^\dagger), \quad
    \mathcal{S}|\tilde n) = 2|\tilde n),
\end{align}
where \begin{equation}
    |\tilde n) := |n) - \bar n |I).
\end{equation}
Since $n=c^\dagger c$, one can think of this $\mathcal{S}$ as counting the number of $c$ and $c^\dagger$ in an operator string.  $|\tilde n)$ is defined to be orthogonal to the identity, which is why we have subtracted off its average.
Note that this definition does not reproduce, when $\mu=0$, the standard definition of operator size: the Pauli $Z$ has ``twice" the length of $X$ or $Y$, at $\mu=0$.  After a short calculation, one finds that \begin{subequations}\label{eq:sec11sizelengths}\begin{align}
    (I|I) &= 1, \\
    (c_i|c_i) &= (c_i^\dagger | c_i^\dagger) = \mathrm{e}^{\mu/2}\bar n = \sqrt{\bar n (1-\bar n)}, \\
    (\tilde n_i |\tilde n_i) &= \bar n (1-\bar n).
\end{align}\end{subequations}
When $\bar n\ll 1$, one can \emph{heuristically} think that the inner product of an operator $A$ of size $s$ has an inner product $(A|A) \sim \bar n^{s/2}$. 

\subsubsection{Operator growth and decay}

With these definitions, we are now ready to discuss two useful results that allow us to generalize the many-body quantum walk bounds to low-density systems. 
\begin{prop}[Density-dependent inner product is norm preserving]
Under the inner product (\ref{eq:innerproductQ}) and charge-conserving dynamics, $(A(t)|A(t)) = (A|A)$.
\end{prop}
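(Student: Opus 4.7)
The plan is to mirror the strategy used in Proposition~\ref{prop:Aconslength}, but being careful that the inner product now weights the operator by $\sqrt{\rho}$ on both sides. The first step I would take is to observe the key algebraic fact that unlocks everything: because $[H(t),Q]=0$ for all $t$ and $\rho = \mathrm{e}^{-\mu Q}/Z$ is a function of $Q$ alone, we have $[H(t),\rho]=0$, and therefore by functional calculus $[H(t),\sqrt{\rho}]=0$. In particular, the (time-ordered) unitary evolution operator $U(t)$ satisfies $U(t)\sqrt{\rho}=\sqrt{\rho}\,U(t)$.

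Given this commutation, I would give a one-line direct proof. Writing $A(t) = U(t)^\dagger A\, U(t)$, substitute into the definition:
\begin{align}
    (A(t)|A(t)) = \tr\!\left(\sqrt{\rho}\, U^\dagger A^\dagger U \,\sqrt{\rho}\, U^\dagger A U\right)
    = \tr\!\left(U^\dagger \sqrt{\rho}\, A^\dagger \sqrt{\rho}\, A\, U\right)
    = \tr\!\left(\sqrt{\rho}\, A^\dagger \sqrt{\rho}\, A\right) = (A|A),
\end{align}
where the second equality slides each $\sqrt{\rho}$ past its adjacent $U$ or $U^\dagger$, and the third uses cyclicity of the trace together with $U U^\dagger = I$.

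An equivalent route, more in the spirit of Proposition~\ref{prop:Aconslength}, is to show directly that the Liouvillian $\CL = \ri[H,\cdot]$ is anti-Hermitian with respect to the new inner product. I would compute $(B|\CL A)$, use $[H,\sqrt{\rho}]=0$ to move $\sqrt{\rho}$ past $H$, then use cyclicity to rewrite the result as $-(\CL B|A)$; concretely, one finds $(B|\CL A) = -\ri\,\tr(\sqrt{\rho}[H,B^\dagger]\sqrt{\rho} A) = -(\CL B|A)$. Anti-Hermiticity then gives $\frac{\mathrm{d}}{\mathrm{d}t}(A(t)|A(t)) = (\CL A(t)|A(t)) + (A(t)|\CL A(t)) = 0$, which integrates to the claim; this version also extends transparently to time-dependent $H(t)$ that conserves $Q$ at each instant.

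There is no real obstacle here, but the one place to be careful is justifying $[H,\sqrt{\rho}]=0$: this is immediate because $\sqrt{\rho}$ is a function of $Q$ alone and $[H,Q]=0$, but it is the only nontrivial input and is precisely where charge conservation enters. Everything else is trace cyclicity. If one wished to generalize beyond this proposition, for instance to $\rho = \mathrm{e}^{-\beta H}$ (as in the upcoming finite-temperature discussion in Section~\ref{sec:finiteT}), the same argument would still work since then $\sqrt{\rho}$ trivially commutes with $H$; so this proof strategy really is the natural one for any charge- or energy-weighted inner product of this form.
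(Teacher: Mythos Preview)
Your proposal is correct and matches the paper's approach: the paper simply states that the result ``follows immediately from \eqref{eq:HcommuteQ} and Proposition~\ref{prop:Aconslength},'' which is precisely your second route (anti-Hermiticity of $\mathcal{L}$ in the $\sqrt{\rho}$-weighted inner product, using $[H,\sqrt{\rho}]=0$). Your first route via direct unitary conjugation is an equally valid one-line variant of the same idea.
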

The proof of this follows immediately from (\ref{eq:HcommuteQ}) and Proposition \ref{prop:Aconslength}.    The Lyapunov exponent $\lambda_{\mathrm{L}}$ can be defined for the $\mu$-dependent notion of size, analogously to Section \ref{sec:lyapunov}.  We then find:
\begin{theor}[Bounds on Lyapunov exponent at finite density]
For some $0<C<\infty$, in a model subject to the assumptions of Theorem \ref{theor:fastscrambling}, together with (\ref{eq:HcommuteQ}),  \begin{equation}
    \lambda_{\mathrm{L}} \le C \sqrt{\bar n}. \label{eq:lambdafinitedensity}
\end{equation}
\end{theor}
\begin{proof}[Proof sketch] 
The idea is rather similar to Theorem \ref{theor:fastscrambling}: we expand \begin{equation}
    |A) = \sum_{s} |A_s)
\end{equation}
into operators of fixed size: $\mathcal{S}|A_s)=s|A_s)$, such that \begin{equation}
    \frac{\mathrm{d}}{\mathrm{d}t}(A(t)|\mathcal{S}|A(t)) = (A(t)|[\mathcal{S},\mathcal{L}]|A(t)) = \sum_{s_1<s_2}2(s_2-s_1)\times (A_{s_2}(t)|\mathrm{i}[H,A_{s_1}(t)]). \label{eq:thm113main0}
\end{equation}
The key observations are that: (\emph{1}) using antisymmetry of $\mathcal{L}$ we can always do this, and (\emph{2}), as this inner product vanishes unless $[H,A_{s_1}]$ has size $s_2>s_1$, we can bound (under similar $k$-locality assumptions to Theorem \ref{theor:fastscrambling}): \begin{equation}
    (A_{s_2}(t)|\mathrm{i}[H,A_{s_1}(t)]) \le Cs_1 \times  \left[\bar n (1-\bar n)\right]^{(s_2-s_1)/4} \times \sqrt{(A_{s_2}|A_{s_2})(A_{s_1}|A_{s_1})}. \label{eq:thm113main}
\end{equation}
The constant $C$ will be related by a constant prefactor to the value at $\bar n=0$, but due to the fact that the analog of Pauli $Z$ has both size-0 and size-2 components under our current counting, it may have quantitative (though not qualitative) $\bar n$ dependence.  The new factor of interest in (\ref{eq:thm113main}) is the $\bar n^{(s_2-s_1)/4}$ scaling.   The reason we have this prefactor is that the inner product $(A_{s_2}(t)|\mathrm{i}[H,A_{s_1}(t)])$ is between two operators of size $s_2$, whereas $(A_{s_1}|A_{s_1})$ is normalized to a size $s_1$ operator, so we need additional factors from (\ref{eq:sec11sizelengths}).

Lastly, due to $[H,Q]=0$, if $[A,Q] = kA$, then $[A(t),Q]=kA(t)$.   In particular, this means that under time evolution, we can only modify $A$ by operators that do not contribute an odd number of ``$c$ or $c^\dagger$", as defined in (\ref{eq:ccdagger}). Therefore, in (\ref{eq:thm113main0}), we must have $s_2-s_1 \in 2\mathbb{Z}^+$.  The smallest value that can contribute to the Lyapunov exponent is $s_2-s_1=2$.  Hence we obtain (\ref{eq:lambdafinitedensity}).
\end{proof}

The fact that $\lambda_{\mathrm{L}}$ must vanish at low density is qualitatively similar to the proposal of \cite{Maldacena:2015waa} that dynamics must slow down at low temperature, although it is far from clear how to generalize this argument to finite temperature states.  Using the same technique, one can prove the following corollary, which states that in typical states, the time it takes for a single-particle operator to decay into a many-particle operator is slow.  This is reminiscent of a conjecture of \cite{Lucas:2018wsc} about the slow decay of ``small operators" in finite temperature dynamics. 
\begin{corol}[Decay time bound at finite density]\label{cor:decaydensity}
For times $t < \tau = C \bar n^{-1/2}$, the time evolution of a size-1 operator (``creation/annihilation operator"), is well approximated by single-particle dynamics.  Namely, if $\mathcal{S}|A)=|A)$, then there exists $|B(t)) = \mathcal{S}|B(t))$ such that  for $t<\tau$, $(A(t)|B(t)) \ge \frac{1}{2}\sqrt{(A|A)(B|B)}$.
\end{corol}
Using Fermi's Golden Rule, we might expect to have the stronger scaling $\tau \sim \bar n^{-1}$, but such a result has not yet been proved in generality (as far as we know).  Intuitively, this scaling would then be optimal, since a particle traveling at constant velocity (see the next subsetion) will encounter another particle in a time $\sim \bar n^{-1}$, at which point we expect single-particle correlation functions to decay.

\subsubsection{Butterfly velocity}
One might expect that, similarly to how the Lyapunov exponent decreases with increasing $\bar n$, the butterfly velocity also decreases as $v_{\mathrm{B}}\sim \bar n$.  Indeed this is expected on heuristic grounds \cite{xiaorahul}.  However, we emphasize that this is quite non-trivial to prove (in fact an outstanding question). The challenge can be seen by considering just \emph{one} particle (the $Q=1$ subspace). In this setting, the quantum dynamics reduces to a single particle quantum walk, which we have seen has a \emph{finite velocity} in Section \ref{sec:babyQW}.  So there cannot, in a strict sense, be any bound on signal propagation at low density where the velocity vanishes with $\bar n$.

What was conjectured in \cite{xiaorahul} is that for short times $\tau \lesssim \bar n^{-1}$, an operator indeed behaves as if it propagates in this $Q=1$ Hilbert space, with a finite velocity.  After $t\sim\tau$, however, the operator grows more complex, and \emph{destructive interference} sets in between operators of different sizes, in such a way that the effective operator growth front propagates with a velocity $v_{\mathrm{B}} \sim \bar n$.  It would be interesting if this can be shown rigorously in any context.  The fact that there is a fast velocity for operator growth at short times, and a slow velocity at late times, poses a challenge to attempts \cite{Blake:2016wvh,Hartman:2017hhp} to universally bound diffusion constants in terms of $v_{\mathrm{B}}$ and $\lambda_{\mathrm{L}}$.

\subsection{Models of interacting bosons}
\label{sec:bosons}
We now turn to a very similar story, where we study the dynamics of interacting bosons.  Starting for the moment with a single boson, the Hilbert space is that of the quantum harmonic oscillator, labeled by an infinite tower of states $|n\rangle$, for $n\in\mathbb{Z}^+$.  We define the operators \begin{subequations}\begin{align}
    b|n\rangle &:= \sqrt{n}|n-1\rangle, \\
    b^\dagger |n\rangle &:= \sqrt{n+1}|n+1\rangle.
\end{align}\end{subequations}

Next, we consider an interacting boson model on a graph $\mathsf{G}=(\mathsf{V},\mathsf{E})$.   On every vertex of the graph, we consider a bosonic Hilbert space as above.  The creation and annihilation operators obey \begin{equation}
    [b_u,b^\dagger_v] = \mathbb{I}(u=v).
\end{equation}
For convenience, we will restrict our study to (time-dependent) Hamiltonians of the form \begin{equation}
    H(t) = \sum_{\lbrace u,v\rbrace \in \mathsf{E}} J_{uv}(t) \left(b^\dagger_u b_v + b^\dagger_v b_u\right) + \sum_{v\in\mathsf{V}}f\left(b^\dagger_v b_v\right), \label{eq:bosonH}
\end{equation}
where $f$ is an arbitrary function.  The results we describe below can be generalized to slightly more complex $H$; for pedagogy, we stick to the above.   However, the essential features of (\ref{eq:bosonH}) are that: (\emph{1}) only one boson can move at a time, and (\emph{2}) the energy in $f$ will in general be unbounded; however it only depends on the boson number operator $b^\dagger_v b_v$.   For $u\ne v$, we have $[b^\dagger_u b_u, b^\dagger_v b_v] = 0$, so the interaction terms in the Hamiltonian commute.  (\ref{eq:bosonH}) generalizes the classic Bose-Hubbard model \cite{gersch}, in which $J_{uv}=J$ does not depend on time, while for some constant $U$, \begin{equation}
    f(n) = Un(n-1).
\end{equation}

Even with these restrictions, however, it is non-trivial to imagine a Lieb-Robinson bound for such a problem.  Every single operator in (\ref{eq:bosonH}) is \emph{unbounded} in operator norm, so the conventional proof of a Lieb-Robinson bound as in Section \ref{sec:LRbounds} will not work.  Moreover, there is a physical construction \cite{bhspeed1,bhspeed2,bhspeed3} that leads to propagating velocity $v\sim \bar n$, if $\bar n$ is the mean number of bosons per site in the system.  Therefore, for some time, Lieb-Robinson bounds with bosons were only found in quite restrictive settings, such as classical models \cite{Raz_2009}, harmonic (non-interacting) models or models with bounded interaction strength \cite{Nachtergaele_2008}, or models with a bounded boson number \cite{schuch}.

Recently, the nature of Lieb-Robinson light cones in interacting boson models has been rigorously established.  A nearly linear light cone was presented in \cite{kuwahara2021liebrobinson}, while the exact linear light cone (in a sense described below) was proved in \cite{Yin:2021uio}.  The proofs of all results are rather involved so we leave them to the literature, and just highlight the conclusions here with some prototypical examples.

\begin{theor}[Linear light cone for interacting bosons in typical states]
Let $A_u$ and $B_v$ denote local operators consisting of a finite number of raising and lowering operators, acting on single sites $u,v\in\mathsf{V}$.  Then for some constants $0<C,v,a<\infty$, \cite{Yin:2021uio} \begin{equation}
    \tr\left(\sqrt{\rho}[A_u(t),B_v]^\dagger\sqrt{\rho}[A_u(t),B_v]\right) \le C \left(\frac{vt}{r}\right)^{ar} 
\end{equation}
When $A_u$ and $B_v$ are single boson creation or annihilation operators, there exist $0<c_{1,2}<\infty$ such that \begin{equation}
    v \le c_1 + c_2 \bar n. \label{eq:optspeedboson}
\end{equation}
This scaling is known to be asymptotically optimal \cite{bhspeed1}.
\end{theor}

\begin{theor}[Lieb-Robinson light cone in finite density states in one dimension]
Consider two arbitrary states $|\psi_{1,2}\rangle$ of bosons on a one-dimensional lattice, restricted to the form \begin{equation}
    |\psi_{1,2}\rangle = \sum_{n_i = 0}^M c_{1,2}(\mathbf{n}) |\cdots n_1n_2\cdots\rangle \label{eq:thm116psi}
\end{equation}
Namely, there are in the initial state, no more than $M$ bosons on any given site.   Then for any $M$ and some $0<C,v,a<\infty$ which can depend on $M$, \cite{Yin:2021uio} \begin{equation}
    |\langle \psi_1|[A_u(t),B_v]|\psi_2\rangle | \le C \left(\frac{vt}{r}\right)^{ar} .
\end{equation}
All matrix elements of commutators are small in one dimension.
\end{theor}

One can prove, as a consequence of this theorem, that gapped ground states have exponential correlations \cite{Yin:2021uio}, and that bosons are not too much more difficult to simulate than spins, at low density.  However, the analogy between the locality in the Bose-Hubbard model and in interacting spin models does not persist in higher dimensional settings.

\begin{theor}[No Lieb-Robinson light cone beyond one dimension \cite{Kuwahara:2022hlg}] \label{thm117}
There exist states $|\psi_{1,2}\rangle$ of the form (\ref{eq:thm116psi}), and time-dependent $H$ of the form (\ref{eq:bosonH}), on a $d$-dimensional lattice ($d>1$), for which $|\langle \psi_1|[A_u(t),B_v]|\psi_2\rangle | \ge 1$ for $r=bt^d$, for O(1) constant $b$.  Hence, there is an effective velocity $v_{\mathrm{eff}}\sim t^{d-1}$.  However, up to logarithmic corrections, one cannot find any protocol for which $v_{\mathrm{eff}}$ is faster. 
\end{theor}

The basic idea behind this theorem is to imagine propagating information in the $x_1$-direction, and starting with a state with one boson on every site at time $t=0$.  For time $0<t<T/2$, we pile up bosons in $x_1=\text{const.}$ planes, as much as possible, along the axis $x_2=\cdots = x_d=0$.  At time $T/2$, we have $\sim T^{d-1}$ bosons on every site.  Then we use the protocol of \cite{bhspeed1} (see (\ref{eq:optspeedboson})) to send signals at a speed $v\sim T^{d-1}$ along the $x_1$-axis.  However, it is also known that while ``signals" can be sent very quickly in higher dimensions, a large number of bosons cannot be: \begin{theor}
[No fast particle transport (informal) \cite{Faupin:2021nhk,Faupin:2021qpp}]
Let $\mathsf{R}\subset \mathsf{V}$ have diameter $r$.  Then in any state with $N$ bosons in region $\mathsf{R}$ at $t=0$, the time $\tau$ needed to move a finite fraction (e.g. $N/2$ bosons) out of $\mathsf{R}$ obeys $\tau \ge Cr$ for some $0<C<\infty$. 
\end{theor}

Even in one dimension, it is known that the finite speed limit at low density arises only due to the particular nature of the Hamiltonian (\ref{eq:bosonH}).  One way to violate any Lieb-Robinson-like bound is to include number-conservation-violating terms such as $b_i^\dagger b_{i+1}^\dagger + \mathrm{H.c.}$ \cite{eisert2008}, which can cause the effective velocity to grow exponentially with time: $v\sim \exp(t)$.   But even if we conserve number, a density-dependent hopping term $f_i(n_i+n_{i+1}) b_i^\dagger b_{i+1}+\mathrm{H.c.})$ will destroy many notions of locality.  Recall for example that with $f=1$, the time it would take to get from a state where all bosons are piled up on the left end of a 1d chain ($|n0\cdots 0\rangle$), to a state where they are all piled up on the right ($|0\cdots 0n\rangle$) would scale with the length of the chain, and independently of $n$.  However, by simply choosing $f_i=n_i+n_{i+1}$, we can make the time arbitrarily small by decreasing $n$.  We expect that using similar strategies to Theorem \ref{thm117}, especially in higher dimensions, one will find little remains of a notion of locality once the interacting terms incorporate boson hopping.

Lastly, we remark that special Lieb-Robinson bounds have been derived \cite{LRion} for Hamiltonians of the schematic form \begin{equation}
    H = b^\dagger b + b A + b^\dagger A^\dagger
\end{equation}
where $A$ is a $k$-local many-body operator, and $b$ is the boson annihilation operator.  One can generalize to multipole boson modes as well. Such Hamiltonians  arise when modeling trapped ion crystals \cite{Britton2012}.  

\subsection{Lieb-Robinson bounds in continuous space at finite energy}\label{sec:continuous}
Now, we turn to bounds that only hold at finite energy (density).   This is an extremely hard problem, albeit one of significant physical importance.   We begin by \emph{briefly} reviewing the efforts to extend the Lieb-Robinson Theorem into the continuum.   This literature has mainly remained restricted to the mathematics community \cite{jensen1985,skibsted,hunziker,hunziker2,Arbunich_2021}; however, we will give an intuitive sketch of the ideas in the context of the one-dimensional Schr\"odinger equation (which seems to us to capture the essence of many of the ideas developed there).  The key insight is, as we explain shortly, such bounds are only plausible at finite energy.  Hence, this provides a nice gateway into the study of bounds on finite temperature dynamics more generally.  We will avoid technical discussions of functional spaces and smoothness, which are of course discussed at length by mathematicians, and focus on aspects of the literature that seem relevant for finite-dimensional quantum systems as well.   

We now prove the following theorem, which represents the minimal model of a continuum Lieb-Robinson-style bound, applicable to the motion of a single particle.
\begin{theor}[Continuous space Lieb-Robinson bound at low energy]
\label{theor:schrodinger}
Consider the time-independent Hamiltonian operator \begin{equation}
    H = -\frac{1}{2}\partial_x^2 + V(x)\quad \text{such that} \quad V(x) \ge 0   \label{eq:Vbound}.
\end{equation}
Define a projector onto low-energy states \begin{equation}
    \mathbb{P}_E := \mathbb{I} \left(H \le E \right).
\end{equation} Then, \begin{equation}
    \lVert \mathbb{P}_E [H,x] \mathbb{P}_E \rVert \le \sqrt{2E}, \label{eq:PEbound}
\end{equation}
and for any state obeying $|\psi\rangle = \mathbb{P}_E|\psi\rangle$, \begin{equation}
    \langle \psi| \mathrm{e}^{\mathrm{i}Ht} x \mathrm{e}^{-\mathrm{i}Ht}|\psi\rangle \le \langle \psi | x | \psi\rangle + \sqrt{2E}t. \label{eq:velconres}
\end{equation}
\end{theor}
\begin{proof}
Since $H$ does not depend on time, $[\mathrm{e}^{\mathrm{i}Ht},\mathbb{P}_E]=0$, so (\ref{eq:velconres}) follows immediately from (\ref{eq:PEbound}).   To prove (\ref{eq:PEbound}), observe that \begin{equation}
    [H,x] = -\frac{1}{2}\left[\partial_x^2, x\right] + \left[V(x),x\right] = -\partial_x.
\end{equation}
Now using (\ref{eq:Vbound}), we see that (for any normalized states $\langle \varphi|\varphi\rangle = \langle \varphi^\prime | \varphi^\prime\rangle = 1$): \begin{align}
   \left| \langle \varphi | \mathbb{P}_E \partial_x \mathbb{P}_E|\varphi^\prime\rangle \right|^2 \le  \langle \varphi | \mathbb{P}_E \partial_x \mathbb{P}_E^2 \partial_x^\dagger \mathbb{P}_E|\varphi\rangle \le \langle \varphi | \mathbb{P}_E \left(- \partial_x^2\right) \mathbb{P}_E|\varphi\rangle &= \langle \varphi| \mathbb{P}_E \left(2(H-V)\right)\mathbb{P}_E|\varphi\rangle \notag \\
   &\le 2\langle \varphi|\mathbb{P}_EH\mathbb{P}_E|\varphi\rangle \le 2E.
\end{align}
Since we have bounded the largest possible matrix element of $\mathbb{P}_E[H,x]\mathbb{P}_E$, we obtain (\ref{eq:PEbound}).  
\end{proof}

What this theorem implies is that, if we can restrict to low energy states, there is a Lieb-Robinson-like bound that depends on energy $E$.  Note that in the present proof, we have assumed $V$ does not depend on time $t$.  This assumption was important to ensure that the potential does not inject energy into the system.\footnote{It is likely that certain time-dependent potentials are mild, however: see e.g. \cite{Arbunich:2023aqm} for a nonlinear Schr\"odinger equation where an effective time-dependent potential does not qualitatively change the nature of (\ref{eq:PEbound}).}  

This bound has been generalized to higher dimensions and more recently to a weakly nonlinear Schr\"odinger equation \cite{Arbunich:2023aqm}.  This technique can also be straightforwardly extended to certain many-body fermion models \cite{Gebert:2019cou,osbornefuture}, similar to the Bose-Hubbard model.  It has been extended to open systems in \cite{Breteaux:2022qdp}.

\subsection{Butterfly velocity and finite temperature bounds}\label{sec:finiteT}
An idealistic goal would be to generalize Theorem \ref{theor:schrodinger} to many-body systems.  Of course in principle this is possible formally: one could study for example a first-quantized model of $N$ particles interacting on a lattice, or in the continuum, and apply this bound.  However, there is a crucial problem.   If we have $N$ particles interacting on the line, then $E\sim N$ in a physically reasonable state (e.g. one drawn from the finite temperature ensemble, with high probability).   Our energy-dependent velocity bound would then diverge in the thermodynamic limit.  This problem arises because we cannot rule out the possibility that all of the energy in the system is dumped into a single particle's kinetic energy, which then can propagate very fast.  

While this scenario is almost certainly unphysical, ruling it out has been a mathematical challenge.  It is noted in \cite{osbornefuture}, following earlier work \cite{Faupin:2021nhk,Faupin:2021qpp} on particle transport in the Bose-Hubbard model, that if one asks about \emph{average} particle velocity, this effect disappears and one can show that the typical particle cannot move faster than allowed by the average energy per particle.

Conjecturing, however, that one can achieve this goal of finding an energy-dependent velocity bound in many-body systems, what should we expect to find?  We conjecture that the correct bound would be of the form \begin{equation}
    \lVert \mathbb{P}_E [H,x_i]\mathbb{P}_E \rVert \lesssim v_{\mathrm{B}} \approx T\xi, \label{eq:Txi}
\end{equation}
where $T$ is the temperature and $\xi$ is the thermal correlation length.    For single-particle motion, we can ``derive" (\ref{eq:Txi}) from Theorem \ref{theor:schrodinger} as follows: the energy $E$ of the particle is roughly the temperature $T$, while the thermal correlation length $\xi \sim 1/\langle |\partial_x|\rangle \sim 1/\sqrt{T}$.  In lattice models, this scaling of $\xi$ can be derived rigorously using a Chebyshev expansion \cite{sachdeva} of the thermal density matrix $\mathrm{e}^{-\beta H}$ \cite{kuwaharathermal}.   

For single particle motion, (\ref{eq:Txi}) was proved in \cite{osbornefuture} subject to some rather special constraints on $H$.  Roughly speaking, suppose one has $H \sim \partial_x^z$, where $z=2,4,6,\ldots$ is an even integer.  Then one expects \begin{equation}
    \xi \sim T^{-1/z}. \label{eq:xiTz}
\end{equation}
Using methods analogous to the proof of Theorem \ref{theor:schrodinger}, \cite{osbornefuture} has demonstrated a family of spatially inhomogeneous lattice and continuum models where (\ref{eq:Txi}) can be derived with (\ref{eq:xiTz}).   Using non-rigorous methods, this scaling has been demonstrated for the dynamics of free fermions in \cite{Roberts:2016wdl}, and in holographic models (see Section \ref{sec:qg}) of quantum matter \cite{Blake:2016wvh}.

We believe that (\ref{eq:Txi}) is quite intuitive: at low temperatures, one might expect an \emph{effective description} of the dynamics with a Hamiltonian where local terms have energy of at most $T$: the price we must pay is that this effective Hamiltonian will be non-local on a scale set by the thermal correlation length $\xi$.

Note that the thermal correlation length $\xi$ has been bounded to always be less than $T^{-1}$ in local lattice models with a finite-dimensional local Hilbert space dimension \cite{kuwaharathermal3}; unfortunately, with this scaling, (\ref{eq:Txi}) simply reproduces the Lieb-Robinson bound.  Indeed, the proof that $\xi \lesssim T^{-1}$ uses Lieb-Robinson bounds! Other recent work \cite{kuwaharathermal3, kuwaharathermal,kuwaharathermal2} has studied entanglement in thermal states.  It is possible that these methods could be helpful in proving rigorous bounds on the butterfly velocity.

The authors of \cite{Han:2018bqy} have taken a somewhat different attempt to bound the butterfly velocity; however, their approach is not known to reproduce the desired scaling of (\ref{eq:Txi}).

\section{Open problems}
We have given a fairly detailed overview of Lieb-Robinson bounds, along with their known extensions and applications.  Now, we conclude  with a brief discussion and recap of interesting open problems and directions.

(\emph{1}) As discussed in Section \ref{sec:thermalization}, we do not have strong bounds the decay times of simple correlation functions, especially at low temperature.  Solving this problem may lead to qualitatively new methods for locality bounds in quantum dynamics, and seems like an important issue to address in the near future.  There are many conjectured bounds on transport coefficients \cite{Hartman:2017hhp,Kovtun:2004de,Hartnoll:2014lpa,Hartnoll:2021ydi} that would be interesting to revisit after rigorous results on decay times are better understood: see also \cite{Nussinov:2021fgc}.

(\emph{2}) As we just discussed, bounds on finite temperature dynamics are both notoriously difficult (as are statements about even equilibrium properties such as entanglement!), yet they have been the inspiration for many of the important developments in this field in recent years.  We hope that it is possible to generalize the Frobenius light cones described in Section \ref{sec:density} to generic finite temperature Hamiltonian systems, but this is likely to be an extraordinary challenge.  A particularly motivating problem would be a proof of the chaos bound of \cite{Maldacena:2015waa} in all-to-all interacting models, without ergodicity assumptions (see also \cite{Lucas:2019aif}).  

(\emph{3}) It is not known how to prepare the W state (\ref{eq:W}) as quickly as the GHZ state (\ref{eq:GHZ}), using both unitary dynamics and local measurement.  We expect that it is provably harder to prepare W states.  Whether or not a clever application of Lieb-Robinson bounds can prove this result or one needs a more fine-grained bound on multipartite entanglement, is an important open problem.  More generally, we expect that deeper incorporation of Lieb-Robinson bounds into error-corrected quantum dynamics with measurement can be a fruitful endeavor.

(\emph{4}) It would be interesting if the techniques of Section \ref{sec:preth} can rigorously address actively debated questions about the existence and robustness of many-body localization \cite{Basko_2006,vadim2007,Nandkishore:2014kca,Imbrie_2016,abaninreview,Suntajs:2019lmb}.

(\emph{5}) There is rather extensive literature on quantum walks \cite{Venegas-Andraca:2012zkr}.  Sometimes, this literature makes deep connections to the classical theory of random walks and Markov chains \cite{levinbook}, although certain notions such as hitting time have been delicate to extend to the quantum setting.  It might be fruitful to revisit some of these questions with the recent developments of Section \ref{sec:frobenius} in mind.

(\emph{6}) Recent work \cite{Wild:2022owx} uses cluster decomposition techniques (more well-known in equilibrium statistical mechanics) to study quantum dynamics.  Combining these techniques with Lieb-Robinson bounds may be a fruitful direction for future research.

(\emph{7}) Preparing squeezed states \cite{Ma_2011,sensing_rmp} is notoriously difficult in experiments, and we expect that Lieb-Robinson bounds, and the corresponding notions of locality, could help find efficient ways to generate highly squeezed states, likely using error correction as part of the protocol \cite{Zhou_LOCC20,Zhou:2023fxe}.

(\emph{8}) The linear-to-algebraic transition in the Frobenius light cone for power-law interactions (Section~\ref{sec:power-law}) remains open for $d\ge 2$ dimensions.

(\emph{9}) Understanding whether there are any more general classes of interacting boson models where sharp Lieb-Robinson bounds can be derived in thermodynamically reasonable states is an important problem, especially since many future quantum computers (whose operation speed might be constrained by a Lieb-Robinson light cone) will contain bosonic degrees of freedom.

\addcontentsline{toc}{section}{Acknowledgments}
\section*{Acknowledgments}
We thank Oliver Janssen for pointing out an error in the draft of this review.

This work was supported in part by a Research Fellowship from the Alfred P. Sloan Foundation under Grant FG-2020-13795 (AL), and by the U.S. Air Force Office of Scientific Research under Grant FA9550-21-1-0195 (AL, CY).

\begin{appendix}

\end{appendix}
 
\bibliographystyle{JHEP}
\addcontentsline{toc}{section}{References}
\bibliography{thebib}

\end{document}